\documentclass[a4paper,12pt]{article}

\usepackage[utf8]{inputenc}
\usepackage{amsmath}
\usepackage{amsfonts}
\usepackage{amsthm}
\usepackage{amssymb}
\usepackage{lmodern}
\usepackage{enumitem}
\usepackage{tikz}
\usepackage{url}
\usepackage{stmaryrd}
\usepackage{xparse}
\usepackage{bbm}
\usepackage{mathtools}
\mathtoolsset{centercolon}
\usepackage{chngcntr}
\usepackage{subfig}
\usepackage[font=small]{caption}

\usepackage{fullpage}
\usepackage{xcolor}
\usepackage[backgroundcolor=white, textwidth=2cm]{todonotes}
\usepackage{url}

\usetikzlibrary{decorations.pathmorphing}
\usetikzlibrary{decorations.pathreplacing}
\usetikzlibrary{calc}
\usetikzlibrary{arrows.meta}
\tikzstyle{vertex} = [circle, fill=black, inner sep=0.6mm,  minimum size = 1mm]

\usepackage[colorlinks]{hyperref}
\definecolor{lightblue}{rgb}{0.5,0.5,1.0}
\definecolor{darkred}{rgb}{0.5,0,0}
\definecolor{darkgreen}{rgb}{0,0.5,0}
\definecolor{darkblue}{rgb}{0,0,0.5}
\definecolor{darkyellow}{rgb}{0.5,0.5,0}
\definecolor{purple}{rgb}{0.5,0,0.5}

\hypersetup{colorlinks,linkcolor=darkred,filecolor=darkgreen,urlcolor=darkred,citecolor=darkblue}

\renewcommand{\phi}{\varphi}
\newcommand{\nat}{\mathbb{N}}
\newcommand{\ZZ}{\mathbb{Z}}
\newcommand{\FF}{\mathbb{F}}

\newcommand{\defining}[1]{\emph{\textbf{#1}}}

\newcommand{\iso}{\cong}

\newcommand{\disunion}{\mathbin{\uplus}}

\DeclarePairedDelimiter\set{\lbrace}{\rbrace}
\DeclarePairedDelimiterX\setcond[2]{\{}{\}}{\mathchoice{\,}{}{}{}#1 \;\delimsize\vert\; #2\mathchoice{\,}{}{}{}}

\newcommand{\qspace}{\mathchoice{\;}{\,}{}{}}

\newcommand{\msetopen}[0]{\{\hspace{-3pt}\{}
\newcommand{\msetclose}[0]{\}\hspace{-3pt}\}}
\newcommand{\bigmsetopen}[0]{\big\{\hspace{-4pt}\big\{}
\newcommand{\bigmsetclose}[0]{\big\}\hspace{-4pt}\big\}}
\newcommand{\mset}[1]{\msetopen #1 \msetclose}

\newcommand{\bigmsetcondition}[2]{\bigmsetopen\mathchoice{\,}{}{}{} #1 \;\big\vert\; #2 \mathchoice{\,}{}{}{}\bigmsetclose}

\newcommand{\termA}{s}
\newcommand{\termB}{t}
\newcommand{\termC}{r}

\newcommand{\formA}{\Phi}
\newcommand{\formB}{\Psi}

\newcommand{\interpret}{\Theta}

\newcommand{\rel}{R}

\newcommand{\auto}{\phi}
\newcommand{\autoA}{\phi}
\newcommand{\autoB}{\psi}
\newcommand{\autGroup}[1]{\mathsf{Aut}(#1)}

\newcommand{\orbs}[2]{\mathsf{orbs}_{#1}(#2)}

\newcommand{\tup}[1]{\bar{#1}}

\newcommand{\vertA}{u}
\newcommand{\vertB}{v}
\newcommand{\vertC}{w}

\newcommand{\PTime}{\textsc{Ptime}}
\newcommand{\NPTime}{\textsc{NPtime}}

\newcommand{\CPT}{\ensuremath{\text{CPT}}}
\newcommand{\CPTWSC}{\ensuremath{\text{CPT+WSC}}}

\newcommand{\restrictVect}[2]{#1|_{#2}}

\DeclareMathSymbol{\shortminus}{\mathbin}{AMSa}{"39}
\newcommand{\inv}[1]{#1^{\shortminus 1}}

\newcommand{\hiddenEq}{\phantom{{}={}}}

\theoremstyle{plain}
\newtheorem{lemma}{Lemma}
\newtheorem{theorem}[lemma]{Theorem}
\newtheorem{corollary}[lemma]{Corollary}
\newtheorem{definition}[lemma]{Definition}
\newtheorem{claim}{Claim}
\counterwithin*{claim}{lemma} 

\newenvironment{claimproof}[1][\proofname]{\begin{proof}[#1]}{\end{proof}}

\newcommand\blfootnote[1]{%
	\begingroup
	\renewcommand\thefootnote{}\footnote{#1}%
	\addtocounter{footnote}{-1}%
	\endgroup
}

\newcommand{\spleq}{\preceq}

\newcommand{\StructA}{\mathfrak{A}}
\newcommand{\StructB}{\mathfrak{B}}
\newcommand{\StructC}{\mathfrak{H}}
\newcommand{\StructVA}{A}
\newcommand{\StructVB}{B}
\newcommand{\StructVC}{H}

\newcommand{\StructHFVA}{A_{HF}}

\newcommand{\Struct}{\StructA}
\newcommand{\StructV}{\StructVA}
\newcommand{\StructHFV}{\StructHFVA}

\newcommand{\reduct}[2]{#1 \upharpoonright #2}

\newcommand{\GraphClass}{\mathcal{K}}

\newcommand{\Atoms}{\textsf{Atoms}}
\newcommand{\Pair}{\textsf{Pair}}
\newcommand{\Union}{\textsf{Union}}
\newcommand{\Unique}{\textsf{Unique}}
\newcommand{\Card}{\textsf{Card}}
\newcommand{\HFsym}{\mathsf{HF}}
\newcommand{\HF}[1]{\HFsym(#1)}

\newcommand{\TC}[1]{\textsf{TC}(#1)}

\newcommand{\denotation}[1]{\llbracket #1 \rrbracket}

\newcommand{\choiceError}{\dag}

\newcommand{\sig}{\tau}
\newcommand{\sigA}{\tau}
\newcommand{\sigB}{\sigma}

\newcommand{\neighbors}[2]{N_{#1}(#2)}

\newcommand{\wscForm}[6]{\mathsf{WSC}^*#1#2.\qspace(#3,#4,#5,#6)}
\newcommand{\wscFormbig}[6]{\mathsf{WSC}^*#1#2.\qspace\big(#3,#4,#5,#6\big)}
\newcommand{\termStep}{\termA_{\text{step}}}
\newcommand{\termChoice}{\termA_{\text{choice}}}
\newcommand{\termWit}{\termA_{\text{wit}}}
\newcommand{\formOut}{\formA_{\text{out}}}

\newcommand{\stepF}[0]{f_{\text{step}}}
\newcommand{\choiceF}[0]{f_{\text{choice}}}
\newcommand{\autF}[0]{f_{\text{wit}}}
\newcommand{\tree}[0]{\mathcal{T}(\stepF^{\Struct,\tup{a}},\choiceF^{\Struct,\tup{a}})}
\newcommand{\paths}[0]{\mathcal{P}(\stepF^{\Struct,\tup{a}},\choiceF^{\Struct,\tup{a}})}
\newcommand{\wscStarSym}{\mathsf{WSC}^*}
\newcommand{\wsc}[3]{\wscStarSym(#1,#2,#3)}

\newcommand{\termOut}{\termA_{\text{out}}}

\newcommand{\indVar}{\iota}
\newcommand{\concat}[2]{#1 #2}

\newcommand{\canlabels}[2]{\mathsf{labels}(#1,#2)}
\newcommand{\canon}[2]{\mathsf{canon}(#1,#2)}

\newcommand{\ordVar}{o}


\newcommand{\dwlm}{M}

\newcommand{\dwla}{\mathcal{M}}

\newcommand{\dwlmwit}{M^{\text{wit}}}
\newcommand{\dwlmout}{M^{\text{out}}}
\newcommand{\hatdwlmwit}{\hat{M}^{\text{wit}}}
\newcommand{\hatdwlmout}{\hat{M}^{\text{out}}}

\newcommand{\sketch}[1]{D(#1)}
\newcommand{\symSubset}[2]{{\subseteq_{#1,#2}}}
\newcommand{\coConf}[1]{C(#1)}

\newcommand{\addPairSym}{\ensuremath{\mathtt{addPair}}}
\newcommand{\createSym}{\ensuremath{\mathtt{create}}}
\newcommand{\forgetSym}{\ensuremath{\mathtt{forget}}}
\newcommand{\contractSym}{\ensuremath{\mathtt{contract}}}

\newcommand{\addUPairSym}{\ensuremath{\mathtt{addUPair}}}
\newcommand{\sccSym}{\ensuremath{\mathtt{scc}}}
\newcommand{\renameSym}{\ensuremath{\mathtt{rename}}}

\newcommand{\addPair}[1]{\ensuremath{\addPairSym(#1)}}
\newcommand{\create}[1]{\ensuremath{\createSym(#1)}}

\newcommand{\contract}[1]{\ensuremath{\contractSym(#1)}}
\newcommand{\addUPair}[1]{\ensuremath{\addUPairSym(#1)}}
\newcommand{\scc}[1]{\ensuremath{\sccSym(#1)}}
\newcommand{\rename}[1]{\ensuremath{\renameSym(#1)}}

\newcommand{\refineSym}{\ensuremath{\mathtt{refine}}}
\newcommand{\refine}[2]{\ensuremath{\mathtt{refine}(#1, #2)}}
\newcommand{\choiceSym}{\ensuremath{\mathtt{choice}}}
\newcommand{\choice}[1]{\ensuremath{\mathtt{choice}(#1)}}

\newcommand{\run}[2]{\operatorname{run}(#1, #2)}

\newcommand{\relA}{E}
\newcommand{\relB}{F}
\renewcommand{\rel}{\relA}

\newcommand{\colA}{R}
\newcommand{\colB}{S}
\newcommand{\colC}{T}
\newcommand{\col}{\colA}

\newcommand{\relColA}{X}
\newcommand{\relColB}{Y}

\newcommand{\vcA}{C}
\newcommand{\vcB}{D}
\newcommand{\vc}{\vcA}

\newcommand{\ccA}{U}
\newcommand{\ccB}{V}
\newcommand{\cc}{\ccA}

\newcommand{\vertices}[1]{\mathsf{V}(#1)}
\newcommand{\verticesOf}[2]{\mathsf{V}_{#1}(#2)}
\newcommand{\plainEdges}[1]{\mathsf{E}_{\text{plain}}(#1)}
\newcommand{\crossingEdges}[1]{\mathsf{E}_{\text{cross}}(#1)}
\newcommand{\crossingVertices}[1]{\mathsf{V}_{\text{cross}}(#1)}
\newcommand{\diagrel}[1]{\operatorname{diag}(#1)}
\newcommand{\nonHF}[1]{#1^{\text{flat}}}
\newcommand{\leftT}{\text{left}}
\newcommand{\rightT}{\text{right}}
\newcommand{\memT}{\text{in}}

\newcommand{\pairVtx}[2]{\langle #1, #2 \rangle}
\def\labeledUnion{\Cup}

\newcommand{\inrel}{\rel_\in}
\newcommand{\iorel}{\rel_{\mathsf{io}}}
\newcommand{\iorelsub}[1]{\rel_{\mathsf{io},#1}}

\newcommand{\plvA}[1]{#1^{(1)}}
\newcommand{\plvB}[1]{#1^{(2)}}
\newcommand{\plvI}[2]{#1^{(#2)}}
\newcommand{\pRel}{\rel_p}
\newcommand{\pRelInv}{\inv{\rel}_p}

\newcommand{\CFI}[1]{\mathsf{CFI}(#1)}
\newcommand{\orig}[1]{\operatorname{orig}(#1)}
\newcommand{\bVertA}{\mathfrak{u}}
\newcommand{\bVertB}{\mathfrak{v}}

\newcommand{\bEdge}{\mathfrak{e}}

\title{Choiceless Polynomial Time with Witnessed Symmetric Choice}

\author{Moritz Lichter\\
	TU Darmstadt\\
	\texttt{lichter@mathematik.tu-darmstadt.de}
	\and
	Pascal Schweitzer\\
	TU Darmstadt\\
	\texttt{schweitzer@mathematik.tu-darmstadt.de}}

\begin{document}
	\maketitle

\begin{abstract}

	We extend Choiceless Polynomial Time (CPT),
	the currently only remaining promising candidate in the quest for a logic capturing \PTime{},
	so that this extended logic has the following property:
	for every class of structures for which isomorphism is definable, the logic automatically captures \PTime{}. 
	
	For the construction of this logic we extend CPT by a witnessed symmetric choice operator. This operator allows for choices from definable orbits.
	But, to ensure polynomial-time evaluation,
	automorphisms have to be provided to certify that the choice set is indeed an orbit.
	
	We argue that, in this logic, definable isomorphism implies definable canonization.
	Thereby, our construction removes the non-trivial step of extending isomorphism definability results to canonization.
	This step was a part of proofs that show that CPT or other logics capture \PTime{} on a particular class of structures. The step typically required substantial extra effort.
	\blfootnote{\noindent The research leading to these results has received funding from the European Research Council (ERC) under the European Union’s Horizon 2020 research and innovation programme (EngageS: grant agreement No.\ 820148).}
\end{abstract}

\section{Introduction}
One of the central open problems in descriptive complexity theory is the quest for a logic capturing~\PTime{}~\cite{Grohe2008}. This long-standing open problem~\cite{ChandraHarel82} asks for a logic in which precisely all polynomial-time decidable properties can be expressed as a sentence and for which all formulas can be evaluated in polynomial time. The alternative would be to prove the nonexistence of such a logic, which would however imply a separation of \PTime{} and \NPTime{}~\cite{MR0371622}.

While the general quest for a \PTime{}-logic remains wide open, progress generally comes in one of two flavors: research results
either show that some logic captures \PTime{} for an ever more extensive class of structures,
or a logic is separated from \PTime{}, ruling it out as a candidate for a \PTime{} logic.
In this article we are concerned with a third flavor, namely with reducing the question to a presumably simpler one. 

Historically, almost all results showing that \PTime{} is captured by some logic for some graph class
exploit the Immerman-Vardi Theorem~\cite{Immerman87}
which states that inflationary fixed-point logic IFP captures \PTime{} on ordered graphs.
To apply this theorem to a class of unordered graphs,
one defines canonization of that class inside a logic (which at least captures IFP).
Defining canonization is the task of defining an isomorphic, ordered copy of the input graph.
In algorithmic contexts, canonization is closely related to the problem of isomorphism testing.
While polynomial-time canonization provides polynomial-time isomorphism testing, a reduction
the other way is unknown. 
Granted, for most graph classes
for which a polynomial-time isomorphism algorithm is known, a polynomial-time canonization algorithm is known as well (see~\cite{DBLP:conf/stoc/SchweitzerW19}). However, we know of no reduction that is universally applicable.
Studying the analogous relationship for logics, we are interested in the question of whether definability of the isomorphism problem within a \PTime{} logic for some graph class provides us with a logic for all of \PTime{} on that graph class.

Since we do not know that graph isomorphism is polynomial-time solvable in general, it is of course not clear that in a logic for \PTime{} the isomorphism problem needs to be definable.  However, for all graphs 
on which a logic has been shown to capture \PTime{},
a polynomial-time isomorphism testing algorithm is known~\cite{Grohe2017,GroheN19,LichterS21,AbuZaidGraedelGrohePakusa2014}.
Crucially, in every \PTime{}-capturing logic for such graph classes, in particular isomorphism testing has to be definable.
With a reduction of canonization to isomorphism-testing,
we then obtain a necessary and sufficient condition to capture \PTime{}, namely the definability of the isomorphism problem.

Regarding the issue of isomorphism versus canonization, we should highlight that 
defining canonization often appears to require considerably more effort than defining the isomorphism problem~\cite{Grohe2017, GroheN19}.

In this article\footnote{An extended abstract of this article previously appeared in the Proceedings of the 36th Annual ACM/IEEE Symposium on Logic in Computer Science~\cite{LichterSchweitzer22}.} we present a logic
in which a definable isomorphism test automatically implies a definable canonization.
After rank logic was recently eliminated as a candidate of a logic capturing \PTime{}~\cite{Lichter21},
we consider an extension of the one major remaining candidate, namely Choiceless Polynomial Time (CPT)~\cite{bgs1999}.
The logic CPT operates on hereditarily finite sets formed from the vertices of the input graph. The construction of these sets is isomorphism-invariant, which guarantees that every CPT-term or formula evaluates to an isomorphism-invariant result. This is generally regarded as a requirement for a reasonable logic~\cite{Gurevich1988}. The requirement has an important consequence:
while in algorithms it is common to make choices that are not necessarily isomorphism-invariant (e.g., pick the first neighbor of a vertex within a DFS-transversal and then process it), 
this cannot be done in CPT -- one has to process all possible choices in equal fashion.
For algorithms making choices, we have to prove that they compute the correct (in particular~isomorphism-invariant) result. But in a logic, this property should be built-in.
One possibility to overcome this problem was
studied by Gire and Hoang~\cite{GireHoang98}
as well as by Dawar and Richerby~\cite{DawarRicherby03}.
They extended IFP with a symmetric choice construct,
which allows that during a fixed-point computation in every step one element can be chosen from a set that has been defined.
But, in order to ensure isomorphism-invariance,
these choice sets have to be orbits of the graph.
The output of such a fixed-point computation with choices is not necessarily isomorphism-invariant.
However, at least we are guaranteed that all possible outputs
are related via automorphisms of the graph, independent of the choices that were made.
Crucially, the logics are designed so that fixed-point computations are used only as ``intermediate results''
to define overall a property in the end. Since this property is either true or false,
the output of a formula is isomorphism-invariant after all.

While this approach of introducing symmetric choices yields a reasonable logic,
it is not clear whether its formulas can be evaluated in polynomial time.
Indeed, when a choice is to be made, it has to be verified that the choice set
is actually an orbit
and it is not known that orbits can be computed in polynomial time.
This is resolved in~\cite{GireHoang98}
by handing over the obligation to check that the choice sets are orbits to the formulas themselves.
For this, the formulas,  beside defining the choice set, also have to define automorphisms which can be used to check whether
the choice set is indeed an orbit.
That way, the logic can be evaluated in polynomial time.
We apply a similar approach to CPT.
A fixed-point operator is added, in which in every iteration a choice is made from a choice set. For each choice set, automorphisms certifying that the choice set is indeed an orbit have to be provided by the formula.
We call this witnessed symmetric choice (WSC).

So why should witnessed symmetric choice in CPT
suffice to show that isomorphism testing and canonization are equivalent?
Here we build on two existing results.
The first one~\cite{GroheSchweitzerWiebking2021}
shows that, in CPT, a definable isomorphism test
implies a definable complete invariant,
that is, an ordered object can be defined
which is equal for two input structures if and only if they are isomorphic.
The second, more classical result is due to Gurevich~\cite{Gurevich97}.
It shows how an algorithm computing complete invariants can be turned into an algorithm computing a canonization.
This algorithm requires that the class of graphs
is closed under individualization
(that is, under coloring individual vertices).
While being closed under individualization is a restriction in some contexts~\cite[Theorem 33]{DBLP:conf/mfcs/KieferSS15},
this is usually not the case~\cite{DBLP:conf/mfcs/KieferSS15, DBLP:journals/ipl/Mathon79}.
The canonization algorithm repeatedly uses the complete invariant to compute a canonical orbit,
chooses and individualizes one vertex in that orbit,
and proceeds until all vertices are individualized. Thereby, a total order on the vertices is defined.
And indeed, this algorithm can be expressed in CPT extended by witnessed symmetric choice
and a definable complete invariant can be turned into a definable canonization.

\paragraph{Results.}
We extend CPT with a fixed-point operator with witnessed symmetric choice 
and obtain the logic CPT+WSC.
Here some small, but important formal changes to~\cite{GireHoang98, DawarRicherby03} have to be made
so that we can successfully implement a variant of Gurevich's canonization algorithm in CPT+WSC. 
Using Gurevich's canonization algorithm,
we show that a CPT-definable complete invariant (and thus a CPT-definable isomorphism test~\cite{GroheSchweitzerWiebking2021})
implies a CPT+WSC-definable canonization and thus that CPT+WSC captures \PTime{}. However, we prefer to have a logic so that definability of the isomorphism problem implies that the same logic defines canonization and captures \PTime{} rather than just an extension.
To show precisely this property for CPT+WSC turns out to be rather difficult and formally intricate in several aspects.
Indeed, we lift a result of~\cite{GroheSchweitzerWiebking2021}
from CPT to CPT+WSC thereby showing the following:
if CPT+WSC defines isomorphism of a class of structures closed under individualization,
then it defines a complete invariant
and using the mentioned canonization algorithm
CPT+WSC defines a canonization, too. Overall, we obtain the following.

\begin{theorem}
	\label{thm:iso-implies-canon-and-ptime}
	If CPT+WSC defines isomorphism of a class of $\sig$-structures $\GraphClass$
	(closed under individualization),
	then CPT+WSC defines a canonization of $\GraphClass$-structures
	and captures \PTime{} on $\GraphClass$-structures.
\end{theorem}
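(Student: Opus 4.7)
The plan is to follow the three-stage pipeline sketched in the introduction. First, I would lift the main result of~\cite{GroheSchweitzerWiebking2021} from \CPT{} to \CPTWSC{}: starting from a \CPTWSC{}-formula $\formA(\uniVarA,\uniVarB)$ deciding isomorphism on $\GraphClass$, construct a \CPTWSC{}-term that, on input $\Struct\in\GraphClass$, evaluates to an isomorphism-invariant hereditarily finite set serving as a complete invariant of $\Struct$. Second, using this complete invariant together with closure of $\GraphClass$ under individualization, I would implement a variant of Gurevich's canonization algorithm~\cite{Gurevich97} inside \CPTWSC{}: iteratively pick a vertex from a definable orbit of not-yet-individualized vertices and individualize it, repeating until all vertices are distinguished so that a canonical linear order on $\Struct$ results. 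Third, since \CPTWSC{} contains inflationary fixed-point logic, the Immerman--Vardi theorem applied to this ordered canonical copy yields that \CPTWSC{} captures \PTime{} on $\GraphClass$.

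The second step is the one in which witnessed symmetric choice actually pays off. The choice set in each iteration of the loop is the orbit of an unindividualized vertex whose complete-invariant value is minimal (say with respect to some definable order on hereditarily finite sets), and the required automorphism witnesses are extracted from the isomorphism test $\formA$ by the standard observation that $\Struct$ with $\vertA$ individualized is isomorphic to $\Struct$ with $\vertB$ individualized if and only if some automorphism of $\Struct$ sends $\vertA$ to $\vertB$. Closure under individualization is precisely what ensures that the individualized structures remain in $\GraphClass$ so that $\formA$ and the invariant-term from step one stay applicable in every round of the WSC fixed point. The third step is then essentially routine: the canonical ordering determines an ordered structure isomorphic to $\Struct$, on which IFP suffices by Immerman--Vardi.

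The technically most delicate part is the first step, which I expect to be the main obstacle. The proof in~\cite{GroheSchweitzerWiebking2021} crucially relies on every \CPT{}-computation being isomorphism-invariant at each intermediate stage, whereas a \CPTWSC{}-computation is in general only guaranteed to produce invariant final Boolean outputs: internally, symmetry is broken by the choices that are made. I would handle this by working with $\autGroup{\Struct}$-orbits of intermediate hereditarily finite sets rather than the sets themselves throughout the invariant construction, so that the objects compared at each stage are again isomorphism-invariant, and by carefully threading the automorphism witnesses needed by the WSC operator through each step. Making this work requires both the right formal variant of WSC, whose subtleties the introduction already flags, and a detailed analysis of how witnesses synthesised from $\formA$ compose through nested fixed points; this is why the precise formulation of \CPTWSC{} developed in the rest of the paper is essential for the argument to go through.
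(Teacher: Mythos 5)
Your proposal follows exactly the paper's route: the iso-to-complete-invariant lifting via an extension of the DeepWL machinery of~\cite{GroheSchweitzerWiebking2021} (the paper's Theorem~\ref{thm:cpt+wsc-iso-implies-inv}, and indeed its hardest part), then Gurevich's canonization implemented with a WSC-fixed-point operator choosing from the invariant-minimal orbit (Lemmas~\ref{lem:complete-invariant-implies-definable-orbits} and~\ref{lem:definable-orbits-implies-canonization}), then Immerman--Vardi. The only under-specified point is how explicit witnessing automorphisms are produced -- a Boolean isomorphism test alone does not yield them; the paper composes the canonical labelings obtained from deterministically resolved recursive runs of the canonization on $(\StructA,\tup{a}\vertA)$ and $(\StructA,\tup{a}\vertB)$ -- but this is an implementation detail within the same overall argument.
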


Finally, we apply these results to the Cai-Fürer-Immerman query~\cite{CaiFI1992}
and construct a class of base graphs,
for which the CFI-query was not known to be definable in CPT.

\paragraph{Our Technique.}
To ensure that fixed-point operators with witnessed symmetric choice always yield isomorphism-invariant results, we follow the approach of~\cite{GireHoang98, DawarRicherby03}.
Every such fixed-point operator comes with a formula, called the output formula,
which is evaluated on the (not necessarily isomorphism-invariantly) defined fixed-point.
We use the formal techniques of~\cite{DawarRicherby03} to define the semantics of this fixed-point operator.
The most important difference is that the term producing the witnessing automorphisms
also has access to the defined fixed-point.
This turned out to be crucial to implement Gurevich's canonization algorithm
and in turn necessitates other minor formal differences to~\cite{DawarRicherby03} and~\cite{GireHoang98}.
Equipped with these changes,
extending Gurevich's canonization algorithm to also provide witnessing automorphism becomes rather straightforward.

This shows that a definable isomorphism problem in CPT implies that CPT+WSC is a logic capturing \PTime{}. For our theorem however, we require that the statement is true whenever isomorphism is definable in CPT+WSC rather than in CPT.
Therefore, we require that a CPT+WSC-definable isomorphism test implies the existence of a CPT+WSC-definable complete invariant.
We therefore consider the DeepWL computation model,
which is used in~\cite{GroheSchweitzerWiebking2021}
to show that a CPT-definable isomorphism tests implies a CPT-definable complete invariant,
and extend DeepWL with witnessed symmetric choice.
The proof of~\cite{GroheSchweitzerWiebking2021} is based on a translation
of CPT to DeepWL, a normalization procedure in DeepWL yielding the complete invariant, and a translation back into CPT.
Unfortunately, it turned out that this normalization procedure cannot be easily adapted to DeepWL with witnessed symmetric choice.
At multiple points we have to change small but essential parts of definitions
and so cannot reuse as many results of~\cite{GroheSchweitzerWiebking2021}
as one would have liked.
The philosophical reason for this is that DeepWL is based on constructing everything in parallel (for all possible inputs),
which is incompatible with choices.
We cannot compute with different possible choices at the same time in the same graph, as these choices influence each other.
This forced us to nest DeepWL-algorithms to, in some way,
resemble nested fixed-point operators with witnessed symmetric choice.

\paragraph{Related Work.}
In the quest for a \PTime{} logic,
IFP extended by counting quantifiers was shown not to capture \PTime{}~\cite{CaiFI1992},
but used to capture \PTime{} on various graph classes.
These include graphs with excluded minors~\cite{Grohe2017}
and graphs with bounded rank width~\cite{GroheN19}.
Both results take the route via canonization.
On the negative side,
rank logic~\cite{Lichter21} and the more general linear algebraic logic~\cite{DawarGraedelLichter22} were separated from~\PTime{}.

CPT was shown to capture \PTime{} on various classes of structures,
for example on padded structures~\cite{BlassGS02} 
(i.e.,~on disjoint unions of arbitrary structures and sufficiently large cliques),
structures with bounded color class size whose automorphism groups are abelian~\cite{AbuZaidGraedelGrohePakusa2014},
and on (some) structures with bounded color class size
whose automorphism groups are dihedral groups~\cite{LichterS21}. Also here, each of the results is obtained via canonization.
Philosophically, all these approaches are somewhat orthogonal to witnessed symmetric choice.
They use the fact that some set of objects, for which it is not known whether they form orbits, is small enough to try out all possible choices.
The CFI-query on ordered base graphs was shown to be CPT-definable~\cite{DawarRicherbyRossman2008} using deeply nested sets invariant under all isomorphisms.
These results were generalized to base graphs with logarithmic color class size
and to graphs with linear maximal degree~\cite{PakusaSchalthoeferSelman2016}.
Defining the CFI-query on ordered base graphs in CPT+WSC is comparatively easier
similar to IFP with witnessed symmetric choice in~\cite{GireHoang98}.
While it is in general still open whether CPT captures \PTime{},
there are isomorphism-invariant functions not definable in CPT~\cite{rossman2010}, see also~\cite{Pago21} for more recent work on limits of definability in CPT.

The extension of first order logic with non-witnessed symmetric choice
was studied in~\cite{GireHoang98, DawarRicherby03}.
Whenever the choice set is not an orbit, nothing is chosen.
The more general variant in~\cite{DawarRicherby03}
supports parameters for fixed-point operators with symmetric choice and
allows for nested fixed-point operators.
We followed most of these approaches
and generalized the usage of quantifiers
to output formulas
in order to wrap the calculated fixed-point into an isomorphism-invariant output.
While in the first-order setting the fixed-point operators are limited to only define relations, in the CPT setting we can of course define arbitrary hereditarily finite sets.
For these sets, using output formulas seems more suitable than solely using quantifiers.
Dropping the requirement of choosing from orbits is also studied in~\cite{DawarRicherby03Nondet}, called nondeterministic choice.
When only formulas are considered which always produce a deterministic result,
one captures \PTime{}.
But since it is not decidable whether a formula has this property,
this approach does not yield a logic in which formulas can be evaluated in polynomial time
and thus does not provide a \PTime{} logic.

\paragraph{Structure of the Article.}
After reviewing some preliminaries and in particular CPT
in Section~\ref{sec:prelimiaries},
we extend CPT with a fixed-point operator with witnessed symmetric choice in Section~\ref{sec:extension-wsc}.
In Section~\ref{sec:canon-in-cpt-wsc} we implement the canonization algorithm and show the equivalence between isomorphism and canonization,
apart from the one crucial point
that definable isomorphism implies a definable complete invariant in \CPTWSC{}.
This point is shown in Section~\ref{sec:iso-in-cpt+wsc},
which introduces and extends the DeepWL model.
Finally, we apply these techniques in Section~\ref{sec:cfi}
to the CFI-query.
We end with a discussion and open questions in Section~\ref{sec:conclusion}.

\section{Preliminaries}
\label{sec:prelimiaries}

We denote by $[k]$ the set $\set{1,\dots, k}$.
Let~$M$ and~$N$ be sets.
We denote $M$-indexed tuples with entries in~$N$ by $M^N$
and the $m$-indexed entry of $\tup{t} \in M^N$
by $\tup{t}(m)$ for every $m\in M$.
In the case of $M = [k]$,
the $i$-th entry of a tuple $\tup{t} \in N^k$ is $t_i$.
The concatenation of two tuples $\tup{t}_1 \in N^k$ and $\tup{t}_2 \in N^\ell$
is the $(k+\ell)$-tuple $\tup{t}_1\tup{t}_2 \in N^{k+\ell}$.
The disjoint union of~$N$ and~$M$ is $N \disunion M$.
We denote by $\mset{a_1,\dots, a_k}$
the multiset containing the elements $a_1, \dots, a_k$.

A \defining{(relational) signature}
$\sig = \set{\rel_1, \dots, \rel_\ell}$ consists of a set of relation symbols
with associated \defining{arities} $r_i \in \nat$ for all $i \in [\ell]$.
A \defining{$\sig$-structure}~$\Struct$ is a tuple
${\Struct = (\StructV, \rel_1^\Struct, \dots, \rel_\ell^\Struct)}$
where $\rel_i^\Struct \subseteq\StructV^{r_i}$ for all $i \in [\ell]$.
We always denote the\defining{ universe} of~$\Struct$  by~$\StructVA$
and call its elements \defining{atoms}.
The disjoint union of two structures~$\Struct$ and~$\StructB$ is $\Struct \disunion \StructB$.
The \defining{reduct} of a $\sig$-structure~$\Struct$ to a signature $\sig'\subseteq \sig$
is $\reduct{\Struct}{\sig'}$.
This article only considers finite structures.

The \defining{hereditarily finite sets over~$A$},
denoted by $\HF{A}$,
for some set of atoms~$A$
is the inclusion-wise smallest set such that
$A \subseteq \HF{A}$
and $a \in \HF{A}$ for every finite $a \subseteq \HF{A}$.
A~set $a \in \HF{A}$ is \defining{transitive},
if $ c\in b \in a$ implies $c \in a$ for every $c, b \in \HF{A}$.
The \defining{transitive closure} $\TC{a}$ of~$a$
is the least (with respect to set inclusion)
transitive set~$b$ with $a \subseteq b$.

Let $\Struct$ be a $\sig$-structure
and $\tup{a}$ be a tuple of $\HF{\StructV}$-sets.
We write $\autGroup{\Struct}$ for the \defining{group of automorphisms} of~$\Struct$
and $\autGroup{(\Struct, \tup{a})}$
for the group of automorphisms~$\autoA$ of~$\Struct$
which stabilize each of the $\HF{\StructV}$-sets~$a_i$,
i.e., $\autoA(a_i) = a_i$, for all $i \in [|\tup{a}|]$.
A set $b \in \HF{\StructV}$ is an \defining{orbit of $(\Struct, \tup{a})$}
if $b = \setcond{\auto(c)}{\auto \in \autGroup{(\Struct, \tup{a})}}$
for one (and thus every) $c \in b$.
We also use the orbit notion for other objects, e.g.,~tuples
(which could of course be encoded as hereditarily finite sets):
if~$b$ is a set of $k$-tuples of atoms that forms an orbit,
we call~$b$ a \defining{$k$-orbit}.
We write $\orbs{k}{\Struct}$ for the set of $k$-orbits of~$\Struct$.

\subsection*{Choiceless Polynomial Time}
The logic CPT was introduced by Blass, Gurevich, and Shelah~\cite{bgs1999}
using a pseudocode-like syntax and abstract state machines.
Later there were ``logical'' definitions using iteration terms or fixed points.
To give a concise definition of CPT,
we follow~\cite{GradelGrohe2015} and
use ideas of~\cite{pakusa2015} to enforce polynomial bounds.

Let~$\sig$ be a signature
and extend~$\sig$ by adding set-theoretic function symbols
\[\sig^\HFsym := \sig \disunion \set{\emptyset,\Atoms, \Pair,\Union, \Unique, \Card},\]
where~$\emptyset$ and~$\Atoms$ are constants,~%
$\Union$,~$\Unique$, and~$\Card$ are unary, and~$\Pair$ is binary.
The \defining{hereditarily finite expansion} $\HF{\Struct}$ 
of a $\sig$-structure~$\Struct$
is the $\sig^\HFsym$-structure over the universe $\HF{\StructV}$ defined as follows:
all relations in~$\sig$ are interpreted as they are in~$\Struct$.
The special function symbols have the expected set-theoretic interpretation:
\begin{itemize}
	\item $\emptyset^{\HF{\Struct}} = \emptyset$ and $\Atoms^{\HF{\Struct}} = \StructV$,
	\item $\Pair^{\HF{\Struct}}(a,b) = \set{a,b}$,
	\item $\Union^{\HF{\Struct}}(a) = \setcond{b}{\exists c \in a.\qspace b\in c}$,
	\item $\Unique^{\HF{\Struct}}(a) = \begin{cases}
		b & \text{if } a = \set{b}\\
		\emptyset & \text{otherwise}
	\end{cases}$, and
	\item $\Card^{\HF{\Struct}}(a) = \begin{cases}
		|a| &\text {if } a \notin \StructV\\
		\emptyset &\text{otherwise}
	\end{cases}$,\\
	where the number $|a|$ is encoded as a von Neumann ordinal.
\end{itemize} 
Note that the~$\Unique$ function is invariant under automorphisms 
because it only evaluates non-trivially when applied to singleton sets.

The logic CPT is obtained as the polynomial-time fragment
of the logic BGS (after Blass, Gurevich, and Shelah~\cite{bgs1999}):
A \defining{BGS-term} is composed of variables,
function symbols from~$\sig^{\HFsym}$,
and the two following constructs:
if $\termA(\tup{x},y)$ and $\termB(\tup{x})$ are terms with
a tuple of free variables~$\tup{x}$
(and an additional free variable~$y$ in the case of~$\termA$)
and $\formA(\tup{x}, y)$ is a formula with free variables~$\tup{x}$ and~$y$,
then $\termC(\tup{x}) = \setcond{\termA(\tup{x}, y)}{y \in \termB(\tup{x}), \formA(\tup{x}, y)}$
is a \defining{comprehension term} with free variables~$\tup{x}$.
For a term $\termA(\tup{x},y)$ with free variables~$\tup{x}$ and~$y$,
the \defining{iteration term} $\termA[y]^*(\tup{x})$ has free variables%
\footnote{Here we differ from the definition in~\cite{GradelGrohe2015},
in which~$\termA$ is only allowed to have one free variable $y$.
For CPT, allowing more free variables does not increase expressiveness,
but for our extensions later it is useful to allow additional free variables in an iteration term.}~$\tup{x}$.
\defining{BGS-formulas} are obtained as
$\rel(\termB_1, \dots, \termB_k)$ (for $\rel \in \sig$ of arity $k$ and BGS-terms $\termB_1, \dots, \termB_k$), as $\termB_1 = \termB_2$, and as the usual Boolean connectives.

Let~$\Struct$ be a $\sig$-structure.
BGS-terms and formulas are evaluated over $\HF{\StructV}$
by the denotation $\denotation{t}^\Struct \colon \HF{\StructV}^k \to \HF{\StructV}$
that maps values $\tup{a} = (a_1, \dots ,a_k) \in \HF{\StructV}^k$ for the free variables $\tup{x} = (x_1, \dots, x_k)$
of a term~$\termA$
to the value of~$\termA$ obtained if we interpret~$x_i$ with~$a_i$ (for every $i \in [k]$).
For a formula~$\formA$ with free variables~$\tup{x}$,
the denotation $\denotation{\formA}^\Struct$ is the set of all 
$\tup{a} = (a_1, \dots ,a_k) \in \HF{\StructV}^k$ satisfying~$\formA$.
The denotation of a comprehension term~$\termC$ as above is the following:
\[\denotation{\termC}^\Struct(\tup{a}) = 
\setcond*{\denotation{\termA}^\Struct(\tup{a}b)}{b \in \denotation{\termB}^\Struct(\tup{a}),
	(\tup{a}b) \in \denotation{\formA}^\Struct},\]
where~$\tup{a}b$ denotes the tuple $(a_1, \dots , a_k, b)$.
An iteration term $\termA[y]^*(\tup{x})$ for a tuple $\tup{b} \in \HF{\StructV}^{|\tup{x}|}$
with sets for the free variables
defines a sequence of sets
via
\begin{align*}
	a_0 &:= \emptyset,\\
	a_{i+1} &:= \denotation{\termA}^\Struct(\tup{b}a_i).
\end{align*}
Let $\ell := \ell(\termA[y]^*,\Struct,\tup{b})$
be the least number~$i$ such that $a_{i+1} = a_{i}$.
If such an~$\ell$ exists,
we set $\denotation{\termA[y]^*}^\Struct(\tup{b}) := a_\ell$ and we set
$\denotation{\termA[y]^*}^\Struct(\tup{b}) := \emptyset$ otherwise.

A CPT-term (or formula, respectively) is a tuple $(\termB,p)$ (or $(\formA, p)$, respectively) of a BGS-term (or formula) and a polynomial~$p(n)$.
The semantics of CPT is derived from BGS by replacing~$\termB$ with $(\termB,p)$
everywhere (or~$\formA$ with $(\formA, p)$)
with the following exception for iteration terms:
We define $\denotation{(\termA[y]^*,p)}^\Struct(\tup{b}) := \denotation{\termA[y]^*}^\Struct(\tup{b})$
if  $\ell(\termA[y]^*, \Struct,\tup{b}) \leq p(|\StructV|)$
and $|\TC{a_i}| \leq p(|\StructV|)$ for all~$i$,
where the sets~$a_i$ are defined as above.
Otherwise, we set $\denotation{(\termA[y]^*,p)}^\Struct(\tup{b}) := \emptyset$.
The size of~$a_i$ is measured by $|\TC{a_i}|$
because by transitivity $\TC{a_i}$
contains all sets $b_k \in  \dots \in b_1 \in a_i$
occurring somewhere in the structure of~$a_i$.
It suffices to put polynomial bounds on iteration terms
because all other terms increase the size of the defined sets only polynomially.

\subsection*{Logical Interpretations}
We use an easy notion of a CPT-interpretation in this article.
Let~$\sigA$ and $\sigB= \set{\rel_1, \dots, \rel_k}$ be relational signatures.
We write CPT$[\sigA]$ respectively CPT$[\sigB$]
for CPT-formulas or terms over the signatures~$\sigA$ or~$\sigB$.
A CPT$[\sigA,\sigB]$-interpretation~$\interpret$ with parameters~$\tup{x}$
is a tuple $\interpret(\tup{x}) = (\termA(\tup{x}), \formA_1(\tup{x}\tup{y}_1), \dots, \formA_k(\tup{x}\tup{y}_k))$
of a CPT$[\sigA]$-term~$\termA(\tup{x})$ and CPT$[\sigA]$-formulas $\formA_1(\tup{x}\tup{y}_1), \dots, \formA_k(\tup{x}\tup{y}_k)$
such that $|\tup{y}_i|$ equals the arity of~$\rel_i$ for every $i \in [k]$.

Let~$\StructA$ be a $\sigA$-structure and $\tup{a} \in \StructVA^{|\tup{x}|}$.
Then the $\sigB$-structure $\interpret(\StructA, \tup{a})$ 
has universe 
\begin{align*}
	\StructVB &:= \denotation{\termA}^\Struct(\tup{a})
	\intertext{and for every $i\in[k]$ the relation~$\rel_i$ of arity~$j$ defined by}
	\rel_i^{\interpret(\StructA, \tup{a})} &:= \setcond*{\tup{b} \in \StructVB^k}{\tup{a}\tup{b} \in \denotation{\formA_i}^\Struct}.
\end{align*}
For readers familiar with first-order interpretations,
we remark we neither need the notion of the dimension of a \CPT{}-interpretation
nor do we need a congruence relation in order to consider a quotient of the interpreted structure.
Both, tuples of fixed length and equivalence classes of such tuples,
can be defined directly in CPT using hereditarily finite sets.

\section{CPT with a Symmetric Choice Operator}
\label{sec:extension-wsc}	

We start by extending BGS with a fixed-point operator with witnessed symmetric choice (WSC-fixed-point operator).
The logic BGS+WSC is the extension of BGS logic
by the following operator to construct formulas:
\[\formB(\tup{z}) = \wscFormbig{x}{y}{\termStep(\tup{z}xy)}{\termChoice(\tup{z}x)}{\termWit(\tup{z}xy)}{\formOut(\tup{z}x)}.\]
Here,~$\termStep$,~$\termChoice$, and~$\termWit$ are BGS+WSC-terms
and~$\formOut$ is a BGS+WSC-formula.
The free variables of~$\termStep$ and~$\termWit$ apart from~$x$ and~$y$
and the free variables of~$\termChoice$ and~$\formOut$ apart from~$x$
are free in~$\formB$.
In particular,~$y$ is only bound in~$\termStep$ and~$\termWit$.
We call~$\termStep$ the \defining{step term},~%
$\termChoice$ the \defining{choice term},~%
$\termWit$ the \defining{witnessing term},~%
and~$\formOut$ the \defining{output formula}.
Intuitively,
we want to iterate the step term $\termStep(x,y)$ until we reach a fixed-point
for the set~$x$.
However, we choose before each step an element~$y$ of the choice set 
defined by the choice term $\termChoice(x)$.
Once a fixed-point is reached,
the witnessing term~$\termWit$ must provide automorphisms for every intermediate step
witnessing that we indeed chose from orbits (details later).
Finally,~$\formB$ is satisfied
if the output formula $\formOut(a)$ is satisfied where~$a$ is the set computed through iteration with choice.
Because we always choose from orbits,
$\formOut(a)$ is satisfied  by some fixed-point~$a$
if and only it is satisfied for every possible fixed-point~$a$
(details also later).
In that way, the evaluation of BGS+WSC-terms and formulas is still
deterministic, so does not depend on any choices made in the fixed-point computation.

We remark that here it seems reasonable to allow free variables in iteration terms.
Otherwise, they cannot be used in~$\termStep$,~$\termChoice$,~$\termWit$, and~$\formOut$.
While for CPT or BGS it is clear that nested iteration terms can be eliminated,
this is not clear for BGS+WSC.
	
\paragraph{An example.}
To illustrate the definition we discuss an example. 
A \emph{universal vertex} is a vertex adjacent to every other vertex. A graph~$G=(V,E)$ is a \emph{threshold} graph if we can reduce it to the empty graph by repeatedly removing
a universal or an isolated vertex. Our example describes a CPT+WSC-sentence that defines the class of threshold graphs
(which we only do for illustration as the class of threshold graphs is already
IFP-definable). 
The idea is simple: the set of vertices that are universal or isolated form an orbit (note that a graph on more than one vertex cannot have a universal and an isolated vertex at the same time).
Thus, we use a WSC-fixed-point operator to choose one such vertex, 
remove it, and repeat this, until no vertex can be removed anymore.

We start with the choice term $\termChoice(x)$. For a set~$x\subseteq V$ the following term defines the set of vertices that are universal or isolated in~$G[V\setminus x]$:
\begin{align*}
	\termC &:= \Atoms \setminus x,\\
	\termChoice(x) &:= \setcond*{y}{y \in \termC, \big(\forall z \in  \termC.\qspace y = z \lor E(y,z)\big) \lor \big(\forall z \in  \termC.\qspace y = z \lor \neg E(y,z)\big) }.
	\intertext{Then the step term $\termStep(x,y)$ adds a chosen vertex~$y$ to those already removed:}
	\termStep(x,y) &:=x\cup (\set{y} \cap \Atoms).
\end{align*}
The intersection with~$\Atoms$ is needed
to obtain a fixed-point when~$\termChoice$ defines the empty set
(in which case~$y$ is the empty set, too).
As certification, the term $\termB(z,z')$ defines the transposition of~$z$ and~$z'$ and the witnessing term~$\termWit(x)$ collects all transpositions of pairs of vertices in the choice set:
\begin{align*} 
	\termB(z,z') &:= \setcond[\big]{(x,x)}{x \in \Atoms \setminus \set{z,z'}} \cup \set[\big]{(z,z'),(z',z)}\\
	\termWit(x) &:= \setcond*{\termB(z,z')}{(z,z') \in x^2}.
\end{align*}
Finally, the output formula~$\formOut(x) := x = \Atoms$ checks whether all vertices have been removed.
Overall, the following formula defines the class of threshold graphs:
\[ \wscForm{x}{y}{\termStep}{\termChoice}{\termWit}{\formOut}.\]
The WSC-fixed-point operator will compute the fixed-point of the variable~$x$ starting with $x_0=\emptyset$ as initial value for~$x$.
First, the term $\termChoice(a_0)$ is evaluated to define the first choice set~$y_0$ of all universal or isolated vertices of~$G$.
Then one such vertex~$\vertA$ is chosen
and the step term $\termStep(x_0,\set{\vertA})$ is evaluated
yielding $x_1 = \set{\vertA}$.

Now inductively assume that~$x_i$ contains all vertices removed so far.
Then~$y_i$ is the set of all universal or isolated vertices of $G[V\setminus x_i]$, this is now an orbit of $(G, x_i)$ (in fact, an orbit of $(G,x_0, \dots , x_i)$).
So again, a vertex $\vertA \in y_i$ is chosen
and added by~$\termStep$ to~$x_i$ yielding the set~$x_{i+1}$.

So finally assume the case that~$y_i$ is empty.
Then nothing is chosen and the step term evaluated 
and yields $\termStep(x_i, \emptyset) = x_i$, so a fixed-point is reached.
Then the output formula defines whether it was possible to remove all vertices.

Now, the WSC-fixed-point operator evaluates the witnessing term
to certify that indeed all choices where made from orbits.
As argued before, all choice sets are indeed orbits
and the term $\termWit(y_i)$ outputs for every~$y_i$ a set of automorphisms,
that for every $\vertA,\vertB \in y_i$ contains an automorphism
mapping~$\vertA$ to~$\vertB$
and so it is certified that the choice sets indeed are orbits.

\subsection{Semantics of Symmetric Choice Operators}
\label{sec:semantics-choice-operators}
We now define the precise semantics of the WSC-fixed-point operator.
We define the evaluation of WSC-fixed-point operators
for arbitrary isomorphism-invariant functions in place of the choice, step, and witnessing terms.
This makes the definition independent of the semantics of CPT
and we can reuse it later in Section~\ref{sec:iso-in-cpt+wsc}.

\begin{definition}[Isomorphism-Invariant Function]
	For a $\sig$-structure~$\Struct$ and a tuple $\tup{a} \in \HF{\StructV}^*$,
	a function $f\colon \HF{\StructV}^k \to \HF{\StructV}$ is called
	\defining{$(\Struct, \tup{a})$-isomorphism-invariant} if
	every automorphism $\auto \in \autGroup{(\Struct, \tup{a})}$
	satisfies
	$f(\auto(\tup{b})) = \auto(f(\tup{b}))$  for every $\tup{b} \in \HF{\StructV}^k$. 
\end{definition}
\begin{definition}[Witnessing an Orbit]
	For a $\sig$-structure~$\Struct$ and a tuple $\tup{a} \in \HF{\StructV}^*$,
	a set~$M$ \defining{witnesses} a set~$N$ as orbit of $(\Struct, \tup{a})$
	if $M \subseteq \autGroup{(\Struct, \tup{a})}$
	and for every $b,c \in N$
	there is a $\phi \in M$
	satisfying $\phi(b) = c$.
\end{definition}
Note that this definition in principle also allows witnessing proper subsets of orbits.
However, the sets~$N$ of interest in the following
will always be given by an isomorphism-invariant function
and so~$N$ can never be a proper subset of an orbit.

Now fix an arbitrary $\sig$-structure~$\Struct$ and a tuple $\tup{a} \in \HF{\StructV}^*$.
Let $\stepF^{\Struct,\tup{a}}, \autF^{\Struct,\tup{a}}  \colon \HF{\StructV} \times \HF{\StructV} \to \HF{\StructV}$ 
and $\choiceF^{\Struct,\tup{a}} \colon \HF{\StructV} \to \HF{\StructV}$
be $(\Struct, \tup{a})$-isomorphism-invariant functions.

We define the (possibly infinite) unique least rooted tree~$\tree$ 
whose vertices are labeled with $\HF{\StructV}$-sets (so two nodes in the tree can have the same label)
and which satisfies the following:
\begin{itemize}
	\item The root is labeled with~$\emptyset$.
	\item A vertex labeled with $c \in \HF{\StructV}$
	has for every  $d\in \choiceF^\Struct(b)$
	a child labeled with $\stepF^\Struct(b, c)$.
\end{itemize}
Let~$\paths$ be the set of tuples $p = (b_1, \dots, b_n)$ of $\HF{\StructV}$-sets
such that ${n\geq 2}$, ${b_1 = \emptyset}$, $b_{n-1} = b_n$,
$b_{i-1} \neq b_{i}$ for all $1< i < n$,
and there is a path of length~$n$ in~$\tree$ starting at the root
and the $i$-th vertex in the path is labeled with $b_{i}$ for all $i \in [n]$.
That is,~$\tree$ models the computation for all possible choices
and~$\paths$ is the set of all possible labels yielding a fixed-point.
For sake of readability, we call the elements of~$\paths$ also paths.

We say that the function $\autF^{\Struct,\tup{a}}$ \defining{witnesses}
a path $p= (b_1, \dots, b_n) \in \paths$
if for every $i \in [n-1]$
it holds that
$\autF^{\Struct,\tup{a}}(b_n,b_i)$
witnesses $\choiceF^{\Struct,\tup{a}}(b_i)$ as an $(\Struct, \tup{a}, b_1, \dots, b_i)$-orbit.
Finally, we define 
\[\wsc{\stepF^{\Struct,\tup{a}}}{\choiceF^{\Struct,\tup{a}}}{\autF^{\Struct,\tup{a}}} :=
\setcond*{b_n}{(b_1,\dots,b_n) \in \paths}\]
if $\autF^{\Struct,\tup{a}}$ witnesses all paths in~$\paths$
and $\wsc{\stepF^{\Struct,\tup{a}}}{\choiceF^{\Struct,\tup{a}}}{\autF^{\Struct,\tup{a}}} := \emptyset$
otherwise.

\begin{lemma}
	\label{lem:if-witnessed-path-then-orbit}
	If $\autF^{\Struct,\tup{a}}$ witnesses some path in~$\paths$,
	then~$\paths$ is an orbit of $(\StructA, \tup{a})$.
\end{lemma}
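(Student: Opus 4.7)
The plan is to verify the two inclusions of the orbit equality $\paths = \bsetcond{\phi(p)}{\phi \in \autGroup{(\Struct, \tup{a})}}$, where $p = (b_1, \dots, b_n)$ denotes the path witnessed by $\autF^{\Struct,\tup{a}}$.

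For the easy inclusion, I would use that $\stepF^{\Struct,\tup{a}}$ and $\choiceF^{\Struct,\tup{a}}$ are $(\Struct, \tup{a})$-isomorphism-invariant, so every $\phi \in \autGroup{(\Struct, \tup{a})}$ induces an automorphism of the labeled tree $\tree$: it maps a vertex labeled $c$ to one labeled $\phi(c)$, and preserves the children relation because $\phi(\stepF^{\Struct,\tup{a}}(c,d)) = \stepF^{\Struct,\tup{a}}(\phi(c), \phi(d))$ with $\phi(d) \in \choiceF^{\Struct,\tup{a}}(\phi(c))$. The path conditions $b_1 = \emptyset$, $b_{n-1} = b_n$, and $b_{i-1} \neq b_i$ for $1 < i < n$ are all preserved under $\phi$, so $\phi(p) \in \paths$.

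For the reverse inclusion, I would take an arbitrary $p' = (c_1, \dots, c_m) \in \paths$ and construct, by induction on $i$, automorphisms $\phi_i \in \autGroup{(\Struct, \tup{a})}$ with $\phi_i(b_j) = c_j$ for all $j \leq i$. Take $\phi_1$ to be the identity. In the step, write $b_{i+1} = \stepF^{\Struct,\tup{a}}(b_i, d_i)$ and $c_{i+1} = \stepF^{\Struct,\tup{a}}(c_i, e_i)$ for some $d_i \in \choiceF^{\Struct,\tup{a}}(b_i)$ and $e_i \in \choiceF^{\Struct,\tup{a}}(c_i)$. The witnessing hypothesis says that $\choiceF^{\Struct,\tup{a}}(b_i)$ is an orbit of $(\Struct, \tup{a}, b_1, \dots, b_i)$; applying $\phi_i$, which fixes $\tup{a}$ and sends $b_j$ to $c_j$ for $j \leq i$, transports this to the statement that $\choiceF^{\Struct,\tup{a}}(c_i)$ is an orbit of $(\Struct, \tup{a}, c_1, \dots, c_i)$. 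Hence some $\psi \in \autGroup{(\Struct, \tup{a}, c_1, \dots, c_i)}$ sends $\phi_i(d_i)$ to $e_i$, and $\phi_{i+1} := \psi \circ \phi_i$ is the desired extension (it still maps $b_j \mapsto c_j$ for $j \leq i$ because $\psi$ fixes $c_1, \dots, c_i$, and it maps $b_{i+1}$ to $\stepF^{\Struct,\tup{a}}(c_i, e_i) = c_{i+1}$).

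The remaining point is to argue $n = m$ so that $\phi_{n-1}$ is a full witness $\phi_{n-1}(p) = p'$. I would extract this from the orbit property itself: the specific choice $d_{n-1}$ on $p$ satisfies $\stepF^{\Struct,\tup{a}}(b_{n-1}, d_{n-1}) = b_{n-1}$, and since by witnessing $\choiceF^{\Struct,\tup{a}}(b_{n-1})$ is a $(\Struct, \tup{a}, b_1, \dots, b_{n-1})$-orbit, isomorphism-invariance of $\stepF^{\Struct,\tup{a}}$ propagates the fixed-point property to every element of that orbit. Transporting along $\phi_{n-1}$ gives the same at $c_{n-1}$, forcing $c_{m-1} = c_m$ to occur already at index $n$, i.e. $m = n$. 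The principal technical obstacle I anticipate is exactly this termination bookkeeping — ensuring the two paths stop at the same index — since the inductive correction step is the standard device in witnessed-choice arguments.
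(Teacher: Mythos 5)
Your proposal is correct and follows essentially the same route as the paper's proof: the paper also proceeds by induction on the prefix length, using the witnessed-orbit property of $\choiceF^{\Struct,\tup{a}}(b_i)$ together with isomorphism-invariance of $\stepF^{\Struct,\tup{a}}$ and $\choiceF^{\Struct,\tup{a}}$ to build the correcting automorphism $\autoB \circ \autoA$ in one direction and to push automorphisms through the tree in the other. Your explicit termination bookkeeping (that both paths reach their first repetition at the same index) is a point the paper leaves implicit, and your argument for it is sound since each $\phi_i$ is a bijection.
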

\begin{proof}
	Let $p^* = (b_1^*, \dots , b_k^*) \in \paths$ be a witnessed path,
	let~$P_i$ be the set of prefixes of length~$i$ of the paths in~$\paths$,
	and let~$p^*_i$ be the prefix of length~$i$ of~$p^*$.
	We prove by induction on~$i$ that~$P_i$ is an orbit.	
	For the root~$\emptyset$ the claim trivially holds.
	
	We show that $p =(b_1, \dots, b_{i+1})$ is in $P_{i+1}$
	if and only if there is an automorphism $\auto \in \autGroup{(\Struct, \tup{a})}$ such that $\auto(p^*_{i+1})=p$.
	Let~$p_i$ be the prefix of length~$i$ of~$p$.
	
	First, assume that $p \in P_{i+1}$.
	By the induction hypothesis,
	there is an automorphism
	$\autoA \in \autGroup{(\Struct, \tup{a})}$
	such that $\autoA(p^*_i) = p_i$.
	By definition of~$\tree$,
	for some
	$c^*_{i} \in \choiceF^{\Struct,\tup{a}}(b^*_i)$ and some
	$c_{i} \in \choiceF^{\Struct,\tup{a}}(b_i)$
	it holds that
	$b^*_i = \stepF^{\Struct,\tup{a}}(b^*_i,c^*_{i})$ and
	$b_i = \stepF^{\Struct,\tup{a}}(b_i,c_{i})$.
	Because $\choiceF^{\Struct,\tup{a}}$ is isomorphism-invariant,
	we have that $\choiceF^{\Struct,\tup{a}}(b_i)=\autoA(\choiceF^{\Struct,\tup{a}}(b^*_i))$.
	Because~$p^*$ is witnessed,
	$\choiceF^{\Struct,\tup{a}}(b^*_i)$ is an orbit 
	of $(\StructA, \tup{a} p^*_i)$ 
	and so
	$\choiceF^{\Struct,\tup{a}}(b_i)$ is an orbit 
	of $(\StructA, \tup{a} p_i) = \autoA((\StructA, \tup{a} p^*_i))$.
	It follows that $\autoA(c^*_{i})$ and $c_{i}$ are in the same orbit of $(\StructA, \tup{a}  p_i)$.
	So let $\autoB \in \autGroup{(\StructA, \tup{a} p_i)}$
	such that $\autoB(\autoA(c^*_{i})) = c_{i}$.
	We now have
	\begin{align*}
		\autoB(\autoA(b^*_{i+1}))&=\autoB(\autoA(\stepF^{\Struct,\tup{a}}(b^*_i,c^*_{i})))\\
		&= \stepF^{\Struct,\tup{a}}(\autoB(\autoA(b^*_i)),\autoB(\autoA(c^*_{i})))\\
		&= \stepF^{\Struct,\tup{a}}(b_i,c_{i}) = b_{i+1}
	\end{align*}
	because $\stepF^{\Struct,\tup{a}}$ is isomorphism-invariant.
	Because~$\autoB$ fixes $b_1, \dots, b_i$,
	we finally have that $(\autoB \circ \autoA) (p^*_{i+1}) = p$.
	
	Second, assume that $\auto \in \autGroup{(\StructA, \tup{a})}$
	such that $p = \auto(p^*_i)$.
	Then by induction hypothesis,
	$p_i = \auto(p^*_i) \in P_i$.
	As before, let $c^*_{i} \in \choiceF^{\Struct,\tup{a}}(b^*_i)$
	such that $b^*_{i+1} = \stepF^{\Struct,\tup{a}}(b^*_i, c^*_i)$.
	Because $\choiceF^{\Struct,\tup{a}}$ is isomorphism-invariant,
	it holds that
	$\auto(c^*_{i+1}) \in \choiceF^{\Struct,\tup{a}}(\auto(b^*_i),\auto(c^*_i))$.
	Because $\stepF^{\Struct,\tup{a}}$ is isomorphism-invariant,
	it holds that 
	\[b_{i+1} = \auto(b^*_{i+1}) =
	\auto(\stepF^{\Struct,\tup{a}}(b^*_i,c^*_{i+1})) =
	\stepF^{\Struct,\tup{a}}(\auto(b^*_i), \auto(c^*_{i+1})) = \stepF^{\Struct,\tup{a}}(b_i, \auto(c^*_{i}).\]
	That is, the vertex corresponding to~$b_i$ (when following the tree~$\tree$ starting at the root)
	has a child labeled with~$b_{i+1}$
	and hence $p \in P_{i+1}$.
\end{proof}
\begin{corollary}\label{cor:witnessed-all-or-none}
	The function $\autF^{\Struct,\tup{a}}$ either witnesses all paths in~$\paths$ or none of them.
\end{corollary}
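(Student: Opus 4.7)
The plan is to reduce the corollary to the preceding lemma together with the isomorphism-invariance of $\autF^{\Struct,\tup{a}}$. Assume $\autF^{\Struct,\tup{a}}$ witnesses at least one path $p^* = (b_1^*,\dots,b_n^*) \in \paths$; I need to show it then witnesses every $p = (b_1,\dots,b_m)\in \paths$. By Lemma~\ref{lem:if-witnessed-path-then-orbit}, $\paths$ is an orbit of $(\StructA,\tup{a})$, so there is an automorphism $\auto \in \autGroup{(\StructA,\tup{a})}$ with $\auto(p^*) = p$; in particular $m=n$ and $\auto(b_i^*)=b_i$ for all $i\in[n]$.

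For each $i \in [n-1]$, I have to show that $\autF^{\Struct,\tup{a}}(b_n,b_i)$ witnesses $\choiceF^{\Struct,\tup{a}}(b_i)$ as an $(\Struct,\tup{a},b_1,\dots,b_i)$-orbit. Since $\autF^{\Struct,\tup{a}}$ is $(\Struct,\tup{a})$-isomorphism-invariant,
\[ \autF^{\Struct,\tup{a}}(b_n,b_i) \;=\; \autF^{\Struct,\tup{a}}(\auto(b_n^*),\auto(b_i^*)) \;=\; \auto(\autF^{\Struct,\tup{a}}(b_n^*,b_i^*)). \]
Encoding automorphisms as hereditarily finite sets, the action of $\auto$ on an element $\phi$ of $\autF^{\Struct,\tup{a}}(b_n^*,b_i^*)$ is conjugation: $\auto(\phi) = \auto \circ \phi \circ \inv{\auto}$. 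If $\phi \in \autGroup{(\Struct,\tup{a},b_1^*,\dots,b_i^*)}$, then $\auto\phi\inv{\auto}$ fixes $\Struct$, $\tup{a}$, and $\auto(b_j^*)=b_j$ for each $j\le i$, so it lies in $\autGroup{(\Struct,\tup{a},b_1,\dots,b_i)}$.

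It remains to verify the transitivity-on-pairs condition. Given $c,d \in \choiceF^{\Struct,\tup{a}}(b_i)$, isomorphism-invariance of $\choiceF^{\Struct,\tup{a}}$ gives $\inv{\auto}(c),\inv{\auto}(d)\in \choiceF^{\Struct,\tup{a}}(b_i^*)$. By the assumption that $p^*$ is witnessed, there exists $\phi\in \autF^{\Struct,\tup{a}}(b_n^*,b_i^*)$ with $\phi(\inv{\auto}(c))=\inv{\auto}(d)$, hence $(\auto\phi\inv{\auto})(c)=d$, and $\auto\phi\inv{\auto} \in \auto(\autF^{\Struct,\tup{a}}(b_n^*,b_i^*)) = \autF^{\Struct,\tup{a}}(b_n,b_i)$. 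Thus $\autF^{\Struct,\tup{a}}(b_n,b_i)$ witnesses $\choiceF^{\Struct,\tup{a}}(b_i)$ as an $(\Struct,\tup{a},b_1,\dots,b_i)$-orbit, completing the inductive step across $i$.

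The only real subtlety is bookkeeping the action of $\auto$ on the set-encoded automorphisms produced by $\autF^{\Struct,\tup{a}}$; since automorphisms live in $\HF{\StructV}$ and $\autF^{\Struct,\tup{a}}$ is isomorphism-invariant, this pushforward is exactly conjugation, which preserves both membership in the relevant stabilizer and the orbit-covering property. I expect no other obstacle: the lemma does the heavy lifting by forcing $\paths$ to be a single $(\Struct,\tup{a})$-orbit, and the corollary is then a clean transport-of-structure argument.
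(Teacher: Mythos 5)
Your proof is correct and follows essentially the same route as the paper's: invoke Lemma~\ref{lem:if-witnessed-path-then-orbit} to get an automorphism $\auto$ with $\auto(p^*)=p$, then transport the witnessing sets via isomorphism-invariance of $\autF^{\Struct,\tup{a}}$ and $\choiceF^{\Struct,\tup{a}}$. Your explicit observation that the pushforward of a set-encoded automorphism under $\auto$ is conjugation $\auto\phi\inv{\auto}$ is in fact a welcome clarification of a step the paper's own proof glosses over.
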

\begin{proof}
	If there is a witnessed path $p^* = (b_1^*, \dots , b_k^*)$ in~$\paths$,
	then the set $\paths$ is an orbit.
	The claim follows because $\choiceF^{\Struct,\tup{a}}$ is isomorphism-invariant:
	So for every other path~$p$, there is a
	$\auto \in \autGroup{(\Struct, \tup{a})}$
	such that $p = \auto(p^*)$
	and in particular $p = (b_1, \dots, b_k)$ (i.e.,~$p$ and~$p^*$ have the same length).
	Let $i \in [n-1]$.
	Then
	\begin{align*}
		\autF^{\Struct,\tup{a}}(b_n,b_i) = \autF^{\Struct,\tup{a}}(\auto(b^*_n), \auto(b^*_i))
	= \auto(\autF^{\Struct,\tup{a}}(b^*_n,b^*_i)) &\subseteq 
	 \auto(\autGroup{(\Struct, \tup{a}, b^*_1,\dots, b^*_i)})\\ &= \autGroup{\Struct,\tup{a},b_1,\dots, b_i}.
	\end{align*}
	Let $d,e \in \choiceF^{\Struct,\tup{a}}(b_i, c_i)$.
	Then again because $\choiceF^{\Struct,\tup{a}}$ is isomorphism-invariant,
	we have $\inv{\auto}(d),\inv{\auto}(e) \in \choiceF^{\Struct,\tup{a}}(b^*_i,c^*_i)$.
	Because~$p^*$ is witnessed,
	there is a $\autoB \in \autF^{\Struct,\tup{a}}(b^*_n, c^*_i)$
	such that $\autoB(\inv{\auto}(d)) = \inv{\auto}(e)$.
	Hence, $\autoB \circ \auto$ is contained in $ \choiceF^{\Struct,\tup{a}}(b_i,c_i)$
	and maps~$d$ to~$e$.
\end{proof}
	
\begin{corollary}\label{cor:wsc-orbit}
	$\wsc{\stepF^{\Struct,\tup{a}}}{\choiceF^{\Struct,\tup{a}}}{\autF^{\Struct,\tup{a}}}$
	is an $(\Struct,\tup{a})$-orbit.
\end{corollary}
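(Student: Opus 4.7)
The plan is to reduce the claim to Lemma~\ref{lem:if-witnessed-path-then-orbit} via the case distinction furnished by Corollary~\ref{cor:witnessed-all-or-none}.

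First, if $\autF^{\Struct,\tup{a}}$ witnesses no path in $\paths$, then by definition the WSC-value equals $\emptyset$. The orbit condition ``$b = \{\auto(c) \mid \auto \in \autGroup{(\Struct, \tup{a})}\}$ for one (and thus every) $c \in b$'' holds vacuously for $b = \emptyset$, so this case is immediate.

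Otherwise, Corollary~\ref{cor:witnessed-all-or-none} guarantees that $\autF^{\Struct,\tup{a}}$ witnesses every path, and Lemma~\ref{lem:if-witnessed-path-then-orbit} then asserts that $\paths$ itself is an $(\Struct, \tup{a})$-orbit. In this case the WSC-value is the image $S = \{b_n \mid (b_1, \ldots, b_n) \in \paths\}$ of $\paths$ under the last-entry projection. I would conclude by checking the two orbit properties for $S$: closure under the automorphism action is inherited from the orbit property of $\paths$, since for any $\auto \in \autGroup{(\Struct, \tup{a})}$ and any path $(b_1, \ldots, b_n) \in \paths$ the tuple $(\auto(b_1), \ldots, \auto(b_n))$ again lies in $\paths$ and has last entry $\auto(b_n)$; transitivity of the action on $S$ likewise follows because, for $b_n, b_n' \in S$ originating from paths $p, p' \in \paths$, some $\auto \in \autGroup{(\Struct, \tup{a})}$ with $\auto(p) = p'$ exists and necessarily satisfies $\auto(b_n) = b_n'$.

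There is essentially no obstacle here: all of the combinatorial substance was already absorbed by Lemma~\ref{lem:if-witnessed-path-then-orbit}. The only subtleties worth noting are the vacuous treatment of $\emptyset$ as an orbit (justified by the ``for one (and thus every)'' phrasing in the definition), and the fact that automorphisms act componentwise on path tuples so that the last-entry projection is equivariant and hence carries orbits to orbits.
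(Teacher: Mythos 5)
Your proposal is correct and follows essentially the same route as the paper: the empty case is handled trivially, and in the witnessed case Lemma~\ref{lem:if-witnessed-path-then-orbit} gives that $\paths$ is an orbit, from which the set of last entries is an orbit because the last-entry projection is equivariant (the paper states this more tersely, noting that all paths have the same length so the last entries form an orbit). Your extra appeal to Corollary~\ref{cor:witnessed-all-or-none} to pass from ``some path witnessed'' to ``all paths witnessed'' is harmless and matches the definition of the $\wscStarSym$-operator.
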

\begin{proof}
	Assume that there is a non-witnessed path (or no path) in~$\paths$.
	In this case $\wsc{\stepF^{\Struct,\tup{a}}}{\choiceF^{\Struct,\tup{a}}}{\autF^{\Struct,\tup{a}}} = \emptyset$ trivially satisfies the claim.
	Otherwise, there is a witnessed path and~$\paths$ is an orbit,
	all paths have the same length,
	and $\wsc{\stepF^{\Struct,\tup{a}}}{\choiceF^{\Struct,\tup{a}}}{\autF^{\Struct,\tup{a}}}$
	is the set of all sets~$b$, which are the last entry of some path in~$\paths$.
	Because~$\paths$ is an orbit,
	in particular the last vertices in every root-to-leaf-path
	(which are necessarily at the same depth) form an orbit
	and the claim follows.
\end{proof}

Now we can define the denotation of WSC-fixed-point operators:
For a BGS+WCS-term~$\termA$ with free variables $x_1, \dots, x_k$
and a tuple $\tup{a} \in \HF{\StructV}^\ell$ for some $\ell \leq k$,
we write $\denotation{\termA}^\Struct(\tup{a})$
for the ``partial'' application of the function $\denotation{\termA}^\Struct$,
so for the function $\HF{\StructV}^{k-\ell} \to \HF{\StructV}$
defined by $\tup{b} \mapsto \denotation{\termA}^\Struct(\tup{a}\tup{b})$.

Let~$\termStep$ and~$\termWit$ be BGS+WSC-terms with free variables~$\tup{z}xy$,~%
$\termChoice$ be a BGS+WSC-term with free variables~$\tup{z}x$,
and~$\formOut$ be a BGS+WSC-formula with free variables~$\tup{z}x$.
We define
\begin{align*}
&\denotation{\wscForm{x}{y}{\termStep}{\termChoice}{\termWit}{\formOut}}^\Struct:=\\
	&\qquad\setcond*{\tup{a}}{\tup{a}b \in \denotation{\formOut}^\Struct \text{ for every } b \in \wsc{\denotation{\termStep}^\Struct(\tup{a})}{\denotation{\termChoice}^\Struct(\tup{a})}{\denotation{\termWit}^\Struct(\tup{a})}}.
\end{align*}
In~\cite{DawarRicherby03}, the fixed-point operator with symmetric choice
is not evaluated on~$\Struct$, but on the reduct $\reduct{\Struct}{\sig'}$,
where $\sig'\subseteq \sig$ is the subset of relations of the $\sig$-structure~$\Struct$ used in the fixed-point operator.
This ensures that adding unused relations to structures does not change the result of a formula (the additional relations potentially change the orbits of the structure and choices cannot be witnessed anymore),
which is a desirable property~\cite{ebbinghaus1985}.
We do not use the ``reduct semantics'' in this article.
We could in principle use it
but then Section~\ref{sec:iso-in-cpt+wsc} 
would get even more technical without providing further insights.

\paragraph{Failure on Non-Witnessed Choices.}
While the denotation defined as above results in a reasonable logic,
we want a special treatment of the case when choices cannot be witnessed.
Whenever during the evaluation of a formula
there is a path in~$\paths$ that is not witnessed,
we abort evaluation and output an error, indicating there was a non-witnessed choice.
Formally, we extend the denotation by an error-marker~$\choiceError$.
Then the denotation of a term becomes a function $\HF{\StructV}^k \to \HF{\StructV} \cup \set{\choiceError}$
and the denotation of a formula a function $\HF{\StructV}^k \to \set{\top, \bot, \choiceError}$.
Whenever a~$\choiceError$ occurs, it is just propagated.
We omit the formal definitions here.
Later, we will see that the error marker is necessary to guarantee polynomial-time evaluation.

\paragraph{Fixing Intermediate Steps.}
We defined the evaluation of choice terms similar to~\cite{DawarRicherby03}
using the tree~$\tree$. However, our definition is different in one crucial aspect:
in the setting of Lemma~\ref{lem:if-witnessed-path-then-orbit},
we require that $\choiceF(b_i)$ defines an orbit of $(\StructA, \tup{a}, b_1, \dots , b_i)$,
where in~\cite{DawarRicherby03} an orbit of $(\StructA, \tup{a}, b_i, c_i)$
is required.
That is, in BGS+WSC one has to respect in some sense all choices made in previous intermediate steps during the fixed-point computation.
This is crucial to prove Lemma~\ref{lem:if-witnessed-path-then-orbit}.
This is not required in~\cite{DawarRicherby03}
because the authors only need that the vertices in~$\tree$ on the same level
are in the same orbit.
We actually need that the paths in the tree in their entirety form an orbit
to establish Corollary~\ref{cor:witnessed-all-or-none}.
Due to this corollary, we can give the witnessing term
access to~$b_n$ when
witnessing $(\StructA, \tup{a}, b_1,\dots, b_i)$-orbits.
Accessing the defined fixed-point to witness intermediate choice sets
will become crucial in the following, namely to define Gurevich's algorithm in Section~\ref{sec:canon-in-cpt-wsc}.
We do not require that the actual chosen elements~$c_i$ are fixed by the automorphisms
because, in contrast to~\cite{DawarRicherby03}, the choice term only gets the~$b_i$ as input and not the~$c_i$.
So if~$c_i$ and~$c_i'$ result in the same next intermediate step~$b_{i+1}$, the subsequent computation will be the same for both choices.
For the very same reason and again in contrast to~\cite{DawarRicherby03},
it is sufficient to label the vertices in the tree~$\tree$ only with the intermediate steps~$b_i$ and not additionally with the chosen elements.

\subsection{CPT+WSC}
Similarly to how CPT is obtained from BGS, we obtain CPT+WSC by enforcing polynomial bounds on BGS+WSC terms and formulas:
A CPT+WSC-term (respectively, formula) is a pair $(\termA, p(n))$ (respectively, $(\formA, p(n))$)
of a BGS+WSC-term (respectively, formula) and a polynomial.

For BGS-operators, we add the same restrictions as in CPT.
For a WSC-fixed-point operator $\wscForm{x}{y}{\termStep}{\termChoice}{\termWit}{\formOut}$,
a structure~$\StructA$, and a tuple $\tup{a} \in \HF{\StructV}^*$,
we restrict $\paths$ to paths $(b_1, \dots, b_k)$ of length $k \leq p(|\StructV|)$
for which ${|\TC{b_i}| \leq p(|\StructV|)}$
for all $i \in [k]$.
If there is a path in~$\paths$ of length greater than $p(|\StructV|)$
or in some path there is a set not bounded by~$p$,
then the WSC-fixed-point operator has denotation~$\choiceError$.

It is important
that we do not require that $|\wsc{\denotation{\termStep}^\Struct(\tup{a})}{\denotation{\termChoice}^\Struct(\tup{a})}{\denotation{\termWit}^\Struct(\tup{a})}|$
is bounded by $p(|\StructV|)$.
In fact, using WSC-fixed-point operators
only makes sense if the set is allowed to be of superpolynomial
size, as otherwise we could define it
with a regular iteration term.
It is also important to output~$\choiceError$ and not~$\emptyset$
when the polynomial bound is exceeded
because in that case we cannot validate whether all choice sets
are orbits (and so it might depend on the choices whether the
bound is exceeded or not).
To evaluate the witnessing term we need access to the fixed-point,
which cannot be computed if the polynomial bound is exceeded.

Because the WSC-fixed-point operator can only choose from orbits,
CPT+WSC is isomorphism-invariant:
\begin{lemma}\label{lem:cpt-wsc-iso-invariant}
	For every structure~$\StructA$,
	every CPT+WSC-term~$\termA$,
	and every CPT+WSC-formula~$\formA$,
	the denotations
	$\denotation{\termA}^\StructA$ and 
	$\denotation{\formA}^\StructA$ are unions of $\StructA$-orbits.
\end{lemma}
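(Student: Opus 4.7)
I would prove this by simultaneous structural induction on CPT+WSC-terms and CPT+WSC-formulas. Since the denotation of a term is a function $\HF{\StructV}^k \to \HF{\StructV} \cup \{\choiceError\}$ and the denotation of a formula a subset of $\HF{\StructV}^k$, the statement ``union of $\StructA$-orbits'' is most conveniently expressed as equivariance: for every $\auto \in \autGroup{\StructA}$ and every $\tup{b} \in \HF{\StructV}^k$, one has
\[
  \denotation{\termA}^\Struct(\auto(\tup{b})) = \auto\bigl(\denotation{\termA}^\Struct(\tup{b})\bigr)
  \qquad\text{and}\qquad
  \tup{b} \in \denotation{\formA}^\Struct \iff \auto(\tup{b}) \in \denotation{\formA}^\Struct,
\]
with the convention $\auto(\choiceError) = \choiceError$. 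I would prove this strengthened statement.

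The base cases (variables, the constants $\emptyset$ and $\Atoms$, the set-theoretic symbols $\Pair$, $\Union$, $\Unique$, $\Card$, and relation atoms from $\sig$) are immediate from the definitions: $\Atoms$ maps to $\StructV$ which is setwise fixed by $\autoA$, the set-theoretic operations commute with any bijection applied elementwise, and relation atoms hold equivariantly because $\auto$ is an automorphism. Boolean connectives preserve equivariance trivially. For comprehension terms and iteration terms, the argument is identical to the standard one for CPT: the $\auto$-image of the comprehended set is the comprehended set computed from $\auto(\tup{a})$, and the iteration stages $a_0, a_1, \ldots$ computed from $\auto(\tup{b})$ are exactly the $\auto$-images of those computed from $\tup{b}$, by the inductive hypothesis applied to the step term; the polynomial bounds on $\ell$ and $|\TC{a_i}|$ are invariant under applying $\auto$, so the error marker is triggered equivariantly.

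The essential new case is the WSC-fixed-point operator
$\formB(\tup{z}) = \wscForm{x}{y}{\termStep}{\termChoice}{\termWit}{\formOut}$. Fix $\auto \in \autGroup{\StructA}$ and values $\tup{a}$ for $\tup{z}$. By the inductive hypothesis, the three functions
$\denotation{\termStep}^\Struct(\auto(\tup{a}))$, $\denotation{\termChoice}^\Struct(\auto(\tup{a}))$, $\denotation{\termWit}^\Struct(\auto(\tup{a}))$
are the $\auto$-conjugates of the corresponding functions indexed by $\tup{a}$. Consequently, the tree $\tree$ and the path set $\paths$ built from $\auto(\tup{a})$ are obtained from those built from $\tup{a}$ by relabelling every node with its $\auto$-image; in particular every path is witnessed in the one configuration iff its $\auto$-image is witnessed in the other, the polynomial bounds on path length and transitive closure sizes are preserved, and the $\choiceError$ conditions transfer equivariantly. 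Hence
\[
  \wsc{\denotation{\termStep}^\Struct(\auto(\tup{a}))}{\denotation{\termChoice}^\Struct(\auto(\tup{a}))}{\denotation{\termWit}^\Struct(\auto(\tup{a}))}
  \;=\; \auto\bigl(\wsc{\denotation{\termStep}^\Struct(\tup{a})}{\denotation{\termChoice}^\Struct(\tup{a})}{\denotation{\termWit}^\Struct(\tup{a})}\bigr).
\]
Applying the inductive hypothesis to $\formOut$ then shows that $\tup{a}b \in \denotation{\formOut}^\Struct$ iff $\auto(\tup{a})\auto(b) \in \denotation{\formOut}^\Struct$, so the universally quantified condition defining $\denotation{\formB}^\Struct$ holds at $\tup{a}$ iff it holds at $\auto(\tup{a})$, completing the induction.

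The part requiring the most care is making the invariance argument through the entire tree construction together with the polynomial bounds and the error marker: one has to verify that every failure mode (a path exceeding $p(|\StructV|)$ in length, a set $b_i$ with $|\TC{b_i}| > p(|\StructV|)$, or a non-witnessed path) occurs at $\tup{a}$ iff it occurs at $\auto(\tup{a})$, so that $\choiceError$ is produced equivariantly. Note that Corollary~\ref{cor:wsc-orbit} (stating that the WSC set is an $(\StructA,\tup{a})$-orbit) is \emph{not} what drives this proof; that corollary concerns invariance under automorphisms fixing $\tup{a}$, whereas here we need invariance under the full $\autGroup{\StructA}$ acting on $\tup{a}$, which comes purely from the equivariance of the ingredient terms via the inductive hypothesis.
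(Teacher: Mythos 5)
Your proof is correct and follows essentially the same route as the paper, which disposes of the lemma in one sentence as a structural induction on terms and formulas (citing Corollary~\ref{cor:wsc-orbit} for the WSC case). Your closing observation is apt: Corollary~\ref{cor:wsc-orbit} only concerns orbits under the stabilizer $\autGroup{(\StructA,\tup{a})}$, so the equivariance under all of $\autGroup{\StructA}$ needed here really does come from the inductive hypothesis applied to $\termStep$, $\termChoice$, $\termWit$ and $\formOut$ (propagated through the tree construction, the witnessing condition by conjugation, and the polynomial-bound/error cases), exactly as you argue.
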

\begin{proof}
	The proof is straightforward by structural induction on terms and formulas
	using Corollary~\ref{cor:wsc-orbit}.
\end{proof}
While Lemma~\ref{lem:cpt-wsc-iso-invariant}
only concerns automorphisms, it is easy to see
that CPT+WSC also respects isomorphism between different structures.
Using multiple structures would make Section~\ref{sec:semantics-choice-operators}
formally more complicated without providing new insights.

Using Lemma~\ref{lem:cpt-wsc-iso-invariant} we can show that
model checking for CPT+WSC can be done in polynomial time.
Naively computing the denotation is not possible
because, as we have seen earlier,
the sets $\wsc{\denotation{\termStep}^\Struct(\tup{a})}{\denotation{\termChoice}^\Struct(\tup{a})}{\denotation{\termWit}^\Struct(\tup{a})}$
are possibly not of polynomial size.

\begin{lemma}
	For every CPT+WSC term $(\termA, p(n))$ or formula $(\formA,p(n))$,
	we can compute in polynomial time on input~$\StructA$ and $\tup{a} \in \HF{\StructVA}^k$ the denotation $\denotation{\termA}^\Struct(\tup{a})$ or  $\denotation{\formA}^\Struct(\tup{a})$ respectively.
\end{lemma}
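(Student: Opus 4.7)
The plan is to proceed by structural induction on the term or formula, with the inductive invariant that both the denotation and the error marker~$\choiceError$ can be computed in time polynomial in~$|\StructVA|$ and the (polynomially bounded) size of the parameters~$\tup{a}$. For the atomic terms, the set-theoretic function symbols, Boolean connectives, and comprehension terms, I would reuse the standard CPT argument verbatim, using $p(|\StructVA|)$ as the a priori bound on intermediate set sizes; as soon as any subterm exceeds the bound, propagate~$\choiceError$. For iteration terms $\termA[y]^*(\tup{x})$ the usual CPT bookkeeping applies: iterate until a fixed point is reached or the polynomial bound on length or transitive-closure size is surpassed.

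The only nontrivial case is a WSC-fixed-point operator $\wscForm{x}{y}{\termStep}{\termChoice}{\termWit}{\formOut}$. The crucial observation is that we do \emph{not} need to explore the tree~$\tree$ exhaustively, which could have superpolynomial branching. Instead, we compute a single root-to-leaf path $(b_1,\dots,b_n)$ as follows: starting from $b_1=\emptyset$, inductively having $b_i$, evaluate $c_i := \denotation{\termChoice}^{\Struct}(\tup{a}b_i)$ by the inductive hypothesis; if $c_i=\emptyset$, pick $d_i:=\emptyset$, otherwise pick an arbitrary element $d_i\in c_i$ (any element, e.g.\ using some implementation-dependent enumeration of the finite encoding of~$c_i$ in the machine model --- crucially, we do not require this choice to be isomorphism-invariant); then set $b_{i+1}:=\denotation{\termStep}^{\Struct}(\tup{a}b_id_i)$. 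Stop as soon as $b_{i+1}=b_i$ (fixed point reached) or $i$ or $|\TC{b_{i+1}}|$ exceeds $p(|\StructVA|)$, in which case we return~$\choiceError$. By the polynomial bound, this loop terminates in polynomially many steps and produces in polynomial time a single path in~$\paths$.

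Next, verify the witnessing condition along this one path: for each $i\in[n-1]$, compute $\denotation{\termWit}^{\Struct}(\tup{a}b_nb_i)$ by induction, interpret it as a set~$M_i$ of permutations of~$\StructVA$ (extended to $\HF{\StructVA}$), and check in polynomial time (i) that every $\phi\in M_i$ is indeed an automorphism of~$\Struct$ fixing each of $\tup{a},b_1,\dots,b_i$ pointwise, and (ii) that for every pair $d,e\in c_i=\denotation{\termChoice}^{\Struct}(\tup{a}b_i)$ there is some $\phi\in M_i$ with $\phi(d)=e$ --- the latter can be done by computing the orbits of the subgroup generated by $M_i$ on~$c_i$ and checking they coincide with~$c_i$ itself. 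If any check fails on this path, return~$\choiceError$; otherwise, evaluate $\denotation{\formOut}^{\Struct}(\tup{a}b_n)$ by induction and return its truth value.

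The correctness of using a single path is justified as follows. If the witnessing check on our one computed path succeeds, then by Corollary~\ref{cor:witnessed-all-or-none} every path in~$\paths$ is witnessed, so $\wsc{\denotation{\termStep}^{\Struct}(\tup{a})}{\denotation{\termChoice}^{\Struct}(\tup{a})}{\denotation{\termWit}^{\Struct}(\tup{a})}$ is the set of last labels of paths, which by Corollary~\ref{cor:wsc-orbit} is an $(\Struct,\tup{a})$-orbit, and our computed $b_n$ is one of its members; since $\denotation{\formOut}^{\Struct}(\tup{a}\cdot)$ is a union of orbits by Lemma~\ref{lem:cpt-wsc-iso-invariant}, its truth value at $b_n$ equals its value at any other element of that orbit, matching the defined semantics. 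Conversely, if our single-path witnessing check fails, then by Corollary~\ref{cor:witnessed-all-or-none} no path is witnessed, so the specified semantics yields~$\choiceError$. The main obstacle is precisely this: the sets produced by the WSC operator may be of superpolynomial size, so we cannot write them down explicitly, but we also do not need to --- the orbit structure established in Lemma~\ref{lem:if-witnessed-path-then-orbit} and its corollaries reduces the computation to a polynomial-time walk along a single branch, and the per-step cost is polynomial by the induction hypothesis combined with the polynomial size bound on all intermediate $b_i$.
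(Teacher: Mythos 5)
Your proposal is correct and follows essentially the same route as the paper: compute a single root-to-leaf path by making arbitrary choices, appeal to Corollary~\ref{cor:witnessed-all-or-none} and Corollary~\ref{cor:wsc-orbit} to justify that one path suffices, and use Lemma~\ref{lem:cpt-wsc-iso-invariant} to evaluate the output formula on the single computed fixed-point. One small caveat: your suggested implementation of check (ii) via orbits of the \emph{subgroup generated by} $M_i$ does not match the definition of witnessing, which demands a single element of $M_i$ mapping $d$ to $e$ for each pair (a generating set can act transitively without containing such elements); the direct pairwise search over $M_i$ is polynomial anyway, so just use that.
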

\begin{proof}
	The proof is by structural induction on~$\termA$ or~$\formA$.
	We show that $\denotation{\termA}^\Struct(\tup{a})$ or  $\denotation{\formA}^\Struct(\tup{a})$ can be computed in polynomial time for every tuple $\tup{a} \in \HF{\StructVA}$ of suitable length.
	
	Assume by induction hypothesis that for CPT+WSC-terms~$\termA$ and~$\termB$ and formulas~$\formA$ and~$\formB$,
	the denotation can be computed in polynomial time.
	Then we can surely do so for comprehension terms,
	iterations terms, and all formulas composed of~$\termA$,~$\termB$,~$\formA$, and~$\formB$ apart from WSC-fixed-point operators.
	
	So we have to consider a WSC-fixed-point operator
	\[\wscForm{x}{y}{\termStep}{\termChoice}{\termWit}{\formOut}.\]
	Because $M = \wsc{\denotation{\termStep}^\Struct(\tup{a})}{\denotation{\termChoice}^\Struct(\tup{a})}{\denotation{\termWit}^\Struct(\tup{a})}$
	is an orbit of $(\StructA,\tup{a})$ by Corollary~\ref{cor:wsc-orbit},
	it suffices to compute one $b \in M$
	(or determine that none exists)
	and check whether $\tup{a}b \in \denotation{\formOut}^\Struct$
	by Lemma~\ref{lem:cpt-wsc-iso-invariant}.
	Given~$b$, the check can be done in polynomial time by induction hypothesis.
	Some $b\in M$ can be computed
	by iteratively evaluating~$\termChoice$ to define a choice set,
	selecting one arbitrary element out of it,
	and then evaluating~$\termStep$ with this choice
	until either a fixed-point~$b$ is reached or more than 
	$p(|\StructVA|)$ iterations are performed.
	In the later case output~$\choiceError$.
	If this is not the case,
	we check whether~$\termWit$ witnesses the computed path.
	If the path is not witnessed, then we abort with output~$\choiceError$
	and otherwise by Corollary~\ref{cor:witnessed-all-or-none}
	we computed one $b \in M$.
	If, at any point during the evaluation,
	we have to construct a set~$c$ with $|\TC{c}|> p(|\StructVA|)$,
	then output~$\choiceError$ as well.
	If we always chose from orbits we would have constructed such an excessively large set~$c$ for all possible choices.
	Otherwise, if some choice set would not be an orbit
	we would fail to witness the orbits and output~$\choiceError$ as well.
	
	We need to evaluate~$\termStep$,~$\termChoice$, and~$\termWit$
	at most $p(|\StructVA|)$ many times,
	so computing~$b$ is also done in polynomial time.
\end{proof}

\subsection{Defining Sets}
\label{sec:defining-sets}

The WSC-fixed-point operator can only output truth values.
These are, by design, isomorphism-invariant.
We now discuss alternatives:
Let $f \colon \HF{\StructV}^k \to \HF{\StructV}$
be some function which we want to define with an iteration term with choice
(the domain is the set of possible parameter values).
To obtain a deterministic logic,
we need that $\autoA(f(\tup{a})) = f(\tup{a})$
for every $\autoA \in \autGroup{(\Struct, \tup{a})}$.
This clearly holds if~$f$ only returns truth values
(e.g., encoded by~$\emptyset$ and $\set{\emptyset}$).
We do not know how to decide in polynomial time
whether the condition $\autoA(f(\tup{a})) = f(\tup{a})$
is satisfied for all $\autoA \in \autGroup{(\Struct, \tup{a})}$
during the evaluation.
So we consider functions $f \colon \HF{\StructV}^k \to \HF{\emptyset}$,
which generalizes the case of truth values
but still is syntactically isomorphism-invariant.
We define an iteration term for this case:
\[\wscForm{x}{y}{\termStep}{\termChoice}{\termWit}{\termOut},\]
where~$\termStep$,~$\termChoice$,~$\termWit$, and~$\termOut$
are BGS+WSC-terms.
The only difference to the iteration term with choice seen so far is
that the output formula is replaced with an \defining{output term}.
Let $W = \wsc{\denotation{\termStep}^\Struct(\tup{a})}{\denotation{\termChoice}^\Struct(\tup{a})}{\denotation{\termWit}^\Struct(\tup{a})}$
to define the denotation as follows:
\begin{align*}
	&\phantom{:= {}} \denotation{\wscForm{x}{y}{\termStep}{\termChoice}{\termWit}{\termOut}}^\Struct(\tup{a})\\
	&:=
	\begin{cases}
		\bigcup_{b \in W} \denotation{\termOut}^\Struct(\tup{a}b) &\text{if } \denotation{\termOut}^\Struct(\tup{a}b) \in \HF{\emptyset} \text { for all }b \in W,\\
		\emptyset &\text{otherwise.}
	\end{cases}
\end{align*}
That is, if $\denotation{\termOut}^\Struct(\tup{a}b) \in \HF{\emptyset}$  for all $b \in W$,
then $\denotation{\termOut}^\Struct(\tup{a}b) = \denotation{\termOut}^\Struct(\tup{a}b')$ for all $b,b' \in W$
and the iteration term evaluates to
$\denotation{\termOut}^\Struct(\tup{a}b)$ for every $b \in W$
if $W \neq \emptyset$ and to~$\emptyset$ otherwise.

We now show that the extended WSC-fixed-point operator
does not increase the expressive power of CPT+WSC:
\begin{lemma}
	\label{lem:extended-iteration-term}
	For all CPT+WSC-terms~$\termStep$,~$\termChoice$,~$\termWit$, and~$\termOut$,
	there is a CPT+WSC-term~$\termB$
	such that for all structures~$\StructA$
	it holds that
	\[\denotation{\wscForm{x}{y}{\termStep}{\termChoice}{\termWit}{\termOut}}^\Struct = \denotation{\termB}^\StructA.\]
\end{lemma}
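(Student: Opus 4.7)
The plan is to simulate the extended WSC-fixed-point operator, whose output term $\termOut$ takes values in $\HF{\emptyset}$, using only the basic WSC-fixed-point operator together with standard CPT constructs. The key observation is that every element of $\HF{\emptyset}$ is fixed pointwise by every automorphism of $\Struct$; combined with Corollary~\ref{cor:wsc-orbit}, which says that the set $W := \wsc{\denotation{\termStep}^\Struct(\tup{a})}{\denotation{\termChoice}^\Struct(\tup{a})}{\denotation{\termWit}^\Struct(\tup{a})}$ is an $(\Struct,\tup{a})$-orbit, and the isomorphism-invariance of $\denotation{\termOut}^\Struct$, this implies that whenever $\denotation{\termOut}^\Struct(\tup{a}b) \in \HF{\emptyset}$ for all $b \in W$, there is a single pure set $c^*$ with $\denotation{\termOut}^\Struct(\tup{a}b) = c^*$ for every such $b$; hence the union in the denotation of the extended operator collapses to $c^*$.

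First, I would construct a basic WSC-fixed-point formula inside $\termB$ that checks whether all outputs lie in $\HF{\emptyset}$, reusing the step, choice and witnessing terms from the extended operator and using as output formula ``$\termOut(x) \in \HF{\emptyset}$''; the latter is CPT-expressible by an auxiliary iteration term that computes the transitive closure of $\termOut(x)$ and then checks disjointness from $\Atoms$. If this test fails, $\termB$ returns $\emptyset$. Otherwise, $\termB$ reconstructs $c^*$ by a regular CPT iteration term that builds up $T := \TC{c^*} \cup \set{c^*}$ layer by layer; for each candidate pure set $u$, membership ``$u \in T$'' is decided by a further basic WSC-fixed-point formula with output formula ``$u \in \TC{\termOut(x)} \cup \set{\termOut(x)}$''. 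At the end of the iteration, $c^*$ is recovered as the unique $u \in T$ satisfying $T = \TC{u} \cup \set{u}$.

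The main obstacle is to keep the layered iteration within the polynomial size bound, since naively enumerating all subsets of a previously built layer is exponentially large. This is resolved by exploiting the polynomial bound $|\TC{c^*}| \leq p(|\StructV|)$ on the extended operator, which guarantees that only polynomially many new elements are ever added to $T$ across all layers. I would therefore identify the new elements at each layer step one by one via nested basic WSC-fixed-point formulas and extract them canonically using $\Unique$, so that all intermediate sets stay within the polynomial bound. Finally, the error marker $\choiceError$ propagates through the nested basic WSC-operators in exactly the cases where the extended operator itself yields $\choiceError$, matching the semantics.
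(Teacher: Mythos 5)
Your first half is exactly the paper's: the outputs in $\HF{\emptyset}$ are pointwise fixed by automorphisms, so by Corollary~\ref{cor:wsc-orbit} and isomorphism-invariance the union over $W$ collapses to a single pure set $c^*$, and a basic WSC-fixed-point formula (the paper's $\formA$) tests whether all outputs indeed lie in $\HF{\emptyset}$. The gap is in your reconstruction of $c^*$. You propose to build $T=\TC{c^*}\cup\set{c^*}$ layer by layer, deciding membership ``$u\in T$'' for candidate pure sets $u$ by further WSC-fixed-point formulas. But a WSC-fixed-point formula only answers a yes/no question about a candidate you already hold in your hand, and $\Unique$ only extracts an element from a singleton you have already constructed; neither can \emph{produce} a new pure set. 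The candidates for the next layer are subsets of the layers built so far, of which there are exponentially many, so you cannot test them all, and ``identifying the new elements one by one'' begs the question of how a new element is generated before it can be tested. The polynomial bound on $|\TC{c^*}|$ controls how many elements you must eventually add, not how you find them among exponentially many candidates.

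The paper's fix is to change what is being queried: instead of asking about pure sets, it asks about \emph{numbers}. Since $\HF{\emptyset}$ is CPT-orderable, every $a\in\HF{\emptyset}$ with $|\TC{a}|\le p(n)$ has a canonical totally ordered DAG representation on vertex set $[|\TC{a}|]$, computed by a CPT-term $\termB_{\text{toDAG}}$ and inverted by $\termB_{\text{fromDAG}}$. The term $\termB$ then asks, for each of the polynomially many pairs $(i,j)\in[p(|\Atoms|)]^2$, the Boolean question ``is $(i,j)$ an edge of $\termB_{\text{toDAG}}(\termOut)$?'' via a basic WSC-fixed-point formula, assembles the answers into the edge set, and applies $\termB_{\text{fromDAG}}$. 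This is the missing idea: converting the extraction of a hereditarily finite pure set into polynomially many Boolean queries indexed by numbers rather than by pure sets. Your approach could be repaired along the same lines (e.g., querying ``is the already-constructed $u$ an element of the $k$-th element of layer $i+1$ in the canonical order?''), but that repair is precisely the number-indexing device you have not supplied.
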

\begin{proof}
	Let $p(n)$ be the polynomial bound of the CPT+WSC-term
	and $n = |\StructVA|$.
	Let~$a$ be a set constructed during the evaluation.
	Because of the polynomial bound, we have $|\TC{a}| \leq p(n)$.
	The set~$a$ corresponds to a directed acyclic graph (DAG),
	where the leaves are either atoms or~$\emptyset$.
	By the condition $|\TC{a}| \leq p(n)$,
	the DAG has at most~$p(n)$ many vertices.
	Note that $\HF{\emptyset}$ can be totally ordered in CPT.
	In particular, 
	if $a \in \HF{\emptyset}$,
	the DAG corresponding to~$a$ can be totally ordered.
	Given the DAG, we can reconstruct~$a$ in CPT.

	Let $\termB_{\text{toDAG}}(z)$ be a CPT-term,
	which given a set $a \in \HF{\emptyset}$
	outputs the totally ordered DAG corresponding to~$a$
	(that is, we can assume that its vertex set is $[|\TC{a}|]$)
	as a set containing the edges of the DAG.
	If $a \not\in \HF{\emptyset}$, then $\termB_{\text{toDAG}}$
	outputs~$\emptyset$.
	Furthermore, let $\termB_{\text{fromDAG}}(z)$
	be the term recovering~$a$ from this set.
	First,
	\begin{align*}
		\formA &:= \wscFormbig{x}{y}{\termStep}{\termChoice}{\termWit}{(\termOut \neq \emptyset) \Rightarrow (\termB_{\text{toDAG}}(\termOut) \neq \emptyset)}
		\intertext{defines whether $\termOut$ only outputs $\HF{\emptyset}$-sets.
	Second,}
	\termC &:= \setcond[\big]{(i,j)}{i,j \in [p(|\Atoms|)],
	\wscForm{x}{y}{\termStep}{\termChoice}{\termWit}{
		(i,j) \in \termB_{\text{toDAG}}(\termOut)}}
	\end{align*}
	defines the DAG given by $\termB_{\text{toDAG}}$ (with possible some isolated vertices, which can easily be ignored).
	Last,
	$\Unique(\setcond{\termB_{\text{fromDAG}}(\termC)}{\formA})$
	is equivalent to $\wscForm{x}{y}{\termStep}{\termChoice}{\termWit}{\termOut}$,
	where we use $p(n)^2$ as new polynomial bound
	(because we just try all pairs $i,j$).
\end{proof}
With an easy inductive argument one sees
that also nesting the extended iteration terms
does not increase the expressive power.

\section{Canonization in CPT+WSC}
\label{sec:canon-in-cpt-wsc}

In the following section,
we work with classes of relational $\sig$-structures~$\GraphClass$.
We always assume that these classes are closed under isomorphisms and only
contain connected structures
because we are interested in isomorphism testing and canonization.
The case of unconnected structures reduces to connected ones.

The process of individualizing certain atoms in structures
plays a crucial role
and we only want to consider classes of structures
closed under individualizing atoms.
Conceptually, individualizing an atom means to give it a unique color.
The formulas we are going to define in fact iteratively individualize atoms.
So it will be convenient to capture the individualized atoms
by CPT+WSC ``internal'' tuples,
that is, many formulas will have a free variable~$\indVar$
to which we can pass a tuple (encoded using sets)
containing the tuple of individualized atoms.
Thus, instead of assuming that the classes of structures
are closed under individualization,
we work with the free variable~$\indVar$ to which all possible tuples can be passed and require in our definitions that certain properties hold for all possible tuples passed into~$\indVar$.
At crucial  points we remind the reader that 
in that sense we work with individualization-closed classes of structures.

In what follows, we will always assume
that tuples do not contain duplicates.
Moreover, we will freely switch between the ``internal'' representation of tuples
in CPT and the ``external'' tuples of individualized atoms whenever needed.
For sake of shorter formulas,
we introduce a slightly special concatenation operation for tuples of atoms in CPT+WSC.
This operation is shorthand notation for a more complex but uninteresting CPT-term.
Let~$x$ and~$y$ be variables.
We write $\concat{x}{y}$ for a CPT term satisfying the following equations:
\begin{align*}
	\denotation{\concat{x}{y}}^\Struct(\tup{a},b) &= \tup{a}b & \tup{a} \in\StructV^*,b\in\StructV, b \notin \tup{a},\\
	\denotation{\concat{x}{y}}^\Struct(\tup{a},b) &= \tup{a} & \tup{a} \in\StructV^*,b\in\StructV, b \in \tup{a},\\	\denotation{\concat{x}{y}}^\Struct(\tup{a},\set{b}) &= \denotation{\concat{x}{y}}^\Struct(\tup{a},b)  & \tup{a} \in\StructV^*,b\in\StructV,\\
	\denotation{\concat{x}{y}}^\Struct(\tup{a},\emptyset) &= \tup{a} & \tup{a} \in\StructV^*.
\end{align*}
We also assume that~$\concat{x}{y}$ concatenates two tuples
by also removing duplicates.
We write~$x_i$ for the term extracting the $i$-th position of a tuple
or the empty set if $i$ is larger than the length of the tuple
($i$ is encoded as a von Neumann ordinal).
We extend the notation from variables to arbitrary CPT+WSC-terms~$\termA$ and~$\termB$
and write~$\concat{\termA}{\termB}$
for the CPT+WSC-term appending the result of~$\termB$ to the result of~$\termA$ in the way defined above.
Clearly, all these terms can be defined in CPT+WSC
(or in CPT if~$\termA$ and~$\termB$ are CPT-terms, too).

We now introduce various notions related to defining isomorphism
and canonization.
In the end, we show that all of them are equivalent.
As already mentioned, it will be important to not only consider all structures~$\Struct$
of a given class of structures~$\GraphClass$,
but also to consider all pairs $(\Struct, \tup{a})$ for all $\tup{a} \in \StructV^*$.
In what follows, let~$L$ be one of the logics CPT or CPT+WSC.

\begin{definition}[Definable Isomorphism]
	\label{def:definable-isomorphism}
	A logic~$L$ \defining{defines isomorphism} for a class of $\sig$-structures~$\GraphClass$,
	if there is an $L$-formula $\formA(\indVar_1,\indVar_2)$,
	such that, for every $\StructA,\StructB \in \GraphClass$
		and $\tup{a} \in \StructVA^*, \tup{b} \in \StructVB^*$, on the disjoint union $\StructA\disunion\StructB$
	it holds that $(\tup{a},\tup{b}) \in \denotation{\formA}^{\StructA\disunion\StructB}$
	if and only if $(\StructA,\tup{a}) \iso (\StructB,\tup{b})$.
\end{definition}
In the case that~$L$ is CPT+WSC,
we require in the previous definition that~$\formA$
never outputs~$\choiceError$.
So we can write $(\tup{a},\tup{b}) \in \denotation{\formA}^{\StructA\disunion\StructB}$,
because as~$\choiceError$ never occurs,
we can regard $\denotation{\formA}^{\StructA\disunion\StructB}$ again as the set of tuples satisfying~$\formA$.
In all definitions that follow, we also require without further mentioning that~$\choiceError$ never occurs for any input.

\begin{definition}[Distinguishable Orbits]
	We say that a class of $\tau$-structures~$\GraphClass$ has
	\defining{$L$\nobreakdash-distinguishable $k$-orbits}
	if there is an $L$-formula $\formA(\indVar,x,y)$
	such that  for every $\Struct \in \GraphClass$
	and every $\tup{a} \in \StructVA^*$
	the relation~$\spleq$ on $k$-tuples 
	satisfying  $\tup{b} \spleq \tup{c}$  if and only if $(\tup{a},\tup{b},\tup{c}) \in \denotation{\formA}^\Struct$
	is a total preorder
	and its equivalence classes are the $k$-orbits of $(\StructA,\tup{a})$.
\end{definition}
Note that because~$\spleq$ is a total preorder and not just some equivalence relation, it defines a total order on the $k$-orbits.

\begin{definition}[Complete Invariant]
	\label{def:definable-complete-invariant}
	An \defining{$L$-definable complete invariant} of a class of $\sig$-structures~$\GraphClass$
	is an $L$-term $\termA(\indVar)$ 
	which satisfies the following:
	$\denotation{\termA}^\StructA(\tup{a}) = \denotation{\termA}^\StructB(\tup{b})$
	if and only if $(\StructA,\tup{a}) \iso (\StructB, \tup{b})$
	for every $\StructA,\StructB \in \GraphClass$
	and every $\tup{a} \in \StructVA^*, \tup{b}\in \StructVB^*$.
\end{definition}

\begin{lemma}
	\label{lem:invariants-order}
	If an $L$-term $\termA(\indVar)$ is a complete invariant for a class of $\sig$-structures~$\GraphClass$,
	then there is a CPT-term
	defining a total order 
	on $\setcond{\denotation{\termA}^\StructA(\tup{a})}{\StructA \in \GraphClass, \tup{a} \in \StructVA^*}$.
\end{lemma}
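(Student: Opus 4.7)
The plan is to first reduce the claim to a known fact about $\HF{\emptyset}$: since the values of a complete invariant must be atom-free, it suffices to exhibit a CPT-definable total order on the atom-free hereditarily finite sets. So the proof has two steps: show that $\denotation{\termA}^\StructA(\tup{a}) \in \HF{\emptyset}$ for all $\StructA \in \GraphClass$ and $\tup{a} \in \StructVA^*$, and then appeal to the (already used) fact that $\HF{\emptyset}$ is totally orderable in CPT.

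For the first step, the idea is to exploit the very definition of a complete invariant across structures with disjoint universes. Fix $\StructA \in \GraphClass$ and $\tup{a} \in \StructVA^*$. Since $\GraphClass$ is closed under isomorphism, we may pick an isomorphic copy $\StructB \in \GraphClass$ whose atom set $\StructVB$ is disjoint from $\StructVA$; let $\tup{b}$ be the image of $\tup{a}$ under the chosen isomorphism. Then $(\StructA,\tup{a}) \iso (\StructB,\tup{b})$, so by Definition~\ref{def:definable-complete-invariant} we have $\denotation{\termA}^\StructA(\tup{a}) = \denotation{\termA}^\StructB(\tup{b})$. But the left-hand side lies in $\HF{\StructVA}$ and the right-hand side in $\HF{\StructVB}$, and since $\StructVA \cap \StructVB = \emptyset$ an induction on the rank of a hereditarily finite set shows $\HF{\StructVA} \cap \HF{\StructVB} = \HF{\emptyset}$. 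Hence the invariant value is atom-free.

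For the second step, I would invoke the same CPT-definable linear order on $\HF{\emptyset}$ already used in the proof of Lemma~\ref{lem:extended-iteration-term}: every atom-free hereditarily finite set $a$ corresponds to a finite rooted DAG on $|\TC{a}|$ many nodes with leaves labeled by $\emptyset$, and such DAGs can be totally ordered in CPT by first sorting all sets of a given rank lexicographically with respect to their already-ordered elements, then lifting to higher ranks. Because $\termA(\indVar)$ comes with a polynomial bound $p$ on $|\TC{\denotation{\termA}^\StructA(\tup{a})}|$, on any structure $\StructA$ the relevant sets have transitive closure of size at most $p(|\StructVA|)$, so the iteration defining this order terminates within the polynomial bound available to a CPT-term over $\StructA$.

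Combining, the desired CPT-term $\termC(x,y)$ simply applies the standard $\HF{\emptyset}$-order to its two inputs. The only step that requires care is the first one, which hinges on the ability to freely relabel the atoms of a structure while staying inside $\GraphClass$; this is immediate since $\GraphClass$ is closed under isomorphism. The CPT-definability of an order on $\HF{\emptyset}$ within a polynomial bound is entirely routine and has already been used tacitly earlier in the paper, so no new technical difficulty arises there.
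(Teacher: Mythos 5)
Your proposal is correct and follows essentially the same route as the paper: pick an isomorphic copy of $\StructA$ in $\GraphClass$ with a disjoint universe, use the defining property of a complete invariant to force $\denotation{\termA}^\StructA(\tup{a}) \in \HF{\emptyset}$, and then invoke the CPT-definable total order on $\HF{\emptyset}$. Your disjointness argument via $\HF{\StructVA}\cap\HF{\StructVB}=\HF{\emptyset}$ is just a slightly more explicit phrasing of the paper's observation that the value is fixed by an isomorphism moving every atom of $\StructVA$ out of $\StructVA$.
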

\begin{proof}
	Let $\StructA \in \GraphClass$.
	Then there is another structure $\StructB \in \GraphClass$
	such that $\StructA \iso \StructB$ and $\StructVA \cap \StructVB = \emptyset$
	because~$\GraphClass$ is closed under isomorphism.
	Let $\auto \colon \StructVA\to \StructVB$ be an isomorphism.
	Then $\denotation{\termA}^\StructA = \denotation{\termA}^\StructB = \auto(\denotation{\termA}^\StructA)$,
	but this implies $\denotation{\termA}^\StructA \in \HF{\emptyset}$
	for every $\StructA \in \GraphClass$.
	Finally, $\HF{\emptyset}$ can be ordered in CPT.
\end{proof}

\begin{lemma}
	\label{lem:complete-invariant-implies-definable-orbits}
	If there is an $L$-definable complete invariant for a class
	of $\sig$-structures~$\GraphClass$,
	then $\GraphClass$ has $L$-distinguishable $k$-orbits
	for every $k \in \nat$.
\end{lemma}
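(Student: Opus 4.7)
The plan is to compose the complete invariant $\termA(\indVar)$ with the CPT-definable total order on $\HF{\emptyset}$ supplied by Lemma~\ref{lem:invariants-order}. Concretely, let $\leq$ be a CPT-definable total order on the range of $\termA$ (which lies in $\HF{\emptyset}$ by Lemma~\ref{lem:invariants-order}). For variables $x,y$ ranging over encodings of $k$-tuples of atoms, I would set
\[
\formA(\indVar,x,y) \;:=\; \termA(\concat{\indVar}{x}) \leq \termA(\concat{\indVar}{y}).
\]
Since $\leq$ is CPT-definable and $\termA$ is an $L$-term, $\formA$ is an $L$-formula.

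The main verification, for fixed $\StructA \in \GraphClass$ and $\tup{a} \in \StructVA^*$, has two steps. First, the induced relation $\tup{b} \spleq \tup{c}$ unravels to $\denotation{\termA}^\StructA(\tup{a}\tup{b}) \leq \denotation{\termA}^\StructA(\tup{a}\tup{c})$, which is a total preorder because $\leq$ is a total order on the relevant subset of $\HF{\emptyset}$. Second, its equivalence classes are the fibres of the map $\tup{b} \mapsto \denotation{\termA}^\StructA(\tup{a}\tup{b})$; by Definition~\ref{def:definable-complete-invariant}, applied with $\StructA = \StructB$, these fibres coincide with the classes of the relation ``$(\StructA,\tup{a}\tup{b}) \iso (\StructA,\tup{a}\tup{c})$''. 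An isomorphism realising such an equivalence is exactly an automorphism of $\StructA$ fixing $\tup{a}$ pointwise and sending $\tup{b}$ to $\tup{c}$, i.e.\ an element of $\autGroup{(\StructA,\tup{a})}$. Hence the fibres are precisely the $k$-orbits of $(\StructA,\tup{a})$, as required.

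I do not foresee any serious obstacle: the argument is essentially bookkeeping around Lemma~\ref{lem:invariants-order} and Definition~\ref{def:definable-complete-invariant}. The one point that merits a brief sanity check is the duplicate-removing concatenation $\concat{\indVar}{x}$: if the atoms shared between $\tup{a}$ and $\tup{b}$ differ from those shared between $\tup{a}$ and $\tup{c}$, then $\tup{b}$ and $\tup{c}$ already lie in distinct $k$-orbits (since any $\phi \in \autGroup{(\StructA,\tup{a})}$ fixes $\tup{a}$ pointwise and therefore preserves the duplicate pattern), while their concatenations have different lengths and thus different invariants---both conclusions are consistent with the claim. Finally, the use of Definition~\ref{def:definable-complete-invariant} on the extended tuples $\tup{a}\tup{b}$ is legitimate because $\GraphClass$ is treated as individualization-closed in the sense stipulated at the start of Section~\ref{sec:canon-in-cpt-wsc}.
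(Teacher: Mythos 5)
Your proposal is correct and follows essentially the same route as the paper: compose the complete invariant with the CPT-definable total order from Lemma~\ref{lem:invariants-order}, observe that the resulting relation is a total preorder, and use Definition~\ref{def:definable-complete-invariant} with $\StructA=\StructB$ to identify its equivalence classes with the $k$-orbits of $(\StructA,\tup{a})$. The extra remark about the duplicate-removing concatenation is a reasonable sanity check that the paper omits but does not affect correctness.
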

\begin{proof}
	Let $k \in \nat$, $\termA_{\text{inv}}(\indVar)$ be an $L$-definable complete invariant,
	and let $\formA_{\text{inv}}(x,y)$ be a CPT-formula defining a total order
	on the invariant by Lemma~\ref{lem:invariants-order}.
	We define
\begin{align*}
	\formA_{\text{orb}}(\indVar, x,y)&:=\formA_{\text{inv}}(\termA_{\text{inv}}(\concat{\indVar}{x}), \termA_{\text{inv}}(\concat{\indVar}{y})).
\end{align*}
	The $L$-formula $\formA_{\text{orb}}(\indVar, x,y)$
	orders two $k$-tuples $\tup{\vertA}$ and $\tup{\vertB}$
	according to the order of $\formA_{\text{inv}}(x,y)$
	on the complete invariants when individualizing $\tup{\vertA}$ respectively $\tup{\vertB}$.
	So $\formA_{\text{orb}}$ is a total preorder.
	For a structure~$\StructA \in \GraphClass$ and an $\tup{a} \in \StructVA^*$,
	two $k$-tuples $\tup{\vertA},\tup{\vertB}\in\StructVA^k$
	are in the same $k$-orbit of $(\StructA,\tup{a})$
	if and only if $(\StructA,\tup{a}\tup{\vertA}) \iso (\StructA,\tup{a}\tup{\vertB})$,
	which is the case if and only if
	$\denotation{\termA_{\text{inv}}}^\Struct(\tup{a}\tup{\vertA}) = \denotation{\termA_{\text{inv}}}^\Struct(\tup{a}\tup{\vertB})$,
	that is, $\formA_{\text{orb}}$ defines and orders the $k$-orbits of $(\Struct, \tup{a})$.
\end{proof}

\begin{definition}[Ready for Individualization]
	A class of $\sig$-structures~$\GraphClass$ is \defining{ready for individualization} in~$L$
	if there is an $L$-term $\termA(\indVar)$ that for every structure~$\Struct \in \GraphClass$ and every $\tup{a} \in \StructV^*$
	defines a $1$-orbit~$O$ of $(\Struct, \tup{a})$, that is, $O = \denotation{\termA}^\Struct(\tup{a}) \in \orbs{1}{(\Struct, \tup{a})}$,
	such that if there is a $1$-orbit disjoint with~$\tup{a}$,
	then~$O$ is disjoint with $\tup{a}$.
\end{definition}

\begin{definition}[Canonization]
	A logic~$L$ defines a \defining{canonization}
	for a class of \linebreak $\sig$\nobreakdash-structures~$\GraphClass$,
	if there is an $L[\sig,\sig \disunion{\leq}]$-interpretation $\interpret(\indVar)$
	mapping $\sig$-structures to
	$(\sig \disunion{\leq})$-structures such that
	\begin{enumerate}
		\item $\leq$ is a total order on the universe of $\interpret(\StructA,\tup{a})$, for every $\StructA \in \GraphClass$ and $\tup{a} \in \StructV^*$,
		\item $(\StructA,\tup{a}) \iso \reduct{\interpret(\StructA,\tup{a})}{\sig}$, for every $\StructA \in \GraphClass$ and $\tup{a} \in \StructV^*$, and
		\item $\interpret(\StructA,\tup{a}) = \interpret(\StructB,\tup{b})$ if and only if 
		$(\StructA,\tup{a}) \iso (\StructB,\tup{b})$, for every $\StructA,\StructB \in \GraphClass$, $\tup{a} \in \StructVA^*$, and $\tup{b} \in \StructVB^*$.
	\end{enumerate}
\end{definition}

We now show that the algorithm of~\cite{Gurevich97},
which turns a complete invariant into a canonization,
is CPT+WSC-definable.
Intuitively,
we iteratively individualize an atom of a non-trivial orbit
which is minimal according to an isomorphism-invariant total order on the orbits.
We continue this procedure, until we individualized all atoms
and thereby defined a total order on the atoms (see Figure~\ref{fig:gurevich-example} for an example).
While the order itself is not unique,
the isomorphism type of the ordered structure is unique,
because we always chose from orbits.
This way, we obtain the canon by renaming the atoms to be just numbers.
We now define this approach formally:
Let $\GraphClass$ be a class of $\sig$-structures
ready for individualization in CPT+WSC
and let $\termA_{\text{orb}}(\indVar)$ be the corresponding CPT+WSC-term
defining a $1$-orbit with the required properties.

\newcommand{\figcolA}{blue}
\newcommand{\figcolB}{darkgreen}
\newcommand{\figcolC}{darkyellow}
\newcommand{\figcolD}{purple}
\newcommand{\figcolE}{red}

\def\vertexColor{gray}

\newcommand{\gggraph}[1]{}
\newcommand{\ggraph}[9]{
	\renewcommand{\gggraph}[1]{
		\begin{scope}[scale = 0.9, every node/.style={circle, fill=\vertexColor!40!white, inner sep=0.1mm,   minimum size=4mm}]
			\node[#1!40!white, draw=black, text=black, font = \scriptsize] (V1) at (0,0) { #2};
			\node[#3!40!white, draw=black,text=black, font = \scriptsize] (V2) at (-1,-1) { #4};
			\node[#5!40!white, draw=black,text=black, font = \scriptsize] (V3) at (1, -1) { #6};
			\node[#7!40!white, draw=black,text=black, font = \scriptsize] (V4) at (1,1)   { #8};
			\node[#9!40!white, draw=black,text=black, font = \scriptsize] (V5) at (-1,1) { ##1};
			\path[-, black, draw]
			(V1) -- (V2)
			(V1) -- (V3)
			(V1) -- (V4)
			(V1) -- (V5)
			(V2) -- (V3) -- (V4) -- (V5) -- (V2);
		\end{scope}
	}%
	\gggraph%
}

\newcommand{\ggraphil}[5]{
	\tikz[baseline=-0.6ex]{
		\begin{scope}[scale = 1, every node/.style={transform shape}, transform shape]
			\ggraph{#1}{}{#2}{}{#3}{}{#4}{}{#5}{}
		\end{scope}
	}
}
\newcommand{\gggraphillabels}{}
\newcommand{\ggraphillabels}[9]{
	\renewcommand{\gggraphillabels}[2]{
		\tikz[baseline=-0.5ex]{
			\begin{scope}[scale = 0.28, every node/.style={anchor=center}]
				\node[vertex, #1, font=\tiny, inner sep = 0, text =white,] (V1) at (0,0) {#2};
				\node[vertex, #3, font=\tiny, inner sep = 0, text =white] (V2) at (-1,-1){#4};
				\node[vertex, #5, font=\tiny, inner sep = 0, text =white] (V3) at (1, -1){#6};
				\node[vertex, #7, font=\tiny, inner sep = 0, text =white] (V4) at (1,1){#8};
				\node[vertex, #9, font=\tiny, inner sep = 0, text =white] (V5) at (-1,1){##1};
				\path[-, black, draw]
				(V1) -- (V2)
				(V1) -- (V3)
				(V1) -- (V4)
				(V1) -- (V5)
				(V2) -- (V3) -- (V4) -- (V5) -- (V2);
				##2
			\end{scope}
		}
	}
	\gggraphillabels
}

\begin{figure}
	\centering
	\begin{tikzpicture}

		\begin{scope}[scale=0.5]
			
			\node[centered, font = \footnotesize, align=center ] (stepslabel) at (4,2.5) {individualized vertices};
			\node[centered, font = \footnotesize, align=center] (orbitslabel) at (4,-6.5) {ordered orbits};
			
			\begin{scope}[scale=1, shift={(-1.5,0)}] 		
				\ggraph{\vertexColor}{}{\vertexColor}{}{\vertexColor}{}{\vertexColor}{}{\vertexColor}{};
			\end{scope}
			
			\begin{scope}[scale=1,shift={(-1.5,-4)}]
				\ggraph{\vertexColor}{2}{\vertexColor}{1}{\vertexColor}{1}{\vertexColor}{1}{\vertexColor}{1};
			\end{scope}
			
			\path[-{To[width=3mm,length=1.5mm]}, black, thick]
			(0, -2) edge node[pos=0.4,font=\footnotesize, align=center] {choice:\\orbit 1} (2.5, -2);

			\begin{scope}[scale=1, shift={(4,0)}] 		
				\ggraph{\vertexColor}{}{\figcolA}{I}{\vertexColor}{}{\vertexColor}{}{\vertexColor}{};
			\end{scope}
			
			\begin{scope}[scale=1,shift={(4,-4)}]
				\ggraph{\vertexColor}{5}{\figcolA}{I}{\vertexColor}{3}{\vertexColor}{4}{\vertexColor}{3};
			\end{scope}
			
			\path[-{To[width=3mm,length=1.5mm]}, black, thick]
			(5.5, -2) edge node[pos=0.4, font=\footnotesize, align=center] {choice:\\orbit 3} (8, -2);
			
			\begin{scope}[scale=1, shift={(9.5,0)}] 		
				\ggraph{\vertexColor}{}{\figcolA}{I}{\figcolB}{II}{\vertexColor}{}{\vertexColor}{};
			\end{scope}
			
			\begin{scope}[scale=1,shift={(9.5,-4)}]
				\ggraph{\vertexColor}{6}{\figcolA}{I}{\figcolB}{II}{\vertexColor}{7}{\vertexColor}{8};
			\end{scope}
			
			\path[decorate, draw, -{To[width=3mm,length=1.5mm]}, black, thick, decoration={snake, amplitude=0.6mm, post length = 0.75mm}]
			(11, -2) to  (15.75, 0);
			
			\begin{scope}[scale=1, shift={(17.25,0)}] 		
				\ggraph{\figcolC}{III}{\figcolA}{I}{\figcolB}{II}{\figcolD}{IV}{\figcolE}{V};
			\end{scope}

			\node[centered, font = \footnotesize, align = center] (stepslabel) at (17.25,-7) {order obtained\\ from different\\ choices};
			
			\begin{scope}[scale=1, shift={(17.25,-4)}] 		
				\ggraph{\figcolC}{III}{\figcolB}{II}{\figcolA}{I}{\figcolE}{V}{\figcolD}{IV};
			\end{scope}

			\node[centered, font = \footnotesize, align = center] (canonlabel) at (23,-4.25) {ordered\\canon};
			
			\path[-{To[width=3mm,length=1.5mm]}, black, thick]
			(18.75, 0) edge (20.5,-1.25)
			(18.75, -4) edge (20.5,-2.75);
			
			\begin{scope}[scale = 1, shift={(23,-2)}, every node/.style={circle, fill=black, inner sep=0.1mm,   minimum size=4mm}]
				\node[\figcolC!40!white, draw=black, font=\scriptsize, inner sep = 0, text =black] (V1) at (0,0) {III};
				\node[\figcolB!40!white, draw=black, font=\scriptsize, inner sep = 0, text =black] (V2) at (-1,0) {II};
				\node[\figcolD!40!white, draw=black, font=\scriptsize, inner sep = 0, text =black] (V3) at (1,0) {IV};
				\node[\figcolE!40!white, draw=black, font=\scriptsize, inner sep = 0, text =black] (V4) at (2,0)   {V};
				\node[\figcolA!40!white, draw=black, font=\scriptsize, inner sep = 0, text =black] (V5) at (-2,0) {I};
				\path[-, black, draw]  
				(V5) edge (V2)
				(V2) edge (V1)
				(V1) edge (V3)
				(V3) edge (V4)
				(V5) edge[bend left = 50] (V1)
				(V1) edge[bend left = 50] (V4)
				(V2) edge[bend right = 50] (V3)
				(V5) edge[bend right = 50] (V4);
			\end{scope}
		\end{scope}
	\end{tikzpicture}
	\caption{Gurevich's canonization algorithm on an example graph:
		The top row shows the sequence of graphs with individualized vertices (shown by Roman numerals) in the algorithm.
		The bottom row shows the orbit partition of the graphs
		(vertices with the same number are in the same orbit).
		In each step, one vertex of the orbit with minimal number is chosen and individualized.
		This process yields the total order shown on the top right.
		Another order obtained from different choices in shown below.
		Both orders induce to the same canon.
	}
	\label{fig:gurevich-example}
\end{figure}

We define for every $\StructA \in \GraphClass$ and $\tup{a} \in \StructVA^*$ a set $\canlabels{\StructA}{\tup{a}}$ as follows:
If all atoms are individualized, i.e., every atom is contained in~$\tup{a}$,
we set
\[\canlabels{\StructA}{\tup{a}} := \set{\tup{a}}.\]
Otherwise, let $O= \denotation{\termA_{\text{orb}}}^{\StructA}(\tup{a})$
be the $1$-orbit given by $\termA_{\text{orb}}$.
In particular, this orbit is disjoint with~$\tup{a}$.
We define
\[\canlabels{\StructA}{\tup{a}} := \bigcup_{\vertA \in O} \canlabels{\StructA}{\tup{a}\vertA}.\]
For $\tup{b} \in \canlabels{\StructA}{\tup{a}}$,
let $\auto_{\tup{b}} \colon \StructVA \to [|\StructVA|]$
be defined via $\vertA \mapsto i$ if and only if $\vertA = b_i$. 
It is easy to see that $\canlabels{\StructA}{\tup{a}}$ is an $(\StructA,\tup{a})$-orbit.
Hence, the definition
\[\canon{\StructA}{\tup{a}} := \autoA_{\tup{b}}((\StructA,\tup{a}))\]
is well-defined and independent of the choice of $\tup{b} \in \canlabels{\StructA}{\tup{a}}$ because  $\autoA_{\tup{b}}((\StructA,\tup{a})) = \autoA_{\tup{c}}((\StructA,\tup{a}))$ for every $\tup{b}, \tup{c} \in \canlabels{\StructA}{\tup{a}}$.

\begin{lemma}
	\label{lem:canon-is-canonization}
	For every $\StructA,\StructB \in \GraphClass$,
	every $\tup{a} \in \HF{\StructA}^*$, and
	every $\tup{b} \in \HF{\StructB}^*$, it holds that
	$\canon{\StructA}{\tup{a}} \iso (\StructA,\tup{a})$
	and
	$\canon{\StructA}{\tup{a}} = \canon{\StructB}{\tup{b}}$
	if and only if $(\StructA,\tup{a}) \iso (\StructB, \tup{b})$.
\end{lemma}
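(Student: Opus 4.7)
The plan is to prove the lemma by induction on $|\StructVA| - |\tup{a}|$, the number of yet-to-be-individualized atoms. The heart of the argument is that $\canlabels{\StructA}{\tup{a}}$ is an $(\StructA,\tup{a})$-orbit and behaves naturally under isomorphism, so I would first nail down that claim, then derive everything else from it.

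First I would show by induction that $\canlabels{\StructA}{\tup{a}}$ is an orbit of $(\StructA, \tup{a})$. In the base case every atom appears in $\tup{a}$ and $\canlabels{\StructA}{\tup{a}} = \{\tup{a}\}$ is trivially a (singleton) orbit of $(\StructA,\tup{a})$. For the induction step, Lemma~\ref{lem:cpt-wsc-iso-invariant} (applied to $\termA_{\text{orb}}(\indVar)$ with $\indVar := \tup{a}$) guarantees that $O = \denotation{\termA_{\text{orb}}}^{\StructA}(\tup{a})$ is a union of $(\StructA,\tup{a})$-orbits, and by the defining property of ``ready for individualization'' it is in fact a single $1$-orbit. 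By the induction hypothesis, for each $\vertA \in O$ the set $\canlabels{\StructA}{\tup{a}\vertA}$ is an $(\StructA,\tup{a}\vertA)$-orbit; since any $\auto \in \autGroup{(\StructA,\tup{a})}$ permutes $O$ and satisfies $\auto(\canlabels{\StructA}{\tup{a}\vertA}) = \canlabels{\StructA}{\tup{a}\auto(\vertA)}$ (again by isomorphism-invariance of $\termA_{\text{orb}}$), the union $\canlabels{\StructA}{\tup{a}} = \bigcup_{\vertA \in O}\canlabels{\StructA}{\tup{a}\vertA}$ is a single orbit of $(\StructA,\tup{a})$. In particular every $\tup{b} \in \canlabels{\StructA}{\tup{a}}$ enumerates all of $\StructVA$, so $\auto_{\tup{b}}$ is a genuine bijection $\StructVA \to [|\StructVA|]$. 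This immediately yields $\canon{\StructA}{\tup{a}} = \auto_{\tup{b}}((\StructA,\tup{a})) \iso (\StructA,\tup{a})$.

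For the equivalence, the backward direction ($\canon{\StructA}{\tup{a}} = \canon{\StructB}{\tup{b}} \Rightarrow (\StructA,\tup{a}) \iso (\StructB,\tup{b})$) is immediate from the isomorphism just established on each side. For the forward direction, fix an isomorphism $\psi\colon (\StructA,\tup{a}) \to (\StructB,\tup{b})$ and show by induction on $|\StructVA|-|\tup{a}|$ that $\psi(\canlabels{\StructA}{\tup{a}}) = \canlabels{\StructB}{\tup{b}}$. The inductive step uses that $\termA_{\text{orb}}$, as a CPT+WSC-term, is isomorphism-invariant not only under automorphisms but also across isomorphic structures (this is the small extension of Lemma~\ref{lem:cpt-wsc-iso-invariant} remarked upon right after its proof), hence $\psi(\denotation{\termA_{\text{orb}}}^{\StructA}(\tup{a})) = \denotation{\termA_{\text{orb}}}^{\StructB}(\tup{b})$ and the recursive unions match up. Finally, for any $\tup{c} \in \canlabels{\StructA}{\tup{a}}$ we have $\psi(\tup{c}) \in \canlabels{\StructB}{\tup{b}}$ and, coordinatewise, $\auto_{\psi(\tup{c})} \circ \psi = \auto_{\tup{c}}$ on $\StructVA$, so
\[ \canon{\StructB}{\tup{b}} = \auto_{\psi(\tup{c})}((\StructB,\tup{b})) = \auto_{\psi(\tup{c})}(\psi((\StructA,\tup{a}))) = \auto_{\tup{c}}((\StructA,\tup{a})) = \canon{\StructA}{\tup{a}}. \]

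The main obstacle I expect is the across-structures isomorphism-invariance of $\termA_{\text{orb}}$: Lemma~\ref{lem:cpt-wsc-iso-invariant} is stated only for automorphisms, and its generalization was only informally asserted. I would either invoke that remark directly or, more carefully, apply the original lemma on the disjoint union $\StructA \disunion \StructB$ using an automorphism of $\StructA \disunion \StructB$ that restricts to $\psi$ (this works since $\GraphClass$ is isomorphism-closed). A minor secondary nuisance is handling the tuple/set conventions (duplicate removal in $\concat{\cdot}{\cdot}$, the encoding of $\tup{a}$), but these are bookkeeping.
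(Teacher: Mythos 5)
Your proposal is correct and follows essentially the same route as the paper, whose proof is only a short sketch (bijectivity of the labellings $\auto_{\tup{c}}$ gives $\canon{\StructA}{\tup{a}} \iso (\StructA,\tup{a})$, and isomorphism-invariance of the orbit order defined by $\termA_{\text{orb}}$ gives the equivalence, with a reference to Gurevich); your version additionally fills in the inductive argument that $\canlabels{\StructA}{\tup{a}}$ is an orbit, which the paper dismisses as easy just before the lemma. One small caveat on your fallback for cross-structure invariance: evaluating $\termA_{\text{orb}}$ on $\StructA \disunion \StructB$ need not restrict to its evaluation on the individual components (e.g.\ $\Atoms$ denotes all atoms of the union), so that route needs extra care, and the cleaner justification is the routine structural induction alluded to in the remark after Lemma~\ref{lem:cpt-wsc-iso-invariant}.
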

\begin{proof}
	Let $\StructA,\StructB \in \GraphClass$,
	$\tup{a} \in \HF{\StructA}^*$, and
	$\tup{b} \in \HF{\StructB}^*$.
	First, because the $\autoA_{\tup{c}}$
	are bijections for every $\tup{c} \in \canlabels{\StructA}{\tup{a}}$,
	it follows that $\canon{\StructA}{\tup{a}} \iso (\StructA,\tup{a})$.
	Second, because the order on the $1$-orbits defined by
	$\termA_{\text{orb}}$
	is isomorphism-invariant,
	$\canon{\StructA}{\tup{a}} = \canon{\StructB}{\tup{b}}$
	if and only if $(\StructA,\tup{a}) \iso (\StructB, \tup{b})$.
	This is essentially the argument why Gurevich's canonization algorithm
	in~\cite{Gurevich97} is correct.
\end{proof}

\begin{lemma}
	\label{lem:definable-orbits-implies-canonization}
	If a class of $\sig$-structures~$\GraphClass$ is ready for individualization in CPT+WSC,
	then CPT+WSC defines a canonization for~$\GraphClass$-structures.
\end{lemma}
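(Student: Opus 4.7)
The plan is to realise Gurevich's canonization procedure as a single extended WSC-fixed-point operator (using the output-term variant from Lemma~\ref{lem:extended-iteration-term}). The state variable~$x$ holds the growing tuple of individualized atoms, starting from the empty tuple. Setting $\termChoice(\indVar,x) := \termA_{\text{orb}}(\concat{\indVar}{x})$ and $\termStep(\indVar,x,y) := \concat{x}{y}$ models each individualization step: at every iteration the choice term hands us the distinguished $1$-orbit of $(\Struct,\indVar,x)$ supplied by the hypothesis, and the step term appends the chosen atom. When every atom of~$\Struct$ already occurs in~$x$, the choice set becomes empty and a fixed-point is reached. The output term then applies a CPT-definable interpretation that, via the bijection sending $x[j]$ to $j$, produces the image of $(\Struct,\indVar)$ on the universe~$[n]$ endowed with the total order inherited from~$x$; by Lemma~\ref{lem:canon-is-canonization} this equals $\canon{\Struct}{\indVar}$ and so satisfies the three conditions for a definable canonization.

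The crux is to supply a valid witness term. Given the final tuple~$x_n$ and an intermediate tuple~$x_i$, the witness term must return a set of automorphisms of $(\Struct,\indVar,x_i)$ that witnesses the orbit $O_i := \denotation{\termA_{\text{orb}}}^{\Struct}(\concat{\indVar}{x_i})$. The idea is that for each $u\in O_i$ it suffices to have some labeling $y^u \in \canlabels{\Struct}{\concat{\indVar}{x_i\cdot u}}$: for any two such labelings $y^u$ and $y^v$ the permutation $y^u[j]\mapsto y^v[j]$ is an automorphism of $(\Struct,\indVar,x_i)$ sending $u$ to $v$, so collecting these permutations over all pairs $u,v\in O_i$ gives a witnessing set. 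The labeling for $u = x_n[i+1]$ is just~$x_n$ itself, while the labelings for the remaining $u\in O_i$ are produced by nested extended WSC-fixed-point operators that rerun the same canonization starting from the tuple $x_i\cdot u$.

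The hard part, and the main obstacle I anticipate, is establishing polynomial-time evaluation: a naive literal recursion of canonization inside its own witness term produces a nesting of WSC-fixed-point operators whose depth grows with~$|\StructV|$, which would yield exponential blowup. The resolution I expect is a careful arrangement in which each inner invocation has strictly fewer remaining atoms to individualize and the polynomial bound at each nesting layer absorbs the work of the layer beneath it, so that the fixed finite formula still admits polynomial-time evaluation under the bound semantics of CPT+WSC. Once the witness term is shown to be polynomial-time and to produce correct witnesses, Corollary~\ref{cor:wsc-orbit} ensures that the fixed-point computation always yields an element of $\canlabels{\Struct}{\indVar}$, and Lemma~\ref{lem:canon-is-canonization} then shows that the described interpretation defines a canonization of $\GraphClass$-structures, completing the proof.
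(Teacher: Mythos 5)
Your overall architecture matches the paper's: a WSC-fixed-point operator with step term $\concat{\indVar}{\concat{x}{y}}$ and choice term $\termA_{\text{orb}}(\concat{\indVar}{x})$, an output that reads the relations off the computed labeling, and witnessing automorphisms of the form $y^u[j]\mapsto y^v[j]$ obtained from labelings $y^u\in\canlabels{\Struct}{\concat{\indVar}{x_i\cdot u}}$. The correctness claims about these permutations being witnessing automorphisms are right (they are exactly Claims~\ref{clm:labels-correct}--\ref{clm:wsc-correct} of the paper's proof).

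However, there is a genuine gap in how you construct the witness term, and your own diagnosis of the difficulty points at it without resolving it. A CPT+WSC-formula is a fixed finite syntactic object, so you cannot have ``nested extended WSC-fixed-point operators that rerun the same canonization'' inside the witness term of that same canonization: that is a self-referential recursion whose unrolling depth would have to grow with $|\StructV|$, and no ``careful arrangement'' of polynomial bounds can rescue a formula that does not exist at finite nesting depth. The paper's resolution is not to absorb the recursion but to eliminate it entirely, and this is precisely where the design decision (emphasized in Section~\ref{sec:semantics-choice-operators}) that the witnessing term receives the \emph{final fixed-point} $x_n$ as an input pays off: $x_n$ is a total order on all atoms, so the reruns of the canonization from $\concat{\indVar}{x_i\cdot u}$ can resolve every choice \emph{deterministically} by picking the $x_n$-minimal element of each choice set. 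The paper packages this as a deterministic-choice operator $\mathsf{DC}^*xy.(\termStep,\termChoice,\termA_{\text{order}})$ with $\termA_{\text{order}}:=x_n$, which is simulable by ordinary iteration terms and hence needs no witnessing and introduces no further WSC operators. The resulting term $\termB_{\text{label}}(\ordVar,\indVar)$ produces, for each $u\in O_i$, one labeling in $\canlabels{\Struct}{\concat{\indVar}{x_i\cdot u}}$ (Claim~\ref{clm:labels-correct}), from which the witnessing permutations are assembled exactly as you describe. Without this step your proof does not go through.
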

\begin{proof}
	To implement the former approach in CPT+WSC,
	we first introduce some notation:
	we define a fixed-point operator with deterministic choice
	similar to the WSC-fixed-point operator,
	but which can be simulated just with plain iteration terms:
	\[\mathsf{DC}^*xy.\qspace(\termStep,\termChoice,\termA_{\text{order}}),\]
	where~$\termStep$,~$\termChoice$, and~$\termA_{\text{order}}$ are CPT+WSC-terms.
	The first terms $\termStep(x,y)$ and $\termChoice(x)$ behave exactly as in the 
	symmetric choice operator:~%
	$\termStep$ defines a step function and~$\termChoice$ a choice set (of atoms).
	But the third term~$\termA_{\text{order}}$ defines a total order on the atoms
	and is used to resolve the choices deterministically
	by picking the minimal one.
	The operator evaluates to the fixed-point obtained in that manner
	or to~$\emptyset$ if the polynomial bound is exceeded.
	
	We define a CPT+WSC-interpretation $\interpret(\indVar)$ whose universe is $[|\StructV|]$.
	The total order~$\leq$ is just the natural order on $[|\StructV|]$.
	For every $k$-ary relation $\rel \in \sig$,
	we define a
	formula $\formB_R(\indVar,i_1,\dots, i_k)$ as follows:
	\begin{align*}
		\termB_{\text{label}}(\ordVar,\indVar) &:= \mathsf{DC}^*xy.\qspace\big(\concat{\indVar}{\concat{x}{y}},\termA_{\text{orb}}(\concat{\indVar}{x}),\ordVar\big),\\
		\termB_{\text{wit}}(\ordVar,\indVar) &:= \setcond[\Big]{\setcond[\big]{( \termB_{\text{label}}(\ordVar, \concat{\indVar}{x})_j, \termB_{\text{label}}(\ordVar, \concat{\indVar}{y})_j)}{j \in [\Card(\Atoms)]}}{x,y \in \termA_{\text{orb}}(\indVar)},\\
		\formB_R(\indVar, i_1,\dots,i_k) &:= \wscFormbig{x}{y}{\concat{\indVar}{\concat{x}{y}}}{\termA_{\text{orb}}(\concat{\indVar}{x})}{\termB_{\text{wit}}(y,x)}{R(x_{i_1},\dots,x_{i_k})}.
	\end{align*}
	Fix an arbitrary structure $\StructA \in \GraphClass$.
	Set~$\StructVA^<$ to be the set of all $\StructVA$-tuples of length
	$|\StructVA|$ containing all atoms exactly once
	(i.e., the set of all total orders on~$\StructVA$).
	Additionally, fix an arbitrary $\tup{a} \in \StructVA^*$.
	\begin{claim}\label{clm:labels-correct}
		If $\tup{b} \in \StructVA^<$,
		then 
		$\denotation{\termB_{\text{label}}(\ordVar,\indVar)}^\Struct(\tup{b},\tup{a})
		\in \canlabels{\StructA}{\tup{a}}$.
	\end{claim}
	\begin{claimproof}
		Recall here that the tuple operation $\concat{\indVar}{\concat{x}{y}}$
		discards duplicates from~$x$ and~$y$.
		That is, in the first iteration
		we add~$\tup{a}$ and another atom to~$x$ 
		(or just~$\tup{a}$, if already all atoms are individualized).
		In every iteration, we choose 
		the minimal atom according to~$\tup{b}$
		from the orbit given by~$\termA_{\text{orb}}$
		until all atoms are individualized.
		Then nothing is added to~$x$ anymore and a fixed-point is reached.
		This follows exactly the definition of $\canlabels{\StructA}{\tup{a}}$.
	\end{claimproof}
	
	\begin{claim}\label{clm:aut-correct}
		If $\tup{b} \in \StructVA^<$,
		then 
		$\denotation{\termB_{\text{wit}}(\ordVar,\indVar)}^\Struct(\tup{b},\tup{a})$
		witnesses that 
		$\denotation{\termA_{\text{orb}}(\indVar)}^\Struct(\tup{a})$
		is an orbit of $(\StructA,\tup{a})$.
	\end{claim}
	\begin{claimproof}
		Let $\tup{b} \in \StructVA^<$,
		$O = \denotation{\termA_{\text{orb}}(\indVar)}^\Struct(\tup{a})$,
		and $\vertA,\vertB \in O$.
		Furthermore, let $\tup{c} \in \canlabels{\StructA}{\tup{a}\vertA}$
		and 
		$\tup{d} \in \canlabels{\StructA}{\tup{a}\vertB}$.
		Then $\autoA_{\tup{c}}$ is an isomorphism
		$(\StructA,\tup{a}\vertA) \to \canon{\StructA}{\tup{a}\vertA}$
		and also an isomorphism
		$(\StructA,\tup{a}) \to \canon{\StructA}{\tup{a}}$
		because~$O$ is the orbit given by~$\termA_{\text{orb}}$.
		Likewise,~$\autoA_{\tup{d}}$ is an isomorphism
		$(\StructA,\tup{a}) \to \canon{\StructA}{\tup{a}}$.
		Then $\inv{\autoA_{\tup{d}}}\circ\autoA_{\tup{c}} = \setcond{(c_j,d_j)}{j \in |\StructVA|} \in \autGroup{(\StructA,\tup{a})}$
		and satisfies $(\inv{\autoA_{\tup{d}}}\circ\autoA_{\tup{c}})(\vertA)=\vertB$.
		It easy to see using Claim~\ref{clm:labels-correct} that~%
		$\termB_{\text{wit}}$ exactly defines such automorphisms
		for every pair of atoms in the orbit given by~$\termA_{\text{orb}}$
		(see Figure~\ref{fig:witnessing-automorphisms-gurevich-example} for an illustration of witnessing automorphisms for the example
		in Figure~\ref{fig:gurevich-example}).
	\end{claimproof}

	\begin{figure}
		\begin{tikzpicture}
			\begin{scope} [scale=0.5]
			
			\begin{scope}[scale=1, shift={(1,0)}] 		
				\ggraph{\vertexColor}{}{\figcolA}{I}{\figcolB}{$u$}{\vertexColor}{}{\vertexColor}{};
			\end{scope}

			\path[decorate, draw, -{To[width=3mm,length=1.5mm]}, black, thick, decoration={snake,amplitude=0.4mm,post length = 0.75mm}]
			(2.5, 0) to  (5.5, 0);
			\node[font=\footnotesize, align=center] (discretelabel) at(3.8,0) {canoni-\\zation};
			
			\begin{scope}[scale=1, shift={(7,0)}, name prefix =B-] 		
				\ggraph{\figcolC}{III}{\figcolA}{I}{\figcolB}{II}{\figcolD}{IV}{\figcolE}{V};
			\end{scope}

			\begin{scope}[scale=1, shift={(13,0)}, name prefix =A-] 		
				\ggraph{\figcolC}{III}{\figcolA}{I}{\figcolE}{V}{\figcolD}{IV}{\figcolB}{II};
			\end{scope}
			
			\path[decorate, draw, -{To[width=3mm,length=1.5mm]}, black, thick, decoration={snake,amplitude=0.4mm, post length = 0.75mm}]
			(17.5, 0) to  (14.5, 0);
			\node[font=\footnotesize, align=center] (discretelabel) at(16.2,0) {canoni-\\zation};
			
			\begin{scope}[scale=1, shift={(19,0)}] 		
				\ggraph{\vertexColor}{}{\figcolA}{I}{\vertexColor}{}{\vertexColor}{}{\figcolB}{$v$};
			\end{scope}
			
			\path[-{Latex[width=2mm,length=1.5mm]}, draw, red, very thick] (B-V1) edge [bend right = 10] (A-V1)
			(B-V3) edge  (A-V5)
			(B-V2) edge [bend right= 20] (A-V2)
			(B-V4) edge [bend left =20]  (A-V4)
			(B-V5) edge  (A-V3);

			\node [right, align=left, font=\footnotesize](autolabel) at (26.5,0) {induced\\automorphism};
			
			\begin{scope}[scale=1, shift={(24.75,0)}] 		
				\ggraph{\vertexColor}{}{\figcolA}{I}{\vertexColor}{$u$}{\vertexColor}{}{\vertexColor}{$v$};
				\path[red, very thick, -{Latex[width=2mm,length=1.5mm]}]
				(-0.65,0.75) 
				edge[bend left = 10] (0.75,-0.65)
				(0.65,-0.75) 
				edge[bend left = 10] (-0.75,0.65);
			\end{scope}
			\end{scope}
			
		\end{tikzpicture}
	\caption{Witnessing automorphisms for the example in Figure~\ref{fig:gurevich-example}:
		The goal is to witness that $\set{\vertA,\vertB}$ is an orbit
		in the graph with one individualized vertex (I).
		Gurevich's canonization algorithm is used to 
		canonize the two graphs, in which~$\vertA$ respectively~$\vertB$ is individualized
		(formally, we use the already defined order to resolve choices in this instance of the algorithm).
		The obtained total orders induce the witnessing automorphism.
	}
	\label{fig:witnessing-automorphisms-gurevich-example}
	\end{figure}
	
	\begin{claim}
		\label{clm:wsc-correct}
		$\wsc{\denotation{\concat{\indVar}{\concat{y}{z}}}^\Struct(\tup{a})}{\denotation{\termA_{\text{orb}}(\concat{\indVar}{y})}^\Struct(\tup{a})}{\denotation{\termB_{\text{wit}}}^\Struct(\tup{a})} = \canlabels{\Struct}{\tup{a}}$.
	\end{claim}
	\begin{claimproof}
		As in Claim~\ref{clm:labels-correct},
		the WSC-fixed-point operator
		expresses precisely the definition of  $\canlabels{\StructA}{\tup{a}}$.
		By Claim~\ref{clm:aut-correct}, all choices are witnessed
		because every automorphism stabilizing some tuple $\tup{c} \in \StructA^*$
		stabilizes all prefixes of~$\tup{c}$. It thus also stabilizes all tuples defined earlier
		in the iteration.
		Because we consider the result under all possible choices,
		the claim follows.
	\end{claimproof}
	Finally, we show that $\reduct{\interpret(\StructA,\tup{a})}{\sig} = \canon{\StructA}{\tup{a}}$.
	By Claim~\ref{clm:wsc-correct},
	$(\tup{a}, i_1,\dots, i_\ell) \in \denotation{\formB_R}^\Struct$ if
	and only if $(i_1,\dots, i_\ell) \in R^{\auto_{\tup{b}}(\StructA,\tup{a})}$
	for some (and thus every) $\tup{b} \in \canlabels{\StructA}{\tup{a}}$.
	This is exactly the definition of $\canon{\StructA}{\tup{a}}$
	and hence we obtained a CPT+WSC-definable canonization
	(Lemma~\ref{lem:canon-is-canonization}).
\end{proof}

\begin{lemma}[\cite{GroheSchweitzerWiebking2021}]
	\label{cpt-iso-implies-inv}
	If CPT defines isomorphism of a class of binary $\sig$-structures~$\GraphClass$
	(closed under individualization),
	then there is a CPT-term defining a complete invariant for~$\GraphClass$.
\end{lemma}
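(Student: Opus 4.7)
This lemma is a verbatim restatement of a result of \cite{GroheSchweitzerWiebking2021}, and the plan is to summarize the route that proof takes. The starting observation will be that a CPT-definable isomorphism test for $\GraphClass$ immediately yields a CPT-definable orbit partition on a single structure: applying $\formA$ to the disjoint union of $\Struct$ with an isomorphic copy $\Struct'$ (encodable inside $\HF{\StructV}$ via atom renaming) tests whether $(\Struct, \tup{a}) \iso (\Struct, \tup{b})$, i.e.\ whether $\tup{a}$ and $\tup{b}$ lie in the same orbit of $\Struct$. Because $\GraphClass$ is closed under individualization, the same formula also defines orbits of $(\Struct, \tup{c})$ for arbitrary previously individualized $\tup{c}$, giving CPT-definable access to the full orbit structure.

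I would then transport the problem into the DeepWL computation model, a hereditarily-finite enrichment of Weisfeiler-Leman refinement which, on binary structures, is known to be interchangeable with CPT via definable translations in both directions. Under this translation the CPT-definable isomorphism test becomes a DeepWL algorithm whose execution trace already carries enough information to distinguish non-isomorphic inputs. The heart of the argument is a normalization procedure inside DeepWL: using the isomorphism oracle now available, every otherwise non-invariant tie-breaking step of the WL refinement is replaced by the canonical orbit-ordered representative, producing an isomorphism-invariant run whose final state is then necessarily a complete invariant. Translating the normalized DeepWL program back into CPT yields the promised CPT-term.

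The main obstacle, as in \cite{GroheSchweitzerWiebking2021}, is the normalization itself: interleaving the isomorphism test with WL refinement so that the resulting computation stays polynomially bounded and simultaneously isomorphism-invariant is delicate, because the isomorphism test is itself carried out by a refinement procedure that may break ties non-canonically, so a naive composition does not preserve invariance. The restriction to binary structures is essential since it is only in this setting that the CPT/DeepWL correspondence used in the final translation step has been established. It is also precisely this step that the present paper must redo inside CPT+WSC in Section~\ref{sec:iso-in-cpt+wsc}, since DeepWL's inherently parallel computation philosophy is, as the authors stress, difficult to reconcile with making witnessed symmetric choices.
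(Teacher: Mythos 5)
The paper does not reprove this lemma -- it is imported verbatim from \cite{GroheSchweitzerWiebking2021} -- but it does describe the proof route at the start of Section~\ref{sec:iso-in-cpt+wsc} and re-executes a WSC version of it there, so your sketch can be checked against that. Your overall skeleton (translate CPT to DeepWL, normalize, read off the internal run as the invariant, translate back) is the right one, but your account of the central normalization step is not what actually happens, and the gap matters. In DeepWL there are no ``non-invariant tie-breaking steps'' to repair: every operation, and hence the entire internal run, is isomorphism-invariant by construction, so invariance of the run is free and does not by itself give a complete invariant -- two non-isomorphic structures could a priori produce the same run. The nontrivial direction is \emph{completeness}: run equality must imply isomorphism. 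That is exactly what the normalization delivers, and it is not about canonical orbit-ordered representatives but about showing that on input $\StructA \disunion \StructB$ the algorithm never needs to \emph{mix} atoms of the two components (no $\addPairSym$/$\sccSym$ vertex straddles both). Once the computation is component-separated, the run on $\StructA \disunion \StructB$ is determined by the component runs, and one argues: if $\run{\dwla}{\StructA\disunion\StructA}=\run{\dwla}{\StructB\disunion\StructB}$ then this common run also equals $\run{\dwla}{\StructA\disunion\StructB}$, and since the algorithm accepts $\StructA\disunion\StructA$ it accepts $\StructA\disunion\StructB$, whence $\StructA\iso\StructB$. This is precisely the shape of Lemma~\ref{lem:same-run-normalized} and Theorem~\ref{thm:deepwl-wsc-iso-iff-canonical-inv} in the present paper; your version, which replaces tie-breaking by canonical representatives chosen via an isomorphism oracle, is closer to Gurevich's canonization / individualization-refinement and would in any case need the very canonical-choice mechanism (symmetric choice) that plain CPT lacks.

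Your opening paragraph is also off target. The orbit-partition observation is not an ingredient of this proof (in the paper's development, definable orbits are a \emph{consequence} of the complete invariant, via Lemma~\ref{lem:complete-invariant-implies-definable-orbits}, not a stepping stone to it), and the construction you propose -- building a disjoint isomorphic copy of $\Struct$ ``inside $\HF{\StructV}$ via atom renaming'' and applying $\formA$ to it -- does not work as stated: the isomorphism formula is evaluated on the structure $\StructA\disunion\StructB$ itself, and CPT evaluated on $\Struct$ cannot manufacture a second disjoint copy with fresh atoms. You should drop that paragraph and instead make the non-mixing property, and the completeness argument it enables, the explicit heart of the proof.
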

While this lemma is only for binary structures,
it can also be applied to arbitrary structures.
Every $\sig$-structure can be encoded by a binary structure using a CPT-interpretation~$\interpret$
(in fact, an FO-interpretation suffices)
such that $\interpret(\StructA) \iso \interpret(\StructB)$
if and only if $\StructA \iso \StructB$
and given a definable isomorphism test for a class of $\sig$-structures~$\GraphClass$,
we can define an isomorphism test on $\interpret(\GraphClass)$ and vice versa.

\begin{corollary}
	If CPT defines isomorphism of a class of $\sig$-structures~$\GraphClass$,
	then CPT+WSC defines canonization of $\GraphClass$-structures
	and captures \PTime{} on $\GraphClass$-structures.
\end{corollary}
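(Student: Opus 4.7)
The plan is to compose the results already established in the paper. First, I use Lemma~\ref{cpt-iso-implies-inv} (after reducing to binary structures) to obtain a \CPT{}-definable complete invariant for $\GraphClass$, then apply Lemma~\ref{lem:complete-invariant-implies-definable-orbits} to obtain \CPT{}-distinguishable $1$-orbits, then use these to establish readiness for individualization of $\GraphClass$ in \CPTWSC{}, then apply Lemma~\ref{lem:definable-orbits-implies-canonization} to obtain a \CPTWSC{}-definable canonization, and finally combine this with the Immerman-Vardi theorem to conclude that \CPTWSC{} captures \PTime{} on $\GraphClass$.

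For the first step, the reduction proceeds through an FO-interpretation~$\interpret$ (discussed after Lemma~\ref{cpt-iso-implies-inv}) that encodes $\sig$-structures as binary ones, preserving isomorphism and being compatible with individualization of atoms. The assumed \CPT{}-definable isomorphism test on $\GraphClass$ yields one on the (individualization-closed) class $\interpret(\GraphClass)$, and Lemma~\ref{cpt-iso-implies-inv} then supplies a \CPT{}-definable complete invariant, which pulls back along~$\interpret$ to a \CPT{}-definable complete invariant $\termA_{\text{inv}}(\indVar)$ on $\GraphClass$ itself. Applying Lemma~\ref{lem:complete-invariant-implies-definable-orbits} to $\termA_{\text{inv}}$ produces a \CPT{}-formula $\formA_{\text{orb}}(\indVar, x, y)$ whose equivalence classes (under mutual satisfaction) are precisely the $1$-orbits of $(\Struct, \tup{a})$, totally preordered. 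Since every $1$-orbit is either a singleton $\set{a_i}$ for some entry of~$\tup{a}$ or is disjoint from the atoms occurring in~$\tup{a}$, the \CPT{}-term selecting those atoms not occurring in~$\indVar$ which are $\formA_{\text{orb}}(\indVar, \cdot, \cdot)$-minimal among them defines a $1$-orbit disjoint from~$\tup{a}$ whenever one exists. Thus $\GraphClass$ is ready for individualization in \CPTWSC{}, and Lemma~\ref{lem:definable-orbits-implies-canonization} produces the desired \CPTWSC{}-definable canonization.

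For the capturing statement, any polynomial-time property~$P$ of $\GraphClass$-structures descends via the canonization to a polynomial-time property of totally ordered structures; by the Immerman-Vardi theorem~\cite{Immerman87}, inflationary fixed-point logic (which is subsumed by \CPTWSC{}) defines~$P$ on the canons, and composing the resulting formula with the canonization interpretation yields a \CPTWSC{}-sentence expressing~$P$ on $\GraphClass$. The only non-routine point is the reduction step, where one must verify that the binary encoding~$\interpret$ transfers isomorphism definability faithfully and interacts correctly with individualization (so that the ``closed under individualization'' hypothesis of Lemma~\ref{cpt-iso-implies-inv} is met via the free variable~$\indVar$); once this bookkeeping is handled, the argument is a direct chaining of the cited lemmas.
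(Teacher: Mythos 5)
Your proposal is correct and follows essentially the same route the paper intends for this corollary: reduce to binary structures, invoke Lemma~\ref{cpt-iso-implies-inv} to get a CPT-definable complete invariant, pass through Lemma~\ref{lem:complete-invariant-implies-definable-orbits} and the preorder-minimal-orbit argument to readiness for individualization, then apply Lemma~\ref{lem:definable-orbits-implies-canonization} and the Immerman-Vardi theorem. The only detail left implicit is the trivial edge case where all atoms are already individualized (handled anyway by the canonization construction), so no substantive gap remains.
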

This corollary is asymmetric in the sense
that we turn an isomorphism-defining CPT-formula
into a canonization-defining CPT+WSC-formula.
The next goal is to prove the symmetric version,
which starts with an isomorphism-defining CPT+WSC-formula (rather than a CPT-formula).
We begin with the following theorem
(which, similarly to Lemma~\ref{cpt-iso-implies-inv} can also be used for non-binary structures).
\begin{theorem}
	\label{thm:cpt+wsc-iso-implies-inv}
	If CPT+WSC defines isomorphism of a class of binary $\sig$-structures~$\GraphClass$,
	then CPT+WSC defines a complete invariant for $\GraphClass$-structures.
\end{theorem}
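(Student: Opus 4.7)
My plan is to adapt the proof of Lemma~\ref{cpt-iso-implies-inv} from~\cite{GroheSchweitzerWiebking2021}, which argues via the DeepWL computation model. Concretely, I would extend DeepWL with a witnessed symmetric choice construct, call the result DeepWL+WSC, and then carry out the same three-stage argument in this new model: (i) translate the given CPT+WSC isomorphism-defining formula into a DeepWL+WSC algorithm computing isomorphism, (ii) normalize any DeepWL+WSC algorithm for isomorphism to obtain a canonical object attached to every input structure, and (iii) translate the resulting DeepWL+WSC computation back into a CPT+WSC term. The output of step (iii) applied to the normalized computation from (ii) would then be the desired CPT+WSC complete invariant.

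For the translation in step (i), I would follow the CPT-to-DeepWL translation of~\cite{GroheSchweitzerWiebking2021} clause by clause, adding one new case that handles the WSC-fixed-point operator $\wscStarSym xy.\qspace(\termStep,\termChoice,\termWit,\formOut)$ by means of a DeepWL+WSC construct that mirrors its semantics as defined in Section~\ref{sec:semantics-choice-operators}, in particular the requirement that $\autF$ witnesses each intermediate choice set as an orbit of the extended structure $(\Struct,\tup{a},b_1,\dots,b_i)$. For the normalization in step (ii), the goal is to show that the trace of a DeepWL+WSC algorithm, quotiented by the equivalence induced by the choices made and the witnessing automorphisms, is an isomorphism-invariant hereditarily finite object whose equality coincides with isomorphism of the underlying structure. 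For the translation back in step (iii), I would reuse as much of the CPT-to-DeepWL translation inversion from~\cite{GroheSchweitzerWiebking2021} as possible, modulo the extra bookkeeping for choice nodes.

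The hard part, as the authors themselves indicate in the introduction, is the normalization step (ii). DeepWL is built around a global, parallel ``construct everything at once'' philosophy, which sits awkwardly next to choices: when a WSC-operator fires, the subsequent computation depends on which element of the orbit was picked, and distinct picks yield states that cannot be coherently merged into a single DeepWL configuration. The natural fix is to nest DeepWL computations, so that each WSC-operator spawns an inner DeepWL algorithm whose role is to simulate one branch of the choice tree $\tree$ in isolation, together with a certificate (provided by the witnessing term) that all branches are in the same outer orbit. Getting this nesting to compose correctly, to respect the polynomial bounds, and to yield a single normalized object that is provably isomorphism-invariant will require revisiting and modifying several of the base definitions of DeepWL, which is exactly the source of the technical overhead flagged in the introduction. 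The remaining steps, once nested DeepWL+WSC is in place, should be largely analogous to the CPT case and yield a CPT+WSC-term $\termA(\indVar)$ satisfying Definition~\ref{def:definable-complete-invariant}.
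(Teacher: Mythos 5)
Your proposal follows essentially the same route as the paper: extend DeepWL with witnessed symmetric choice, translate the CPT+WSC isomorphism test into a polynomial-time DeepWL+WSC algorithm, normalize it (the hard step, handled via nested DeepWL+WSC-algorithms exactly as you anticipate), and translate back into CPT+WSC. The only cosmetic difference is that the paper's complete invariant is concretely the internal run of the normalized (and suitably ``nice'') algorithm on $\StructA\disunion\StructA$ — which is already isomorphism-invariant once all choices are witnessed, so no quotienting is needed — rather than a quotiented trace.
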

The long proof of this theorem is deferred to  Section~\ref{sec:iso-in-cpt+wsc}.
Assuming Theorem~\ref{thm:cpt+wsc-iso-implies-inv} for now, we conclude:

\begin{theorem}
	\label{thm:cpt+wsc-all-equiv}
	Let~$\GraphClass$ be a class of $\sig$-structures (closed under individualization).
	The following are equivalent:
	\begin{enumerate}
		\item \label{alleq-ready} $\GraphClass$ is ready for individualization in CPT+WSC.
		\item \label{alleq-one-orbit} $\GraphClass$ has CPT+WSC-distinguishable $1$-orbits.
		\item \label{alleq-k-orbit} $\GraphClass$ has CPT+WSC-distinguishable $k$-orbits for every $k \in \nat$.
		\item \label{alleq-iso} CPT+WSC defines isomorphism of~$\GraphClass$.
		\item \label{alleq-inv} CPT+WSC defines a complete invariant for $\GraphClass$.
		\item \label{alleq-canon} CPT+WSC defines a canonization for $\GraphClass$.
	\end{enumerate}
\end{theorem}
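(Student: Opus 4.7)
The plan is to close the six conditions into a cycle of implications
$(\ref{alleq-inv}) \Rightarrow (\ref{alleq-k-orbit}) \Rightarrow (\ref{alleq-one-orbit}) \Rightarrow (\ref{alleq-ready}) \Rightarrow (\ref{alleq-canon}) \Rightarrow (\ref{alleq-iso}) \Rightarrow (\ref{alleq-inv})$, which renders all six statements equivalent.

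Most of the arrows are either immediate or already available. The step $(\ref{alleq-inv}) \Rightarrow (\ref{alleq-k-orbit})$ is Lemma~\ref{lem:complete-invariant-implies-definable-orbits}, and $(\ref{alleq-k-orbit}) \Rightarrow (\ref{alleq-one-orbit})$ is the specialisation $k=1$. The implication $(\ref{alleq-ready}) \Rightarrow (\ref{alleq-canon})$ is Lemma~\ref{lem:definable-orbits-implies-canonization}. The implication $(\ref{alleq-canon}) \Rightarrow (\ref{alleq-iso})$ is direct: given a canonization $\interpret$, the formula asserting $\interpret(\indVar_1)=\interpret(\indVar_2)$, evaluated on $\StructA \disunion \StructB$, defines isomorphism by property~3 in the definition of canonization. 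Finally, $(\ref{alleq-iso}) \Rightarrow (\ref{alleq-inv})$ is obtained by first composing with the standard isomorphism-preserving CPT-interpretation that encodes arbitrary relational structures as binary ones (as sketched right before Theorem~\ref{thm:cpt+wsc-iso-implies-inv}) and then invoking that theorem.

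The one implication that requires a small new argument is $(\ref{alleq-one-orbit}) \Rightarrow (\ref{alleq-ready})$. Given a CPT+WSC-formula $\formA(\indVar,x,y)$ whose induced total preorder~$\spleq$ on atoms has as equivalence classes precisely the $1$-orbits of $(\Struct,\tup{a})$, I would define $\termA(\indVar)$ to return the $\spleq$-minimal $1$-orbit among those whose atoms do not appear in~$\indVar$, falling back to the $\spleq$-minimal $1$-orbit overall in the edge case that every atom is already individualized. Concretely this is just a CPT+WSC comprehension: filter $\Atoms$ to atoms not occurring as a component of~$\indVar$, then keep those which are $\spleq$-minimal inside the filtered set. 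The key observation making this well defined is that every atom occurring in $\tup{a}$ is pointwise fixed by every automorphism in $\autGroup{(\Struct,\tup{a})}$, so each $\spleq$-class is either a singleton lying inside~$\tup{a}$ or is entirely disjoint from~$\tup{a}$; consequently the output is always a full $1$-orbit of $(\Struct,\tup{a})$, and it is disjoint from~$\tup{a}$ whenever some $1$-orbit is.

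I do not anticipate a genuine obstacle in this synthesis, since the hard content has been absorbed into Lemma~\ref{lem:complete-invariant-implies-definable-orbits}, Lemma~\ref{lem:definable-orbits-implies-canonization}, and the to-be-established Theorem~\ref{thm:cpt+wsc-iso-implies-inv}. The only delicate point is to verify that none of the compositions in the cycle reintroduces a~$\choiceError$ output on a $\GraphClass$-input: the three invoked results already yield~$\choiceError$-free formulas by hypothesis, and the additional step for $(\ref{alleq-one-orbit}) \Rightarrow (\ref{alleq-ready})$ adds only a standard CPT comprehension on top of a~$\choiceError$-free formula, so~$\choiceError$-freeness propagates around the cycle automatically.
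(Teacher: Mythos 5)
Your cycle of implications is exactly the one the paper uses, and five of the six arrows are handled the same way: the two cited lemmas, the trivial specialization $k=1$, the minimal-orbit argument for $(\ref{alleq-one-orbit})\Rightarrow(\ref{alleq-ready})$, and Theorem~\ref{thm:cpt+wsc-iso-implies-inv} composed with the binary encoding. The gap is in $(\ref{alleq-canon})\Rightarrow(\ref{alleq-iso})$, which you declare ``direct'' but which is the one arrow the paper cannot delegate to a lemma and argues at length. The formula $\interpret(\indVar_1)=\interpret(\indVar_2)$ is evaluated on $\StructA\disunion\StructB$, which is not a $\GraphClass$-structure ($\GraphClass$ is assumed to contain only connected structures), so the canonization interpretation carries no guarantee there: every occurrence of $\Atoms$ inside $\interpret$ now ranges over both components, and, more seriously, the WSC-fixed-point operators inside $\interpret$ must have their choice sets witnessed as orbits of the \emph{disjoint union}. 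Precisely in the case one wants to detect, namely $\StructA\iso\StructB$, the disjoint union has automorphisms exchanging the two components, so a choice set produced while ``canonizing $\StructA$'' is not an orbit of $\StructA\disunion\StructB$ (its orbit closure contains the mirror vertices in $\StructB$), and the witnessing fails; the naive formula would output $\choiceError$ rather than define isomorphism.

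The paper's fix is a syntactic relativization: take a term $\termA_{\text{cc}}$ defining the set of the two components, introduce a fresh variable $x$ holding the atom set of one component, replace every occurrence of $\Atoms$ in the canonization term by $x$, and insert the tautology $x=x$ into the step, choice, and witnessing terms of every WSC-fixed-point operator so that $x$ becomes a free variable of each operator. Then all witnessing automorphisms are required to stabilize the component held by $x$, which restores the orbit condition and lets the canonization run on one component while ignoring the other. Isomorphism is then expressed as $\Card(\setcond{\termA'_{\text{canon}}(x)}{x\in\termA_{\text{cc}}})=1$. You need this relativization (or an equivalent device) before your arrow $(\ref{alleq-canon})\Rightarrow(\ref{alleq-iso})$ goes through; as written it does not.
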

\begin{proof}	
	We show \ref{alleq-iso}$\Rightarrow$\ref{alleq-inv}$\Rightarrow$\ref{alleq-k-orbit}$\Rightarrow$\ref{alleq-one-orbit}$\Rightarrow$\ref{alleq-ready}$\Rightarrow$\ref{alleq-canon}$\Rightarrow$\ref{alleq-iso}.
	
	Theorem~\ref{thm:cpt+wsc-iso-implies-inv} proves \ref{alleq-iso}$\Rightarrow$\ref{alleq-inv},
	Lemma~\ref{lem:complete-invariant-implies-definable-orbits} proves \ref{alleq-inv}$\Rightarrow$\ref{alleq-k-orbit},
	\ref{alleq-k-orbit}$\Rightarrow$\ref{alleq-one-orbit} is trivial, and
	Lemma~\ref{lem:definable-orbits-implies-canonization} proves
	\ref{alleq-ready}$\Rightarrow$\ref{alleq-canon}.
	To show \ref{alleq-one-orbit}$\Rightarrow$\ref{alleq-ready},
	one can pick the minimal orbit according to the given preorder
	satisfying the requirement of ready for individualization,
	and finally 
	\ref{alleq-canon}$\Rightarrow$\ref{alleq-iso}
	is done by comparing the two canons
	of the two structures given as the disjoint union.
	This is done as follows:
	Let~$\termA_{\text{canon}}$ be a closed CPT+WSC-term defining a canonization,
	that is, it evaluates the interpretation defining the ordered copy
	as an $\HF{\emptyset}$-set using numbers as atoms.
	Further, let~$\termA_{\text{cc}}$ be a CPT+WSC-term defining the set of the two connected components of the disjoint union,
	i.e., the set of the two universes of the structures to test for isomorphism.
	To evaluate~$\termA_{\text{canon}}$ on a single component of the disjoint union,
	we need to forbid automorphisms exchanging the components (in the case that they are isomorphic),
	so~$\termA_{\text{canon}}$ indeed can ignore one component and just canonize the other.
	The idea is to add a free variable~$x$ to~$\termA_{\text{canon}}$
	which will hold a set of atoms forming a component.
	Let $\termA'_{\text{canon}}(x)$ be the CPT+WSC-term with a free variable~$x$ (unused in~$\termA_{\text{canon}}$) obtained from~$\termA_{\text{canon}}$ in the following way:
	we replace every occurrence of~$\Atoms$ in~$\termA_{\text{canon}}$ with~$x$
	and in every WSC-fixed-point operator
	we add the tautology $x = x$ in the step, choice, and witnessing term.
	This way, every WSC-fixed-point operator
	has~$x$ as free variable and in particular all witnessing automorphisms need to stabilize the component held by~$x$.
	Then the formula
	$\Card(\setcond{\termA'_{\text{canon}}(x)}{x \in \termA_{\text{cc}}}) = 1$
	is satisfied if and only if both components of the disjoint union are isomorphic.
\end{proof}

Finally, we can prove Theorem~\ref{thm:iso-implies-canon-and-ptime}.
\begin{proof}[Proof of Theorem~\ref{thm:iso-implies-canon-and-ptime}]
	Let~$\GraphClass$ be a class of $\sig$-structures,
	for which CPT+WSC defines isomorphism.
Canonization of~$\GraphClass$ is CPT+WSC-definable by Theorem~\ref{thm:cpt+wsc-all-equiv} for~$\GraphClass$
and so by the Immerman-Vardi Theorem~\cite{Immerman87}
CPT+WSC captures \PTime{}.
\end{proof}

\begin{corollary}
	If graph isomorphism is in \PTime{},
	then CPT+WSC defines isomorphism on all structures
	if and only if CPT+WSC captures \PTime{}.
\end{corollary}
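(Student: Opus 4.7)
The plan is to split the biconditional into its two directions, each of which reduces to a result already established in the excerpt.

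For the backward implication, the plan is to assume that CPT+WSC captures \PTime{}. Under the hypothesis that graph isomorphism is in \PTime{}, structure isomorphism on disjoint unions of structures (with parameters for the tuples of individualized atoms, as in Definition~\ref{def:definable-isomorphism}) is also a \PTime{} property: encoding an arbitrary $\sig$-structure as a graph in an isomorphism-preserving manner is routine. Since any \PTime{} property of a class of structures is CPT+WSC-definable by assumption, this yields a CPT+WSC-formula $\formA(\indVar_1,\indVar_2)$ defining isomorphism in the sense of Definition~\ref{def:definable-isomorphism}. Hence CPT+WSC defines isomorphism on all structures.

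For the forward implication, the plan is to apply Theorem~\ref{thm:iso-implies-canon-and-ptime} uniformly to each signature. Fix an arbitrary (finite relational) signature $\sig$ and let $\GraphClass$ be the class of all $\sig$-structures. By assumption, CPT+WSC defines isomorphism of $\GraphClass$. Since the excerpt's framework realizes individualization internally via the free variable $\indVar$ (to which arbitrary tuples of atoms can be passed) rather than by extending the signature, the closure-under-individualization condition in Theorem~\ref{thm:iso-implies-canon-and-ptime} is automatically satisfied for $\GraphClass$. The theorem then yields a CPT+WSC-definable canonization of $\GraphClass$-structures and, consequently, that CPT+WSC captures \PTime{} on $\GraphClass$. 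Because $\sig$ was arbitrary, CPT+WSC captures \PTime{} on every class of structures, i.e., on all structures.

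There is essentially no hard step: the work has all been done in Theorem~\ref{thm:iso-implies-canon-and-ptime} (whose nontrivial content is Theorem~\ref{thm:cpt+wsc-iso-implies-inv}) and the observation that isomorphism being \PTime{} makes the backward direction trivial once a \PTime{}-capturing logic is available. The only point requiring care is the interpretation of ``all structures,'' where one must argue per signature and note that the internal handling of $\indVar$ in the excerpt's definitions dispenses with any need to literally close $\GraphClass$ under adding unary predicates to the signature.
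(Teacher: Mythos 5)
The paper states this corollary without proof, and your argument is exactly the intended one: the forward direction is Theorem~\ref{thm:iso-implies-canon-and-ptime} applied to the class of all $\sig$-structures for each signature (with the individualization-closure caveat handled internally via $\indVar$, as you note), and the backward direction is immediate because GI in \PTime{} makes the isomorphism relation a \PTime{} query, hence definable in any \PTime{}-capturing logic. Your proposal is correct and takes essentially the same approach.
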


\section{Isomorphism Testing in CPT+WSC}
\label{sec:iso-in-cpt+wsc}
The goal of this section is to prove Theorem~\ref{thm:cpt+wsc-iso-implies-inv},
which states that a CPT+WSC-definable isomorphism tests
implies a CPT+WSC-definable complete invariant.
The proof of Lemma~\ref{cpt-iso-implies-inv} in~\cite{GroheSchweitzerWiebking2021},
which proves the same statement for CPT,
uses the equivalence between CPT and the DeepWL computation model.
This model ensures that a Turing machine can only access and modify a relational structure in an isomorphism-invariant way.
For DeepWL, the authors of~\cite{GroheSchweitzerWiebking2021} show that on input $\StructA\disunion\StructB$
a DeepWL-algorithm never needs to ``mix'' atoms of the two structures.
This implies that if there is a DeepWL-algorithm to decide isomorphism,
it essentially suffices not to compute on input $\StructA \disunion \StructB$
but to consider the output of another algorithm on input~$\StructA$
and on input~$\StructB$. 
The run of the Turing machine in the latter DeepWL-algorithm turns out 
to be a complete invariant if the DeepWL-algorithm decides isomorphism.
To prove Theorem~\ref{thm:cpt+wsc-iso-implies-inv},
we extend DeepWL with witnessed symmetric choice
and essentially follow the same proof idea albeit with necessary adaptions.
For a more elaborate introduction into DeepWL we refer to~\cite{GroheSchweitzerWiebking2021}.

In the rest of this section, we assume that all structures are binary relational structures.
Moreover, we see all relation symbols as binary strings,
so Turing machines with a fixed alphabet can compute with relation symbols.

\paragraph{Coherent Configurations.}

Before introducing DeepWL,
we need some background on coherent configuration.
We start with introducing terminology for general binary relational structures,
which can be seen as edge-colored graphs.
Let~$\StructA$ be a binary $\sigA$-structure.
The inverse of a relation~$\rel^\Struct$ (for some $\rel \in \sig$) is
\[\inv{(\rel^\Struct)} := \setcond*{(\vertB,\vertA)}{(\vertA,\vertB) \in \rel^\Struct}.\]
We call~$\rel$ \defining{undirected} if $\rel^\Struct = \inv{(\rel^\Struct)}$
and \defining{directed} otherwise.
We use $\set{\vertA,\vertB} \in \rel^\Struct$
as notation for $\set{(\vertA,\vertB),(\vertB,\vertA)} \subseteq \rel^\Struct$.
For $\pi \subseteq \sigA$,
two atoms $\vertA, \vertB \in \StructV$
are \defining{$\pi$-connected}
if there is a path from~$\vertA$ to~$\vertB$ only using edges
contained in a relation in~$\pi$.
Similarly, we define $\pi$-connected components
and strongly $\pi$-connected components ($\pi$-SCCs).

We now turn to coherent configurations.
Let~$\StructC$ be a binary $\sigB$-structure.
A relation $\col \in \sigB$ is called \defining{diagonal}, if 
$\col^\StructC\subseteq \diagrel{\StructVC}  := \setcond{(\vertA,\vertA)}{\vertA \in \StructVC}$.
The structure~$\StructC$ is a \defining{coherent configuration}
if it satisfies the following properties:
\begin{enumerate}
	\item The $\sigB$-relations partition~$\StructVC^2$, that is, $\setcond{\col^\StructC}{\col \in \sigB}$ is a partition of~$\StructVC^2$.
	In particular all relations~$\col^\StructC$ are nonempty.
	\item Every relation $\col \in \sigB$ is either disjoint from or a subset of $\diagrel{\StructVC}$.
	\item Every relation $\col \in \sigB$ has an inverse $\inv{\col} \in \sigB$,
	i.e., $\inv{(\col^\StructC)} = (\inv{\col})^\StructC$.
	\item For every triple $(\colA, \colB, \colC) \in \sigB$,
	 there is a number $q(\colA,\colB,\colC) \in \nat$ such that whenever
	$(\vertA,\vertB) \in \colA^\StructC$,
	there are exactly $q(\colA,\colB,\colC)$
	many $\vertC \in \StructVC$ such that $(\vertA,\vertC) \in \colB^\StructC$
	and $(\vertC,\vertB) \in \colC^\StructC$.
\end{enumerate}
The number $q(\colA,\colB,\colC)$ is called the \defining{intersection number} of $(\colA,\colB,\colC)$.
The function $q \colon \sigB^3 \to \nat$ given by $(\colA,\colB,\colC) \mapsto q(\colA,\colB,\colC)$ is called the \defining{intersection function} of~$\StructC$.
The $\sigB$-relations are called \defining{colors}.
Diagonal $\sigB$-relations are called \defining{fibers}.
We say that a color $\colA\in \sigB$ \defining{has an $(\colB, \colC)$-colored triangle},
if $q(\colA,\colB,\colC) \geq 1$.
This can be extended to paths.
We say that a color~$\colA$ \defining{has an $(\colB_1, \dots, \colB_k)$-colored path},
if for every $(\vertA,\vertB) \in \colA^\Struct$
there is a path $(\vertC_1,\dots,\vertC_{k+1})$ such that
$\vertC_1 = \vertA$, $\vertC_{k+1} = \vertB$,
and for every $i \in [k]$ it holds that $(\vertC_i, \vertC_{i+1}) \in \colB_i^\StructC$.
By the properties of coherent configuration, this is either the case for every $(\vertA,\vertB) \in \colA^\Struct$ or for no such edge.
We also say that a color has~$k$ many colored triangles or colored paths,
if we want to specify the exact number of these triangles or paths.

The coherent configuration~$\StructC$ \defining{refines} a $\sig$-structure~$\StructA$
if $\StructVC = \StructVA$ and 
for every $\sigB$-relation~$\colA$
and every $\sigA$-relation~$\relA$
it holds that either $\colA^\StructC \subseteq \relA^\StructA$
or $\colA^\StructC \cap \relA^\StructA = \emptyset$.
A coherent configuration~$\StructC$ refining a structure~$\StructA$
is a \defining{coarsest} coherent configuration refining~$\StructA$
if every coherent configuration~$\StructC'$ refining~$\StructA$
also refines~$\StructC$.
Given a $\sigA$-structure~$\StructA$,
a coarsest coherent configuration refining~$\StructA$ can be computed canonically
with the two-dimensional Weisfeiler-Leman algorithm.
We denote this configuration by~$\coConf{\StructA}$.

\paragraph{Structures with Sets as Vertices.}
In the following, relational structures
in which some ``atoms'' are obtained as HF-sets of other atoms
play an important role.
We formalize this as follows:

For a signature $\sig = \set{\rel_1,\dots, \rel_k}$,
a finite binary \defining{$\sig$-HF-structure}~$\Struct$ is a tuple $(\StructV, \StructHFV, \rel_1^\Struct, \dots , \rel_k^\Struct)$,
where
$\StructV$ is a finite set of atoms,
$\StructHFV \in \HF{\StructV} \setminus \StructV$
is a finite set of $\HF{\StructV}$-sets,
and $\rel_i^\Struct \subseteq (\StructV\cup \StructHFV)^2$ for all $i \in [k]$,
that is, the universe of~$\Struct$ is a set of atoms~$\StructV$ and some $\HF{\StructV}$-sets~$\StructHFV$.
We call~$\StructV$ \defining{atoms}
and $\vertices{\Struct} := \StructV \cup \StructHFV$ \defining{vertices}.
In that sense, every $\sig$-HF-structure~$\Struct$
can be turned into a $\sig$-structure~$\nonHF{\Struct}$,
where the sets in~$\StructHFV$ become fresh atoms. Conversely, every $\sig$-structure is also a $\sig$-HF-structure,
where the set~$\StructHFV$ is empty.

An automorphism of the $\sig$-HF-structure~$\Struct$
is a permutation~$\autoA$ of the atoms~$\StructV$
such that $\autoA(\StructHFV) = \StructHFV$ and $(\vertA,\vertB) \in \rel_i^\Struct$ if and only if
$\autoA(\vertA, \vertB) \in \rel_i^\Struct$
for every $i \in [k]$ and every $\vertA,\vertB \in \vertices{\Struct}$.
That is, an automorphism of~$\Struct$
has to respect the HF-structure of the vertices.
So a $\sig$-HF-structure~$\Struct$ has potentially fewer automorphisms
than the $\sig$-structure~$\nonHF{\Struct}$.
Using this notion of automorphisms,
$\Struct$-orbits and $(\Struct, \tup{a})$-orbits (for a tuple $\tup{a}$ of $\HF{\StructV}$-sets) are defined as before.

The disjoint union of two HF-structures~$\StructA$ and~$\StructB$
is the structure $\StructA \disunion \StructB$ with atom set
$\StructVA \disunion \StructVB$ that is defined as expected.
For an HF-structure~$\StructA$ and a set $M \subseteq \vertices{\StructA}$
such that~$M \subseteq \HF{\StructVA\cap M}$,
that is,~$M$ contains only HF-sets formed over atoms contained in~$M$,
the substructure of~$\StructA$ induced by~$M$ is denoted~$\StructA[M]$.

We define $\coConf{\Struct} := \coConf{\nonHF{\Struct}}$, that is, coherent configurations are always computed with respect to~$\nonHF{\Struct}$.

\subsection{DeepWL}
\label{sec:deepwl}

We are going to introduce the notion of a DeepWL-algorithm from~\cite{GroheSchweitzerWiebking2021}:
A \defining{DeepWL-algorithm} is a two-tape Turing machine
using the alphabet $\set{0,1}$
with three special states~%
$q_{\addPairSym}$,~$q_{\sccSym}$, and~$q_{\createSym}$.
The first tape is called the \defining{work-tape}
and the second one the \defining{interaction-tape}.
The Turing machine computes on a binary relational $\sig$-HF-structure~$\StructA$,
but it has no direct access to it.
Instead, the structure is put in the so-called ``cloud''
which maintains the pair $(\Struct, \coConf{\Struct})$.
The Turing machine only has access to the \defining{algebraic sketch} $\sketch{\StructA} = (\sigA, \sigB, \symSubset{\sigA}{\sigB}, q)$,
which gets written on the interaction-tape and consists of the following objects:
\begin{enumerate}
	\item $\sigA$ is the signature of the HF-structure~$\StructA$.
	\item $\sigB$ is the signature of the canonical coarsest coherent configuration $\coConf{\StructA}$ refining~$\nonHF{\StructA}$.
	\item  $\symSubset{\sigA}{\sigB} := \setcond{(\col, \rel) \in \sigA \times \sigB}{\col^{\coConf{\StructA}} \subseteq \rel^\StructA}$
	is the \defining{symbolic subset relation}.
	It relates a $\sigB$-color~$\col$ to the $\sigA$-relation~$\rel$ which is refined by~$\sigB$, i.e.,~$\col^{\coConf{\StructA}} \subseteq \rel^\StructA$.
	\item $q$ is the intersection function of~$\coConf{\Struct}$.
\end{enumerate}
In the following and unless stated otherwise, 
we use~$\sigA$ for the signature of the HF-structure~$\Struct$ in the cloud 
and~$\sigB$ for the signature of~$\coConf{\Struct}$,
which we assume to be disjoint from~$\sigA$.
We call relations $\col \in \sigB$ \defining{colors}
and relations $\rel \in \sigA$ just \defining{relations}.
If~$\rel$ (respectively~$\col$) is a diagonal relation,
we identify~$\rel$ (or~$\col$) with the set $\setcond{\vertA}{(\vertA,\vertA)\in \rel^\Struct}$
and call~$\rel$ a \defining{vertex class} (or~$\col$ a \defining{fiber}).
We use the letters~$\colA$,~$\colB$, and~$\colC$ for colors and
the letters~$\relA$ and~$\relB$ for relations.
We use the letters~$\vcA$ and~$\vcB$ for vertex classes and
the letters~$\ccA$ and~$\ccB$ for fibers.
Although the cloud contains the pair  $(\Struct, \coConf{\Struct})$,
we will just say that~$\Struct$ is in the cloud
and interpret $\sigB$-colors~$\col$ in~$\Struct$,
i.e.,~just write $\col^\Struct$ for $\col^{\coConf{\Struct}}$.

With the special states~$q_{\addPairSym}$,~$q_{\sccSym}$, and~$q_{\createSym}$,
the Turing machine can add vertices to the structure in the cloud
in an isomorphism-invariant manner.
If~$\Struct$ is the input structure to the DeepWL-algorithm,
the vertices of the HF-structure in the cloud
will be pairs $\pairVtx{a}{i}$ of an $\HF{\StructV}$-set (or atom)~$a \in \HF{\StructV}$
and a number~$i \in \nat$.
The number~$i$ is encoded as an $\HF{\emptyset}$-set and $\pairVtx{a}{i}$ denotes the Kuratowski encoding of pairs%
\footnote{While in the previous sections we only used the pair encoding implicitly, here we use the explicit $\pairVtx{\cdot}{\cdot}$ notation for sake of readability.}).
Using the number~$i$,
we can create multiple vertices for the same~$a\in \HF{\StructV}$ as follows.
Whenever~$i$ many vertices for the set~$a$ exist (possible zero many),
then we add the vertex  $\pairVtx{a}{i+1}$.
So, when describing how vertices are added, we can identify them with its $\HF{\StructV}$-set and assume that the numbers are picked as described.

Now assume that~$\StructA$ is the $\sigA$-HF-structure in the cloud
at some point during the execution of the DeepWL-algorithm.
To enter the states~$q_{\addPairSym}$ and~$q_{\sccSym}$,
the Turing machine has to write a single relation symbol $\relColA \in \sigA \cup \sigB$ on the interaction-tape.
To enter~$q_{\createSym}$, a set $\pi \subseteq \sigB$ has to be written on the interaction-tape.
We say that the machine \defining{executes} $\addPair{\relColA}$, $\scc{\relColA}$, and $\create{\pi}$.
\begin{enumerate}[label=\alph*)]
	\item $\addPair{\relColA}$:
	For every $(\vertA,\vertB) \in \relColA^\Struct$
	a new vertex $\pairVtx{\vertA}{\vertB}$ is added to the structure
	(by the former convention, actually a vertex $\pairVtx{\pairVtx{\vertA}{\vertB}}{i}$ is added).
	Additionally, new relations~$\rel_{\leftT}$ and~$\rel_{\rightT}$
	are added to~$\sigA$ containing the pairs
	$(\pairVtx{\vertA}{\vertB},\vertA)$
	and $(\pairVtx{\vertA}{\vertB},\vertB)$ respectively.
	We call these relations the \defining{component relations}.
	\item $\scc{\relColA}$: 
	For every strongly $\relColA$-connected component~$c$, a new vertex~$c$ is added
	(note that~$c$ is itself an $\HF{\StructV}$-set).
	A new \defining{membership relation} symbol~$\rel_{\memT}$
	is added to~$\sigA$ containing the pairs
	$(c, \vertA)$ for every $\relColA$-SCC~$c$ and every $\vertA \in c$.
	\item $\create{\pi}$: A new relation symbol~$\rel$ is added to~$\sigA$,
	which is interpreted as the union of all $\col \in \pi$.
\end{enumerate} 
Whenever new relation symbols have to be picked, we choose the smallest unused one
according to the lexicographical order (recall that relation symbols are binary strings).
Each of these three operations modify the HF-structure~$\StructA$ in the cloud.
After that, the coherent configuration~$\coConf{\StructA}$
is recomputed and the new algebraic sketch~$\sketch{\StructA}$
is written onto the interaction-tape.
Then the Turing machine continues.
A DeepWL-algorithm accepts~$\StructA$,
if the head of the work-tape points to a~$1$
when the Turing machine halts and rejects otherwise.
For a more detailed definition and description of a DeepWL-algorithm we refer to~\cite{GroheSchweitzerWiebking2021}.

\paragraph{Differences and Equivalence to~\cite{GroheSchweitzerWiebking2021}.}
Our definition of a DeepWL-algorithm differs at various places from the one given in~\cite{GroheSchweitzerWiebking2021}, which we discuss now:

	 We omit the $\forgetSym$-operation, which allows the machine to remove a $\sigA$-relation from the structure in the cloud.
	 However,~\cite{GroheSchweitzerWiebking2021} proves that this operation is not needed
	 because the algebraic sketch $\sketch{\Struct'}$ of the structure~$\Struct'$ obtained from~$\Struct$ by removing some relations
	 can be computed from the sketch $\sketch{\Struct}$:
	\begin{lemma}[{\cite[Lemma~22]{GroheSchweitzerWiebking2021}}]
		\label{lem:restrict-sketch}
		There is a polynomial-time algorithm that for every algebraic sketch $\sketch{\StructA}$ of a binary structure~$\StructA$,
		every subset $\tilde{\sigA}\subseteq \sigA$,
		and every vertex class $\vcA \in \tau$,
		computes the algebraic sketch $\sketch{\reduct{\StructA[\vcA^\Struct]}{\tilde{\sigA}}}$.
	\end{lemma}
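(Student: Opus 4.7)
The plan is to observe that the sketch $\sketch{\StructA}$ already encodes, in purely combinatorial form, all the information needed to perform two-dimensional Weisfeiler-Leman refinement on any structure obtained from $\nonHF{\StructA}$ by restricting to a vertex class and dropping relation symbols. One can therefore work entirely on the sketch, without ever touching the underlying structure.

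First I would cut down to the vertex class $\vcA$. Each color $\colA \in \sigB$ has a unique source fiber and a unique target fiber, both of which are readable from the intersection function alone: the source fiber of $\colA$ is the unique fiber $\ccA$ with $q(\colA,\ccA,\colA) > 0$, and similarly for the target. I would keep exactly those colors whose source and target fibers are both $\vcA$, and restrict $q$ and $\symSubset{\sigA}{\sigB}$ to the surviving colors. This yields the sketch of $\StructA[\vcA^\Struct]$, since every $(\colB,\colC)$-colored triangle witnessed by $q$ through a surviving color uses only vertices of $\vcA$.

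Second I would handle the restriction to $\tilde{\sigA}\subseteq \sigA$. Syntactically this amounts to deleting the $\sigA$-symbols outside $\tilde{\sigA}$ and their entries in $\symSubset{\sigA}{\sigB}$. The coherent configuration on $\reduct{\StructA[\vcA^\Struct]}{\tilde{\sigA}}$ may however be strictly coarser, so $\coConf{\cdot}$ has to be recomputed on the symbolic level. I would simulate 2-WL directly on the sketch: start with the equivalence relation on colors identifying $\colA,\colB$ iff they belong to exactly the same relations of $\tilde{\sigA}$ (using the updated $\symSubset{\sigA}{\sigB}$), and iteratively split a class whenever two of its colors $\colA,\colB$ give differing multisets $\mset{(q(\colA,\colC_1,\colC_2), [\colC_1], [\colC_2]) \mid \colC_1,\colC_2 \in \sigB}$, where $[\cdot]$ denotes the current class. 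Intersection numbers between classes are obtained by summation. The procedure stabilizes in at most $|\sigB|$ rounds, and the resulting classes and summed intersection numbers constitute the output sketch.

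The main obstacle is verifying correctness of the symbolic refinement: one must show that its fixpoint is exactly the intersection function of $\coConf{\reduct{\StructA[\vcA^\Struct]}{\tilde{\sigA}}}$. The key observations are, first, that whenever one coherent configuration refines another, the intersection numbers of the coarser configuration are sums of those of the finer configuration over the merged classes, so each refinement step can be carried out purely on sketch data; and second, that the characterization of coherent configurations as 2-WL-stable colorings ensures the symbolic fixpoint coincides with the true coarsest CC. Polynomial runtime is then immediate, since each round processes $\mathrm{poly}(|\sigB|)$ data, at most $|\sigB|$ rounds occur, and $|\sigB|$ is polynomial in $|\StructVA|$.
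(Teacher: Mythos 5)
First, a caveat: the paper does not prove this statement at all — it is imported verbatim as Lemma~22 of \cite{GroheSchweitzerWiebking2021} — so there is no in-paper proof to compare against. Judged on its own terms, your overall plan (work purely on the sketch and simulate two-dimensional Weisfeiler--Leman symbolically on the colors) is the right and, as far as I can tell, the intended one, and your trick for reading off source and target fibers from $q$ is correct. One intermediate claim is false as stated, though: deleting the colors not lying inside $\vcA^\StructA$ does \emph{not} by itself ``yield the sketch of $\StructA[\vcA^\Struct]$''. The restriction of $\coConf{\StructA}$ to $\vcA^\StructA$ is a coherent configuration refining $\StructA[\vcA^\StructA]$, but it need not be the \emph{coarsest} one, because vertices of $\vcA^\StructA$ may be separated in $\coConf{\StructA}$ only by how they attach to vertices outside $\vcA^\StructA$. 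Your procedure survives this only because the symbolic refinement of the second step restarts from the atomic partition; you must therefore present that recomputation as handling both restrictions, not just the dropping of relation symbols.

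The more substantive gap is the refinement criterion itself. Genuine 2-WL separates two pairs exactly when, for some pair $(X,Y)$ of current classes, the \emph{number} of witnesses $\vertC$ with $(\vertA,\vertC)$ in class $X$ and $(\vertC,\vertB)$ in class $Y$ differs; on the sketch this number is the sum $\sum_{\colC_1\in X,\,\colC_2\in Y} q(\colA,\colC_1,\colC_2)$ (which, after discarding colors leaving $\vcA^\StructA$, correctly counts only witnesses inside the vertex class). Your test instead compares the multiset of \emph{individual} values $q(\colA,\colC_1,\colC_2)$ tagged with the classes of $\colC_1$ and $\colC_2$; this is a strictly stronger test, since two colors can have identical class-wise sums while the multisets of summands differ. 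A procedure using your test may therefore stabilize at a partition strictly finer than that of $\coConf{\reduct{\StructA[\vcA^\Struct]}{\tilde{\sigA}}}$ and output the wrong sketch. Replacing the test by equality of the summed intersection numbers over all class pairs repairs this: then the round-by-round simulation argument you sketch (the fine configuration refines every partition arising during the run, so the counts are constant on each of its colors) does establish that the symbolic fixpoint coincides with the coarsest coherent configuration. You should also ensure the initial partition encodes the full atomic type — membership of the pair and of its converse in the $\tilde{\sigA}$-relations, and diagonality — all of which are readable from the sketch.
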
	
	Note that the algorithm from the lemma has no access to~$\StructA$ but solely to $\sketch{\StructA}$ and so can be executed by a
	DeepWL-algorithm
	without modifying the cloud.
	So it suffices to remember the set of relations to remove.
	In particular, for every DeepWL-algorithm deciding isomorphism 
	the equivalent DeepWL-algorithm constructed in Theorem~11 of~\cite{GroheSchweitzerWiebking2021}
	does not use the $\forgetSym$-operation.
	
	Instead of the $\sccSym$-operation a $\contractSym$-operation is given in~\cite{GroheSchweitzerWiebking2021}. The $\contractSym$-operation contracts the vertices of the SCCs and does not create a new one for each SCC.
	A structure~$\Struct$ resulting from a $\contract{\relColA}$ operation
	can be obtained from the structure~$\Struct'$ resulting from the $\scc{\relColA}$-operation by removing the
	vertices incident to~$\relColA$.
	While our operations do not allow removing these vertices,
	they form a union of fibers
	and the algebraic sketch $\sketch{\Struct}$ can be computed
	from $\sketch{\Struct'}$ by Lemma~\ref{lem:restrict-sketch}.
	For polynomial-time DeepWL-algorithms,
	not removing these vertices only creates a polynomial overhead.
	The other way around, the $\sccSym$-operation can be simulated by first
	copying the vertices incident to $\relColA$ (using an $\addPairSym$-operation for the fibers incident to~$\relColA$)
	and then executing a $\contractSym$-operation on the copies.
	
	Our operations produce new component or membership relations,
	while in~\cite{GroheSchweitzerWiebking2021} a global component relation is maintained but the added vertices are put in a new vertex class.
	Surely, one is computable from the other.
	
	Our version of DeepWL computes on HF-structures,
	while~\cite{GroheSchweitzerWiebking2021} uses plain relational structures
	and new vertices are just added as fresh atoms.
	For now, this does not make a difference
	because the coherent configuration~$\coConf{\Struct} = \coConf{\nonHF{\Struct}}$ is computed on $\nonHF{\Struct}$.
	But seeing the vertices as hereditarily finite sets
	possibly removes automorphisms of the structure.
	However, we will now show that with our altered operations
	the automorphisms of the cloud as HF-structure
	coincide with the automorphisms of the cloud as plain relational structure.
	That is one major reason why we use different operations
	when dealing with symmetric choice:
	algebraic sketches can be computed with respect to plain relational structures
	while still maintaining automorphisms of the HF-structure.
	This is crucial later when we will simulate an extension of DeepWL with witnessed symmetric choice in CPT+WSC.

\paragraph{Automorphisms and HF-Structures.}
The following lemma justifies to compute $\coConf{\Struct}$
on the structure~$\nonHF{\Struct}$
and that DeepWL does not need to access the HF-structure of~$\Struct$
if the original input was a $\sig'$-(non-HF)-structure.
\begin{lemma}
	\label{lem:auts-flat-vs-hf}
	Let~$\Struct_0$ be a binary relational structure
	and let~$\Struct$ be a HF-structure obtained by a DeepWL-algorithm
	in the cloud on input~$\Struct_0$.
	Then for every automorphism ${\autoA \in \autGroup{\nonHF{\Struct}}}$
	it holds that $\restrictVect{\autoA}{\StructV} \in \autGroup{\Struct}$.
\end{lemma}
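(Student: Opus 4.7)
I would proceed by induction on the number of DeepWL operations used to produce~$\Struct$ from $\Struct_0$. The base case is immediate: if no operations were applied, then $\Struct = \Struct_0$ is a non-HF-structure (that is, $\StructHFV = \emptyset$), so $\nonHF{\Struct} = \Struct$ and any automorphism $\autoA$ of $\nonHF{\Struct}$ already lies in $\autGroup{\Struct}$.

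For the inductive step, let $\Struct'$ be obtained from $\Struct$ by one further operation and let $\autoA \in \autGroup{\nonHF{\Struct'}}$. Since no operation introduces new atoms, $\StructV' = \StructV$. The argument splits by operation type. For $\create{\pi}$ no new vertices are created and the newly added $\sigA$-relation is simply the union of colors of $\coConf{\Struct}$, which depends only on $\nonHF{\Struct}$; hence $\autoA$ is automatically an automorphism of $\nonHF{\Struct}$, the inductive hypothesis yields $\restrictVect{\autoA}{\StructV} \in \autGroup{\Struct}$, and the new relation is preserved because $\autoA$ preserves the relevant colors. For $\addPair{\relColA}$ and $\scc{\relColA}$, the crucial point is that every freshly created HF-vertex carries a unique ``signature'' through the new component or membership relations: a vertex $\pairVtx{\pairVtx{\vertA}{\vertB}}{i}$ is the unique source with a particular pair of $\rel_{\leftT}$- and $\rel_{\rightT}$-out-neighbors, and an SCC-vertex $\pairVtx{c}{i}$ is the unique source of the $\rel_{\memT}$-out-neighborhood $c$.

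Combining these, preservation of the new relations under $\autoA$ forces the image of a new vertex to be a new vertex with the correspondingly transported components. Concretely, $\autoA(\pairVtx{\pairVtx{\vertA}{\vertB}}{i})$ must have $\rel_{\leftT}$-neighbor $\autoA(\vertA)$ and $\rel_{\rightT}$-neighbor $\autoA(\vertB)$, so it is necessarily of the form $\pairVtx{\pairVtx{\autoA(\vertA)}{\autoA(\vertB)}}{i'}$, and similarly for SCC-vertices via $\rel_{\memT}$. In particular, $\autoA$ preserves the partition between ``old'' vertices (those already present in $\nonHF{\Struct}$) and ``new'' vertices; restricting to old vertices and applying the inductive hypothesis gives a map in $\autGroup{\Struct}$, and the action on the new vertices matches the set-theoretic extension of this restriction applied to each $\pairVtx{a}{i}$.

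The main obstacle I anticipate is the bookkeeping of the index $i$ in $\pairVtx{a}{i}$: the set-theoretic extension of $\restrictVect{\autoA}{\StructV}$ sends $\pairVtx{a}{i}$ to $\pairVtx{\hat{\autoA}(a)}{i}$, keeping the numerical index fixed (since $i$ is a pure set), whereas a priori $\autoA$ as an automorphism of the flat structure could relabel indices. One has to argue that this cannot happen, using that the numbering is determined deterministically by how many vertices with the same ``$a$''-part had been created in earlier steps: the inductive hypothesis shows that $\autoA$ respects this earlier count, and within a single operation only vertices with pairwise distinct ``$a$''-parts are produced simultaneously, so the assigned index is forced to agree on both sides. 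Once this is in place, the set-theoretic extension of $\restrictVect{\autoA}{\StructV}$ coincides with $\autoA$ on all of $\vertices{\Struct'}$, and all relations are preserved, giving $\restrictVect{\autoA}{\StructV} \in \autGroup{\Struct'}$.
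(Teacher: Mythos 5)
Your proof is correct, and it rests on exactly the same key observation as the paper's: the component and membership relations introduced by $\addPairSym$- and $\sccSym$-operations are never removed, they encode each HF-vertex's set structure as a DAG whose sinks are the atoms, and atoms are distinguishable from HF-vertices because only the latter have outgoing edges in these relations — so a flat automorphism is forced to act as the set-theoretic extension of its restriction to the atoms. The packaging differs: the paper gives a single direct argument on the final structure $\Struct$ (each $a \in \StructHFV$ is the unique root of its DAG, hence $\autoA(a) = \restrictVect{\autoA}{\StructV}(a)$, and $\autoA(\StructHFV) = \StructHFV$), whereas you run an induction over the sequence of cloud operations. The induction buys you a clean place to handle the index $i$ in $\pairVtx{a}{i}$, a genuine subtlety the paper's proof elides: your argument that the fresh relation symbols of each operation tie every new vertex to the step at which it was created, that within one step the created vertices have pairwise distinct set-parts, and that the inductive hypothesis forces the pre-existing counts to match, is exactly what is needed to see that $\autoA$ cannot permute indices. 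Note only that your induction implicitly proves the stronger invariant that $\autoA$ \emph{equals} the set-theoretic extension of $\restrictVect{\autoA}{\StructV}$ on all of $\vertices{\Struct}$ (not merely that the restriction is an automorphism); you state this at the end, and it should be made the explicit induction hypothesis for the index argument to go through.
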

\begin{proof}
	The vertex set of the HF-structure~$\Struct$ is the union of
	the atoms~$\StructV = \StructV_0$
	and some $\HF{\StructV}$-sets~$\StructHFV$. 
	Every $\sccSym$-operation creates for every corresponding SCC~$c$ a vertex for the HF-set~$c$ and the membership relation coinciding with ``$\in$'' on the HF-sets.
	Similarly, every $\addPairSym$-operation introduces (Kuratowski-encoded) pairs
	and the component relations identifying the single entries in the pairs.
	That is, the structure of the vertices as HF-sets
	is encoded by the membership and component relations as a DAG using~$\StructV$ as sinks.
	Because a DeepWL-algorithm only adds but never removes relations,
	the membership and component relations are still present in~$\Struct$.
	
	Now let $\autoA \in \autGroup{\nonHF{\Struct}}$.
	To show that $\restrictVect{\autoA}{\StructV} \in \autGroup{\Struct}$,
	we have to show that $\restrictVect{\autoA}{\StructV}$ is a permutation of~$\StructV$ that satisfies
	$\restrictVect{\autoA}{\StructV}(\StructHFV) = \StructHFV$.
	We first note that~$\autoA$ cannot map an atom in~$\StructV$ to a vertex in~$\StructHFV$
	because atoms have no outgoing edge of a component or membership relation
	while all vertices in~$\StructHFV$ have one.
	That is,~$\autoA$ set-wise stabilizes~$\StructV$ and~$\StructHFV$, i.e.,
	$\autoA(\StructV) = \StructV$ and $\autoA(\StructHFV) = \StructHFV$.
	In particular, $\restrictVect{\autoA}{\StructV}$ is a permutation of~$\StructV$.

	Now let $a \in \StructHFV$. The $\HF{\StructV}$-set~$a$
	is the unique vertex of the DAG representing~$a$ via the membership and component relations.
	So~$\autoA(a)$ is the unique vertex of an isomorphic DAG,
	which we obtain by applying~$\autoA$ to the DAG representing~$a$.
	That is, $\autoA(a) = \restrictVect{\autoA}{\StructV} (a)$
	(note that~$\autoA$ permutes $\StructV \cup \StructHFV$ and~%
	$\restrictVect{\autoA}{\StructV}$ permutes~$\StructV$ and is applied to the atoms in~$a$).
	Because, as already seen, $\autoA(\StructHFV) = \StructHFV$,
	it follows that ${\autoA(\StructHFV) = \restrictVect{\autoA}{\StructV}(\StructHFV) = \StructHFV}$.
\end{proof}
This lemma is in particular important when we extend DeepWL with a witnessed choice operator: to compute (HF-set respecting) orbits of~$\Struct$,
it suffices to compute orbits of~$\nonHF{\Struct}$
and so to consider $\coConf{\nonHF{\Struct}}$.
Note that the lemma does not hold if the $\forgetSym$-operation of~\cite{GroheSchweitzerWiebking2021} is available (which can be used to forget a component or membership relation)
or the $\sccSym$-operation is replaced the by $\contractSym$-operation
(then also a membership-relation is lost).

\subsection{DeepWL with Witnessed Symmetric Choices}
\label{sec:deepwl-wsc}

We extend the DeepWL computation model with witnessed symmetric choice.
Here, we need two different notions:
we start with DeepWL+WSC-machines which are
then composed to DeepWL+WSC-algorithms.
A \defining{DeepWL+WSC-machine}~$\dwlm$ is a DeepWL-algorithm,
whose Turing machine has two additional special states~$q_{\choiceSym}$
and~$q_{\refineSym}$.
To enter~$q_{\choiceSym}$, the machine~$\dwlm$ has to write a relation symbol $\relColA \in \sigA \cup \sigB$
on the interaction-tape.
To enter~$q_{\refineSym}$,~$\dwlm$ has to write a relation symbol $\relColA \in \sigA \cup \sigB$ and a number $i\in \nat$ on the interaction-tape.
We say that the machine~$\dwlm$ executes $\choice{\relColA}$
and $\refine{\relColA}{i}$.
The DeepWL+WSC-machine~$\dwlm$ is \defining{choice-free},
if it never syntactically enters~$q_{\choiceSym}$.
That is,~$q_{\choiceSym}$
is not in the range of the transition function of the underlying Turing machine of~$\dwlm$.
DeepWL+WSC-algorithms are defined inductively:

\begin{definition}[DeepWL+WSC-algorithm]
	If~$\dwlmout$ is a DeepWL+WSC-machine,~%
	$\dwlmwit$ is a choice-free DeepWL+WSC-machine, and
	$\dwla_1, \dots, \dwla_\ell$ is a possibly empty sequence of
	DeepWL+WSC-algorithms,
	then the tuple
	$\dwla = (\dwlmout,\dwlmwit, \dwla_1, \dots, \dwla_\ell)$
	is a \linebreak[100] \defining{DeepWL+WSC-algorithm}.
	The machine~$\dwlmout$ is called the \defining{output machine} of~$\dwla$
	and the machine~$\dwlmwit$ is called the \defining{witnessing machine} of~$\dwla$.
\end{definition} 
Note that the base case of the former definition is the case $\ell = 0$,
i.e., the sequence of nested DeepWL+WSC-algorithms is empty.
The nested algorithms can be used by the machines~$\dwlmout$ and~$\dwlmwit$
as subroutines.
We first discuss the execution of a DeepWL+WSC-algorithm
and in particular the use of subroutines intuitively.
A formal definition will follow.
Let $\dwla = (\dwlmout,\dwlmwit, \dwla_1, \dots, \dwla_\ell)$ be a DeepWL+WSC-algorithm.
As first step, we consider the executions of the DeepWL+WSC-machines~$\dwlmout$ and~$\dwlmwit$.
\begin{enumerate}[label=\alph*)]
	\item Assume a DeepWL+WSC-machine $\dwlm \in \set{\dwlmout, \dwlmwit}$ executes $\refine{\relColA}{j}$.
	If $j > \ell$, then~$\dwlm$ just continues.
	If otherwise $j \in [\ell]$, then the DeepWL+WSC-algorithm~$\dwla_j$ is used to refine the relation~$\relColA$:
	Let~$\Struct$ be the content of the cloud of~$\dwlm$ when~$\dwlm$ executes the $\refineSym$-operation.
	If~$\relColA$ is directed,
	the algorithm~$\dwla_j$ is executed on 
	$(\Struct, \vertA\vertB)$
	($\vertA$ and~$\vertB$ are individualized
	by putting them into singleton vertex classes)
	for each $(\vertA, \vertB) \in \relColA^\StructA$.
	Otherwise~$\relColA$ is undirected,  so $\relColA^\Struct = \inv{(\relColA^\Struct)}$,
	then for each $\set{\vertA,\vertB} \in \relColA^\StructA$,
	the algorithm~$\dwla_j$ is executed on 
	$(\Struct, \set{\vertA\vertB})$
	(the undirected edge is individualized
	by creating a new vertex class only containing~$\vertA$ and~$\vertB$).
	The algorithm~$\dwla_j$ modifies its own cloud independently of the cloud of~$\dwlm$.
	If~$\dwla_j$ accepts $(\Struct, \vertA\vertB)$ for every $(\vertA,\vertB) \in \relColA^\StructA$
	(respectively $(\Struct, \set{\vertA\vertB})$
	for every $\set{\vertA,\vertB} \in \relColA^\StructA$),
	then nothing happens and~$\dwlm$ continues.
	Otherwise, a new relation~$\rel'$ is added to the cloud of~$\dwlm$,
	where~$\rel'$ consists of all $(\vertA,\vertB) \in \relColA^\StructA$
	(respectively $\set{\vertA,\vertB} \in \relColA^\StructA$),
	for which~$\dwla_j$ accepts the input.

	\item Assume a DeepWL+WSC-machine $\dwlm \in \set{\dwlmout, \dwlmwit}$ executes $\choice{\relColA}$.
	If~$\relColA$ is a directed relation,
	then an arbitrary $(\vertA, \vertB) \in \relColA^\Struct$ is individualized and~$\dwlm$ continues.
	If otherwise~$\relColA$ is undirected,
	then an undirected edge $\set{\vertA,\vertB} \in X^\Struct$
	is individualized.
	The edges are individualized as described in the $\refineSym$-operation.
\end{enumerate}
The algebraic sketch is recomputed and written onto the interaction-tape
whenever the structure in the cloud is modified by $\refine{\relColA}{j}$ or $\choice{\relColA}$.
The machine~$\dwlm$ accepts the input,
if the symbol under the head on the work-tape is a~$1$
when~$\dwlm$ halts and rejects otherwise.
Later, we will formally define the execution with choices using a tree,
similar to the definition of the iteration terms with choice in CPT+WSC in Section~\ref{sec:semantics-choice-operators}.

We now turn to the DeepWL+WSC-algorithm~$\dwla$.
To execute the algorithm~$\dwla$ on input~$\StructA_0$,
the output machine~$\dwlmout$ is executed  on~$\StructA_0$.
Let~$\Struct$ be the content of the cloud when~$\dwlmout$ halts.
For every $\choiceSym$-operation executed by~$\dwlmout$,
the witnessing machine~$\dwlmwit$ is executed.
Let~$k$ be a number not exceeding the number of $\choiceSym$-operations,
$\choice{\relColA_i}$ be the $i$-th executed $\choiceSym$-operation 
(for some~$\relColA_i$ in the current signature) for every $i \in [k]$,
and~$\Struct_i$ be the content of the cloud,
when the $i$-th $\choiceSym$-operation is executed
for every $i \in [k]$.
For the $k$-th $\choiceSym$-operation $\choice{\relColA_k}$,
the machine~$\dwlmwit$ has to provide automorphisms
witnessing that~$\relColA^{\Struct_k}$ is an
$(\Struct_0, \Struct_1, \dots, \Struct_{k})$-orbit
(details follow later).
That is, similarly to the WSC-fixed-point operator, all intermediate steps of the fixed-point computation have to be fixed by the witnessing automorphisms.
Recall again that we are working with HF-structures with the same set of atoms,
so all vertices are $\HF{\StructV_0}$-sets,
so the notion of an $(\Struct_0, \Struct_1, \dots, \Struct_{k})$-orbit
is well-defined.

The input of~$\dwlmwit$ is the \defining{labeled union}
$\Struct \labeledUnion \Struct_k$,
which is the union $\Struct \cup \Struct_k$
equipped with two fresh relation symbols~$\rel_1$ and~$\rel_2$
labeling the vertices of~$\Struct$ and~$\Struct_k$,
i.e.,
\begin{align*}
	\rel_1^{\Struct \labeledUnion \Struct_k} &:= \vertices{\Struct} \text{ and}\\
	\rel_2^{\Struct \labeledUnion \Struct_k} &:= \vertices{\Struct_k}.
\end{align*}
So~$\dwlmwit$ is able to reconstruct~$\Struct$ and~$\Struct_k$
and to determine how~$\Struct$ and~$\Struct_k$ relate to each other:
since the atoms of both~$\Struct$ and~$\Struct_k$ are $\HF{\StructA_0}$-sets,
common vertices are ``merged'' in the union.
When the witnessing machine~$\dwlmwit$ halts,
it has to write a relation symbol onto the interaction-tape.
The relation has to encode a set of witnessing automorphisms (details on the encoding follow later).

If all choices are successfully witnessed,~%
$\dwla$ accepts~$\StructA_0$ if~$\dwlmout$ accepts~$\StructA_0$ the input and rejects otherwise.
If some choice could not be witnessed, we abort the computation and output~$\choiceError$.
If an executed subalgorithm~$\dwla_i$
outputs~$\choiceError$, then~$\dwla$ also outputs~$\choiceError$.
We also say that~$\dwla$ \defining{fails}
if~$\dwla$ outputs~$\choiceError$.

\paragraph{Semantics of the new Operations.}
Both, the $\refineSym$- and the $\choiceSym$-operation
contain some special cases in its semantics.
First, both operations treat directed and undirected relations differently.
For undirected relations~$\relColA$,
an undirected edge ${\set{\vertA,\vertB} \in \relColA^\Struct}$
is individualized and not just a directed one.
Second, a $\refineSym$-operation
does not create a new relation 
in the case that every $(\vertA,\vertB) \in \relColA^\Struct$ or every $\set{\vertA,\vertB} \in \relColA^\Struct$
are accepted by~$\dwla_j$.

Creating a new relation containing the same edges as~$\relColA$ would seem more natural.
Indeed, for general DeepWL+WSC-algorithms, these special cases
do not change the expressive power.
But later in Section~\ref{sec:normalized-deepwl},
we will introduce the notion of a normalized DeepWL+WSC-algorithm,
which will put additional restrictions on e.g.~$\addPairSym$-operations.
Here the precise semantics of the $\refineSym$- and $\choiceSym$-operations will matter,
in particular it will be crucial to prove Lemma~\ref{lem:deepwl-wsc-simulate-normalized}.

\paragraph{Encoding Automorphisms.}
We now discuss how sets of automorphisms are encoded.
Let $\Struct_0$ be the input structure
and $\Struct$ be the current content of the cloud.
A tuple of relations $(\rel_{\text{aut}}, \rel_{\text{dom}}, \rel_{\text{img}})$
and a vertex $\vertC_\auto$
\defining{encode the
partial map} $\auto \colon \StructV\to \StructV$ as follows:
we have $\auto(\vertA) = \vertB$ in the case that~$\vertB$ is the only vertex for which 
there exists exactly one~$\vertC$
such that $(\vertC_\auto, \vertC) \in \rel_{\text{aut}}^\Struct$,
$(\vertC,\vertA) \in \rel_{\text{dom}}^\Struct$, and $(\vertC,\vertB) \in \rel_{\text{img}}^\Struct$.
A tuple of relations $(\rel_{\text{aut}}, \rel_{\text{dom}}, \rel_{\text{img}})$
\defining{encodes the set of partial maps}
\[N = \setcond*{\auto}{\vertC_\auto \text{ encodes } \auto \text{ for some }(\vertC_\auto, \vertC) \in \rel_{\text{aut}}^\Struct}.\]
The tuple $(\rel_{\text{aut}}, \rel_{\text{dom}}, \rel_{\text{img}})$
witnesses a relation $\rel_{\text{orb}}^\Struct$ as $(\Struct_0,\tup{a})$-orbit
for some ${\tup{a} \in \HF{\StructA_0}^*}$,
if the set~$N$ witnesses~$\rel_{\text{orb}}^\Struct$ as $(\Struct_0,\tup{a})$-orbit.

\paragraph{Execution with Choices.}

Intuitively, we have to nest DeepWL+WSC-algorithms,
to ensure that DeepWL+WSC-algorithms ``return'' an isomorphism-in\-variant result
(accept or reject).
This corresponds to the output formula in a WSC-fixed-point operator,
which ensures that it defines an isomorphism-invariant property.
We now define the execution of a DeepWL+WSC-algorithm formally.
We need to deal with choices to obtain a well-defined notion.
In the following, we will assume that all considered Turning machines always terminate.
We can do so because we will be only interested in polynomial-time Turing machines in this article.

A \defining{configuration}~$c$ of a DeepWL+WSC-machine~$\dwlm$ is
a tuple of a state~$q(c)$ of the Turing machine contained in of~$\dwlm$ and the content of the two tapes.
Suppose ${\dwla = (\dwlmout, \dwlmwit, \dwla_1, \dots, \dwla_\ell)}$ is an arbitrary  DeepWL+WSC-algorithm and~$\Struct_0$ is an input HF-structure to~$\dwla$.
Let $\delta^{\Struct_0}$ be the transition function of~$\dwlmout$:
for configurations~$c$ and~$c'$ of~$\dwlmout$
and HF-structures~$\StructA$ and~$\StructA'$, both with atom set~$\StructVA_0$,
we have
$\delta^{\Struct_0}(c, \StructA) = (c',\StructA')$
if~$\dwlmout$, started in configuration~$c$ with~$\StructA$ in the cloud,
executes the first $\choiceSym$-operation
(or halts if no $\choiceSym$-operation is executed)
in the configuration~$c'$
with~$\StructA'$ in the cloud. 
In particular, no choice operation is executed in the computation from 
$(c, \StructA)$ to $(c', \StructA')$.
If all DeepWL+WSC-algorithms $\dwla_1, \dots, \dwla_\ell$ are deterministic,
i.e., accept, reject, or fail independent of the choices made during the execution of the~$\dwla_i$,
then all $\refineSym$-operations executed by~$\dwlmout$ are deterministic
and so~$\delta^{\Struct_0}$ is indeed a well-defined function.
We view a tuple $(c, \StructA)$ as an $\HF{\StructV_0}$-set:~%
$c$ is encoded as an $\HF{\emptyset}$-set and 
the vertices of~$\StructA$ itself are $\HF{\StructV_0}$-sets.

To define the run of the DeepWL+WSC-algorithm~$\dwla$ in the presence of choices,
we reuse the $\wscStarSym$-operator from Section~\ref{sec:semantics-choice-operators}.
The set of possible runs of~$\dwla$ is the set
\[\wsc{\delta_{\text{step}}^{\Struct_0}}{\delta_{\text{choice}}^{\Struct_0}}{\delta_{\text{wit}}^{\Struct_0}},\]
where we define the functions~$\delta_{\text{step}}^{\Struct_0}$,~$\delta_{\text{choice}}^{\Struct_0}$, and~$\delta_{\text{wit}}^{\Struct_0}$ using~$\delta^{\Struct_0}$ as follows.
In the beginning,
the function~$\delta_{\text{choice}}^{\Struct_0}$ outputs the empty choice set (i.e., performs no choice)
and the function~$\delta_{\text{step}}^{\Struct_0}$ starts the 
output machine in the initial configuration~$c^{\text{out}}_0$ 
(initial state, empty work-tape, and $\sketch{\StructA_0}$ written on the interaction-tape) with the initial input structure~$\Struct_0$ in the cloud, that is,
\begin{align*}
	\delta_{\text{step}}^{\Struct_0}(\emptyset, \emptyset) &:= (c^{\text{out}}_0, \Struct_0),\\
	\delta_{\text{choice}}^{\Struct_0} (\emptyset) &:= \emptyset.
\end{align*}
Next, whenever a $\choice{\relColA}$-operation is executed,
$\delta_{\text{choice}}^{\Struct_0}$ outputs the relation $\tilde{\relColA}^\Struct$ in the current content of the cloud~$\StructA$,
where $\tilde{\relColA}^\Struct := \relColA^\Struct$ if the relation~$\relColA$ is directed
and $\tilde{\relColA}^\Struct := \setcond{\set{\vertA,\vertB}}{(\vertA,\vertB) \in \relColA^\Struct}$  if~$\relColA$ is undirected:
\begin{align*}
	\delta_{\text{choice}}^{\Struct_0} ((c^{\text{out}},\Struct)) &:= \begin{cases}
		\tilde{\relColA}^\Struct & \text{if } q(c^{\text{out}}) = q_{\choiceSym} \text{ and } \relColA \text{ is on interaction tape in } c^{\text{out}},\\
		\emptyset & \text{otherwise}.
	\end{cases}
\end{align*}
After choosing~$a$ (a set of size at most~$1$
encoding a directed or undirected edge),~%
$a$ is individualized as described earlier and the resulting structure is
denoted by $(\Struct, a)$.
Then~$\delta_{\text{step}}^{\Struct_0}$ continues with the next configuration~$\hat{c}^{\text{out}}$ 
according to the transition function of~$\dwlmout$ in the structure
(and the new algebraic sketch written onto the interaction-tape):
\begin{align*}
	\delta_{\text{step}}^{\Struct_0}((c^{\text{out}},\Struct), a) &:= \begin{cases}
		\delta^{\Struct_0}(\hat{c}^{\text{out}},(\Struct,a)) & \text{if } q(c^{\text{out}}) = q_{\choiceSym},\\
		(c^{\text{out}},\Struct) & \text{otherwise}.
	\end{cases}
\end{align*}
If in the end of the computation a halting state is reached
(so $q(c^{\text{out}}) \neq q_{\choiceSym}$),~%
$\delta_{\text{choice}}^{\Struct_0}$ returns the empty set,
nothing is chosen, and~$\delta_{\text{step}}^{\Struct_0}$ reaches a fixed-point.

The function~$\delta_{\text{wit}}^{\Struct_0}$
maps a pair $((c,\Struct),(c',\Struct_k))$
to the set of partial maps
encoded by the relation which is written on the interaction-tape 
when~$\dwlmwit$ halts
on input $\Struct \labeledUnion \Struct_k$.
Because~$\dwlmwit$ is choice-free,
we do not have to deal with choices here.

Recall from the $\wscStarSym$-operator,
that for a $\choice{\relColA}$-operation,
the relation~$\relColA$ has to be an orbit which fixed all intermediate steps:
Let $\choice{\relColA_1}, \dots, \choice{\relColA_k}$
be the sequence of all already executed $\choiceSym$-operations and
let $\Struct_1, \dots, \Struct_k$ be the contents of the cloud
when the corresponding $\choiceSym$-operation is executed.
Then the set~$\relColA_k^{\Struct_k}$ has to be an
$(\Struct_0, \Struct_1, \dots, \Struct_{k})$-orbit%
\footnote{Formally, by the definition of the $\wscStarSym$-operator,
	$\relColA_k^{\Struct_k}$ has to be an
	$(\Struct_0, (c_1,\Struct_1), \dots, (c_k,\Struct_{k}))$-orbit,
	where~$c_i$ is the configuration of the Turing machine
	at the moment when the $i$-th choice operation is executed. But since the~$c_i$ are encoded as $\HF{\emptyset}$-set,
	they are invariant under all permutations of the atoms.
}
(recall again, that all structures $\Struct_0, \dots, \Struct_k$
have the same atom set~$\StructV_0$),
which~$\delta_{\text{wit}}^{\StructV_0}$ has to witness.
Also, recall from the definition of the $\wscStarSym$-operator,
that the input to $\delta_{\text{wit}}^{\Struct_0}$
is indeed the pair $((c,\Struct),(c',\Struct_k))$
of the reached fixed-point $(c,\Struct)$
and the intermediate step $(c',\Struct_k)$
on which the $k$-th $\choiceSym$-operation is executed.

Let $W = \wsc{\delta_{\text{step}}^{\Struct_0}}{\delta_{\text{choice}}^{\Struct_0}}{\delta_{\text{wit}}^{\Struct_0}}$.
The algorithm~$\dwla$ \defining{accepts}~$\Struct_0$,
if for some (and thus for every) $(c, \Struct) \in  W$,
the head of the work-tape in~$c$
points to a~$1$,
\defining{fails} if $W = \emptyset$,
and \defining{rejects} otherwise.
Note that $W = \emptyset$ if and only if there are non-witnessed choices
because we assumed that our DeepWL+WSC-algorithms always terminate
and so a non-empty fixed-point is always reached.

\begin{lemma}
	\label{lem:dwsc-run-invariant}
	Every DeepWL+WSC-algorithm $\dwla = (\dwlmout, \dwlmwit, \dwla_1, \dots \dwla_\ell)$ satisfies the following:
	\begin{enumerate}
		\item The class of structures accepted by~$\dwla$ is closed under isomorphisms. 
		\item The algorithm~$\dwla$ always accepts (or respectively rejects or fails) independent of the choices made in the execution of~$\dwlmout$.
		\item If all choices were witnessed,
		the series of configurations of~$\dwlmout$ is the same for all possible choices.
	\end{enumerate}
\end{lemma}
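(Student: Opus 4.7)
My plan is to prove all three statements simultaneously by induction on the nesting depth of the DeepWL+WSC-algorithm $\dwla$, since (2) for the subalgorithms $\dwla_1,\dots,\dwla_\ell$ is precisely what makes the $\refineSym$-operations (and hence the transition function $\delta^{\Struct_0}$) well-defined for $\dwlmout$ and $\dwlmwit$. In the base case $\ell = 0$ the argument is identical to the inductive step, so I would not separate them.

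First, I would establish that $\delta^{\Struct_0}$ is $\Struct_0$-isomorphism-invariant. By the inductive hypothesis each $\dwla_j$ deterministically either accepts, rejects, or fails; and by the inductive hypothesis (1) applied to $\dwla_j$, the outcome depends only on the isomorphism type of the input pair $(\Struct, \vertA\vertB)$ or $(\Struct, \set{\vertA,\vertB})$. Therefore the new relation added by a $\refineSym$-operation is determined isomorphism-invariantly from the current cloud content, and every basic operation $\addPairSym$, $\sccSym$, $\createSym$, $\refineSym$ commutes with automorphisms of the underlying structure (Lemma~\ref{lem:auts-flat-vs-hf} guarantees this also for the HF-structure view). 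Since Turing machine configurations are encoded as $\HF{\emptyset}$-sets and therefore carry no atoms, applying an automorphism $\auto$ to an input-configuration pair $(c,\Struct)$ only affects $\Struct$, and $\delta^{\Struct_0}$ commutes with $\auto$. An analogous argument shows that $\delta_{\text{step}}^{\Struct_0}$, $\delta_{\text{choice}}^{\Struct_0}$, and $\delta_{\text{wit}}^{\Struct_0}$ are all $\Struct_0$-isomorphism-invariant functions in the sense of Section~\ref{sec:semantics-choice-operators}.

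Second, I would invoke the general machinery of Section~\ref{sec:semantics-choice-operators} applied to these three functions. Corollary~\ref{cor:wsc-orbit} implies that the set of possible runs $W = \wsc{\delta_{\text{step}}^{\Struct_0}}{\delta_{\text{choice}}^{\Struct_0}}{\delta_{\text{wit}}^{\Struct_0}}$ is an $\Struct_0$-orbit, and Corollary~\ref{cor:witnessed-all-or-none} implies that witnessing is all-or-nothing: either $W=\emptyset$ (equivalently $\dwla$ fails) or every path in $\paths$ is witnessed and all final configurations $(c,\Struct)\in W$ lie in a common orbit. Since the state component $q(c)$ and the work-tape content of $c$ are $\HF{\emptyset}$-sets, they are invariant under any atom permutation. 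This immediately yields (2): the bit under the head in the accepting coordinate is the same for every $(c,\Struct)\in W$, so accept/reject is choice-independent, and failure ($W=\emptyset$) is by definition choice-independent.

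Third, I would derive (3) from Lemma~\ref{lem:if-witnessed-path-then-orbit}: if all choices are witnessed then $\paths$ itself is an orbit of $\Struct_0$, so any two paths $(b_1,\dots,b_n)$ and $(b_1',\dots,b_n')$ are related by some $\auto\in\autGroup{\Struct_0}$ with $\auto(b_i) = b_i'$ for all $i$. The $i$-th label $b_i = (c_i,\Struct_i)$ carries the configuration $c_i$ as its first component, and since $c_i$ is an $\HF{\emptyset}$-set it is fixed by $\auto$; hence $c_i = c_i'$ for all $i$, which is exactly the claim on the series of configurations. Finally, (1) follows because if $\Struct_0\cong\Struct_0'$ via some bijection $\auto$ of atoms, then $\auto$ transports runs of $\dwla$ on $\Struct_0$ bijectively to runs on $\Struct_0'$ while preserving the configuration component, so the accept bit agrees. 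The only genuinely delicate point is making the induction go through cleanly: I have to justify that even when a $\refineSym$-call is made from inside $\dwlmwit$, the inductive hypothesis applies to the subalgorithm, and that the $\HF{\StructV_0}$-set view of vertices is respected under automorphisms — both of which I would handle by appealing to Lemma~\ref{lem:auts-flat-vs-hf} and to the observation that $\dwlmwit$ is choice-free so no extra choice bookkeeping is needed there.
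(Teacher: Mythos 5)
Your proposal is correct and follows essentially the same route as the paper's proof: induction on the nesting depth, isomorphism-invariance of $\delta^{\Struct_0}$ and the derived step/choice/witness functions, and then the machinery of Section~\ref{sec:semantics-choice-operators} (Corollaries~\ref{cor:witnessed-all-or-none} and~\ref{cor:wsc-orbit}) together with the observation that configurations are $\HF{\emptyset}$-sets. The only divergence is in part~(3): you read off equality of the configurations recorded in the path labels (i.e., at the choice points) from the orbit property supplied by Lemma~\ref{lem:if-witnessed-path-then-orbit}, whereas the paper instead augments the labels with the full sequence of visited configurations; your version leaves the small, easily closed gap that configurations strictly \emph{between} choice points also agree, which follows because there the computation is deterministic and driven only by the isomorphism-invariant algebraic sketches and the (inductively deterministic) $\refineSym$-subcalls.
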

\begin{proof}
	The proof is by induction on the nesting depth of the algorithm.
	Let $\dwla = (\dwlmout, \dwlmwit, \dwla_1, \dots \dwla_\ell)$
	be a DeepWL+WSC-algorithm,~$\StructA_0$ be the input structure, and
	assume by induction that the claim holds for 
	the DeepWL+WSC-algorithms $\dwla_1, \dots, \dwla_\ell$.
	So every $\refineSym$-operation executed by~$\dwlmout$ is isomorphism-invariant.
	Additionally, 
	all other operations apart from $\choiceSym$ modify the cloud in an isomorphism-invariant manner and
	the algebraic sketch itself is isomorphism-invariant~\cite{GroheSchweitzerWiebking2021}.
	Because~$\delta^{\Struct_0}$ ``stops'' the execution
	when the first  $\choiceSym$-operation is encountered,~%
	$\delta^{\Struct_0}$ is well-defined (i.e., deterministic) and isomorphism-invariant.
	That was exactly our assumption on~$\delta^{\Struct_0}$ in the former paragraph.
	Likewise, the function~$\delta_{\text{wit}}^{\Struct_0}$ is isomorphism-invariant
	because~$\dwlmwit$ is choice-free.
	
	Thus,~$\delta_{\text{choice}}^{\Struct_0}$ is isomorphism-invariant
	because~$\delta_{\text{step}}^{\Struct_0}$ is isomorphism-invariant,
	which is the case
	since the relation used as choice set
	only depends on the configuration returned by~$\delta_{\text{step}}^{\Struct_0}$.
	That is, we can apply the lemmas in Section~\ref{sec:semantics-choice-operators}.
	By Corollary~\ref{cor:wsc-orbit},
	the set $\wsc{\delta_{\text{step}}^{\Struct_0}}{\delta_{\text{choice}}^{\Struct_0}}{\delta_{\text{wit}}^{\Struct_0}}$ is an orbit of $\StructA_0$.
	If $\wsc{\delta_{\text{step}}^{\Struct_0}}{\delta_{\text{choice}}^{\Struct_0}}{\delta_{\text{wit}}^{\Struct_0}} = \emptyset$,
	then some choice could not be witnessed,
	which by Corollary~\ref{cor:witnessed-all-or-none}
	is either the case for all possible choices or never occurs.
	Otherwise, the configuration~$c$ is the same for all tuples $(c, \Struct) \in \wsc{\delta_{\text{step}}^{\Struct_0}}{\delta_{\text{choice}}^{\Struct_0}}{\delta_{\text{wit}}^{\Struct_0}}$
	because the configuration~$c$ is invariant under all permutations of the atoms (because~$c$ is an $\HF{\emptyset}$-set).
	So the algorithm either accepts or rejects for all possible choices
	(and Parts~1 and~2 are proven).
	
	To see Part~3,
	note that we can replace the configuration~$c$ of the Turing machine in
	the tuples $(c,\Struct)$ with the sequence of all visited configurations so far without breaking one of the arguments before.
	Then not only the last configuration is an orbit,
	but also the sequence of visited configurations,
	which is thus invariant under all automorphisms
	(given that all choices were witnessed).
\end{proof}

The former lemma has the consequence that,
apart from possibly $\choiceSym$-operations,
none of the operations change the automorphisms of the structure in the cloud.
\begin{corollary}
	\label{cor:operations-apart-choice-auto-invariant}
	Let $\dwla = (\dwlmout, \dwlmwit, \dwla_1, \dots \dwla_\ell)$
	be a DeepWL+WSC-algorithm,
	$\dwlm \in \set{\dwlmout, \dwlmwit}$,
	and~$\StructA$ be the current content of the cloud of~$\dwlm$ (at some point during its execution).
	Assume that~$\dwlm$ executes $\addPairSym$, $\createSym$, $\sccSym$, or $\refineSym$
	and let~$\StructA'$ be the content of the cloud of~$\dwlm$ after the execution.
	Then $\autGroup{\StructA} = \autGroup{\StructA'}$.
\end{corollary}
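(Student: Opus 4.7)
The plan is to show the two inclusions $\autGroup{\StructA} \subseteq \autGroup{\StructA'}$ and $\autGroup{\StructA'} \subseteq \autGroup{\StructA}$ separately. Both groups are viewed as subgroups of the symmetric group on the common atom set $\StructV_0$ of the original input to $\dwla$: none of the four operations introduces new atoms, they only add HF-set vertices (Kuratowski pairs or SCC-sets) and new $\sigA$-relations. The reverse inclusion $\autGroup{\StructA'} \subseteq \autGroup{\StructA}$ is then essentially free, because $\StructA$ is an induced HF-substructure of $\StructA'$: the vertex set only grows and the existing $\sigA$-relations persist. Via Lemma~\ref{lem:auts-flat-vs-hf}, any automorphism of $\StructA'$ restricts to an atom-permutation that automatically preserves every relation and every HF-vertex of $\StructA$.

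For the forward direction I would fix $\autoA \in \autGroup{\StructA}$ and handle the four operations in turn. The case $\createSym(\pi)$ is immediate since the new relation is the union $\bigcup_{\col \in \pi} \col^\StructA$ of things $\autoA$ already preserves. For $\addPair{\relColA}$, the natural action of $\autoA$ on $\HF{\StructV_0}$ sends each new pair vertex $\pairVtx{\vertA}{\vertB}$ to $\pairVtx{\autoA(\vertA)}{\autoA(\vertB)}$, which is again one of the freshly created vertices since $\autoA$ stabilises $\relColA^\StructA$ setwise; the component relations $\rel_{\leftT}$ and $\rel_{\rightT}$ are preserved by construction. The case $\sccSym(\relColA)$ is analogous: each strongly $\relColA$-connected component $c$ is sent to $\autoA(c)$, which is again an SCC because $\autoA$ preserves $\relColA^\StructA$, and the membership relation $\rel_{\memT}$ is preserved accordingly.

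The only case where more than mechanical bookkeeping is needed is $\refine{\relColA}{j}$. Either no new relation is added, in which case there is nothing to show, or a new relation $\rel'$ is added that records the (directed or undirected) edges of $\relColA^\StructA$ on which the subalgorithm $\dwla_j$ accepts the individualised structure. The key point is that $\dwla_j$ has strictly smaller nesting depth than $\dwla$, so the inductive hypothesis of Lemma~\ref{lem:dwsc-run-invariant} applied to $\dwla_j$ guarantees that its class of accepted structures is closed under isomorphism. Consequently $\autoA$ sends accepted edges to accepted edges and therefore preserves $\rel'$.

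The main, and really the only, obstacle is the inductive bookkeeping: because this corollary sits in the same inductive layer as Lemma~\ref{lem:dwsc-run-invariant}, the invocation of that lemma for $\dwla_j$ in the $\refineSym$-case has to be justified by induction on nesting depth rather than by a bare appeal to the already-stated lemma. Once that induction is set up, all four cases collapse to the observation that the newly added vertices and relations are canonically definable from the old structure and hence respected by any atom-permutation that already respects the old structure.
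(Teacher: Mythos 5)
Your proposal is correct and follows essentially the same route as the paper's (much terser) proof: the forward inclusion comes from the fact that $\addPairSym$, $\createSym$, and $\sccSym$ are isomorphism-invariant by construction while the $\refineSym$-case is handled by the isomorphism-closedness of the accepted class from Lemma~\ref{lem:dwsc-run-invariant}, and the reverse inclusion is free because nothing is ever removed. Your worry about circularity in the $\refineSym$-case is unfounded — Lemma~\ref{lem:dwsc-run-invariant} is already established for all DeepWL+WSC-algorithms (by its own induction on nesting depth), so the corollary may simply cite it for the subalgorithm $\dwla_j$.
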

\begin{proof}
	Recall that an automorphism of an HF-structure is a permutation of the \emph{atoms}, which extends to all vertices.
	By construction, the $\addPairSym$-, $\createSym$-, and $\sccSym$-operations
	are isomorphism-invariant.
	By Lemma~\ref{lem:dwsc-run-invariant},
	also the $\refineSym$-operation is isomorphism-invariant
	because the class of structures accepted by DeepWL+WSC-algorithm used to refine a relation is isomorphism-closed.
\end{proof}

\paragraph{Internal Run.}
The internal run of a DeepWL-algorithm
is the sequence of configurations of the Turing machine
(the state and the content of the two tapes)
during the run.
Note that in particular the algebraic sketches
computed during the computations are part of the internal run
because they are written onto the interaction-tape.
As we have already seen,
the internal run of a DeepWL-algorithm is 
isomorphism-invariant~\cite{GroheSchweitzerWiebking2021}.

To define the internal run of a DeepWL+WSC-algorithm,
we have -- beside the internal run of the output machine --
to take all internal runs of the witnessing machine to witness the different choice sets and the internal runs of the subalgorithms into account.
The internal run is defined inductively over the nesting depth of DeepWL+WSC-algorithms.

Let $\dwla = (\dwlmout, \dwlmwit, \dwla_1, \dots, \dwla_\ell)$
be a DeepWL+WSC-algorithm
and assume that we have defined the internal run $\run{\dwla_i}{\StructA}$
for every $i \in [\ell]$ and HF-structure~$\StructA$.
The \defining{internal run} $\run{\dwlm}{\StructA}$ of a DeepWL+WSC-machine $\dwlm \in \set{\dwlmout, \dwlmwit}$ on input~$\StructA$
is the following:
Let $c_1, \dots, c_n$ be the sequence of configurations of the Turing machine of~$\dwlm$.
By Lemma~\ref{lem:dwsc-run-invariant}, the sequence is unique if all choices will be witnessed, which we assume for now.
Furthermore, let $k_1 < \dots < k_m$
be all indices such that 
$q(c_{k_j}) = q_{\refineSym}$ 
and~$\dwlm$ executes $\refine{\relColA_j}{i_j}$ such that $i_j \leq \ell$.
Let~$\StructA_j$ be the content of the cloud 
before executing $\refine{\relColA_j}{i_j}$ for all $j \in [m]$.
We define
\newcommand{\trippleAlignWithTwoCases}[8]{%
	#1 &:= \mathrlap{#2}\hphantom{\begin{cases} #5 &\\#7 &\end{cases}}\kern-\nulldelimiterspace #3\\
	#4 &:= \begin{cases} #5 & #6\\#7 & #8\end{cases}
}
\begin{align*}
	\trippleAlignWithTwoCases
	{r_j}
	{\bigmsetcondition{\run{\dwla_{i_j}}{(\StructA_j,x)}}{x\in \tilde{\relColA}_j^{\StructA_j}}}{\text{for every } j \in [m],}
	{\run{\dwlm}{\StructA}}
	{\choiceError}{\text{if } \choiceError \in r_i \text { for some } i \in [m],}
	{c_1, \dots, c_{k_1}, r_1, c_{k_1+1}, \dots, c_{k_m}, r_m,c_{k_m+1},\dots, c_\ell}{\text{otherwise.}}
\end{align*}
The internal run, denoted $\run{\dwla}{\StructA}$, of the DeepWL+WSC-algorithm~$\dwla$ on input~$\StructA$ is defined as follows:
Let $c_1, \dots, c_\ell$ be the sequence of configurations of the output Turing machine~$\dwlmout$.
Furthermore, let $k_1 < \dots < k_m$
be all indices such that 
$q(c_{k_j}) = c_{\choiceSym}$.
Let the corresponding $\choiceSym$-executions be $\choice{\relColA_j}$,
where the current content of the cloud  is~$\StructA_j$, for every $j \in [m]$.
Let~$\StructA_{m+1}$ be the final content of the cloud when~$\dwlmout$ halts.
We define
\begin{align*}
	\trippleAlignWithTwoCases
	{\hat{r}_j}
	{\run{\dwlmwit}{\StructA_{m+1} \labeledUnion \StructA_j}}{\text{for every } j \in [m],} 
	{\run{\dwla}{\StructA}}
	{\run{\dwlmout}{\StructA}, \hat{r}_1, \dots , \hat{r}_m}{\text{if all choices are witnessed,}}
	{\choiceError} {\text{otherwise.}}
\end{align*}
Using Lemma~\ref{lem:dwsc-run-invariant},
it is easy to see that $\run{\dwla}{\StructA}$ is isomorphism-invariant
and thus it can be canonically encoded as a 0/1-string.

\paragraph{Computability.}
Let  $\dwla = (\dwlmout, \dwlmwit, \dwla_1, \dots, \dwla_\ell)$ be 
a DeepWL+WSC-algorithm.
The algorithm~$\dwla$ \defining{decides a property}~$P$ of a class of $\sig$-structures~$\GraphClass$
if~$\dwla$ accepts $\StructA \in \GraphClass$ whenever~$\StructA$ satisfies~$P$
and rejects otherwise (and in particular never fails).
The algorithm~$\dwla$ \defining{computes a function} $f\colon \GraphClass \to \set{0,1}^*$,
if for every $\StructA \in \GraphClass$
the machine~$\dwlmout$ has written $f(\StructA)$ onto the work-tape when it halts on input~$\StructA$ and~$\dwla$ never fails.

Definitions~\ref{def:definable-isomorphism} and~\ref{def:definable-complete-invariant}
of definable isomorphism and complete invariant
are easily adapted to DeepWL+WSC: to do so,
for a $\sig$-structure~$\StructA$,
a tuple $\tup{a} \in \StructVA^*$,
and a fresh relation symbol~$\leq$,
the structure $(\StructA, \tup{a})$
is encoded by the $(\tau\cup{\leq})$-structure
$(\StructA, \leq^\Struct)$
such that $\vertA \leq \vertB$ if and only if
$\vertA = a_i$ and $\vertB = a_j$ for some $i \leq j$.

Next, we define the \defining{runtime} of a  DeepWL+WSC-machine $\dwlm \in \set{\dwlmout, \dwlmwit}$ on input~$\StructA_0$.
Every transition taken by the Turing machine counts as one time step.
Whenever a cloud-modifying operating is executed
and the algebraic sketch $\sketch{\StructA}$ of the new structure in the cloud~$\StructA$
is written onto the interaction-tape,
we count $|\sketch{\StructA}|$ many time steps,
where $|\sketch{\StructA}|$ is the encoding length of $\sketch{\StructA}$.
Following~\cite{GroheSchweitzerWiebking2021}, the encoding of $\sketch{\StructA}$
is unary and so the runtime of~$\dwlm$ is at least $|\StructVA_0|$.
When~$\dwlm$ executes $\refine{\relColA}{i}$ for $i \leq \ell$
and the current content of the cloud is~$\Struct$,
we count the sum of runtimes of $\dwla_i$ on input $(\StructA, x)$
for every $x \in \tilde{\relColA}^\Struct$.
The runtime of~$\dwla$ is the sum of the runtime of the output machine~$\dwlmout$
and the runtimes of the witnessing machine~$\dwlmwit$ to witness all choices.
A DeepWL+WSC-algorithm (or machine) runs in \defining{polynomial time},
if there exists a polynomial $p(n)$,
such that $p(|\sketch{\Struct_0}|)$ bounds the runtime on input~$\Struct_0$
for every HF-structure~$\Struct_0$ (or possibly for every $\Struct_0$ in a class of HF-structures of interest).

Note that if $\dwlmout, \dwlmwit, \dwla_1, \dots, \dwla_\ell$
run in polynomial time,
then~$\dwla$ runs in polynomial time:
the size of the structure in the cloud is polynomially bounded
and the machines~$\dwlmout$ and~$\dwlmwit$
can execute only polynomially bounded many $\refineSym$-operations.

\subsubsection{Derived Operations}
From the special operations $\addPairSym$, $\sccSym$, and $\choiceSym$ we can derive additional operations,
which then can be used for convenience.
These are:
\begin{enumerate}
	\item Ordered inputs:
	Currently, the structure~$\Struct$ put in the cloud
	is the only input to a DeepWL+WSC-machine or algorithm.
	We allow a pair $(\Struct, \tup{z})$
	of a structure~$\StructA$ and a binary string $\tup{z} \in \set{0,1}^*$ placed on the working tape as input.
	
	\item Ordered input for $\refineSym$-operations:
	We similarly extend $\refineSym$-operations to support
	$\refine{\vc}{i,\tup{z}}$ executions, where $\tup{z} \in \set{0,1}^*$ is an additional binary string initially put on the working tape.
	
	\item $\rename{\relA, \relB}$:
	If~$\relB$ is an unused relation symbol in the signature~$\sig$,
	the machine exchanges a relation symbol~$\rel$ with~$\relB$.
	Otherwise, no change is made.
	
	\item $\addUPair{\relColA}$: 
	This operation creates \emph{unordered pairs}, i.e.,~sets of size at most two for the $\relColA$-edges:
	Let~$\Struct$ be the current structure in the cloud. For every $\set{\vertA,\vertB}$ such that $(\vertA,\vertB) \in \relColA^\Struct$ a new vertex is added 
	and the membership relation $\bigcup_{(\vertA,\vertB) \in \relColA^\Struct} \set{(\set{\vertA,\vertB}, \vertA), (\set{\vertA,\vertB}, \vertB)}$
	is created.
\end{enumerate}
While adding the ability to rename relations seems useless,
it simplifies proofs
because we do not have to maintain bijections between
relation symbols of different structures. Instead, we just can rename them accordingly.

\begin{lemma}
	\label{lem:deepwl-wsc-plus}
	Ordered input to DeepWL+WSC-algorithms and the
	operations $\addUPairSym$,  $\renameSym$, and  $\refine{\vc}{i,\tup{z}}$
	can be simulated with the existing operations
	with only a polynomial overhead.
\end{lemma}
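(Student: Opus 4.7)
The plan is to give explicit simulations of the four derived constructs using only the primitive operations already available, and to verify that each simulation adds only polynomially many cloud vertices and polynomially many primitive operations, so runtime bounds and the invariance properties of Lemma~\ref{lem:dwsc-run-invariant} carry over.

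For $\rename{\relA, \relB}$, I would have the simulating machine maintain on its work tape a bijection between ``internal'' relation names used by the original machine and the ``external'' names actually in the cloud. Each write to and each read from the interaction tape is translated through this table, so renaming is simply a swap of two table entries whenever $\relB$ is unused externally, with no cloud operation required. For ordered input $(\Struct, \tup{z})$ at the top level, I would encode $\tup{z}$ as a rigid gadget attached to $\Struct$: a directed path of length $|\tup{z}|$ whose vertices lie in one of two fresh vertex classes according to the bit at that position, together with one further distinguished class marking the starting endpoint. The gadget is rigid (every position has a unique distance from the marker and cannot be confused with any part of $\Struct$), so the canonical coarsest coherent configuration assigns every position its own fiber, and the simulating machine reads $\tup{z}$ off the algebraic sketch onto its work tape before simulating the original run. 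The ordered $\refine{\vc}{i,\tup{z}}$ is simulated analogously: the outer machine first builds such a gadget inside its own cloud via $\createSym$, $\addPairSym$, and $\sccSym$, and then executes $\refine{\vc}{i}$ with $\dwla_i$ replaced by a version that first decodes $\tup{z}$ from the gadget now visible inside $(\Struct, x)$.

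The only substantive simulation is $\addUPair{\relColA}$. First form the symmetric closure $\rel' := \relColA \cup \inv{\relColA}$ by $\createSym$ applied to the appropriate $\sigB$-colors, then execute $\addPair{\rel'}$ to obtain a vertex $\pairVtx{\vertA}{\vertB}$ for every $(\vertA, \vertB) \in \rel'^\Struct$ together with the component relations $\rel_{\leftT}, \rel_{\rightT}$. The ``swap'' edges connecting $\pairVtx{\vertA}{\vertB}$ to $\pairVtx{\vertB}{\vertA}$ are characterised by the condition that the $\rel_{\leftT}$-neighbour of the source coincides with the $\rel_{\rightT}$-neighbour of the target and vice versa; since this is visible from the intersection function via the associated colored paths, these edges form a union of $\sigB$-colors and can be materialised as a relation $\rel_{\text{swap}}$ using $\createSym$. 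Applying $\scc{\rel_{\text{swap}}}$ yields exactly one new vertex per set $\set{\pairVtx{\vertA}{\vertB}, \pairVtx{\vertB}{\vertA}}$ (and singletons for self-loops), together with a membership relation $\rel_{\memT}$ connecting this SCC-vertex to the two pair-vertices. Composing $\rel_{\memT}$ with $\rel_{\leftT}$ (equivalently $\rel_{\rightT}$) via $\createSym$ then produces the desired membership relation from each SCC-vertex to the two atoms of the unordered pair.

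The main obstacle is verifying that both the swap edges and the composition $\rel_{\memT} \circ \rel_{\leftT}$ are unions of $\sigB$-colors, so that $\createSym$ can materialise them: this follows from the defining property of coherent configurations that any relational composition of colors is a union of colors, detectable from the intersection function in the algebraic sketch via the condition $q(\colA, \colB, \colC) \geq 1$. Polynomial overhead for all four simulations is then immediate, since each invocation produces polynomially many new vertices and uses only a constant number of primitive operations.
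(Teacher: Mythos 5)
Your simulations of $\renameSym$ (a translation table on the work tape) and of $\addUPairSym$ ($\addPairSym$, then materialising the swap edges as the colors having both an $(\rel_{\leftT},\inv{\rel_{\rightT}})$- and an $(\rel_{\rightT},\inv{\rel_{\leftT}})$-colored triangle, then $\sccSym$) are essentially the paper's constructions; symmetrising $\relColA$ first is a harmless variant of the paper's separate treatment of the edges whose reverse is absent.

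The ordered-input part is where you diverge, and where there is a gap. You propose encoding $\tup{z}$ as a fresh rigid directed path attached to the structure, and for $\refine{\vc}{i,\tup{z}}$ you claim the outer machine "builds such a gadget inside its own cloud via $\createSym$, $\addPairSym$, and $\sccSym$." But these operations cannot create a path of $|\tup{z}|$ fresh, pairwise distinct vertices out of nothing: $\addPairSym$ only creates a vertex per edge of an existing relation and $\sccSym$ only a vertex per SCC, and every such invocation is applied uniformly to a whole relation or color. One can salvage this (e.g., by iterated self-pairing of a canonical singleton vertex such as the $\sccSym$-vertex for the set of all atoms), but you would then also have to (i) check that the gadget vertices are fixed by all automorphisms so that witnessing is unaffected, and (ii) hide the gadget from the algebraic sketch that the simulated machine and the refined subalgorithm see, since otherwise the tracked run is no longer that of the original algorithm; the paper handles the analogous issue via Lemma~\ref{lem:restrict-sketch}. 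The paper sidesteps all of this by encoding each bit as a vertex class that is either empty or the full vertex set (created by $\create{\emptyset}$ resp.\ $\create{\pi}$ for $\pi$ the set of all fibers), prefixed by a self-delimiting marker: no new vertices are ever added, automorphisms are trivially unchanged, and the extra relation symbols are removed from the sketch by Lemma~\ref{lem:restrict-sketch}. Your closing claim that each simulation uses "only a constant number of primitive operations" is also false for the string encodings (one operation per bit is needed), though this does not affect the polynomial bound.
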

\begin{proof}

	We describe how the new operations can be simulated.
	\begin{enumerate}
		\item Ordered inputs can be simulated in the original DeepWL+WSC model as follows:
		Let~$\Struct$ be an input HF-structure,
		$\tup{z} \in \set{0,1}^*$,
		and $m = |\tup{z}|$.
		We encode~$\tup{z}$ into the string $\tup{z}' := 0^41^4z_10z_21z_30z_41\dots z_m$.
		There is only one position at which $0^41^4$
		occurs as substring of $\tup{z}'$, namely the first.
		We encode~$\tup{z}'$ into~$\StructA$ using vertex classes.
		Let $m' = |\tup{z}'|$.
		We add~$m'$ many vertex classes $\vc_1, \dots , \vc_{m'}$
		such that
		\[\vc_i^\Struct =  \begin{cases}
			\emptyset & \text{if } z'_i = 0,\\
			\StructVA & \text{if } z'_i = 1,
		\end{cases}\]
		for every $i \in [m']$.
		The string~$\tup{z}'$ can be read off the algebraic sketch:
		The algorithm starts at the lexicographically greatest relation symbol.
		Using the algebraic sketch the algorithm decides whether it encodes a~$0$ or a~$1$.
		This process is continued until $0^41^4$ is found.
		Then the algorithm computes~$\tup{z}$ from~$\tup{z}'$.
		Surely, the new vertex classes do not change
		the automorphisms of the structure~$\StructA$,
		so exactly the same orbits are witnessed after adding $\vc_1, \dots , \vc_{m'}$.
		Using Lemma~\ref{lem:restrict-sketch},
		we can always compute the algebraic sketch of the structure
		without the new vertex classes.
		
		\item $\refine{\vc}{i,\tup{z}}$:
		We proceed in the very similar way
		and encode~$\tup{z}$ by the string~$\tup{z}'$ as defined before
		using vertex classes.
		But now, the vertex classes 
		$\vc_1, \dots , \vc_{m'}$, where $m' = |\tup{z}'|$,
		have to be created by the DeepWL+WSC-algorithm.
		This can easily be done by executing $\create{\pi_i}$ for every $i \in [m']$,
		where
		$\pi_i = \emptyset$ if $z_i'= 0$
		and~$\pi_i$ is the set of all fibers if $z_i' =1$.

		The DeepWL+WSC-algorithm used to refine~$\vc$
		first reads off $\tup{z}'$ from the algebraic sketch
		and then, by Lemma~\ref{lem:restrict-sketch},
		computes the algebraic sketch
		without the vertex classes $\vc_1, \dots , \vc_{m'}$.
		
		\item $\rename{\relA, \relB}$:
		The DeepWL+WSC-machine maintains the bijection between current and renamed relation symbols
		on its tape.
		The bijection is passed through $\refineSym$-operations using ordered input.
		
		\item $\addUPair{\relColA}$:
		We first execute $\addPair{\relColA}$, so we obtain vertices 
		$\pairVtx{\vertA}{\vertB}$ for every $(\vertA,\vertB) \in \relColA^\Struct$
		and component relations~$\rel_{\leftT}$ and~$\rel_{\rightT}$.
		Let~$\pi$ be the set of colors
		which have an $(\rel_{\leftT}, \inv{\rel_{\rightT}})$-
		and an $(\rel_{\rightT}, \inv{\rel_{\leftT}})$-colored triangle.
		Then $\create{\pi}$ is executed and a relation~$\relB$ obtained.
		The relation~$\relB$ contains precisely all pairs $(\pairVtx{\vertA}{\vertB},\pairVtx{\vertB}{\vertA})$
		for every $\set{\vertA,\vertB} \in \relColA^\Struct$.
		We execute $\scc{\rel}$
		and obtain for each $\relB$-SCC a new vertex
		and the membership relation~$\rel_{\memT}$.
		Although the new vertices are created for the sets $\set{\pairVtx{\vertA}{\vertB},\pairVtx{\vertB}{\vertA}}$,
		they resemble the sets $\set{\vertA,\vertB}$.
		The membership relation~$\relB_{\memT}$ for these sets is obtained as the union of colors,
		which have an $(\rel_{\memT}, \rel_{\leftT})$- and an $(\rel_{\memT}, \rel_{\rightT})$-colored triangle.
		For the pairs $(\vertA,\vertB) \in \relColA^\Struct$
		for which $(\vertB,\vertA) \notin \relColA^\Struct$,
		we have not created an $\relB$-SCC vertex because these pair vertices are not incident to~$\relB$.
		Instead, we can just use the pair vertices itself
		(no automorphism can map~$\vertA$ to~$\vertB$ or vice versa in this case).
		We can define the relations refining~$\rel_{\leftT}$ and~$\rel_{\rightT}$
		incident to such pair vertices and also include them in~$\relB_{\memT}$.
		As before, the sketch without the additionally created pair vertices
		can be computed using Lemma~\ref{lem:restrict-sketch}.
		
		Regarding witnessing choices,
		the $\addUPairSym$-operation is isomorphism-invaraint.
		Because we did not use a $\choiceSym$-operation in the simulation,
		we have not changed the automorphisms of the HF-structure in the cloud
		by Corollary~\ref{cor:operations-apart-choice-auto-invariant}.
		So exactly the choices in the execution with the $\addUPairSym$-operation
		are witnessed as in the execution with the simulation.
	\end{enumerate}
	Finally, it is easy to see that the simulations run in polynomial time,
	where the size of a tuple $(\StructA, \tup{z})$ is $|\sketch{\StructA}|+|\tup{z}|$.
\end{proof}

\subsection{From CPT+WSC to DeepWL+WSC}

In this section, we translate a CPT+WSC-formula
into an equivalent polynomial-time DeepWL+WSC-algorithm.
The following translation is based on the translation
of CPT into interpretation logic in~\cite{GradelPSK15}
and the translation of interpretation logic in DeepWL in~\cite{GroheSchweitzerWiebking2021}.
However, we avoid the route through interpretation logic by directly
implementing the ideas of~\cite{GradelPSK15} in DeepWL+WSC.
To avoid case distinctions,
we assume in the following that the occurring CPT+WSC-terms or formulas never
output $\choiceError$.
Although a $\choiceError$-respecting translation can be given,
it is not needed for our purpose
(all isomorphism-defining CPT+WSC-formulas never output $\choiceError$).

	We simulate the evaluation of a BGS+WSC-term (or formula)
	with a DeepWL+WSC-algorithm.
	Recall that the vertices of the structure in the cloud during the execution
	of a DeepWL+WSC-machine on input structure~$\Struct$ are pairs $\pairVtx{a}{i}$ of a set $a \in \HF{\StructV}$ and a number~$i \in \nat$,
	where $\pairVtx{a}{i}$ is encoded as an $\HF{\StructV}$-set itself.
	For the simulation of a BGS+WSC-term (or formula) on input structure~$\Struct$,
	we encode two types of objects
	by vertices of the structure in the cloud:
	\begin{enumerate}
		\item 
		A hereditarily finite set $a \in \HF{\StructV}$
		is encoded by a vertex $\pairVtx{a}{i}$ for some number~$i$.
		We maintain a relation~$\inrel$,
		which serves as \defining{containment relation} between the vertices,
		that is, it corresponds to ``$\in$'' on the encoded sets.
		There is an exception for the empty set:
		because a DeepWL+WSC-algorithm cannot create a vertex for the empty set,
		we encode~$\emptyset$ by $\pairVtx{\StructV}{i}$ for some number~$i$. Here we require that $i \neq j$ for the number $j$
		for which $\pairVtx{\StructV}{j}$ encodes~$\StructV$.
		We apply this recursively: whenever~$\emptyset$ is part of an $\HF{\StructV}$-set, e.g., $\set{\emptyset}$, we consider
		the set obtained from replacing~$\emptyset$ by~$\StructV$.
		We will be able to distinguish the encoding vertices, e.g.,
		the vertices encoding~$\set{\emptyset}$ and~$\set{\StructV}$,
		using the containment relation.
		Note that the containment relation is different from the membership relation obtained from $\sccSym$-operations.
	
		\item
		A tuple $\tup{a} \in \HF{\StructVA}^k$ of hereditarily finite sets is encoded
		by a vertex $\pairVtx{\tup{a}}{i}$ for some number~$i$
		(for an appropriate encoding of tuples as hereditarily finite sets).
		We maintain a sequence of incident relations $\rel_1, \dots, \rel_k$,
		such that every~$\rel_j$ has out-degree~$1$ and
		associates to $\pairVtx{\tup{a}}{i}$
		a vertex $\pairVtx{a_j}{\ell}$ for some number~$\ell$.
		We call these relations the \defining{tuple relations}.
		We remark that the tuple relations in general will be different from the component relation of the $\addPairSym$-operations.
	\end{enumerate}
	We now introduce an intermediate version of DeepWL+WSC-algorithms.
	This intermediate version is needed for the recursive translation of CPT+WSC-terms and formulas.
	A \defining{choice-free DeepWL+WSC-algorithm} is a tuple
	$(\dwlmout, \dwla_1, \dots, \dwla_\ell)$ consisting
	of a choice-free DeepWL+WSC-machine~$\dwlmout$
	and DeepWL+WSC-algorithms $\dwla_1, \dots, \dwla_\ell$.
	The computation of a choice-free DeepWL+WSC-algorithm
	is exactly the same as the one of a regular DeepWL+WSC-algorithm,
	but since the output machine~$\dwlmout$ is choice-free,
	no witnessing machine is needed.
	Note here that for a choice-free DeepWL+WSC-algorithm
	the subalgorithms $\dwla_1, \dots, \dwla_\ell$ are not choice-free.
	The benefit of choice-free algorithms is that
	they can be composed more easily,
	e.g.,~they can be executed one after the other without worrying about witnessing choices.

	\begin{definition}[Simulating CPT+WSC with DeepWL+WSC]
		Let~$\StructA$ be a binary (non-HF) $\sig$\nobreakdash-structure.
		An HF-structure~$\StructA'$ is called \defining{compatible} with~$\StructA$
		if $\StructVA' =\StructV$ and $\autGroup{\StructA} = \autGroup{\StructA'}$.
		\begin{enumerate}[label=\alph*)]
			\item
			A choice-free DeepWL+WSC-algorithm~$\dwla$ \defining{simulates a CPT+WSC$[\sig]$-term} $\termA(\tup{x})$ on~$\StructA$, if
			for every HF-structure~$\StructA'$ compatible with~$\StructA$ and
			for every vertex class~$\vcA$ of~$\Struct'$ encoding $\HF{\StructVA}^{|\tup{x}|}$-tuples,
			the algorithm~$\dwla$ on input~$\Struct'$ and~$\vcA$ computes a vertex class~$\vc_\termA$ and a relation~$\iorel$ with the following property:
			\begin{itemize}
				\item $\iorel^{\StructA''} \subseteq \vcA^{\Struct'} \times \vcA_\termA^{\Struct'}$ is functional and surjective (where~$\StructA''$ is the content of the cloud after the execution of~$\dwla$) and
				\item $(\vertA,\vertB) \in \iorel^{\StructA''}$ if and only if
				the vertex~$\vertA$ encodes a tuple $\tup{a} \in \HF{\StructVA}^{|\tup{x}|}$,
				the vertex~$\vertB$ encodes a set $b \in \HF{\StructVA}$,
				and  $b = \denotation{\termA}^\Struct(\tup{a})$.
			\end{itemize}	
			The relation~$\iorel$ is called the \defining{input-output relation}.
			
			\item Likewise, a choice-free DeepWL+WSC-algorithm~$\dwla$ \defining{simulates a CPT+WSC$[\sig]$-formula}~$\formA(\tup{x})$
			which is not a WSC-fixed-point operator (but may contain such operators as subformulas) on~$\StructA$, if
			for every HF-structure~$\StructA'$ compatible with~$\StructA$
			and every vertex class~$\vcA$ of~$\StructA'$ encoding $\HF{\StructVA}^{|\tup{x}|}$-tuples,
			the algorithm~$\dwla$ on input~$\StructA'$ and~$\vcA$ defines a vertex class $\vc_\formA^{\StructA''} \subseteq \vcA^{\Struct'}$ of all $\vcA$-vertices
			encoding a tuple $\tup{a} \in \denotation{\formA}^\Struct$.
		
			\item Last, a (non choice-free) DeepWL+WSC-algorithm~$\dwla$ \defining{simulates a WSC-fixed-point operator}~$\formA(\tup{x})$ on~$\StructA$,
			if, 
			for every HF-structure~$\StructA'$ compatible with~$\StructA$
			and every singleton vertex class~$\vcA$ of~$\StructA'$ encoding an $\HF{\StructVA}^{|\tup{x}|}$-tuple~$\tup{a}$,
			the algorithm~$\dwla$ accepts on input~$\StructA'$ and~$\vcA$ if and only if $\tup{a} \in \denotation{\formA}^\Struct$.
		\end{enumerate}
		We say that a (choice-free) DeepWL+WSC-algorithm~$\dwla$ simulates a
		CPT+WSC$[\sig]$-term or formula~$\termA$,
		if~$\dwla$ simulates~$\termA$ on every $\sig$-structure.
	\end{definition}
	The former definition is elaborate because of the free variables.
	For a BGS+WSC-sentence~$\formA$,
	a DeepWL+WSC-algorithm~$\dwla$ simulating~$\formA$ accepts exactly the structures satisfying~$\formA$.
	
	We will translate  CPT+WSC-formulas and terms to polynomial time DeepWL+WSC-algorithms
	by induction on the nesting structure of the
	WSC-fixed-point operators.
	We call a WSC-fixed-point operator~$\formB$
	\defining{directly nested} in a CPT+WSC-formula (or a term)~$\formA$,
	if there is no WSC-fixed-point operator~$\formB' \neq \formB$
	such that~$\formB$ is a subformula of~$\formB'$
	and~$\formB'$ is a subformula of~$\formA$.
	
	\begin{lemma}
		\label{clm:simulate-cpt-wsc-term}
		For every CPT+WSC-formula~$\formA$ (respectively term)
		which is not a WSC-fixed-point operator,
		there is a choice-free polynomial-time DeepWL+WSC-algorithm simulating~$\formA$
		if for every WSC-fixed-point operator~$\formA'$ directly nested in~$\formA$,
		there exists a polynomial-time DeepWL+WSC-algorithm simulating~$\formA'$.
	\end{lemma}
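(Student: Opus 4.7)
The plan is to proceed by structural induction on the CPT+WSC-term (respectively formula)~$\termA$, building a choice-free DeepWL+WSC-algorithm that respects the encoding of $\HF{\StructV}$-sets by vertices in the cloud (via the containment relation~$\inrel$) and of tuples by vertices with tuple relations. Throughout, the input vertex class~$\vcA$ will carry the encodings of the parameter tuples~$\tup{x}$, and the algorithm will produce an output vertex class~$\vc_\termA$ together with the input-output relation~$\iorel$.

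First I would handle the base cases. For a variable~$x_i$ we simply project the $i$-th tuple relation off~$\vcA$, obtaining the corresponding output vertex class and a functional $\iorel$. For the constant~$\emptyset$, we create (if not yet present) a vertex encoding~$\emptyset$ via the convention $\pairVtx{\StructV}{i}$ described in the excerpt, and make~$\iorel$ the total functional relation from~$\vcA$ to this singleton. The constant~$\Atoms$ is handled similarly using a vertex for~$\StructV$. The set-theoretic functions~$\Pair$,~$\Union$,~$\Unique$, and~$\Card$ each correspond to a fixed combination of $\addPairSym$, $\sccSym$, and $\createSym$ operations on the vertices representing their arguments; in each case one uses the algebraic sketch and the containment relation~$\inrel$ to identify the colors that correspond to the set-theoretic operation. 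For a relation symbol $\rel\in\sig$, the formula $\rel(\termB_1,\dots,\termB_k)$ is simulated by composing the simulations for the~$\termB_i$ (via the induction hypothesis) and then using a $\createSym$ on the appropriate color in $\coConf{\Struct'}$ to obtain the vertex class of tuples where the relation holds. Equality $\termB_1 = \termB_2$ is handled analogously by comparing the encoding vertices.

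Next I would treat the inductive constructs. For Boolean connectives, I would run the simulations for the subformulas and combine the resulting vertex classes using $\createSym$ (for disjunction, as a union of fibers) together with Lemma~\ref{lem:restrict-sketch} (for complementation). For a comprehension term $\setcond{\termA(\tup{x},y)}{y \in \termB(\tup{x}), \formA(\tup{x},y)}$, I first simulate~$\termB$ to obtain the domain vertex class of candidate pairs, then simulate~$\formA$ on the extended tuple $(\tup{x},y)$ to filter to those pairs satisfying~$\formA$, then simulate~$\termA$ on those pairs, and finally use an $\sccSym$-operation to gather, for each input $\tup{x}$, all the resulting $\termA$-values into one vertex encoding the comprehension set; the containment relation is obtained from the membership relation of the $\sccSym$-operation. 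For an iteration term $\termA[y]^*(\tup{x})$, I would iterate the simulation of~$\termA$ up to the polynomial bound~$p(n)$ attached to the CPT+WSC-term, testing for a fixed point after each step via equality of the encoding vertices; this uses only polynomially many DeepWL+WSC operations.

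Finally, for a directly nested WSC-fixed-point operator $\formA'$ appearing as a subformula of~$\formA$, the induction hypothesis of the lemma (not the structural hypothesis) supplies a polynomial-time DeepWL+WSC-algorithm~$\dwla_{\formA'}$ simulating~$\formA'$. I incorporate~$\dwla_{\formA'}$ as one of the subalgorithms of the algorithm being constructed; evaluation on tuples in the appropriate vertex class is performed via a $\refineSym$-operation that invokes~$\dwla_{\formA'}$, yielding the output vertex class $\vc_{\formA'}$ according to the definition of simulation of WSC-fixed-point operators. Because the overall construction is choice-free (all $\choiceSym$-executions are confined inside the nested algorithms) and since each inductive step contributes only a polynomial blow-up, the resulting algorithm runs in polynomial time.

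The main obstacle is the comprehension case: we must assemble, for each input tuple $\tup{x}$, the (possibly many) values of $\termA(\tup{x},y)$ over the $y$'s satisfying $\formA(\tup{x},y)$ into a single vertex representing the corresponding set, while keeping the containment relation~$\inrel$ consistent with the $\emptyset$-encoding convention and not collapsing encodings of distinct sets. This is where the precise interplay between $\addPairSym$, $\sccSym$ and the color refinement performed by the cloud matters, and it is also where one must be careful that the resulting encoding respects $\autGroup{\Struct}=\autGroup{\Struct'}$ so that the produced HF-structure remains compatible with the input. The polynomial bound of CPT+WSC ensures that throughout the construction all intermediate sets have transitive closure of polynomial size, so the number of vertices and operations stays polynomial.
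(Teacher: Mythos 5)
Your proposal follows essentially the same route as the paper: structural induction on the term/formula, encoding HF-sets by cloud vertices with a containment relation and tuples via tuple relations, using $\createSym$/colored paths for the logical and set-theoretic cases, $\addPairSym$ plus $\sccSym$ to assemble comprehension sets, bounded iteration for iteration terms, and a $\refineSym$-call into the supplied subalgorithm for each directly nested WSC-fixed-point operator. The only detail worth making explicit is the invariant the paper maintains throughout — that before every $\addPairSym$-, $\addUPairSym$-, or $\sccSym$-execution one checks (via $\inrel$) whether a vertex encoding the resulting set already exists and reuses it, so that each HF-set is encoded by exactly one vertex — since your equality test and fixed-point detection "via equality of the encoding vertices" silently depend on it.
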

	\begin{proof}
		In the following,
		we use relation symbols $\relColA \in \sigA\cup \sigB$
		for the sets~$\relColA^\StructA$,
		where~$\StructA$ is the current content of the cloud of the DeepWL+WSC-machine we are going to define.
		In that sense, for example $\relColA \setminus \relColB$ has a well-defined notion.
		This simplifies notation because we do not
		have to always introduce the current content of the cloud.
		
		The proof is by induction on the CPT+WSC formula or term.
		Let~$\formA$ be a CPT+WSC formula (respectively~$\termC$ be a CPT+WSC-term).  Let the directly nested WSC-fixed-point operators of~$\formA$ (respectively~$\termC$)
		be $\hat{\formA}_1, \dots , \hat{\formA}_\ell$
		and let, inductively, $\dwla_1, \dots, \dwla_\ell$
		be DeepWL+WSC-algorithms,
		such that~$\dwla_i$ simulates $\hat{\formA}_i$ for every $i\in[\ell]$.
		We now define a  DeepWL+WSC-machine~$\dwlmout$
		such that the choice-free DeepWL+WSC-algorithm
		$(\dwlmout, \dwla_1, \dots, \dwla_\ell)$
		simulates~$\formA$ (respectively~$\termC$).	
		
		We start with a vertex class~$\vcA$ whose vertices encode the values for the free variables.
		Then we perform the computations to simulate the formulas and terms.
		During the simulation we want to ensure that for every encoded set~$a$
		there is exactly one vertex of the form $\pairVtx{a}{i}$.
		We thus want to avoid duplicates
		(with the already mentioned exception for the empty set).
		So whenever we want to execute an $\addPairSym$-, $\addUPairSym$-, or $\sccSym$-operation,
		we actually have to check whether the resulting vertices already exist.
		This can be done using the containment relation~$\inrel$
		(and for the $\sccSym$-operation due to the fact that SCCs can be computed by DeepWL~\cite[Lemma~4]{GroheSchweitzerWiebking2021}).
		From now on, we implicitly assume that these checks are always done
		and just say that we execute the $\addPairSym$-, $\addUPairSym$-, or $\sccSym$-operations.
		We perform the following case distinction for~$\formA$ and~$\termC$:
		\begin{itemize}
			
			\item $\formB = \rel_{CPT}(x,y)$: 
			Here $\rel_{CPT}$ is a relation symbol of the signature of the CPT+WSC-formula,
			which is thus contained in the current structure in the cloud.
			Given a vertex class~$\vcA$ encoding the values
			for the pair~$xy$, 
			we return the vertex class~$\vcA'$ consisting of all $\vcA$-vertices
			for which the tuple relations are adjacent to atoms contained in the relation~$\rel_{CPT}$ as follows:
			Let~$\rel_1$ and~$\rel_2$ be the tuple relations for the first and second
			component of~$\vcA$.
			Then we want to return the vertex class
			containing the vertices
			which have an $(\rel_1,\rel_{CPT},\inv{\rel_2})$-colored cycle.
			Due to coherence, there is a union of fibers $\pi \subseteq \sigB$,
			which contains precisely these vertices.
			We execute $\create{\pi}$ and output the resulting relation.
			
			In the following, we will for readability not mention the required
			$\createSym$-operations or that we can find the union
			of fibers~$\pi$ if we want to obtain a relation of pairs with a $(\relB_1,\dots, \relB_k)$-colored path (for some relation symbols $\relB_1,\dots, \relB_k$).
			
			\item $\formB$ is $\termA(\tup{x}) = \termB(\tup{y})$:
			Given a vertex class $\vcA$ corresponding to the free variables $\tup{x} \cup \tup{y}$,
			we can use~$\vcA$ for the free variables~$\tup{x}$ and~$\tup{y}$
			by only using a subset of the tuple relations.
			Then we use the choice-free DeepWL+WSC-algorithms simulating~$\termA$ and~$\termB$
			to obtain vertex classes~$\vcA_\termA$ and~$\vcA_\termB$
			together with the input-output relations~$\iorelsub{\termA}$ and~$\iorelsub{\termB}$.
			Then precisely one $\vcA_\termA$-vertex~$\vertA$
			and one $\vcA_\termB$-vertex~$\vertB$
			are obtained from the same $\vcA$-vertex~$\vertC$
			if the edge $(\vertA,\vertB)$ has an $(\inv{\iorelsub{\termA}},\iorelsub{\termB})$-colored triangle.
			We now create a vertex class~$\vcA'$
			from the fibers refining~$\vcA$
			which have an $(\inv{\iorelsub{\termA}},\iorelsub{\termB})$-colored cycle, that is, the two vertices~$\vertA$ and~$\vertB$ are actually the same.
			This is exactly then the case when
			the outputs of~$\termA$ and~$\termB$ are equal.
			Note that here it is crucial to perform the cleanup steps
			unifying vertices encoding the same set
			after every step.
			
			\item $\formB = \neg \formA$:
			Given a vertex class~$\vcA$,
			we obtain by induction a vertex class~$\vcA_\formA$.
			Then we return $\vcA \setminus \vcA_\formA$.
			
			\item $\formB = \formA_1(\tup{x}) \land \formA_2(\tup{x})$:
			Given a vertex class~$\vcA$,
			we obtain by induction vertex classes~$\vcA_{\formA_1}$
			and~$\vcA_{\formA_2}$
			and we just return $\vcA_{\formA_1} \cap \vcA_{\formA_2}$.
			
			\item $\formB = \hat{\formA}_i(\tup{x})$:
			Let~$\vcA$ be a vertex class encoding the values
			for the variables~$\tup{x}$.
			The machine executes $\refine{\vcA}{i}$
			and returns the resulting vertex class
			(note that~$\vcA$ is a vertex class
			and so the fact that $\refineSym$-operations treat
			directed and undirected relations differently does not matter here).
			By induction hypothesis, this vertex class
			contains precisely the vertices encoding the tuples satisfying~$\hat{\formA}_i$.
			
			\item $\termC = \Atoms$: 
			We create a relation~$\rel$ connecting all atoms of the input structure
			(all vertices, which have no outgoing edge of some membership or component relation)
			and execute $\scc{\rel}$.
			We obtain a single additional vertex
			in a vertex class~$\vcA_\Atoms$
			and a membership relation connecting it to the atoms,
			which is merged into the containment relation~$\inrel$
			(using a $\createSym$- and a $\renameSym$-operation).
			(Of course, as described before,
			 we create the vertex class~$\vcA_\Atoms$ only once.)
			
			\item $\termC = \emptyset$:
			In the same way as in the $\Atoms$ case, we obtain a single vertex
			in its own vertex class~$\vcA_\emptyset$,
			but now we do not merge the created membership relation into the containment relation~$\inrel$.
			While~$\emptyset$ is encoded by a vertex for the atoms,
			with respect to automorphism this does not make a difference
			because every automorphism fixes the set of all atoms
			and~$\vcA_\emptyset$ and~$\vcA_\Atoms$
			contain different vertices $\pairVtx{\StructV}{i}$ and $\pairVtx{\StructV}{j}$
			for $i \neq j$, where~$\StructV$ is the set of atoms.
			So no automorphism can exchange these two vertices.

			\item $\termC = \Pair(\termA(\tup{x}), \termB(\tup{y}))$:
			Given a vertex class $\vcA$ encoding values for the free variables $\tup{x} \cup \tup{y}$,
			we can use~$\vcA$ for the free variables~$\tup{x}$ and~$\tup{y}$
			by only using a subset of the tuple relations.
			Then we use the choice-free DeepWL+WSC-algorithms simulating~$\termA$ and~$\termB$
			to obtain vertex classes~$\vcA_\termA$ and~$\vcA_\termB$
			together with the input-output relations~$\iorelsub{\termA}$ and~$\iorelsub{\termB}$.
			Then precisely the $\vcA_\termA$-vertices~$\vertA$
			and $\vcA_\termB$-vertices~$\vertB$
			are obtained for the same $\vcA$-vertex~$\vertC$
			if the pair $(\vertA,\vertB)$ has an $(\inv{\iorelsub{\termA}}, \iorelsub{\termB})$-colored triangle.
			That is, there is a union of colors~$\rel_=$ containing exactly all such pairs $(\vertA,\vertB)$.
			Then the machine executes $\addUPair{\rel_=}$
			and obtains the vertex class~$\vcA_\termC$
			and a new membership relation~$\relB$.
			The membership relation~$\relB$
			connects $\vcA_\termC$-vertices to $\vcA_\termA$- and $\vcA_\termB$-vertices.
			The input-output relation $\iorelsub{\termC}$ is the union of colors
			which have an $(\iorelsub{\termA}, \inv{\relB})$- and an $(\iorelsub{\termB}, \inv{\relB})$-colored triangle.
			Finally, the machine merges~$\relB$ into the containment relation~$\inrel$.
			
			\item $\termC = \Unique(\termA(\tup{x}))$:
			Given a vertex class~$\vcA$,
			we define the vertex class~$\vcA_\termA$
			with input-output relation~$\iorelsub{\termA}$
			and containment relation~$\inrel$ using the induction hypothesis.
			The vertex class~$\vcA_\termA$ can be partitioned
			into~$\vcA_\termA^{1}$ and~$\vcA_\termA^{\neq 1}$,
			where $\vcA_\termA^{1}$-vertices have precisely one outgoing $\inrel$-edge
			and $\vcA_\termA^{\neq 1}$-vertices do not.
			Let $\iorelsub{\termA}^1 := \iorelsub{\termA} \cap (\vcA \times \vcA_\termA^1)$ be the subset of $\iorelsub{\termA}$ leading to a $\vcA_\termA$-vertex encoding a singleton set (which is DeepWL-computable).
			We similarly partition~$\vcA$ into~$\vcA^1$ and~$\vcA^{\neq 1}$,
			where~$\vcA^1$ contains all $\vcA$-vertices incident to~$\iorelsub{\termA}^1$.
			
			The machine makes the following case distinction:
			If $|\vcA^{\neq 1}| = 0$, set $\vcA_\termC := \vcA'$
			and otherwise set $\vcA_\termC := \vcA' \cup \vcA_\emptyset$.
			For the $\vcA^1$-vertices,
			we obtain the input-output relation~$\iorelsub{\termC}^1$
			as the relation with an $(\iorelsub{\termA}^1 , \inrel)$-colored triangle.
			For the $\vcA^{\neq 1}$\nobreakdash-vertices, let
			$\iorelsub{\termC}^{\neq 1} = \vcA^{\neq 1} \times \vcA_\emptyset$ be the relation containing all edges between~$\vcA^{\neq 1}$ and the singleton vertex class~$\vcA_\emptyset$.
			Finally, output $\iorelsub{\termC} := \iorelsub{\termC}^1 \cup \iorelsub{\termC}^{\neq 1}$.
			
			\item $\termC = \Card(\termA(\tup{x}))$:
			Define for a given vertex class~$\vcA$ 
			the vertex class~$\vcA_\termA$,
			the input-output relation~$\iorelsub{\termA}$,
			and the containment relation~$\inrel$ using the induction hypothesis.
			We partition~$\vcA_\termA$ into $\vcA^1_\termA, \dots, \vcA^k_\termA$,
			such that $\vcA^i_\termA$-vertices have~$i$ many outgoing $\inrel$-edges.
			Then the machine defines for all $i\in [k]$ such that $\vcA^i_\termA \neq \emptyset$ the $i$-th von Neumann ordinal,
			defines~$\vcA_\termC$ as the union of these ordinals,
			and outputs~$\iorelsub{\termC}$ accordingly.
			
			\item $\termC = \setcond{\termA(\tup{x}y)}{y \in \termB(\tup{x}), \formA(\tup{x}y)}$:
			\begin{figure}
				\centering
				\begin{tikzpicture}
					
					\draw[fill=white!70!gray, draw=none] (4.5, -0.1) ellipse (0.5 cm and 1.35 cm);
					\draw[fill=white!50!gray, draw=none] (4.5, 0.5) ellipse (0.4 cm and 0.75 cm);
					
					\draw[fill=white!50!gray, draw=none] (6, 0.5) ellipse (0.4 cm and 0.75 cm);
					
					\node[vertex, label=left:{$\vcA$}] (C) at (0,0) {};
					\node[vertex, label=below:{$\vcA_\termB$}] (Ct) at (1.5,0) {};
					
					\node[vertex] (Ctin1) at (3,0) {};
					\node[vertex] (Ctin2) at (3,0.5) {};
					\node[vertex] (Ctin3) at (3,-0.5) {};
					
					\node[vertex] (Cxy1) at (4.5,0.0) {};
					\node[vertex, label=above:{$\vcA_\formA$}] (Cxy2) at (4.5,0.5) {};
					\node[vertex, label={[label distance=0.1cm]below:{$\vcA_{\tup{x}y}$}}] (Cxy3) at (4.5,-0.5) {};

					\node[vertex] (Cs1) at (6, 0) {};
					\node[vertex, label=above:{$\vcA_\termA$}] (Cs2) at (6, 0.5) {}; 
					
					\node[vertex, label=right:{$\vcA_\termC$}] (Cr) at (7.5, 0) {};
					
					\path[->, darkgreen, dashed]
						(C) edge node[below] {$\iorelsub{\termB}$} (Ct);
					\path[->, blue]
						(Ct) edge (Ctin1)
						(Ct) edge (Ctin2)
						(Ct) edge node[below] {$\inrel$} (Ctin3);
						
					\path[->,red]
						(C) edge [bend left=30]  (Ctin1)
						(C) edge [bend left=30] node[above, xshift=0.3cm] {$\rel_{\tup{x}y}$}(Ctin2)
						(C) edge [bend left=30]  (Ctin3);
						
					\path[<-]
						(C) edge [bend right= 55] (Cxy1)
						(C) edge [bend right= 55] (Cxy2)
						(C) edge [bend right= 55] node [below] {$\rel_{\tup{x}}$}  (Cxy3);
					\path[<-, darkgreen]
						(Ctin1) edge (Cxy1)
						(Ctin2) edge node[above] {$\rel_{y}$} (Cxy2)
						(Ctin3) edge (Cxy3);
						
					\path[->, red, dashed]
						(Cxy1) edge (Cs1)
						(Cxy2) edge node[above]{$\iorelsub{\termA}$} (Cs2);
						
					\path[darkyellow]
					 	(Cs1) edge node[below, yshift=-0.5cm] {$\relB$} (Cs2);
					 	
					 \path[->, blue, dashed]
					 	(Cr) edge  node[below, xshift=0.3cm] {$\inrel'$} (Cs1)
					 	(Cr) edge (Cs2);
					 	
					 \path[->]
					 	(C) edge [bend left = 60] node [below]{$\iorelsub{\termC}$} (Cr);
					
				\end{tikzpicture}\hspace{0.4cm}%
				\begin{tikzpicture}
					\node[vertex, label=above:{$\vcA$}] (C) at (0,0) {};
					
					\node[vertex] (P1) at (-0.2, -1.5){};
					\node[vertex] (P2) at (0.2, -1.5){};
					
					\node[vertex, label=above:{$\vcA^0$}] (C0) at (1.5,0) {};
					\node[vertex, label=below:{$\vcA_\emptyset$}] (Cempty) at (1.5,-1.5) {};
					
					\node[vertex, label=above:{$\vcA^i$}] (Ci) at (3.5,0) {};
					\node[vertex] (Pi) at (3.5,-1.5) {};
					\node[vertex, label=below:{$\vcA^i_\termA$}] (Csi) at (5,-1.5) {};
					\node[vertex, label=above:{$\ \vcA^{i+1}$}] (CiS) at (5,0) {};
					
					\node (L1) at (2.75,0) {$\dots$};
					\node (L1) at (2.75,-1.5) {$\dots$};
					
					\path[->]
						(C) edge (P1)
						(C) edge (P2);
					
					\path[->,blue]
						(C0) edge node[right] {$\relB^0_x$} (Cempty)
						(Ci) edge node[right] {$\relB^i_x$} (Pi)
						(CiS) edge node[right] {$\relB^{i+1}_x$} (Csi);
						
					\path[->, darkgreen]
						(Ci) edge[bend left = 10] node[right, yshift=0.3cm, xshift=-0.26cm] {$\iorelsub{\termA}^i$} (Csi);
						
					\path[->,red]
						(C) edge [bend left = 40] node[below] {$\rel^0$} (C0)
						(C) edge [bend left = 40] node[below, xshift=0.7cm, yshift=-0.1cm] {$\rel^i$} (Ci)
						(C) edge [bend left = 40] node[above, xshift=1cm] {$\rel^{i+1}$} (CiS);
				\end{tikzpicture}
				\caption{On the left, the translation of the comprehension term 
				$\termC = \setcond{\termA(\tup{x}y)}{y \in \termB(\tup{x}), \formA(\tup{x}y)}$ in DeepWL.
				The figure is drawn for a single vertex~$\vertA$ in 
				the input vertex class~$\vcA$
				and contains all vertices created for~$\vertA$.
				On the right, the translation of the iteration term $\termC = \termA[x]^*$.}
				\label{fig:translation-comprehension-term}
			\end{figure}
			We start with the vertex class~$\vcA$
			and define by induction the vertex class~$\vcA_\termB$,
			the input-output relation~$\iorelsub{\termB}$,
			and the containment relation~$\inrel$ (cf.\ Figure~\ref{fig:translation-comprehension-term}).
			Then the machine defines vertex classes encoding the tuples
			for the variables~$\tup{x}y$:
			Let~$\rel_{\tup{x}y}$ be the relations with an $(\iorelsub{\termB},\inrel)$-colored triangle.
			These relations associate an input tuple with an element contained in the output set of~$\termB$.
			The machine executes $\addPair{\rel_{\tup{x}y}}$
			and obtains a vertex class~$\vcA_{\tup{x}y}$
			with component relations $\rel_{\tup{x}} := \rel_{\leftT}$
			and $\rel_{y} := \rel_{\rightT}$. 
			
			By induction hypothesis, we obtain the vertex class $\vcA_\formA \subseteq \vcA_{\tup{x}y}$
			of $\vcA_{\tup{x}y}$-vertices satisfying~$\formA$.
			With~$\vcA_\formA$ as input vertex class,
			the machine obtains~$\vcA_\termA$ again by induction
			and obtains the input-output relation $\iorelsub{\termA}$.

			We define a relation $\relB$
			connecting $\vcA_\termA$-vertices originating from the same $\vcA_\termB$-vertex:~$\relB$ is given by the union of all colors~$\colA$
			which have an $(\inv{\iorelsub{\termA}}, \rel_y, \inv{\inrel}, \inrel, \inv{\rel_y}, \iorelsub{\termA})$-colored path.
			We then execute $\scc{\relB}$ to
			obtain a vertex class~$\vcA_\termC$
			and a new membership relation~$\inrel'$.
			Finally, the input-output relation $\iorelsub{\termC}$
			consists of the edges with an $(\iorelsub{\termB}, \inrel, \inv{\rel_y}, \iorelsub{\termA}, \inv{\inrel'})$-colored path and the machine merges~$\inrel'$ into the containment relation~$\inrel$.
				
			\item $\termC= \termA[x]^*$:
			Let~$\termA$ have free variables~$\tup{y}x$.
			We start with the vertex class $\vcA$ for the free variables~$\tup{y}$
			(cf.\ Figure~\ref{fig:translation-comprehension-term}).
			We then execute $\addPair{\rel_\emptyset}$
			for the relation $\rel_\emptyset = \vcA \times \vcA_\emptyset$.
			We obtain the vertex class~$\vcA^0$
			and the component relations $\rel^0 := \inv{\rel_{\leftT}}$ 
			between a $\vcA$-vertex and the corresponding $\vcA^0$-vertex
			and $\relB_x^0 := \rel_{\rightT}$
			defining the tuple relation to the entry for~$x$.
			
			Let~$\vcA^i$ be the vertex class containing the current input values 
			after the~$i$-th iteration.
			Let the relation~$\rel^i$ relate the values for~$\tup{y}$ with
			the extended tuples~$\tup{y}x$ in~$\vcA^i$ (maybe not all~$\tup{y}$ are related because a fixed-point for them is already computed).
			If~$i$ exceeds the maximal number of iterations (given by the polynomial bound of the CPT-term),
			set $\vcA^{i+1}_* := \emptyset$ and~$\rel^{i+1}_*$
			to be the relations with an $(\rel^i, \relB)$-colored triangle,
			where $\relB = \vcA^i \times \vcA_\emptyset$.
			
			Otherwise, the machine defines, starting with~$\vcA^i$,
			a vertex class~$\vcA^i_\termA$ and 
			the input-output relation~$\iorelsub{\termA}^i$ using the induction hypothesis.
			Let~$\relB^i_x$ be the tuple relation of~$\vcA^i$ for~$x$.
			We partition~$\rel^i_\termA$ into $\rel^i_{\text{fix}} := \rel^i_\termA \cap \relB^i_x$ and $\rel^i_{\text{continue}}:= \rel^i_\termA \setminus \relB^i_x$.
			In~$\rel^i_{\text{fix}}$, the set for the variable~$x$ contained in $\vcA^i$
			is equal to the output after applying~$\termA$ once more,
			so we reached a fixed-point.
			Set~$\vcA^i_*$ to be the subset of~$\vcA^i_\termA$ which is incident to~$\rel^i_{\text{fix}}$ and set~$\rel^i_*$ to be the relation with an $(\rel^i, \rel^i_{\text{fix}})$-colored triangle incident (so $\rel^i_*$ associates the input values of~$y$ with
			the computed fixed-point).
			
			If $\rel^i_{\text{continue}} = \emptyset$, the machine stops looping.
			Otherwise, the machine executes the operation $\addPair{\relB^i}$,
			where~$\relB^i$ are the edges with an
			$(\rel^i, \rel^i_{\text{continue}})$-colored triangle,
			and obtains the vertex class~$\vcA^{i+1}$,
			the relation~$\rel^{i+1}$ relating a $\vcA$-vertex with the corresponding $\vcA^{i+1}$-vertex
			and the relation~$\relB_x^i$ for the value of~$x$.
			That is, the machine combines the input variables for~$\tup{y}$ with the new set for~$x$.
			Now the machine continues looping.
			
			When the loop is finished,
			the machine outputs the union of the~$\vcA^i_*$
			and the union of the~$\rel^i_*$ is the input-output relation.
			
		\end{itemize}
	Note that we did not use a $\choiceSym$-operation
	and so, by Corollary~\ref{cor:operations-apart-choice-auto-invariant},
	we indeed can use the induction hypothesis
	because we do not change the automorphisms of the input structure.
	Hence, the structure in the cloud is always compatible with the input structure.
	
	We finally have to argue that the constructed choice-free DeepWL+WSC-algorithm runs in polynomial time.
	Let~$p$ be the polynomial given with the CPT+WSC-formula or term.
	In every translation step, the machine executes at most a polynomial number of steps
	because the number of iterations in the iteration term is bounded by~$p$.
	Also, the size of the structure in the cloud is bounded
	by a polynomial in the size of the transitive closure of all sets
	encoded so far, which itself is bounded by a polynomial.
	Whenever a directly nested WSC-fixed-point operator
	is simulated,
	one of the simulating DeepWL+WSC-algorithms~$\dwla_i$
	is called a polynomial number of times.
	Because all the~$\dwla_i$ run in polynomial time,
	the whole simulation runs in polynomial time.
	\end{proof}

	\begin{lemma}
		\label{clm:simulate-cpt-wsc-iteration-term-with-choice}
		Assume that $\formA(\tup{z}) =\wscForm{x}{y}{\termStep(\tup{z}xy)}{\termChoice(\tup{z}x)}{\termWit(\tup{z}xy)}{\formOut(\tup{z}x)}$
		is a CPT+WSC-formula
		and that for every WSC-fixed-point operator~$\formA'$ directly nested in~$\formA$
		there is a polynomial-time DeepWL+WSC-algorithm simulating~$\formA'$.
		Then there is a polynomial-time DeepWL+WSC-algorithm
		simulating~$\formA$.
	\end{lemma}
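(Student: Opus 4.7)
The plan is to construct $\dwla = (\dwlmout, \dwlmwit, \dwla_1, \dots, \dwla_\ell)$, taking as subalgorithms $\dwla_1, \dots, \dwla_\ell$ exactly the given polynomial-time simulators for the WSC-fixed-point operators directly nested in $\termStep$, $\termChoice$, $\termWit$, and $\formOut$. Applying Lemma~\ref{clm:simulate-cpt-wsc-term} to each of $\termStep$, $\termChoice$, $\termWit$, $\formOut$ (none of which is itself a WSC-fixed-point operator) yields choice-free polynomial-time simulators whose only $\refineSym$-subroutines lie among $\dwla_1, \dots, \dwla_\ell$. I inline the output machines of the simulators for $\termStep$, $\termChoice$, $\formOut$ into $\dwlmout$ and the simulator for $\termWit$ into $\dwlmwit$; this is legitimate because all of them address the same subalgorithm list via $\refineSym$.

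The output machine $\dwlmout$ receives a singleton vertex class $\vcA$ encoding the tuple value $\tup{z}$. It creates an initial singleton class $\vcA_0$ encoding $(\tup{z}, \emptyset)$ by pairing $\vcA$ with $\vcA_\emptyset$, and then iterates. At step $i$, with a singleton class $\vcA_i$ whose unique vertex encodes $(\tup{z}, a_i)$, the machine runs the inlined simulator of $\termChoice$ to obtain the vertex encoding $\denotation{\termChoice}^\Struct(\tup{z}, a_i)$, extracts via $\inrel$ a relation $\rel_{\text{choice}}^i$ connecting the vertex of $\vcA_i$ to each element of the choice set, and executes $\choice{\rel_{\text{choice}}^i}$ to individualize a chosen $y_i$. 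It then forms the singleton class encoding $(\tup{z}, a_i, y_i)$, runs the inlined simulator of $\termStep$ to obtain the vertex encoding $a_{i+1}$, and builds $\vcA_{i+1}$. Fixed-point detection is a syntactic equality check between the canonical encoding vertices of $a_i$ and $a_{i+1}$, which coincide by the cleanup mechanism described in the proof of Lemma~\ref{clm:simulate-cpt-wsc-term}. A loop counter on the work-tape enforces the polynomial bound of $\formA$, producing $\choiceError$ when exceeded. Upon termination, $\dwlmout$ runs the inlined simulator of $\formOut$ on $\vcA_i$ and accepts exactly when the returned vertex class is nonempty.

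The witnessing machine $\dwlmwit$ is choice-free and on input $\Struct \labeledUnion \Struct_k$ uses the labels $\rel_1, \rel_2$ to separate the final cloud $\Struct$ from the intermediate cloud $\Struct_k$ in which the $k$-th choice was made. To enable the machine to recover within each cloud the ``current'' iterate, $\dwlmout$ is arranged so that at every iteration a designated marker vertex class singles out the current $\vcA_i$ from all its predecessors; these markers are cloud features created by standard $\addPairSym$ and $\createSym$ steps, hence persist and are visible to $\dwlmwit$ on both sides of the labeled union. Reading off the vertex encoding $a_n$ in $\Struct$ and the vertex encoding $a_{k-1}$ in $\Struct_k$, the witnessing machine pairs them with $\vcA$ to form a singleton class encoding $(\tup{z}, a_{k-1}, a_n)$, runs the inlined simulator of $\termWit$, and unpacks the resulting single vertex (which encodes the hereditarily finite set of automorphisms) into the triple $(\rel_{\text{aut}}, \rel_{\text{dom}}, \rel_{\text{img}})$ with root vertex $\vertC_\auto$ demanded by the DeepWL+WSC witness-encoding convention; the unpacking uses $\inrel$ and pair-component relations exactly as in the $\Pair$ and $\Unique$ cases of Lemma~\ref{clm:simulate-cpt-wsc-term}.

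The main obstacle is precisely the bookkeeping for $\dwlmwit$: the witnessing machine must reconstruct $(\tup{z}, a_{k-1}, a_n)$ from the labeled union alone, without any record of the choice history, and then present the result of $\termWit$ in the rigid relation-triple format. The marker scheme above resolves the reconstruction, and the unpacking is handled by reusing the HF-set-to-vertex translation of Lemma~\ref{clm:simulate-cpt-wsc-term}. Correctness then follows because the $\wscStarSym$-operator introduced in Section~\ref{sec:semantics-choice-operators} depends only on the isomorphism-invariant functions $\stepF$, $\choiceF$, $\autF$ it is given, so the accept/reject/fail behaviour of $\dwla$ coincides with $\denotation{\formA}^\Struct$; polynomial time is immediate because the loop is bounded by the polynomial of the CPT+WSC-formula, every inlined body invocation runs in polynomial time by Lemma~\ref{clm:simulate-cpt-wsc-term}, the cloud grows only polynomially, and $\dwlmwit$ itself runs in polynomial time by the same arguments together with the induction hypothesis on $\dwla_1, \dots, \dwla_\ell$.
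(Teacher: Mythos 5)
Your construction is essentially the paper's: the same decomposition into $\dwlmout$/$\dwlmwit$ with the directly nested simulators as subalgorithms, the same iteration scheme pairing the current iterate with a chosen element via Lemma~\ref{clm:simulate-cpt-wsc-term}, the same fixed-point detection via the duplicate-free encoding, and the same strategy of recovering the two relevant iterates from the labeled union $\Struct \labeledUnion \Struct_k$ before running the $\termWit$ simulator (your marker classes play the role of the paper's footnote about disambiguating $\vcA^i$ via the $\rel_2$ label; your choice on the relation $\rel^i_{\text{choice}}$ rather than on the vertex class of choice-set elements is an inessential variant).

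There is one step you gloss over that the paper treats as the delicate point and that your appeal to ``the $\wscStarSym$-operator depends only on the functions it is given'' does not cover. In the CPT+WSC semantics, the witnessing term need only certify that $\denotation{\termChoice}^\Struct(\tup{z},a_i)$ is an orbit of $(\Struct,\tup{z},a_1,\dots,a_i)$, i.e.\ the automorphisms must stabilize the \emph{iterates}. In the DeepWL+WSC semantics, the automorphisms returned by $\dwlmwit$ must stabilize the entire intermediate cloud contents $\Struct_1,\dots,\Struct_k$, which in your simulation contain strictly more than the iterates: the encoded choice sets $\vcA^j_{\termChoice}$, the pair vertices, the marker classes, and every other auxiliary vertex created along the way. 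A priori an automorphism fixing all $a_1,\dots,a_i$ might fail to fix these extra vertices, in which case $\dwla$ would fail even though $\formA$ does not. You need to argue (as the paper does via Lemma~\ref{lem:cpt-wsc-iso-invariant} and Corollary~\ref{cor:operations-apart-choice-auto-invariant}) that all auxiliary objects are defined isomorphism-invariantly from $\tup{z}$ and the previous iterates, so that any automorphism stabilizing the iterates automatically stabilizes the full cloud contents; only then do the two witnessing conditions coincide and the accept/reject/fail behaviour transfer. With that paragraph added, your proof matches the paper's.
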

	\begin{proof}
		We will construct DeepWL+WSC-machines~$\dwlmout$ and~$\dwlmwit$
		in a way such that the DeepWL+WSC-algorithm
		$(\dwlmout, \dwlmwit, \dwla_1, \dots, \dwla_\ell)$
		simulates~$\formA$.
		The DeepWL+WSC-algorithms~$\dwla_i$ are the ones simulating
		the directly nested WSC-fixed-point operators
		and will be used by choice-free DeepWL+WSC-algorithms
		simulating~$\termStep$,~$\termChoice$,~$\termWit$, and~$\formOut$.
		
		The machine~$\dwlmout$ proceeds similarly to the plain iteration term
		in Lemma~\ref{clm:simulate-cpt-wsc-term}:
		For two singleton vertex classes~$\vcA$ and~$\vcB$
		we say that~$\dwlmout$ \defining{creates a pair of}~$\vcA$ and~$\vcB$
		when~$\dwlmout$ executes $\addPairSym$ for the single edge between the $\vcA$- and the $\vcB$-vertex.
		
		Given a singleton vertex class~$\vcA$
		and the step term~$\termStep$,
		the machine first creates a pair of~$\vcA$ and~$\vcA_\emptyset$
		resulting in the singleton vertex class~$\vcA^0$.
		
	 	Now, in the $i$-th iteration,
	 	let~$\vcA^i$ be the singleton vertex class
	 	encoding the current set in the fixed-point computation.
	 	The machine~$\dwlmout$ first checks whether~$i$ exceeds the maximal number of
	 	iterations.
	 	If so, it sets $\vcA_{\text{out}} := \vcA^0$
	 	(the input values paired with the empty set).
	 	Otherwise,
	 	it simulates, using Lemma~\ref{clm:simulate-cpt-wsc-term},
	 	the choice term~$\termChoice$
	 	on input~$\vcA^i$
	 	yielding a singleton vertex class~$\vcA^i_{\termChoice}$
	 	encoding the choice set.
	 	Let~$\vcA^i_{\text{orb}}$ be the vertex class
	 	incident via the containment relation to the ~$\vcA^i_{\termChoice}$-vertex.
	 	Then the machine executes $\choice{\vcA^i_{\text{orb}}}$
	 	yielding a singleton vertex class~$\vcA^i_{\text{ind}}$
	 	(because we choose from a vertex class, we do not have to deal
	 	with the difference of the $\choiceSym$-operation
	 	for directed or undirected relations).
	 	Next,~$\dwlmout$ creates a pair of~$\vcA^i$ and~$\vcA^i_{\text{ind}}$
	 	and simulates the step term~$\termStep$
	 	(Lemma~\ref{clm:simulate-cpt-wsc-term})
	 	with the newly created tuple vertex as input.
	 	This results in a singleton vertex class~$\vcA^i_{\termStep}$.
	 	Then the machine creates again a pair
	 	of~$\vcA$ and~$\vcA^i_{\termStep}$ 
	 	resulting in the singleton vertex class~$\vcA^{i+1}$.
	 	If $\vcA^i=\vcA^{i+1}$ (again it is important
	 	that vertices encoding the same set are not created twice),
	 	then a fixed-point is reached and the machine
	 	sets $\vcA_{\text{out}} := \vcA^{i+1}$.
	 	Otherwise, it starts the next iteration.
	 	Once~$\vcA_{\text{out}}$ is computed
	 	(note that~$\vcA_{\text{out}}$ is always a singleton vertex class because~$\vcA$ is),%
	 	~$\dwlmout$ simulates~$\formA_{\text{out}}$ on input~$\vcA_{\text{out}}$
	 	(again using Lemma~\ref{clm:simulate-cpt-wsc-term}
	 	for which we can w.l.o.g.~assume that~$\formA_{\text{out}}$ is not a WSC-fixed-point operator by considering $\formA_{\text{out}} \land \formA_{\text{out}}$).
	 	If the simulation outputs~$\vcA_{\text{out}}$ (so~$\formA_{\text{out}}$ is satisfied),%
	 	~$\dwlmout$ halts and accepts.
	 	Otherwise,~$\dwlmout$ halts and does not accept.
	 	Note that the polynomial bound is not exceeded because~$\formA$ does not output~$\choiceError$ by assumption.
		
		The machine~$\dwlmwit$ just simulates the term~$\termWit$
		(again by Lemma~\ref{clm:simulate-cpt-wsc-term})
		given the vertex classes~$\vcA^i$ and~$\vcA^n$,
		where~$n$ is the last iteration of the loop.
		Here is a subtle difference between CPT+WSC and DeepWL+WSC.
		The CPT+WSC-term~$\termWit$ gets the $i$\nobreakdash-th and $n$\nobreakdash-th step
		of the fixed-point computation as input
		to witness the choices.
		In DeepWL+WSC, we get the labeled union $\Struct_n \labeledUnion \Struct_i$
		of the final content of the cloud~$\Struct_n$ and the content of the cloud~$\Struct_i$ of the $i$-th $\choiceSym$-execution, which is to be witnessed.
		In our setting, this means
		that $\Struct_n \labeledUnion \Struct_i$
		contains the vertex classes~$\vcA^i$ and~$\vcA^n$
		but also the vertex class~$\vcA^i_{\termChoice}$ 
		encoding the choice set%
		\footnote{If a relation named~$\vcA^i$ is also present in~$\Struct_n$,
		then~$\vcA^i$ is not a singleton vertex class in $\Struct_n \labeledUnion \Struct_i$,
		but we can obtain the correct one by taking the~$\vcA^i$
		vertex which is also contained in the special~$\rel_2$ relation given by the labeled union. We similarly process the other vertex classes.}.
		The witnessing term~$\termWit$ defines automorphisms
		witnessing that~$\vcA^i_{\termChoice}$ is an orbit 
		of~$\Struct_i$ and fixing the~$\vcA^j$ for all $j \in [i]$.
		But in DeepWL, also the choice-sets have to be fixed.
		By Lemma~\ref{lem:cpt-wsc-iso-invariant},
		the CPT+WSC term~$\termChoice$
		is isomorphism-invariant,
		so all the $\vcA^j_{\termChoice}$
		are indeed fixed for all $j \in [i]$, too.
	 	That is, exactly those orbits are witnessed in the computation of
	 	the DeepWL+WSC-algorithm
	 	which are witnessed in the evaluation of the WSC-fixed-point operator
	 	(from which we assumed that all are witnessed).
	 	
	 	Arguing  that the constructed DeepWL+WSC-algorithm
	 	runs in polynomial time
	 	is similar to Lemma~\ref{clm:simulate-cpt-wsc-term}.
	\end{proof}

\begin{lemma}
	\label{lem:cpt-wsc-to-deepwl-wsc}
	Let~$\GraphClass$ be a class of binary $\sig$-structures.
	If a property~$P$ of $\GraphClass$-structures is CPT+WSC-definable,
	then there is a polynomial-time DeepWL+WSC-algorithm deciding~$P$.
\end{lemma}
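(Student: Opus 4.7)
The plan is a straightforward structural induction on the nesting depth of WSC-fixed-point operators in the CPT+WSC-sentence $\formA$ defining $P$. The two preceding lemmas do the substantial work: Lemma~\ref{clm:simulate-cpt-wsc-term} handles formulas that are not WSC-fixed-point operators at the top level, and Lemma~\ref{clm:simulate-cpt-wsc-iteration-term-with-choice} handles WSC-fixed-point operators, in both cases conditional on polynomial-time simulating DeepWL+WSC-algorithms being available for all directly nested WSC-fixed-point operators.

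First, I would reduce the statement to constructing a polynomial-time DeepWL+WSC-algorithm that simulates $\formA$ in the sense introduced before Lemma~\ref{clm:simulate-cpt-wsc-term}. Since $\formA$ defines a property, it is a sentence with no free variables, and the input $\StructA \in \GraphClass$, viewed as an HF-structure with $\StructHFV = \emptyset$, is trivially compatible with itself. A DeepWL+WSC-algorithm deciding $P$ then sets up the trivial singleton vertex class for the empty tuple, calls the simulating algorithm for $\formA$, and accepts iff the resulting vertex class is nonempty.

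For the inductive step, let $\hat{\formA}_1, \dots, \hat{\formA}_\ell$ enumerate the WSC-fixed-point operators directly nested in $\formA$; each has strictly smaller nesting depth. The induction hypothesis yields polynomial-time DeepWL+WSC-algorithms $\dwla_1, \dots, \dwla_\ell$ simulating $\hat{\formA}_1, \dots, \hat{\formA}_\ell$. Now I split on whether $\formA$ is itself a WSC-fixed-point operator at the top: if not, Lemma~\ref{clm:simulate-cpt-wsc-term} assembles a choice-free polynomial-time DeepWL+WSC-algorithm simulating $\formA$ with $\dwla_1, \dots, \dwla_\ell$ as nested subroutines; if so, Lemma~\ref{clm:simulate-cpt-wsc-iteration-term-with-choice} performs the analogous construction producing a (non-choice-free) simulating DeepWL+WSC-algorithm.

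The main obstacle is largely bookkeeping rather than a genuine difficulty. One has to be careful to strengthen the induction hypothesis to apply to formulas with free variables, since the two preceding lemmas talk about vertex classes encoding parameter tuples rather than about sentences on plain structures. One also needs to observe that polynomial runtime is preserved automatically: each application of one of the two lemmas adds only a polynomial overhead over the runtimes of the $\dwla_i$, and the recursion depth is bounded by $|\formA|$. Finally, because we assume throughout the translation that the CPT+WSC-formula never outputs $\choiceError$ (which is guaranteed for property-defining sentences by Lemma~\ref{lem:cpt-wsc-iso-invariant} together with the requirement in Definition~\ref{def:definable-isomorphism}), no extra failure-marker handling is needed.
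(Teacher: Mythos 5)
Your proposal is correct and follows essentially the same route as the paper: the paper's proof is exactly the recursive application of Lemmas~\ref{clm:simulate-cpt-wsc-term} and~\ref{clm:simulate-cpt-wsc-iteration-term-with-choice} along the nesting structure of the WSC-fixed-point operators, with the convention that a property-defining formula never outputs~$\choiceError$ (this is part of the paper's standing convention on definability, not a consequence of Lemma~\ref{lem:cpt-wsc-iso-invariant}). The only detail you omit is that a choice-free simulating algorithm must, where needed, be padded with a trivially halting witnessing machine to be a DeepWL+WSC-algorithm in the formal sense.
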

\begin{proof}
	Let~$\formA$ be a CPT+WSC formula defining the property~$P$.
	Then in particular~$\formA$ never outputs~$\choiceError$ on $\GraphClass$-structures.
	By applying Lemmas~\ref{clm:simulate-cpt-wsc-term} and~\ref{clm:simulate-cpt-wsc-iteration-term-with-choice}
	recursively on the nesting structure of the WSC-fixed-point operators,
	we can translate~$\formA$ into a polynomial time and
	(possibly choice-free) DeepWL+WSC-algorithm~$\dwla$	simulating~$\formA$.
	Whenever needed, we extend choice-free DeepWL+WSC-algorithms
	to DeepWL+WSC-algorithms by adding a witnessing machine
	which immediately halts.
\end{proof}

\subsection{Normalized DeepWL+WSC}
\label{sec:normalized-deepwl}

For our aim to prove Theorem~\ref{thm:cpt+wsc-iso-implies-inv},
we translated an isomorphism-defining CPT+WSC-formula into a polynomial time isomorphism-deciding DeepWL+WSC-algorithm in the last section.
The next goal, which we address in this section,
is to show that the isomorphism-deciding DeepWL+WSC-algorithm
can be turned into a DeepWL+WSC-algorithm computing a complete invariant.

Recall that an isomorphism-deciding DeepWL+WSC-algorithm
gets as input the disjoint union of the two connected structures,
for which it has to decide whether they are isomorphic.
Using the $\addPairSym$- and $\sccSym$-operations,
the algorithms is able to create vertices composed of vertices from both structures,
i.e., it mixes the structures.
The ultimate goal in this section is to show that this mixing is not necessary:
We will show that every DeepWL+WSC-algorithm computing on a disjoint of structures
can be simulated by a DeepWL+WSC-algorithm not mixing the two structures.
We will call such non-mixing algorithms normalized.
A normalized algorithm can compute on each structure separately.
Exploiting this, we will obtain a complete invariant.

We will follow the idea of~\cite{GroheSchweitzerWiebking2021}
to show that every DeepWL+WSC-algorithm can be simulated by a normalized one.
However, we have to differ from the construction in many points.
These changes are crucial in the presence of choices.

\subsubsection{Pure DeepWL+WSC}

As first step to simulate one DeepWL+WSC-algorithm with another DeepWL+WSC-algorithm, we show that we can make some simplifying assumptions on the algorithm to be simulated.
This will simplify simulating it.
We adapt the notion of a pure DeepWL-algorithm from~\cite{GroheSchweitzerWiebking2021} to DeepWL+WSC-algorithms.
\begin{definition}[Pure DeepWL+WSC-algorithm]
	A DeepWL+WSC-algorithm is called \defining{pure}
	if $\addPair{\col}$ and $\scc{\col}$
	are only executed for colors (or fibers)
	and $\refine{\cc}{i}$ and $\choice{\cc}$ are only executed for fibers.
\end{definition}

\begin{lemma}
	\label{lem:pure}
	For every polynomial-time DeepWL+WSC-algorithm deciding a property~$P$ or
	computing a function~$f$,
	there is a pure polynomial-time DeepWL+WSC-algorithm deciding~$P$ or computing~$f$.
\end{lemma}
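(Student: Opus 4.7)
I would prove the lemma by structural induction on the nesting depth of DeepWL+WSC-algorithms. Given a polynomial-time algorithm $\dwla = (\dwlmout, \dwlmwit, \dwla_1, \ldots, \dwla_\ell)$, I would first apply the induction hypothesis to purify each nested subalgorithm $\dwla_i$, and then rewrite the Turing-machine programs of $\dwlmout$ and $\dwlmwit$ so that every non-pure cloud-modifying instruction is replaced by a block of pure instructions achieving the same effect on the cloud. The algebraic sketch, which includes the set of colors in $\sigB$ refining each $\sig$-relation, is readable from the interaction tape, so the machine can at runtime enumerate the colors $\col_1,\ldots,\col_m \in \sigB$ refining a given relation $\rel \in \sig$ and execute one pure instruction per color.

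\textbf{Non-pure $\addPairSym$ and $\sccSym$.} For $\addPair{\rel}$, execute $\addPair{\col_j}$ for each $j$; the resulting pair-vertex classes and per-color component relations together encode the pairs of $\rel$, and they can be combined into global left and right component relations via two $\createSym$-operations. For $\scc{\rel}$, first apply the $\addPairSym$-simulation to materialize the edges of $\rel$ as pair-vertices, and then use a small number of $\createSym$- and pure $\sccSym$-calls so that the strongly-connected components of $\rel$ in the original structure correspond to the SCCs of a single auxiliary $\sig'$-color on the extended cloud.

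\textbf{Non-pure $\refineSym$ and $\choiceSym$.} For $\refine{\rel}{i}$ with $\rel$ a non-fiber, first simulate $\addPair{\rel}$ (or $\addUPair{\rel}$ in the undirected case, using Lemma~\ref{lem:deepwl-wsc-plus}) to obtain pair-vertices for the edges of $\rel$ in a single auxiliary fiber $\cc$, and then execute $\refine{\cc}{i,\tup{z}}$, where the tuple $\tup{z}$ on the work tape encodes the bijection between pair-vertices and the edges they represent so that $\dwla_i$ can reconstruct its input edge. For $\choice{\rel}$, analogously produce the pair-vertex fiber $\cc$ and execute $\choice{\cc}$; the chosen pair-vertex individualizes the corresponding directed or undirected edge of $\rel$ via the component relations.

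\textbf{Main obstacle.} The key technical hurdle is that naively executing $\addPair{\col_j}$ for each color refining $\rel$ produces pair-vertices in $m$ \emph{distinct} fibers of the coarsest coherent configuration, because the $m$ pairs of component relations are distinguishable; yet the pure $\choice{\cc}$ or $\refine{\cc}{i,\tup{z}}$ must be called on a \emph{single} fiber $\cc$ that, moreover, is an orbit whenever $\rel^\Struct$ was. I would address this by building $\cc$ through a carefully orchestrated sequence of $\createSym$- and pure $\addPairSym$-operations on the enlarged cloud together with a pure $\sccSym$-step that symmetrizes the construction so that the coarsest coherent configuration of the modified cloud contains the desired single fiber of edge-representatives. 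Since every pure operation preserves the automorphism group of the HF-structure in the cloud by Corollary~\ref{cor:operations-apart-choice-auto-invariant}, the fiber $\cc$ is automatically an orbit when $\rel^\Struct$ was, and any automorphism set produced by $\dwlmwit$ witnessing $\rel^{\Struct_k}$ as an orbit translates, through the added component relations, into one witnessing $\cc$ as an orbit in the modified cloud. Polynomial runtime is preserved because each non-pure step expands to $O(|\sigB|)$ pure steps and the auxiliary vertex classes add only polynomially many pair-vertices.
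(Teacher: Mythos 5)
There is a genuine gap in the step you yourself flag as the ``main obstacle,'' and the resolution you sketch cannot work. If the edges of $\rel$ lie in $m$ distinct colors of $\coConf{\Struct}$, then after executing $\addPair{\col_j}$ for each of these colors the resulting pair-vertices necessarily lie in at least $m$ distinct fibers of the coarsest coherent configuration of the enlarged cloud, and no amount of further $\createSym$-, $\addPairSym$-, or $\sccSym$-operations can merge them: adding vertices and relations only \emph{refines} the coarsest coherent configuration (the component relations tie each pair-vertex to its differently-colored endpoints), it never coarsens it. So there is no ``symmetrizing'' sequence of pure operations producing a single fiber $\cc$ of edge-representatives for a multi-color $\rel$. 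The paper avoids this obstacle entirely, in two different ways for the two operations. For $\choice{\rel}$ it observes that each color of $\coConf{\Struct}$ is a union of orbits, so if $\rel$ decomposes into two or more colors it cannot be an orbit, the choice cannot be witnessed, and the algorithm would fail --- contradicting the hypothesis that it decides a property or computes a function; hence one may assume $\rel$ is a single color, and only then reduce to a fiber via $\addPairSym$/$\addUPairSym$. For $\refine{\rel}{k}$ no single fiber is needed at all: one executes $\refine{\col_j}{k}$ separately for each color $\col_j$ refining $\rel$ and takes the union of the resulting relations, and within each color the pair-vertex class is further decomposed into its fibers, each refined separately.

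Your treatment of $\scc{\rel}$ is also too thin to count as a proof. The SCCs of a multi-color relation $\rel$ are determined by paths mixing edges of several colors, so they do not correspond to the SCCs of ``a single auxiliary color'' obtainable after materializing edges as pair-vertices; you would face the same fiber-merging impossibility. The paper instead runs an iterative procedure: as long as some $\rel$-SCC is non-discrete, it finds a single color $\colA$ refining $\rel$ with non-trivial SCCs, executes the pure $\scc{\colA}$, reroutes $\rel$ through the new SCC-vertices via $\createSym$, and repeats; once all $\rel$-SCCs are discrete it picks the vertex in the minimal fiber of each SCC as a representative. The remaining parts of your proposal (the per-color simulation of $\addPairSym$, the use of Lemma~\ref{lem:deepwl-wsc-plus} and Corollary~\ref{cor:operations-apart-choice-auto-invariant}, and the polynomial-overhead accounting) do match the paper, but as written the argument for $\sccSym$, $\refineSym$ on non-colors, and $\choiceSym$ on non-colors does not go through.
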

\begin{proof}
	We show that $\addPair{\relColA}$-, $\scc{\relColA}$-, $\refine{\relColA}{i}$-, and $\choice{\relColA}$-operations can be simulated by
	respective operations applied only to colors and fibers. 
	The case for $\addPair{\relColA}$ and $\scc{\relColA}$ is similar to the proof of Lemma~7 in~\cite{GroheSchweitzerWiebking2021}:
	
	\paragraph{$\addPair{\relColA}$:}
	First decompose~$\relColA$ into its colors $\colA_1, \dots, \colA_k$ using the symbolic subset relation.
	Then execute $\addPair{\colA_i}$ for every $i \in [k]$,
	obtain membership relations~$\rel_i$ for every $i \in [k]$.
	Create a new relation which is the union of all~$\rel_i$
	and serves as membership relation of the simulated  $\addPair{\relColA}$-operation.
	This way, we create the same vertices in the cloud
	as the $\addPair{\relColA}$-execution would do.
	By Lemma~\ref{lem:restrict-sketch}, we can compute
	the algebraic sketch without the relations~$\rel_i$.
	
	\paragraph{$\scc{\relColA}$:}
	By Lemma~4 of~\cite{GroheSchweitzerWiebking2021}, we can compute a relation
	which exactly contains the pairs of vertices in the same $\relColA$-SCC
	using a pure DeepWL-algorithm (and this algorithm only executes $\createSym$
	and no other operations, so can be directly transferred into our DeepWL-definition).
	
	Now, if there is an $\relColA$-SCC which is not discrete, i.e., it contains at least two vertices from the same fiber,
	we can identify a color~$\colA$ refining~$\relColA$,
	such that the $\colA$-SCCs are non-trivial.
	We execute $\scc{\colA}$ and obtain new vertices and a new membership relation~$\relB$.
	Next, we consider the relation~$\relColA'$,
	which consists of
	\begin{enumerate}[label=\alph*)]
		\item the edges of~$\relColA$ not incident to an $\colA$-SCC and 
		\item the edges $(\vertA, \vertC)$ for which
		there is an edge $(\vertA, \vertB)$ in~$\relColA$ such that~$\vertB$ is contained in an $\colA$-SCC and~$\vertC$ is the $\colA$-SCC vertex for~$\vertB$,
		that is,~$\vertB$ is adjacent to~$\vertC$ via the membership relation~$\relB$.
	\end{enumerate}
	The relation~$\relColA'$ is a union of colors, which can be identified using the symbolic subset relation.
	We create~$\relColA'$ using $\createSym$ for the appropriate set of colors.
	If the~$\relColA'$-SCCs are still not all discrete, we repeat the procedure.
	
	Now consider the case that all $\relColA$-SCCs are discrete.
	Conceptually, we want to pick from every $\relColA$-SCC the vertex in the minimal fiber~$\ccA$ as a representative of that SCC.
	We create a copy of the $\ccA$-vertices by executing $\addPair{\ccA}$.
	While iteratively creating the SCC-vertices, we maintain a relation
	which associates the SCC-vertices to the vertices in the $\relColA$-SCCs,
	for which they were created.
	Using this relation, we can define the membership relation.
	
	So, to simulate $\scc{\relColA}$ we do not create exactly the same vertex (as HF-set) because we pick the minimal fibers as representative.
	Regarding witnessing automorphisms, neither picking only a representative nor creating the intermediate SCC-vertices makes a difference:
	Because we did not use a $\choiceSym$-operation to simulate the $\refineSym$-operation, the structure in the cloud has the same automorphisms as the structure that is simulated
	by Corollary~\ref{cor:operations-apart-choice-auto-invariant}.
	Again by Lemma~\ref{lem:restrict-sketch}, we can compute
	the algebraic sketch without the additional vertices and relations.
	
	\paragraph{$\refine{\relColA}{k}$:}
	Let~$\relColA$ consist of the colors $\col_1, \dots , \col_m$,
	which we find using the symbolic subset relation.
	We execute $\refine{\col_1}{k}, \dots, \refine{\col_m}{k}$
	and obtain new relations $\relA_1, \dots, \relA_m$.
	The union of these relations precisely corresponds to the relation
	outputted by $\refine{\relColA}{k}$.
	So we can assume that~$\relColA$ is a color~$\col$.
	
	If~$\col$ is not a fiber, we do the following:
	If~$\col$ is an undirected color,
	then execute $\addUPair{\col}$ and obtain a new vertex class~$\vc$
	and a membership relation~$\relB$.
	We decompose~$\vc$ into its fibers $\ccA_1, \dots, \ccA_\ell$,
	execute $\refine{\ccA_i}{k}$ for every $i \in [\ell]$,
	and obtain the vertex classes $\vcB_1, \dots, \vcB_\ell$
	(in case $\refine{\ccA_i}{k}$ does not create a vertex class,
	i.e., the algorithm used to refine~$\ccA_i$ accepts all vertices in~$\ccA_i$,
	we set $\vcB_i := \ccA_i$).
	Let~$\vcB$ be the union of the~$\vcB_i$ for all $i\in[\ell]$.
	The relation~$\rel$ refining~$\col$ is obtained
	as the set of $\col$-edges with an $(\inv{\relB}, \vc, \relB$)-colored path.
	If~$\rel$ contains the same edges as~$\col$, we do nothing,
	otherwise~$\rel$ is the output of the simulated $\refineSym$-operation.
	
	If~$\col$ is a directed color,
	then execute $\addPair{\col}$ and
	obtain the vertex class~$\vc$ and the component relations~$\relB_{\leftT}$ and~$\relB_{\rightT}$.
	Next,
	we execute $\refine{\vc}{k}$
	(for which we again decompose~$\vc$ into fibers),
	and obtain the vertex class~$\vcB$ refining~$\vc$.
	Similarly to the unordered case, we
	obtain the relation~$\rel$ refining~$\col$
	as $\col$-edges with an $(\inv{\relB_{\leftT}}, \vcB, \relB_{\rightT}$)-colored path.
	If~$\rel$ and~$\col$ contain the same edges, we do nothing,
	otherwise~$\rel$ is the output of the simulated $\refineSym$-operation.
	
	The algorithm used to refine has to be altered
	to also perform the same operations
	to first create a new relation corresponding to the individualized 
	$\vcB$-vertex.
	Again by Corollary~\ref{cor:operations-apart-choice-auto-invariant},
	the simulation does not alter the automorphisms of the structure in the cloud and by Lemma~\ref{lem:restrict-sketch},
	the algebraic sketch without the additional vertices and relations can be computed.

	\paragraph{$\choice{\relColA}$:}
	If $\choice{\relColA}$ is executed for a relation~$\rel$
	consisting of more than one color,
	then~$\rel$ is not an orbit and the given algorithm does not decide a property
	or computes a function because it is going to fail
	(recall that failing is not allowed when deciding a property or computing a function).
	
	So we can assume that~$\relColA$ is a color~$\col$.
	Analogously to the $\refineSym$-case,
	we execute $\addUPair{\col}$ or $\addPair{\col}$,
	depending on whether $\col$ is undirected or not,
	and obtain a vertex class~$\vcA$.
	Next, we execute $\choice{\vcA}$
	yielding a singleton vertex class containing a $\vcA$-vertex~$\vertA$.
	Using the vertex~$\vertA$,
	we can define a relation containing a single (directed or undirected) $\col$-edge analogously as in the $\refineSym$-case.
	
	It is easy to see that a set of automorphisms
	witnesses~$\col$
	(seen as directed or undirected edges depending on whether~$\col$ is directed or not)
	as orbit if and only if it witnesses~$\vcA$ as orbit.
	Additionally, both individualizing an $\col$-edge and individualizing the corresponding $\vcA$-vertex results in an HF-structure with the same automorphisms.
	By Lemma~\ref{lem:restrict-sketch},
	we can compute the algebraic sketch without the additional vertices and relations.
\end{proof}

\subsubsection{Normalized DeepWL+WSC}
\label{sec:normalized-normalized}

We now formalize the notion of an DeepWL+WSC-algorithm
not mixing the two connected components of the structure in the cloud.
To do so, we adapt the notion of a normalized DeepWL-algorithm from~\cite{GroheSchweitzerWiebking2021} to DeepWL+WSC.
Recall that, for the aim of testing isomorphisms,
the input structure is the disjoint union of two connected structures.

\begin{definition} (Normalized HF-Structure)
	An HF-structure~$\Struct$ is \defining{normalized}
	if there are connected HF-structures~$\StructA_1$ and~$\StructA_2$
	such that $\Struct = \Struct_1 \disunion \Struct_2$.
	The~$\Struct_i$ are the \defining{components} of~$\Struct$.
	The vertices of a normalized structure are called \defining{plain}.
	The edges $\plainEdges{\StructA} := \verticesOf{1}{\Struct}^2 \cup \verticesOf{2}{\Struct}^2$ are called \defining{plain}
	and the edges
	$\crossingEdges{\StructA} := \verticesOf{1}{\Struct} \times \verticesOf{2}{\Struct} \cup \verticesOf{2}{\Struct} \times \verticesOf{1}{\Struct}$ are called \defining{crossing}.
	A relation or color is called plain (respectively crossing)
	if it only contains plain 
	(respectively crossing) edges.
\end{definition}
Observe that also the atom set of~$\Struct$
is the disjoint union $\StructVA = \StructVA_1 \disunion \StructVA_2$
and that
the components~$\StructA_1$ and~$\StructA_2$ of~$\StructA$
are unique because the~$\StructA_i$ are connected.
Also observe that if a DeepWL+WSC-machine is executed on $\StructA_1 \disunion \StructA_2$
and at some point during its execution the structure in the cloud is still normalized,
then every vertex~$\vertC$ is contained in $\HF{\StructVA_i}$ for some $i \in [2]$.
That is, one of the following holds:
\begin{itemize}
	\item $\vertC \in \vertices{\StructA_i}$,
	\item $\vertC$ was added by an $\addPairSym$-execution for a pair $(\vertA,\vertB)$ of vertices~$\vertA,\vertB \in \HF{\StructVA_i}$, or 
	\item$\vertC$ was obtained as a vertex for an SCC~$c\subseteq \HF{\StructVA_i}$
	during an $\sccSym$-execution.
\end{itemize}
In particular, $\addPair{\relColA}$- and $\sccSym{\relColA}$-operations were only executed for plain~$\relColA$.

\begin{lemma}[\cite{GroheSchweitzerWiebking2021}]
	The relation containing all plain (respectively crossing)
	edges is DeepWL-computable.
\end{lemma}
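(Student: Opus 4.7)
The plan is to exploit the fact that, for a normalized structure $\StructA = \StructA_1 \disunion \StructA_2$ in the cloud, the two components $\vertices{\StructA_1}$ and $\vertices{\StructA_2}$ are each connected via a relation that is available to the DeepWL-machine, namely the union of the original $\sigA_0$-relations together with all component- and membership-relations introduced by $\addPairSym$- and $\sccSym$-operations during the preceding execution. Every such relation was by construction created from plain input, so all its edges lie in $\plainEdges{\StructA}$. Since $\StructA_1, \StructA_2$ are connected with respect to $\sigA_0$, and every added vertex (a pair $\pairVtx{\vertA}{\vertB}$ or an SCC vertex $c$) is by our earlier observation contained in $\HF{\StructVA_i}$ for a unique $i$ and is linked to its constituents via a component- or membership-relation, the connected components of this ``base'' relation on $\vertices{\StructA}$ are exactly $\vertices{\StructA_1}$ and $\vertices{\StructA_2}$.

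Concretely, the machine would first use the symbolic subset relation of $\sketch{\StructA}$ to select the set $\pi$ of all colors refining either an original $\sigA_0$-relation or any component- or membership-relation maintained from previous $\addPairSym$- or $\sccSym$-executions. An execution of $\create{\pi}$, followed by another $\createSym$ to form the symmetric closure, produces a single symmetric relation $\relA$. Then $\scc{\relA}$ creates two new vertices $c_1, c_2$ together with a membership relation $\relA_\memT$ assigning every $\vertC \in \vertices{\StructA}$ to one of them. From this, the plain-edge relation is obtained as the union of colors having an $(\inv{\relA_\memT}, \relA_\memT)$-colored triangle (both endpoints in the same $c_i$), and the crossing-edge relation as those without such a triangle (equivalently, with a $(\inv{\relA_\memT}, \relA_\memT)$-path of length $2$ through the other component vertex). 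Since all intermediate operations respect the normalized structure, one final application of Lemma~\ref{lem:restrict-sketch} restores the algebraic sketch that excludes the auxiliary $c_1, c_2$.

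The main obstacle is arguing that the ``base'' relation truly has only $\vertices{\StructA_1}$ and $\vertices{\StructA_2}$ as connected components. This requires a small structural induction on the sequence of operations that produced $\StructA$: the base case is that $\StructA_0 = \StructA_1 \disunion \StructA_2$ is normalized with each $\StructA_i$ connected under $\sigA_0$; the inductive step handles $\addPair{\relColA}$ and $\scc{\relColA}$ (each created vertex inherits connectivity to its constituents through the newly produced component- or membership-relation, which is itself included in $\pi$) as well as $\createSym$-, $\refineSym$-, and $\choiceSym$-operations (which do not create new vertices). The only subtlety is that $\choiceSym$-operations can in principle break symmetry between the two components, but they cannot create crossing edges in a relation generated from plain ones, so the connectivity analysis is unaffected. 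Once the partition is secured, distinguishing plain from crossing edges is a straightforward computation in the coherent configuration.
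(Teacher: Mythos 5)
Your argument is correct and matches the intended one (the paper only cites this lemma from~\cite{GroheSchweitzerWiebking2021} without reproducing a proof): since the structure is normalized, the original relations together with all component and membership relations are plain and their symmetrized union has exactly $\vertices{\StructA_1}$ and $\vertices{\StructA_2}$ as connected components, so the ``same component'' relation is precisely $\plainEdges{\StructA}$, and coherence guarantees that the plain/crossing classification respects colors. The only cosmetic point is that you could avoid adding the auxiliary SCC vertices by computing the same-SCC relation purely with $\createSym$-operations (Lemma~4 of~\cite{GroheSchweitzerWiebking2021}), but your detour through $\sccSym$ followed by Lemma~\ref{lem:restrict-sketch} is equally valid.
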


\begin{definition}[Normalized DeepWL+WSC]
	A DeepWL+WSC-machine~$\dwlm$ is \defining{normalized}
	if for every normalized HF-structure~$\StructA$,
	the structure in the cloud of~$\dwlm$ is normalized
	all every point in time during the execution of~$\dwlm$ on~$\Struct$.
	A DeepWL+WSC-algorithm $\dwla=(\dwlmout,\dwlmwit, \dwla_1, \dots, \dwla_\ell)$ is normalized
	if $\dwlmout$, $\dwlmwit$, and $\dwla_1, \dots, \dwla_\ell$ are normalized.
\end{definition}
The current definition has a severe issue, namely 
the way in which sets of witnessing automorphisms are encoded by the witnessing machine:
With the encoding of automorphisms we previously described,
a set of automorphisms can only be encoded by non-plain vertices.
So normalized DeepWL+WSC-algorithms need to use a different encoding of automorphisms to witness choices.
For the moment, we do not need to give a precise definition for this encoding because it is irrelevant for the following lemmas.
We will thus define the encoding later in Section~\ref{sec:simulation}
once we have established the required formalism.

Normalized structures have the important property that every crossing color
is actually a ``direct product'' of two fibers.
That is, crossing colors do not provide additional information
and this is the reason why general DeepWL+WSC-algorithms
can ultimately be simulated by normalized ones.

\begin{lemma}[{\cite[Lemma~8]{GroheSchweitzerWiebking2021}}]
	\label{lem:normalized-direct-product}
	Let $\Struct = \StructA_1 \disunion \StructA_2$ be a normalized HF-structure.
	\begin{enumerate}
		\item For every crossing color~$\col$ of~$\Struct$,
		there are two plain fibers~$\ccA$ and~$\ccB$
		such that $\col^\Struct = (\ccA^\Struct \times \ccB^\Struct) \cap \crossingEdges{\Struct}$.
		\item $\coConf{\StructA_i}$ is equal to $\reduct{\coConf{\StructA}[\vertices{\StructA_i}]}{\sigB_i}$ up to renaming colors for every $i \in [2]$, where~$\sigB_i$ is the set of all colors $\col\in\sigB$ for which $\col^\Struct \cap \vertices{\StructA_i}^2 \neq \emptyset$.
		\item There is a polynomial-time algorithm that on input $\sketch{\StructA_1}$ and $\sketch{\StructA_2}$
		computes $\sketch{\Struct}$.
		\item The two sketches $\sketch{\StructA_1}$ and $\sketch{\StructA_2}$
		determine $\sketch{\Struct}$:
		if $\sketch{\StructA} \neq \sketch{\StructB}$ for another HF-structure $\StructB = \StructB_1 \disunion \StructB_2$,
		then $\set{\sketch{\StructA_1},\sketch{\StructA_2}} \neq \set{\sketch{\StructB_1},\sketch{\StructB_2}}$.
	\end{enumerate}
\end{lemma}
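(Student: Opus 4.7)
The plan is to derive all four parts from a single structural invariant about how the 2-dimensional Weisfeiler-Leman refinement (which computes the canonical coarsest coherent configuration) behaves on a disjoint union. The key observation is that in $\Struct = \StructA_1 \disunion \StructA_2$ there are no $\sig$-edges between the two components, so crossing pairs carry no ``original'' information and the refinement cannot mix data from the two components in a non-product way. Concretely, I would prove by induction on the number of 2-WL rounds that at every round: (a) for a plain pair $(u,v)$ with $u,v \in \vertices{\StructA_i}$, the 2-WL color on $\Struct$ is determined by the 2-WL color of $(u,v)$ in the standalone run on $\StructA_i$ (plus a tag for the component), and (b) for a crossing pair $(u,v)$ with $u \in \vertices{\StructA_1}$ and $v \in \vertices{\StructA_2}$, the 2-WL color is determined by the pair consisting of the 2-WL color of $u$ in the run on $\StructA_1$ and the 2-WL color of $v$ in the run on $\StructA_2$ (with a tag for the direction). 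The inductive step is a case split on the middle vertex $w$ of a refinement triangle $(u,w,v)$: either $w$ sits in the same component as both endpoints of a plain edge, or one of the two subedges is crossing, in which case the invariant for the previous round splits the relevant color into a product of plain-component data.

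Once this invariant is established, parts 1 and 2 are immediate corollaries. Part 1 follows because the invariant forces every crossing color $\col$ to group exactly those pairs $(u,v)$ with $u$ in some fixed plain fiber $\ccA$ of $\StructA_1$ and $v$ in some fixed plain fiber $\ccB$ of $\StructA_2$, which is precisely $(\ccA^\Struct \times \ccB^\Struct) \cap \crossingEdges{\Struct}$. Part 2 follows by reading the plain-color clause of the invariant: the restriction $\reduct{\coConf{\StructA}[\vertices{\StructA_i}]}{\sigB_i}$ induces the same partition on $\vertices{\StructA_i}^2$ as $\coConf{\StructA_i}$, so the two configurations agree after renaming the color symbols.

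For part 3, I would give an explicit polynomial-time construction of $\sketch{\Struct}$ from $\sketch{\StructA_1}$ and $\sketch{\StructA_2}$. The $\sig$-signature is the disjoint union of the two original signatures; the color signature $\sigB$ consists of the plain colors carried over from $\sigB_1$ and $\sigB_2$, together with one crossing color for each ordered pair of fibers from different components (by part 1). The symbolic subset relation for plain colors is inherited from the two sketches, and every crossing color is symbolically disjoint from all original relations since there are no crossing edges in any $\rel \in \sig$. The intersection function $q(\colA,\colB,\colC)$ is computed by case analysis on the plain/crossing type of each argument: configurations that would require an edge inconsistent with the component split yield $0$; purely plain triples within a single component are copied from $\sketch{\StructA_i}$; mixed triples with one or two crossing colors factor through a product of fiber sizes and intersection numbers read off from the two individual sketches (for instance, a triple $(\colA,\colB,\colC)$ with $\colA$ crossing from fiber $\ccA$ in $\StructA_1$ to $\ccB$ in $\StructA_2$, $\colB$ plain in $\StructA_1$ with target in $\ccA'$, and $\colC$ crossing from $\ccA'$ to $\ccB$, reduces to the number of plain $\colB$-triangles inside $\StructA_1$ determining the $\ccA'$-target, times a $0/1$ indicator that the crossing colors are consistent). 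Part 4 is then a consequence of part 3: since the construction is deterministic and symmetric in its two inputs, equal unordered pairs of component sketches yield equal joint sketches, which is the contrapositive of the claim.

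The main obstacle is the bookkeeping in part 3 when $\StructA_1$ and $\StructA_2$ are isomorphic (or merely share some fiber-isomorphism types): in this case the invariant forces certain plain fibers from the two components to fuse into a single color class of $\coConf{\Struct}$, and the algorithm must detect this by comparing the canonical forms of the two sketches before naming the crossing colors. Once a canonical matching of identical pieces of the two sketches is fixed, the case-by-case formulas for intersection numbers go through without further complication.
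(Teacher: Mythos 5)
This lemma is not proved in the paper at all: it is imported verbatim as Lemma~8 of \cite{GroheSchweitzerWiebking2021}, so there is no in-paper argument to compare against. Judged on its own merits, your sketch is essentially the standard (and presumably the cited) argument, and I see no missing idea. Two remarks. First, your round-by-round induction on 2-WL refinement works but is more delicate than you let on: at round $t$ the colour of a plain pair $(u,w)$ determines the round-$(t-1)$ loop colour of $w$, not the round-$t$ one, so the inductive step for a crossing pair, which needs the multiset of pairs $\bigl(c_t(u,w), c_t(w,v)\bigr)$ with $c_t(w,v)$ depending on the \emph{current} fibre of $w$, has an off-by-one to manage. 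The cleaner route is to skip the induction entirely: define the candidate configuration at the fixed point (plain cells taken from $\coConf{\StructA_1}$ and $\coConf{\StructA_2}$, crossing cells as fibre products with a direction tag), verify its coherence by the same case split on the component of the middle vertex, and sandwich it against $\coConf{\Struct}$ using that any coherent configuration's crossing colours refine products of its own fibres and that its restriction to either component is coherent. Second, you are right that the real content of parts 3 and 4 is deciding which colour classes of the two component configurations fuse into a single class of $\coConf{\Struct}$ (this is exactly why part 2 is stated ``up to renaming'' with $\sigB_i$ defined by nonempty intersection, and why part 4 speaks of the \emph{unordered} pair of sketches); your proposal to resolve this by canonically matching identical pieces of the two sketches before naming the crossing colours is the right move, and the symmetry of the resulting construction is what yields part 4.
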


\subsubsection{The ``Direct Product'' Property and \texorpdfstring{$\refineSym$}{refine}-Operations}

We now develop tools to deal with normalized DeepWL+WSC-algorithms.
First, we investigate the $\refineSym$-operation.
Intuitively, our goal is to preserve the ``direct product'' property for crossing colors as stated in Lemma~\ref{lem:normalized-direct-product}.
In principle, if we only create plain vertices, this property is preserved,
but a~$\refineSym$-operation can violate it.

\begin{figure}
	\centering
	\def\n{3}
	\def\colors{{"red", "blue", "green"}}
	\subfloat[][]{
		\begin{tikzpicture}
			
			\begin{scope}[yscale=0.75, xscale=0.5]
				\foreach \i in {1,...,\n}{
					\node [vertex, black] (v\i1) at (-2,\i) {};
					\node [vertex, black] (v\i2) at (2,\i)  {};
				}
				\draw [draw, black]
				(-1,0.5*\n+0.5) arc (0:20:5)
				(-1,0.5*\n+0.5) arc (0:-20:5)
				(+1,0.5*\n+0.5) arc (180:160:5)
				(+1,0.5*\n+0.5) arc (180:200:5);
				\foreach \i in {1,...,\n}{
					\foreach \j in {1,...,\n} {
						\path[-,draw, cyan, thick]
						(v\i1) to (v\j2);
					}
				}		
				
			\end{scope}
		\end{tikzpicture}
		\label{fig:refine-and-direct-product-property-sub-begin}	
	}%
	\hspace{3em}%
	\subfloat[][]{
		\begin{tikzpicture}
				\begin{scope}[yscale=0.75, xscale=0.5]
				
				\foreach \i in {1,...,\n}{
					\node [vertex, black] (v\i1) at (-2,\i) {};
					\node [vertex, black] (v\i2) at (2,\i)  {};
				}
				\draw [draw, black]
				(-1,0.5*\n+0.5) arc (0:20:5)
				(-1,0.5*\n+0.5) arc (0:-20:5)
				(+1,0.5*\n+0.5) arc (180:160:5)
				(+1,0.5*\n+0.5) arc (180:200:5);
				
				\foreach \i in {1,...,\n}{
					\foreach \j in {1,...,\n} {
						\ifthenelse{\i=\j}{
						}{
							\path[-,draw, cyan, thick]
							(v\i1) to (v\j2);
						}
					}
				}
				\foreach \i in {1,...,\n}{
					\path[-,draw, magenta, thick]
					(v\i1) to (v\i2);
				}
			\end{scope}
		\end{tikzpicture}	
		\label{fig:refine-and-direct-product-property-sub-refined-relation}
	}%
	\hspace{3em}%
	\subfloat[][]{
		\begin{tikzpicture}
			\begin{scope}[yscale=0.75, xscale=0.5]
				\foreach \i in {1,...,\n}{
					\pgfmathparse{\colors[\i-1]};
					\let\c\pgfmathresult
					\node [vertex, \c] (v\i1) at (-2,\i) {};
					\node [vertex, \c] (v\i2) at (2,\i)  {};
				}
				\draw [draw, black]
				(-1,0.5*\n+0.5) arc (0:20:5)
				(-1,0.5*\n+0.5) arc (0:-20:5)
				(+1,0.5*\n+0.5) arc (180:160:5)
				(+1,0.5*\n+0.5) arc (180:200:5);
				
				\foreach \i in {1,...,\n}{
					\foreach \j in {1,...,\n} {
						\ifthenelse{\i=\j}{
						}{
							\path[-,draw, cyan, thick]
							(v\i1) to (v\j2);
						}
					}
				}
				\foreach \i in {1,...,\n}{
					\path[-,draw, magenta, thick]
					(v\i1) to (v\i2);
				}
			\end{scope}
		\end{tikzpicture}
		\label{fig:refine-and-direct-product-property-sub-refined-vertices}
	}%
	\hspace{3em}%
	\subfloat[][]{
		\begin{tikzpicture}
			\begin{scope}[yscale=0.75, xscale=0.5]
				\foreach \i in {1,...,\n}{
					\pgfmathparse{\colors[\i-1]};
					\let\c\pgfmathresult
					\node [vertex, \c] (v\i1) at (-2,\i) {};
					\node [vertex, \c] (v\i2) at (2,\i)  {};
				}
				\draw [draw, black]
				(-1,0.5*\n+0.5) arc (0:20:5)
				(-1,0.5*\n+0.5) arc (0:-20:5)
				(+1,0.5*\n+0.5) arc (180:160:5)
				(+1,0.5*\n+0.5) arc (180:200:5);
				
				\foreach \i in {1,...,\n}{
					\foreach \j in {1,...,\n} {
						\ifthenelse{\i=\j}{
						}{
							\pgfmathparse{\colors[\i-1]};
							\let\ca\pgfmathresult
							\pgfmathparse{\colors[\j-1]};
							\let\cb\pgfmathresult
							\path[-,draw=\ca!50!\cb!50!white, thick]
							(v\i1) to (v\j2);
						}
					}
				}
				\foreach \i in {1,...,\n}{
					\pgfmathparse{\colors[\i-1]};
					\let\c\pgfmathresult
					\path[-,draw, \c!80!black, thick]
					(v\i1) to (v\i2);
				}
			\end{scope}
		\end{tikzpicture}
		\label{fig:refine-and-direct-product-property-sub-decompose}
	}
	\caption{
		An example for $\refineSym$-operations and the ``direct product'' property:
		In~\protect\subref{fig:refine-and-direct-product-property-sub-begin}, a coherent structure $\StructA_1 \disunion\StructA_2$ with a black fiber
		satisfying the ``direct product'' property,
		i.e., all edges between black vertices are light blue.
		Assume that in each component the black vertices can be ordered.
		In~\protect\subref{fig:refine-and-direct-product-property-sub-refined-relation}, the structure  obtained after
		refining the light blue color with an algorithm
		accepting exactly the edges between the $i$-th black vertices in both structures.
		This yields a red color.
		The structure is again coherent but does not have the ``direct product'' property.
		Ordering the vertices is not passed through the $\refineSym$-operation.
		In~\protect\subref{fig:refine-and-direct-product-property-sub-refined-vertices},
		the structure  resulting after
		first ordering the black vertices 
		and then executing the same $\refineSym$-operation.
		This structure has the ``direct product'' property:
		\protect\subref{fig:refine-and-direct-product-property-sub-decompose} shows the decomposition of the light blue relation and the red relation
		into colors.
	}
	\label{fig:refine-and-direct-product-property}	
\end{figure}
We give an illustrating example (cf.\ Figure~\ref{fig:refine-and-direct-product-property}):
Assume we are given a normalized HF-structure $\Struct = \StructA_1\disunion\StructA_2$,
both components of the same size,
where all vertices are in the same fiber and consequently all crossing edges are in the same color~$\col$ (Figure~\ref{fig:refine-and-direct-product-property-sub-begin})
Further, assume that some DeepWL+WSC-algorithm can linearly order the vertices in each component.
Then we can define an algorithm~$\dwla$ with the following property. For $\set{\vertA,\vertB} \subseteq \col^\Struct$, the algorithm accepts $(\StructA_1\disunion\StructA_2, \set{\vertA,\vertB})$
if and only if~$\vertA$ and~$\vertB$ are each the $i$-th vertex in their component for some~$i$.
If we refine~$\col$ with~$\dwla$,
a relation~$\rel$ containing a perfect matching between the components is added (Figure~\ref{fig:refine-and-direct-product-property-sub-refined-relation}).
But the information, that the atoms can be ordered, is not ``returned''.
The result is again a coherent configuration with only one fiber and so the ``direct product'' property is violated.
To avoid this problem, we need to distinguish the atoms first and
then execute the $\refineSym$-operation (Figure~\ref{fig:refine-and-direct-product-property-sub-refined-vertices}).

We will show now that this strategy generalizes to  arbitrary $\refineSym$-operations.
To do so, we need to establish the following technical lemmas.
The first lemma states that instead of computing on $\StructA_1 \disunion \StructA_2$,
we can also compute on the structure $(\StructA_1 \disunion \StructA_2, \StructVA_1)$
(that is, we add a vertex class containing all vertices of~$\StructA_1$),
where we artificially distinguish the two components,
i.e.,~we remove automorphisms exchanging the components.

\begin{lemma}
	\label{lem:normalized-distinguish-components}
	For every normalized DeepWL+WSC-algorithm~$\dwla$,
	there is a normalized algorithm~$\hat{\dwla}$ 
	such that for all connected HF-structures~$\StructA_1$ and~$\StructA_2$ the algorithm~$\dwla$ accepts (or respectively rejects) $\StructA_1 \disunion \StructA_2$
	if and only if~$\hat{\dwla}$ accepts $(\StructA_1 \disunion \StructA_2, \StructVA_1)$.
	Polynomial running time is preserved.
	For every HF-structures $\StructA_1$,~$\StructA_2$,~$\StructB_1$, and~$\StructB_2$ it holds that
	\begin{align*}
		\text{if } \run{\dwla}{\StructA_1 \disunion \StructA_2} &\neq \run{\dwla}{\StructB_1 \disunion \StructB_2},\\
		\text{then } \run{\hat{\dwla}}{(\StructA_1 \disunion \StructA_2, \StructVA_1)} &\neq \run{\hat{\dwla}}{(\StructB_1 \disunion \StructB_2, \StructVB_1)}.
	\end{align*}
\end{lemma}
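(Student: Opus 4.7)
The plan is to define $\hat{\dwla}$ as a simulation of $\dwla$ that carries the distinguished vertex class $\vc_1 := \StructVA_1$ as auxiliary information. Because $\vc_1 \subseteq \vertices{\StructA_1}$ is plain, augmenting any normalized cloud by $\vc_1$ preserves normalization, so every operation $\dwla$ would perform can be performed by $\hat{\dwla}$ without breaking the normalized invariant. On input $(\StructA_1 \disunion \StructA_2, \vc_1)$, $\hat{\dwla}$ internally runs $\dwla$'s Turing machine while translating between $\dwla$'s view of the cloud and its own.

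The subtlety is that the algebraic sketch of $(\StructA_1 \disunion \StructA_2, \vc_1)$ can strictly refine $\sketch{\StructA_1 \disunion \StructA_2}$: when $\StructA_1 \iso \StructA_2$, automorphisms swapping the two components are eliminated, and fibers may split accordingly. To recover $\dwla$'s view, $\hat{\dwla}$ computes the coarsened sketch from its own: using Lemma~\ref{lem:restrict-sketch} it extracts $\sketch{\StructA_1}$ and $\sketch{\StructA_2}$, and by part~3 of Lemma~\ref{lem:normalized-direct-product} reconstructs $\sketch{\StructA_1 \disunion \StructA_2}$. This yields a Turing-machine-computable bijection between $\dwla$'s colors and groups of $\hat{\dwla}$'s colors. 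Whenever $\dwla$ would operate on a color $\relColA$, $\hat{\dwla}$ first assembles the union of the corresponding fibers or colors via $\createSym$, then executes the operation on the resulting relation. $\refineSym$-operations recurse into hat-versions $\hat{\dwla}_i$ of the subalgorithms, with the additional vertex class passed through via the ordered-input mechanism of Lemma~\ref{lem:deepwl-wsc-plus}.

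Acceptance equivalence is immediate, since $\hat{\dwla}$ mirrors $\dwla$'s decisions exactly. Run distinguishability follows because the internal run of $\hat{\dwla}$ deterministically encodes $\run{\dwla}{\StructA_1 \disunion \StructA_2}$ via the translation, so distinct original runs imply distinct augmented runs; polynomial time is preserved as the per-step overhead is polynomial.

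The main obstacle is handling $\choiceSym$-operations. In $\dwla$, $\choice{\cc}$ acts on a fiber $\cc$ that forms an orbit of the current structure, certified by $\dwlmwit$. In $\hat{\dwla}$'s finer configuration the corresponding orbit is a possibly proper subset of $\cc$, since automorphisms swapping components are no longer permitted. We must show that a witnessing machine for $\dwla$'s orbit yields one for $\hat{\dwla}$'s orbit by restricting to automorphisms that stabilize $\vc_1$ setwise, a condition determinable from the algebraic sketch. The precise realization depends on the yet-to-be-fixed encoding of witnessing automorphisms in normalized DeepWL+WSC, but the overall construction plan and polynomial-time bounds should work either way.
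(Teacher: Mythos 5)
Your proposal is correct and follows essentially the same route as the paper's proof: track the run of $\dwla$ by recovering the coarser sketch via Lemma~\ref{lem:restrict-sketch} and Lemma~\ref{lem:normalized-direct-product}, translate each color of $\dwla$'s view into a union of colors of the refined view before executing operations, restrict $\choiceSym$ to the sub-orbit inside one component with the witnessing machine discarding component-swapping automorphisms, and extract $\run{\dwla}{\cdot}$ from $\run{\hat{\dwla}}{\cdot}$ for the distinguishability claim. The only detail the paper adds beyond your sketch is the bookkeeping for undirected crossing colors becoming directed once the components are distinguished (individualize a directed edge, then form the union of the two singleton classes), which fits within your framework.
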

\begin{proof}
	The proof is by induction on the nesting depth of DeepWL+WSC-algorithms.
	Let $\dwla = (\dwlmout, \dwlmwit, \dwla_1, \dots, \dwla_\ell)$
	be a normalized DeepWL+WSC-algorithm.
	By induction hypothesis, let
	$\hat{\dwla}_1, \dots, \hat{\dwla}_\ell$ be normalized DeepWL+WSC-algorithms
	satisfying the claim for the algorithms $\dwla_1, \dots, \dwla_\ell$.
	
	We construct DeepWL+WSC-machines~$\hatdwlmout$ and~$\hatdwlmwit$,
	i.e., for $\dwlm \in \set{\dwlmout,\dwlmwit}$
	we construct a DeepWL+WSC-machine~$\hat{\dwlm}$.
	Let $\StructA_1\disunion\StructA_2$ be a normalized HF-structure
	on which~$\dwlm$ is executed.
	By Lemma~\ref{lem:restrict-sketch},~$\hat{\dwlm}$ can compute
	the algebraic sketch $\sketch{\StructA_1\disunion\StructA_2}$
	from $\sketch{(\StructA_1\disunion\StructA_2, \StructVA_1)}$
	and thus track the run of~$\dwlm$.
	Note that by Lemma~\ref{lem:normalized-direct-product},
	the sketches of the individual components 
	in $\sketch{\StructA_1\disunion\StructA_2}$
	and $\sketch{(\StructA_1\disunion\StructA_2, \StructVA_1)}$ are equal possibly up to renaming of the colors.
	Additionally, the crossing colors of $(\StructA_1\disunion\StructA_2, \StructVA_1)$ are always directed.
	
	If~$\dwlm$ executes $\create{\pi}$ and~$\pi$ contains a color~$\col$ occurring in both components,
	then~$\hat{\dwlm}$ uses the two colors~$\col_1$ and~$\col_2$, each occurring in one component, whose union is equal to~$\col$.
	Every $\addPairSym$-, $\sccSym$-, and $\refineSym$-operation executed by~$\dwlm$
	is executed in the same way by~$\hat{\dwlm}$ respecting the renamed colors:
	whenever~$\dwlm$ uses a color or relation, which is now split into 
	two sets (one in each component),~$\hat{\dwlm}$
	creates the union of these colors/relations and uses this union.
	The $\refineSym$-operations yield the correct result by the induction hypothesis.
	To continue to track the run of~$\dwlm$,
	we again exploit Lemma~\ref{lem:restrict-sketch}
	to compute the sketch without the additional created  relations.

	Whenever~$\dwlm$ executes $\choice{\col}$
	(if $\choiceSym$ is executed for a relation consisting of multiple relations,
	then the algorithm will fail),
	we make the following case distinction.
	Let $\StructA' = (\StructA_1' \disunion \StructA_2', \StructVA_1)$
	be the current content of the cloud. 
	If~$\col$ is a crossing color,~$\hat{\dwlm}$ executes $\choice{\colB}$
	for the color~$\colB$ satisfying $\colB^{\Struct'} = \col^{\Struct'} \cap (\StructVA_1' \times \StructVA_2')$ (which is DeepWL-computable).
	If~$\col$ was a directed color, there is nothing more to do
	because the individualized edge $(\vertA,\vertB)$ is contained in $\col^{\Struct'}$.
	If~$\col$ was an undirected color,
	then a directed edge $(\vertA,\vertB)$ 
	instead of the undirected edge $\set{\vertA,\vertB}$
	is individualized.
	Recall that $(\vertA,\vertB)$ is individualized by creating two singleton vertex classes and that $\set{\vertA,\vertB}$ is individualized by creating
	a two-element vertex class.
	So we create the union of the two singleton color classes.
	This corresponds to individualizing the undirected edge
	and we can continue to track the run of~$\dwlm$.
	
	If~$\col$ is a plain color containing edges of both components,~$\hat{\dwlm}$ executes $\choice{\colB}$
	for the color~$\colB$ satisfying $\colB^{\StructA'} = \col^{\StructA'} \cap (\StructVA_1')^2$.
	Otherwise,~$\col$ is a plain color containing only edges of one component
	and~$\hat{\dwlm}$ simply executes $\choice{\col}$.
	It is easy to see that if~$\col$ is an orbit of $\StructA_1' \disunion \StructA_2'$, then~$\colB$ is an orbit of $(\StructA_1' \disunion \StructA_2', \StructVA_1)$.
	Again by Lemma~\ref{lem:restrict-sketch},
	we compute the sketch without the additional created relations
	to track the run of~$\dwlm$.
	
	We finally have to modify the witnessing machine~$\hatdwlmwit$.
	Independent of the precise way we will encode sets of witnessing automorphisms for normalized DeepWL+WSC-machines,
	we may proceed as follows (an explicit description of the encoding is given in Section~\ref{sec:simulation}):~%
	$\hatdwlmwit$ first computes the set of automorphisms given by~$\dwlmwit$.
	Then it removes the automorphisms
	exchanging the two components
	and outputs the set of remaining automorphisms.
	This way, all choices are witnessed
	if all choices in the execution of~$\dwla$ are witnessed. 
	
	Let $\StructA_1, \StructA_2, \StructB_1$, and $\StructB_2$ be connected HF-structures.
	Assume $\run{\dwla}{\StructA_1 \disunion \StructA_2} \neq \run{\dwla}{\StructB_1 \disunion \StructB_2}$.
	The machine~$\hat{\dwlm}$ tracks the run of~$\dwlm$.
	Hence, the run of~$\hat{\dwlm}$ contains the configurations of the Turing machines in~$\dwlm$ and also the algebraic sketches in the run of~$\dwlm$. These sketches
	were computed by~$\hat{\dwlm}$ using Lemma~\ref{lem:restrict-sketch}.
	So $\run{\hat{\dwla}}{\StructA_1 \disunion \StructA_2}$
	can be extracted from $\run{\hat{\dwla}}{(\StructA_1 \disunion \StructA_2, \StructVA_1)}$ (and likewise for $\StructB_1 \disunion \StructB_2$).
	We conclude that $\run{\hat{\dwla}}{(\StructA_1 \disunion \StructA_2, \StructVA_1)} \neq \run{\hat{\dwla}}{(\StructB_1 \disunion \StructB_2, \StructVB_1)}$.
\end{proof}

We now identify sets of runs
as a possibility to distinguish vertices,
so that 
executing a $\refineSym$-operation on a crossing relation
preserves the ``direct product'' property.

\begin{lemma}
	\label{lem:distinguish-cross-relation-distinguish-vertices}
	For every normalized DeepWL+WSC-algorithm~$\dwla$,
	every normalized HF-structure $\Struct =\Struct_1 \disunion \Struct_2$,
	every crossing color $\col^\Struct = (\ccA^\Struct \times \ccB^\Struct)\cap \crossingEdges{\Struct}$
	for some fibers~$\ccA$ and~$\ccB$,
	every $\vertA \in \ccA^\Struct \cap \verticesOf{i}{\Struct}$ for some $i \in [2]$,
	and
	every $\set{\vertA,\vertB},\set{\vertA,\vertB'} \in \col^\Struct\cup (\inv{\col})^\Struct$,
	we have the following:
	I~$\dwla$ accepts $(\Struct, \set{\vertA,\vertB})$
	and~$\dwla$ does not accept $(\Struct, \set{\vertA,\vertB'})$,
	then \[\setcond[\big]{\run{\dwla}{(\Struct, \set{\vertC,\vertB})}}{\vertC \in \ccA \cap \verticesOf{i}{\Struct}} \neq
	\setcond[\big]{\run{\dwla}{ (\Struct, \set{\vertC,\vertB'})}}{\vertC \in \ccA \cap \verticesOf{i}{\Struct}}.\] 
\end{lemma}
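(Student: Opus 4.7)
The plan is to exploit normalization of $\dwla$ in order to argue that $\run{\dwla}{(\Struct, \set{\vertC, \vertY})}$ factorizes into independent per-component traces, one depending only on $\vertC$ and one depending only on $\vertY$.

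First the geometric setup. Since $\set{\vertA, \vertB}$ and $\set{\vertA, \vertB'}$ both lie in $\col^\Struct \cup (\inv{\col})^\Struct$, the direct-product description $\col^\Struct = (\ccA^\Struct \times \ccB^\Struct) \cap \crossingEdges{\Struct}$ from Lemma~\ref{lem:normalized-direct-product} forces $\vertB, \vertB' \in \ccB^\Struct \cap \verticesOf{3-i}{\Struct}$, so every $\set{\vertC, \vertB}$ and $\set{\vertC, \vertB'}$ with $\vertC \in \ccA^\Struct \cap \verticesOf{i}{\Struct}$ is again crossing.

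Second, I invoke Lemma~\ref{lem:normalized-distinguish-components} to replace $\dwla$ by a normalized algorithm $\hat{\dwla}$ that canonically labels the two components, while preserving run inequality. Individualizing a crossing edge $\set{\vertC, \vertY}$ then amounts to individualizing $\vertC$ inside $\Struct_i$ and $\vertY$ inside $\Struct_{3-i}$, together with a bridging vertex class which, by the direct-product property, contributes no algebraic information beyond what the two per-component individualizations already determine.

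Third, I argue that the run factorizes. Since $\hat{\dwla}$ is normalized, the cloud never acquires crossing vertices, so at every step the cloud has the form $\Struct' = \Struct_1' \disunion \Struct_2'$ for normalized components $\Struct_j'$. By Lemma~\ref{lem:normalized-direct-product}, each algebraic sketch written on the interaction tape is determined by, and decodable into, the pair of per-component sketches $(\sketch{\Struct_1'}, \sketch{\Struct_2'})$, with all crossing information a trivial function thereof. An inductive analysis over the operations executed by $\hat{\dwla}$ (including the operations executed inside the subalgorithms invoked by $\refineSym$) shows that this pair structure is preserved step by step. Consequently, $\run{\hat{\dwla}}{(\Struct, \set{\vertC, \vertY})}$ determines, and is determined by, an ordered pair $(\rho_i(\vertC), \rho_{3-i}(\vertY))$ of traces, each depending solely on the corresponding individualized component.

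Fourth, the conclusion. By hypothesis $\run{\dwla}{(\Struct, \set{\vertA, \vertB})}$ is accepting while $\run{\dwla}{(\Struct, \set{\vertA, \vertB'})}$ is not, so the two runs differ; Lemma~\ref{lem:normalized-distinguish-components} transports this inequality to $\hat{\dwla}$. Since the $\vertA$-trace $\rho_i(\vertA)$ agrees on both sides, the factorization forces $\rho_{3-i}(\vertB) \neq \rho_{3-i}(\vertB')$. Every element of $\setcond{\run{\hat{\dwla}}{(\Struct, \set{\vertC, \vertB})}}{\vertC}$ thus has second coordinate $\rho_{3-i}(\vertB)$, while every element of the analogous set for $\vertB'$ has second coordinate $\rho_{3-i}(\vertB')$. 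The two sets are therefore disjoint and in particular distinct, and the inequality-preserving direction of Lemma~\ref{lem:normalized-distinguish-components} transports this back to $\dwla$.

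The main obstacle is the third step. A priori, individualizing a crossing edge creates a vertex class simultaneously touching both components, which threatens the pair structure; moreover, because $\hat{\dwla}$ can execute $\refineSym$-operations that invoke further DeepWL+WSC sub-algorithms, one has to maintain the factorization inductively across nested subcalls. The formal proof will require careful bookkeeping, showing that under normalization every sketch appearing in the run—including those produced inside sub-algorithms—splits canonically as a disjoint pair of per-component sketches, so that no cross-component entanglement can sneak into the trace.
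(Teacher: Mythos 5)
Your starting point --- normalization plus Lemma~\ref{lem:normalized-direct-product}, so that everything in the run is recoverable from the two component sketches --- is the right one, but the argument has two genuine gaps. First, the unconditional factorization in your third step is false: the trace of one component is \emph{not} a function of that component's individualized vertex alone, because which operations the machine applies to component $i$ depends on its configuration, which depends on the full algebraic sketch and hence on the other component as well (an algorithm may inspect the sketch of the $\vertY$-side and, depending on what it sees, apply different operations to the $\vertC$-side). What is true is only a conditional statement: two runs that have agreed so far and whose $\vertB'$-components started out identical keep those components identical. The paper exploits exactly this via a contradiction/first-divergence argument: assuming $\run{\dwla}{(\Struct,\set{\vertA,\vertB})}=\run{\dwla}{(\Struct,\set{\vertA',\vertB'})}$ for some $\vertA'$, it compares the runs on $\set{\vertA',\vertB'}$ and $\set{\vertA,\vertB'}$, which share the $\vertB'$-component and must diverge since one accepts and the other does not, and locates the first divergence in the $\vertA$-versus-$\vertA'$ component, whose sketches are visible in the run --- a contradiction. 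Your factorization would instead yield disjointness of the two run-sets, which is stronger than the claimed inequality and is not supported by the conditional statement that actually holds.

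Second, your closing step applies Lemma~\ref{lem:normalized-distinguish-components} in the direction it does not provide: that lemma says inequality of $\dwla$-runs implies inequality of $\hat{\dwla}$-runs, whereas you need the converse (distinct $\hat{\dwla}$-run-sets imply distinct $\dwla$-run-sets) to transport your conclusion back. Since $\hat{\dwla}$ sees strictly more than $\dwla$ (the component labels), nothing prevents its runs from differing while the $\dwla$-runs coincide, so the transport fails. The detour through $\hat{\dwla}$ is in fact unnecessary for this lemma --- the paper's proof works directly with $\dwla$ and the undirected individualizations; the passage to $\hat{\dwla}$ is only needed later, in Lemma~\ref{lem:compute-distinguish-vertices-function}, and there the implication is used in the direction the lemma actually gives.
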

\begin{proof}
	Let~$\dwla$ be a normalized DeepWL+WSC-algorithm,
	$\Struct =\Struct_1 \disunion \Struct_2$ be a normalized HF-structure,
	and suppose $\col$, $\vertA\in \ccA^\Struct \cap \verticesOf{i}{\Struct}$, $(\vertA,\vertB)$, and $\set{\vertA,\vertB}$ satisfy the conditions of the lemma.
	Assume that~%
	$\dwla$ accepts $(\Struct, \set{\vertA,\vertB})$ and~%
	$\dwla$ does not accept $(\Struct, \set{\vertA,\vertB'})$.
	We show $\run{\dwla}{(\Struct, \set{\vertA,\vertB})} \notin \setcond{\run{\dwla}{(\Struct, \set{\vertC,\vertB'})}}{\vertC \in \ccA^\StructA \cap \verticesOf{i}{\Struct}}$.
	Assume for sake of contradiction that there is a vertex $\vertA' \in \ccA^\StructA \cap \verticesOf{i}{\Struct}$
	such that $\run{\dwla}{(\Struct, \set{\vertA,\vertB})} = \run{\dwla}{(\Struct, \set{\vertA',\vertB'})}$.
	At some point during the execution of~$\dwla$,
	$\run{\dwla}{(\Struct, \set{\vertA',\vertB'})}$
	has to differ from $\run{\dwla}{(\Struct, \set{\vertA,\vertB'})}$
	because $\run{\dwla}{(\Struct, \set{\vertA,\vertB})} = \run{\dwla}{(\Struct, \set{\vertA',\vertB'})}$ is accepting
	and $\run{\dwla}{(\Struct, \set{\vertA,\vertB'})}$ is not accepting.
	Because~$\dwla$ is normalized, at every point in time the current content of the cloud~$\StructB$ satisfies $\vertices{\StructB} = \verticesOf{1}{\StructB} \cup  \verticesOf{2}{\StructB}$.
	Before the first moment where the two runs differ,
	the component of~$\vertB'$ is equal in both clouds
	because the same operations were executed
	and the initial components of~$\vertB'$ were equal
	(in $(\Struct, \set{\vertA,\vertB'})$ and in $(\Struct, \set{\vertA',\vertB'})$
	the same vertex is individualized in this component).
	In particular, the components have the same sketches.
	So~$\dwla$ can only have a run on $\set{\vertA',\vertB'}$ that is different from its run on $\set{\vertA,\vertB'}$
	if the sketches of the components of~$\vertA$ and~$\vertA'$ differ
	(Lemma~\ref{lem:normalized-direct-product}).
	But this contradicts that $\run{\dwla}{(\Struct, \set{\vertA,\vertB})} = \run{\dwla}{ (\Struct, \set{\vertA',\vertB'})}$
	because the sketches are contained in the run.
\end{proof}

Now we want to compute the set $\setcond{\run{\dwla}{(\Struct, \set{\vertC,\vertB})}}{\vertC \in \ccA \cap \verticesOf{i}{\Struct}}$
with a DeepWL+WSC-algorithm.
We start with the case where a directed edge $(\vertC,\vertB)$
instead of the undirected one $\set{\vertC,\vertB}$ 
is individualized.
\begin{lemma}
	\label{lem:compute-distinguish-vertices}
	For every normalized DeepWL+WSC-algorithm  $\dwla$,
	every normalized HF-structure $\Struct =\Struct_1 \disunion \Struct_2$,
	every crossing color $\col^\Struct = (\ccA^\Struct \times \ccB^\Struct)\cap \crossingEdges{\Struct}$
	for some fibers~$\ccA$ and~$\ccB$, and
	every $\vertB \in \ccB^\Struct \cap \verticesOf{j}{\Struct}$ for some $\set{i,j}=[2]$,
	there is a normalized DeepWL+WSC-algorithm
	computing on input~$\vertB$ (given as a singleton vertex class) the set
	\[\setcond[\big]{\run{\dwla}{(\Struct, \vertC\vertB)}}{\vertC \in \ccA \cap \verticesOf{i}{\StructA}}\]
	if for every $(\vertA,\vertB) \in \col^\Struct$ 
	the algorithm~$\dwla$ does not fail
	on input $(\Struct, \vertA\vertB)$. 
	Polynomial runtime of the algorithm is preserved.
\end{lemma}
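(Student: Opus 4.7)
The plan is to construct a normalized DeepWL+WSC-algorithm that, given~$\vertB$ as a singleton vertex class in the underlying normalized HF-structure~$\Struct$, uses iterated $\refineSym$-operations on the plain fiber~$\ccA$ to partition~$\ccA^\Struct$ into classes on which the map $\vertC\mapsto \run{\dwla}{(\Struct, \vertC\vertB)}$ is constant, and then reads the distinct runs off the final algebraic sketch. Fix a polynomial bound~$T(n)$ on the runtime of~$\dwla$ so that, under a suitable self-delimiting encoding padded with trailing zeros, every run of~$\dwla$ on an input of size~$n$ is a 0/1-string of length exactly~$T(n)$. The set of distinct runs has size at most~$|\ccA^\Struct|$ and each entry has size~$T(n)$, so the whole output fits in polynomial space on the worktape.

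For each bit index~$k$, I use a single normalized sub-algorithm~$\dwla^*$ that takes~$k$ as an additional ordered input (via the extension of Lemma~\ref{lem:deepwl-wsc-plus}) and the singleton vertex class~$\{\vertC\}$ as its distinguished input. On a cloud structure~$\tilde\Struct$ which extends~$\Struct$ by auxiliary relations added by the outer algorithm in previous iterations and which already contains~$\vertB$ as a singleton vertex class, $\dwla^*$ first applies Lemma~\ref{lem:restrict-sketch} to recover the sketch of the reduct of~$\tilde\Struct$ to the original signature of~$\Struct$. It then simulates~$\dwla$ step by step on~$(\Struct,\vertC\vertB)$ by inlining $\dwla$'s output and witnessing machines and reusing $\dwla$'s sub-algorithms as its own, while recording the canonical 0/1-encoding of~$\run{\dwla}{(\Struct,\vertC\vertB)}$ on an auxiliary portion of the worktape; at the end it accepts iff bit~$k$ of the recording is~$1$. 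Since~$\dwla$ is normalized and only plain-component vertices are individualized,~$\dwla^*$ is normalized as well, polynomial runtime carries over, and, by the hypothesis of the lemma, every witnessed choice of~$\dwla$ is again successfully witnessed by~$\dwla^*$.

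The outer algorithm loops $k = 1,\dots,T(n)$ and performs $\refine{\ccA}{j,k}$, where~$j$ indexes~$\dwla^*$ in its sub-algorithm list. Each call adds a relation~$\rel_k$ consisting of precisely those $\vertC \in \ccA^\Struct$ whose $k$-th run bit equals~$1$. After all~$T(n)$ refinements, the updated coherent configuration splits~$\ccA$ into fibers each consisting of vertices with identical run. The outer algorithm then reads, for each new fiber $\colA \subseteq \ccA$, its signature $\sigma(\colA) := (\mathbf{1}[\colA \subseteq \rel_k])_{k=1}^{T(n)}$ directly from the symbolic subset relation in the algebraic sketch; by construction $\sigma(\colA) = \run{\dwla}{(\Struct,\vertC\vertB)}$ for every $\vertC \in \colA^{\tilde\Struct}$, so the set of distinct signatures is exactly the desired set of runs and can be written to the worktape.

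The main technical point is that the outer algorithm mutates its cloud across iterations, so each invocation of~$\dwla^*$ must strip off the auxiliary relations before it may internally simulate~$\dwla$ on the original structure; this is resolved uniformly by Lemma~\ref{lem:restrict-sketch}, with the names of the original relations tracked on the outer algorithm's worktape and passed to~$\dwla^*$ as part of the ordered input. Normalization is preserved throughout because only the plain fiber~$\ccA$ is refined and~$\dwla^*$ merely simulates the normalized algorithm~$\dwla$, so no crossing vertices are ever created. Polynomial runtime is immediate: the outer loop has~$T(n)$ iterations and each $\refineSym$-call invokes the polynomial-time~$\dwla^*$ at most~$|\ccA^\Struct|$ times.
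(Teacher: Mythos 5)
Your proposal is correct and follows essentially the same route as the paper: iterate $\refineSym$-operations over bit positions, each invoking a sub-algorithm that simulates~$\dwla$ on $(\Struct,\vertC\vertB)$ and tests the $k$-th run bit, then read the distinct runs off the resulting refinement of~$\ccA$. The only (harmless) variation is that you pad all runs to a fixed polynomial length~$T(n)$, whereas the paper instead uses a second sub-algorithm deciding whether the run has length at least~$i$ and refines the nested classes~$\vcB_i$ to handle runs of differing lengths.
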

\begin{proof}
	Let~$\dwla$ be a normalized DeepWL+WSC-algorithm.
	For~$\dwla$, let~$\dwla'$ be a normalized DeepWL+WSC-algorithm,
	which on input~$i$,~$\vertB$, and~$\vertC$ (given as singleton vertex classes)
	simulates~$\dwla$ and
	decides whether the $i$-th position of the run $\run{\dwla}{(\Struct, \vertC\vertB)}$ is a~$1$ if~$\dwla$ does not fail.
	Similarly, let~$\dwla''$ be a normalized DeepWL+WSC-algorithm,
	which on input~$i$,~$\vertB$, and~$\vertC$
	decides whether $|\run{\dwlm}{(\Struct, \vertC\vertB)}| \geq i$
	unless~$\dwlm$ fails.
	
	Let $\Struct =\Struct_1 \disunion \Struct_2$ be a normalized HF-structure,
	$\col^\Struct = (\ccA^\Struct \times \ccB^\Struct)\cap \crossingEdges{\Struct}$ be a crossing color
	and $\vertB \in \ccB^\Struct \cap \verticesOf{j}{\Struct}$ for some $\set{i,j}=[2]$
	such that~$\dwla$ does not fail on input $(\StructA, \vertA\vertB)$
	for every $(\vertA,\vertB) \in \col^\Struct$.

	We construct a machine 
	which computes the runs $\run{\dwlm}{(\Struct, \vertC\vertB)}$
	in parallel for all $\vertC \in \ccA^\Struct$.
	The runs are encoded as binary 0/1-strings,
	which are encoded by vertex classes $\vcA_1, \dots, \vcA_m$,
	and $\vcB_1, \dots, \vcB_m$,
	where~$\vcA_i$ contains all $\ccA$-vertices
	for which the $i$-th bit of the run is a~$1$,
	and~$\vcB_i$ contains all $\ccA$-vertices for which
	the run has length at least~$i$.
	
	Initialize $\vcB_{0} := \cc$,
	that is,~$\vcB_{0}$ contains exactly the $\cc$-vertices
	on which the run of~$\dwla$ has length at least zero.
	Starting from $i=0$,
	we first determine the vertex class~$\vcB_{i+1}$ 
	by executing $\refine{\vcB_{i}}{\dwla'',i +1}$
	(we actually have to execute $\refine{\vcB_{i}}{k,i +1}$,
	where~$k$ is the index of~$\dwla''$
	in the list of nested subalgorithms).
	By definition of~$\dwla''$,~%
	$\vcB_{i+1}$  contains exactly the $\cc$-vertices,
	on which the run of~$\dwla$ has length at least~$i$.
	If $\vcB_{i+1} = \emptyset$, we stop because we computed all runs.
	Otherwise, we determine~$\vcA_{i+1}$ as a subset of~$\vcB_{i+1}$ by executing
	$\refine{\vcB_{i+1}}{\dwla', i+1}$ (again, we need to replace~$\dwla'$ by the corresponding number).
	By definition of~$\dwla'$, the class~$\vcA_{i+1}$ contains exactly the $\cc$-vertices
	for which the $(i+1)$-th bit in the run of~$\dwlm$ is a~$1$.
	Finally, we increment~$i$ and repeat.
	
	We now have encoded the runs $\run{\dwlm}{(\Struct, \vertC\vertB)}$
	with the vertex classes $\vcA_1, \dots, \vcA_m$
	and $\vcB_1, \dots, \vcB_m$.
	So we can determine which runs occur.
	Now it is easy to compute the set $\setcond{\run{\dwla}{(\Struct, \vertC\vertB)}}{\vertC \in \ccA}$ and to write it onto the tape
	because the runs can be ordered lexicographically.
	
	Regarding the runtime,
	its easy to see that~$\dwla'$ and~$\dwla''$
	run in polynomial time
	because they just simulate~$\dwla$ and check the run of~$\dwla$.
	Because the length of the run of~$\dwla$ is bounded by a polynomial,
	so is the number of iterations
	and the number of needed relations.
	That is, also the size of the algebraic sketch is bounded by a polynomial.
\end{proof}

\begin{lemma}
	\label{lem:compute-distinguish-vertices-function}
	For every normalized DeepWL+WSC-algorithm ~$\dwla$,
	every normalized HF-structure $\Struct =\Struct_1 \disunion \Struct_2$,
	every crossing color $\col^\Struct = (\ccA^\Struct \times \ccB^\Struct)\cap \crossingEdges{\Struct}$
	for some fibers~$\ccA$ and~$\ccB$,
	and every $i \in [2]$,
	we have the following:
	There is a normalized DeepWL+WSC-algorithm computing a function~$f$
	such that 
	if for every $(\vertA,\vertB) \in \col^\Struct$ 
	the algorithm~$\dwla$ does not fail
	on input $(\Struct, \vertA\vertB)$,
	then
	\begin{align*}
		f(\vertB) &\neq f(\vertB') 	\text{ whenever}\\
		\setcond[\big]{\run{\dwla}{(\Struct, \set{\vertC,\vertB}))}}{\vertC \in \ccA \cap \verticesOf{i}{\Struct}} &\neq
		\setcond[\big]{\run{\dwla}{ (\Struct, \set{\vertC,\vertB'})}}{\vertC \in \ccA \cap \verticesOf{i}{\Struct}}
	\end{align*}
	for every $\vertB,\vertB' \in \ccB^\Struct \cap\verticesOf{j}{\Struct}$
	such that $\set{i,j}=[2]$
	(the input vertex to~$f$ is given as singleton vertex class).
	Polynomial runtime is preserved.
\end{lemma}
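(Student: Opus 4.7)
The plan is to take $f(\vertB)$ to be the set $\setcond{\run{\dwla}{(\Struct, \set{\vertC,\vertB})}}{\vertC \in \ccA \cap \verticesOf{i}{\Struct}}$ itself, so that the distinguishing property is immediate, and to show that this set is computable by a normalized polynomial-time DeepWL+WSC-algorithm by adapting the construction from Lemma~\ref{lem:compute-distinguish-vertices} to undirected crossing edges.

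First, I would build two normalized auxiliary algorithms $\dwla'$ and $\dwla''$ analogous to those of Lemma~\ref{lem:compute-distinguish-vertices}. On input $(\Struct, \set{\vertC,\vertB})$ together with an index $k$ passed as ordered input (Lemma~\ref{lem:deepwl-wsc-plus}), they simulate $\dwla$ and decide, respectively, whether the $k$-th position of $\run{\dwla}{(\Struct, \set{\vertC,\vertB})}$ is $1$ and whether the run has length at least $k$. These inherit normalization and polynomial runtime from $\dwla$ because they only wrap a simulation of $\dwla$ with a check on the interaction-tape.

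Next, given $\vertB$ as a singleton vertex class, I would use $\createSym$ together with the symbolic subset relation to assemble the undirected crossing relation $R_0$ consisting of exactly the edges $\set{\vertC,\vertB}$ with $\vertC \in \ccA \cap \verticesOf{i}{\Struct}$: individualizing $\vertB$ refines $\col$, the edges incident to $\vertB$ form a DeepWL-identifiable union of sub-colors of $\col$, and we symmetrize it by taking its union with its inverse. Then I would iteratively compute, for $k = 1, 2, \dots$, crossing relations $R_k$ and $S_k \subseteq R_k$, where $R_k$ contains the undirected crossing edges whose run has length at least $k$ and $S_k$ those whose $k$-th run bit is $1$: obtain $R_{k+1}$ from $R_k$ via $\refine{R_k}{\dwla'',k+1}$ and $S_{k+1}$ from $R_{k+1}$ via $\refine{R_{k+1}}{\dwla',k+1}$. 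The loop stops once $R_{k+1}=\emptyset$, which happens after polynomially many steps since $\dwla$ runs in polynomial time. From the family $(R_k, S_k)_k$, each $\vertC \in \ccA \cap \verticesOf{i}{\Struct}$ has its run fully encoded (the $k$-th bit is $1$ iff $\set{\vertC,\vertB} \in S_k$), so the set of all distinct runs can be read off the sketch, ordered lexicographically, and written onto the work tape.

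Correctness of the distinguishing property is immediate from the definition of $f$. Normalization follows because every operation used—$\createSym$, individualization of singletons, and $\refineSym$ on undirected crossing relations—preserves the two-component decomposition of the cloud, and $\dwla$, $\dwla'$, $\dwla''$ are themselves normalized. Polynomial runtime is identical to the bookkeeping in Lemma~\ref{lem:compute-distinguish-vertices}. The main obstacle I expect is verifying that refining the symmetrized crossing relation $R_k$ really triggers $\dwla'$ and $\dwla''$ on the intended undirected individualization $\set{\vertC,\vertB}$ (not on the corresponding directed tuple): this relies on the fact that the $\refineSym$-semantics distinguishes directed from undirected relations, and on the observation that when $\set{\vertC,\vertB}$ is individualized as a two-element diagonal vertex class spanning the two components, no crossing vertex is added to the cloud and normalization is preserved inside the recursive call. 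Granted this, the construction yields the desired normalized polynomial-time DeepWL+WSC-algorithm computing~$f$.
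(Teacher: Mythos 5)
There is a genuine gap, and it sits exactly where you flagged your "main obstacle" -- but the problem is not in the recursive call, it is in the outer machine. Your bookkeeping maintains the objects $R_0, R_k, S_k$ as \emph{crossing relations} in the cloud of the algorithm computing~$f$. A normalized HF-structure is by definition a disjoint union $\Struct_1 \disunion \Struct_2$, so it contains no nonempty crossing relation; the moment you $\createSym$ the relation $R_0$ of edges $\set{\vertC,\vertB}$, or the moment a $\refine{R_k}{\dwla'',k+1}$-execution returns a nonempty proper subset $R_{k+1}\subsetneq R_k$ as a new relation, the cloud ceases to be normalized. (You are right that individualizing $\set{\vertC,\vertB}$ as a two-element \emph{diagonal} vertex class inside the recursive call is harmless, but that does not help the outer machine, which must store the results of all these calls somewhere.) This is not a technicality one can wave away: the inability to refine crossing colors while staying normalized is the central obstruction of Section~\ref{sec:normalized-normalized} (cf.\ Figure~\ref{fig:refine-and-direct-product-property}), and Lemma~\ref{lem:compute-distinguish-vertices-function} is itself one of the tools built to circumvent it, so assuming the capability is circular. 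The obvious repair -- encode the results in \emph{plain} vertex classes $\vcB_k \subseteq \ccA\cap\verticesOf{i}{\Struct}$ as in Lemma~\ref{lem:compute-distinguish-vertices} -- reintroduces the other half of the problem: refining a plain vertex class hands the subalgorithm the structure with $\vertC$ and $\vertB$ individualized as \emph{separate singletons}, i.e.\ the directed individualization $(\Struct,\vertC\vertB)$, whose run of $\dwla$ need not coincide with $\run{\dwla}{(\Struct,\set{\vertC,\vertB})}$ (the components are distinguished, so orbits, sketches and witnessed choices can all differ, and there is no $\forgetSym$-operation to erase the distinction).

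The missing idea is the reduction from undirected to directed individualizations. The paper does not compute the set $\setcond{\run{\dwla}{(\Struct,\set{\vertC,\vertB})}}{\vertC\in\ccA\cap\verticesOf{i}{\Struct}}$ at all; it first applies Lemma~\ref{lem:normalized-distinguish-components} to obtain $\hat{\dwla}$, which runs on the component-distinguished input $(\Struct,\vertC\vertB)$, accepts iff $\dwla$ accepts $(\Struct,\set{\vertC,\vertB})$, and -- crucially -- satisfies the run-preservation property that distinct runs of $\dwla$ force distinct runs of $\hat{\dwla}$. It then sets $f(\vertB):=\setcond{\run{\hat{\dwla}}{(\Struct,\vertC\vertB)}}{\vertC\in\ccA\cap\verticesOf{i}{\Struct}}$, which is computable by Lemma~\ref{lem:compute-distinguish-vertices} verbatim (all bookkeeping in plain vertex classes), and the required implication ``undirected run-sets differ $\Rightarrow$ $f$-values differ'' follows from the run-preservation property. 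Note the lemma only asks that $f$ \emph{distinguish} whenever the undirected run-sets differ, not that $f$ equal that set -- exploiting this slack is what makes the detour through $\hat{\dwla}$ work.
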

\begin{proof}
	Let~$\dwla$ be a normalized DeepWL+WSC-algorithm and
	let~$\hat{\dwla}$ be the normalized DeepWL+WSC-algorithm given by Lemma~\ref{lem:normalized-distinguish-components} for~$\dwla$,
	that is,~$\hat{\dwla}$ accepts $(\Struct, \vertA\vertB)$
	if and only if~$\dwla$ accepts $(\Struct, \set{\vertA,\vertB})$
	for every $(\vertA,\vertB) \in \col^\Struct$
	for every crossing color $\col^\Struct = (\ccA^\Struct \times \ccB^\Struct)\cap \crossingEdges{\Struct}$ of every normalized HF-structure $\Struct$.
	Note here that individualizing~$\vertA$ implicitly distinguishes the two components and we do not need to create a vertex class for~$\StructVA_1$.
	We then,  given~$\vertB$ as singleton vertex class, compute using Lemma~\ref{lem:compute-distinguish-vertices}
	the set $f(\vertB) := \setcond{\run{\hat{\dwla}}{(\Struct, \vertC\vertB)}}{\vertC \in \ccA \cap \verticesOf{i}{\StructA}}$.
	By the properties of~$\hat{\dwla}$ granted by Lemma~\ref{lem:normalized-distinguish-components}, it
	follows that 
	\begin{align*}
		\text{if }\hspace{-0.5pt}\setcond[\big]{\run{\dwla}{(\Struct, \set{\vertC,\vertB})}}{\vertC \in \ccA \cap \verticesOf{i}{\Struct}} &\neq
	\setcond[\big]{\run{\dwla}{ (\Struct, \set{\vertC,\vertB'})}}{\vertC \in \ccA \cap \verticesOf{i}{\Struct}}, \\
	\text{then }\hspace{-0.5pt}\setcond[\big]{\run{\hat{\dwla}}{(\Struct, (\vertC,\vertB))}}{\vertC \in \ccA \cap \verticesOf{i}{\Struct}} &\neq
	\setcond[\big]{\run{\hat{\dwla}}{ (\Struct, (\vertC,\vertB'))}}{\vertC \in \ccA \cap \verticesOf{i}{\Struct}}.
	\end{align*}
\end{proof}

Finally, we want to use this function~$f$ to refine vertex classes.
That is, we want to execute a $\refine{\vc}{f}$-operation,
which splits a vertex class~$\vc$
such that two $\vc$-vertices~$\vertA$ and~$\vertB$
end up in different classes if and only if $f(\vertA) \neq f(\vertB)$.

\begin{lemma}
	\label{lem:deepwl-refine-with-function}
	Let~$\dwla$ be a normalized and  polynomial-time DeepWL+WSC-algorithm,
	which computes a function~$f$.
	Then we can simulate a $\refine{\vc}{f}$-execution 
	using a normalized polynomial-time DeepWL+WSC-algorithm.
\end{lemma}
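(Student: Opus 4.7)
The plan is to decompose the function $f$ bit by bit and to refine $\vc$ separately for each output bit. Since $\dwla$ runs in polynomial time, there is a polynomial $p$ such that the binary output $f(v)$ written on the work tape has length at most $p(|\StructV|)$ for every input $v$. For each $i \in [p(|\StructV|)]$ I would construct a DeepWL+WSC-algorithm $\dwla_i$ that is identical to $\dwla$ except for a trivial modification of the output Turing machine: upon reaching its halting state, it additionally inspects the $i$-th cell of the work tape and accepts iff that cell contains a $1$ (rejecting when the output has fewer than $i$ symbols). The witnessing machine and the nested subalgorithms of $\dwla_i$ are inherited from $\dwla$ unchanged, so normalization, polynomial runtime, and the property of never failing (because $\dwla$ computes a total function) all carry over directly.

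The simulating algorithm for $\refine{\vc}{f}$ then iterates $i = 1, \dots, p(|\StructV|)$ and executes $\refine{\vc}{\dwla_i}$. By the semantics of $\refineSym$, this produces sub-vertex-classes $\vc_i^{\Struct} = \setcond{u \in \vc^{\Struct}}{\text{the $i$-th output bit of } \dwla \text{ on } (\Struct, u) \text{ is } 1}$, all of which are plain since $\vc$ is. After all refinements have been added, the coherent configuration treats two $\vc$-vertices $u$ and $v$ as indistinguishable exactly when they share the same membership pattern across the $\vc_i$'s, which in turn happens iff $f(u) = f(v)$.

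For the non-trivial direction of the resulting equivalence, I would argue as follows: any automorphism $\phi$ that maps $u$ to $v$ before the refinements remains an automorphism after the refinements whenever $f(u) = f(v)$. This is because $\dwla$ is isomorphism-invariant (Lemma~\ref{lem:dwsc-run-invariant}), so $\phi$ setwise fixes every $\vc_i$; hence vertices in the same orbit with equal $f$-value remain in the same orbit and thus in the same fiber of the refined coherent configuration. The easy direction is immediate: if $f(u) \neq f(v)$ they differ in some bit $i$, placing them in different $\vc_i$'s and so in different fibers. Normalization is preserved because no new vertices are introduced and the added relations are sub-fibers of a plain vertex class; polynomial runtime follows since the outer loop has polynomially many iterations, each calling a polynomial-time subalgorithm on polynomially many vertices.

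The construction itself is short; the only point requiring care is to verify that the bit-extraction modification of the Turing machine really does yield a well-formed normalized polynomial-time DeepWL+WSC-algorithm. Since the modification only alters the acceptance behavior after all cloud operations of $\dwla$ have been executed, and does not introduce any $\choiceSym$-, $\addPairSym$-, $\sccSym$-, or $\createSym$-operations of its own, every relevant invariant is inherited from $\dwla$ automatically, so no genuine obstacle arises.
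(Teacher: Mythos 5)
Your proposal is essentially the paper's own proof: decompose $f$ bit by bit, refine $\vc$ with a decision algorithm for each bit (plus a length check), and assemble the resulting vertex classes. One formal wrinkle: you cannot literally build a separate subalgorithm $\dwla_i$ for each $i \in [p(|\StructV|)]$, since a DeepWL+WSC-algorithm is a fixed finite tuple and the number of nested subalgorithms cannot depend on the input size; the paper instead uses a single subalgorithm that receives the bit index $i$ as an additional ordered input via the derived $\refine{\vc}{k,\tup{z}}$-operation of Lemma~\ref{lem:deepwl-wsc-plus}, and you should do the same. Also, rather than arguing about fibers of the coherent configuration (vertices with equal $f$-value need not end up in the same fiber), simply $\createSym$ an explicit vertex class for each distinct $f$-value from the membership pattern in the $\vc_i$'s, which is all the stated semantics of $\refine{\vc}{f}$ requires.
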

\begin{proof}
	The proof is similar to the one of Lemma~\ref{lem:compute-distinguish-vertices}.
	We use vertex classes encoding the values of~$f$.
	We create a DeepWL+WSC-algorithm~$\dwla'$,
	which for every normalized structure~$\StructA$
	and every vertex class~$\vc$
	takes a $\vc$-vertex $\vertA$ (as singleton vertex class) and an
	additional number~$i$ as input.
	It decides whether the $i$-th bit of~$f(\vertA)$ is~$1$.
	We create another algorithm~$\dwla''$
	taking  a $\vc$-vertex~$\vertA$ and a number~$i$ and decides whether the $|f(\vertA)| \geq i$.
	
	Use~$\dwla'$ and~$\dwla''$ iteratively to obtain vertex classes refining~$\vc$
	encoding $f(\vertA)$ for every $\vc$-vertex~$\vertA$ analogously to the proof of Lemma~\ref{lem:compute-distinguish-vertices}.
	We then create new vertex classes for distinct output of~$f$
	containing the $\vc$-vertices of that value.
	
	Because~$f$ is computed by a polynomial-time machine,
	the length of~$f$ is bounded by a polynomial
	and so again is the number of iterations and created relations.
\end{proof}

In Section~\ref{sec:simulation},
we will use the lemmas of this section to simulate a
$\refineSym$-operation as already discussed:
first split the vertex classes properly suitably and then execute the $\refineSym$-operation.

\subsubsection{Building Plans}

To simulate arbitrary DeepWL+WSC-algorithms by normalized ones,
we need to compute the algebraic sketch of the non-normalized HF-structure in the cloud of the simulated algorithm
from the normalized HF-structure in the cloud of the simulating algorithm.
For this, we introduce the notion of a building plan.

\newcommand{\buildplan}{\Omega}
\newcommand{\applyplan}[2]{#1(#2)}

\newcommand{\crossVertexRel}[1]{\tilde{\rel}_{#1}}

\newcommand{\applyplancross}[2]{#1^{\text{cross}}(#2)}

\begin{definition}[Building Plan]
	Let $\Struct = \Struct_1 \disunion \Struct_2$
	be a normalized HF-structure.
	A \defining{vertex plan} for~$\Struct$
	is a set $\set{\vcA,\vcB}$ of vertex classes $\vcA,\vcB \in \sigA$. 
	A \defining{relation plan} for~$\Struct$
	defining a relation symbol $\rel \notin \sigA$
	is a pair $(\rel, \setcond{\set{\relB_{1,i}, \relB_{2,i}}}{i \in [k]})$,
	where $\relB_{j,i} \in \sigA$ for all $j \in [2]$ and $i \in [k]$.
	A \defining{building plan} for~$\Struct$
	is a pair $\buildplan = (\omega, \kappa)$,
	where~$\omega$ is a finite set of vertex plans
	and~$\kappa$ is a finite set of relation plans
	where each relation plan defines a different relation symbol.
\end{definition}

Intuitively, a building plan~$\buildplan$ describes a non-normalized HF-structure
$\applyplan{\buildplan}{\Struct}$ in terms of a normalized HF-structure~$\Struct$
and a recipe to construct vertices mixing the two components of~$\Struct$.
A vertex plan $\set{\vcA, \vcB}$
says that for every $\vertA \in \vcA^\Struct$ and $\vertB \in \vcB^\Struct$
in different components of~$\Struct$,
the vertex  $\set{\vertA,\vertB}$ is added to the structure
(recall that we are defining an HF-structure).
A relation plan $(\rel, \setcond{\set{\relB_{1,i}, \relB_{2,i}}}{i \in [k]})$
specifies a new relation~$\rel$ between the created vertices:
a pair $(\set{\vertA,\vertB},\set{\vertA',\vertB'})$
is contained in $\rel^{\applyplan{\buildplan}{\Struct}}$
if and only if there is some $i \in [k]$
such that $(\vertA,\vertA') \in \relB_{1,i}^\Struct$
and $(\vertB,\vertB') \in \relB_{2,i}^\Struct$.
Lastly, a special relation~$\pRel$ relating every vertex $\set{\vertA,\vertB}$ that is created to the original vertices~$\vertA$ and~$\vertB$ is added.
Formally, the structure $\applyplan{\buildplan}{\Struct} $ is defined as follows:

\begin{definition}[Structure Defined by a Building Plan]
Let $\buildplan = (\omega, \kappa)$ be a building plan for a normalized HF-structure~$\StructA$.
The structure $\applyplan{\buildplan}{\Struct}$ is defined as follows:
For every vertex plan $\set{\vcA,\vcB} \in \omega$,
define
\begin{align*}
	\rel_{\set{\vcA,\vcB}}^\Struct &:= 
	(\vcA^\Struct \times \vcB^\Struct \cup \vcB^\Struct \times \vcA^\Struct) \cap \crossingEdges{\Struct},\\
	\crossVertexRel{\set{\vcA,\vcB}}^\Struct &:=
	\setcond*{\set{\vertA,\vertB}}{(\vertA,\vertB) \in \rel_{\set{\vcA,\vcB}}^\Struct},
\end{align*}
and $\crossVertexRel{\omega}^\Struct :=  \bigcup_{\set{\vcA,\vcB} \in \omega} \crossVertexRel{\set{\vcA,\vcB}}^\Struct$.
The atoms of $\applyplan{\buildplan}{\Struct}$ are the ones of~$\Struct$,
the vertices are $\vertices{\applyplan{\buildplan}{\Struct}} := \vertices{\Struct} \cup \crossVertexRel{\omega}^\Struct$, and
the signature of $\applyplan{\buildplan}{\Struct}$ is
$\sigA \disunion \set{\pRel} \disunion \setcond{\rel}{(\rel,M) \in \kappa}$.
The relations are defined via
\begin{align*}
	\pRel^{\applyplan{\buildplan}{\Struct}} &:= 
	\bigcup_{\set{\vertA,\vertB} \in \crossVertexRel{\omega}^\Struct} \set[\big]{(\set{\vertA,\vertB}, \vertA), (\set{\vertA,\vertB}, \vertB)},\\
	\rel^{\applyplan{\buildplan}{\Struct}} &:=
	\setcond*{(\set{\vertA,\vertB},\set{\vertA',\vertB'})}{
		(\vertA,\vertA') \in \relB_{1,i}^\Struct,
		(\vertB,\vertB') \in \relB_{2,i}^\Struct,
		i \in [k]}\\ 
	&\hspace{5cm} \text{for every } (\rel, \setcond{\set{\relB_{1,i}, \relB_{2,i}}}{i \in [k]}) \in \kappa.
\end{align*}
The added vertices $\set{\vertA,\vertB}$ are called \defining{crossing}.
The set of all crossing vertices of $\applyplan{\buildplan}{\StructA}$ is $\crossingVertices{\StructA}$.
We call edges $(\vertC,\vertC') \in \crossingVertices{\StructA}^2$
\defining{inter-crossing}.
A relation or color is  inter-crossing if it only contains inter-crossing edges. A vertex class or fiber is called crossing, if it only contains crossing vertices.
\end{definition}
Later, we are interested only in the substructure of $\applyplan{\buildplan}{\Struct}$ induced by the crossing vertices.
But for HF-structures, this is actually ill-defined
because the crossing vertices do not contain the atoms of~$\Struct$.
So we turn to the non-HF-structure $\nonHF{\applyplan{\buildplan}{\Struct}}$
and define 
\[\applyplancross{\buildplan}{\Struct} := \nonHF{\applyplan{\buildplan}{\Struct}}[\crossingVertices{\nonHF{\applyplan{\buildplan}{\Struct}}}].\]
Here, we refer with $\crossingVertices{\nonHF{\applyplan{\buildplan}{\Struct}}}$
to the set of atoms in $\nonHF{\applyplan{\buildplan}{\Struct}}$
which are crossing vertices in $\applyplan{\buildplan}{\Struct}$.

We now show that the fibers of crossing vertices,
respectively the colors of inter-crossing edges,
are already determined by plain fibers and plain relations.

\begin{lemma}
	\label{lem:buildplan-almost-normalized}
	For every normalized HF-structure~$\Struct$ 
	and every building plan~$\buildplan$ for~$\Struct$,
	the structure $\applyplan{\buildplan}{\Struct}$
	satisfies the following:
		\begin{enumerate}[label=(A\arabic*)]
		\item \label{prop:an-plain-crossing} Every vertex of $\applyplan{\buildplan}{\Struct}$ is either plain or crossing.
		\item 	$\applyplan{\buildplan}{\Struct}[\vertices{\Struct}] = \StructA$.
		\item Every relation of $\applyplan{\buildplan}{\Struct}$ is either plain or inter-crossing or the special relation~$\pRel$.
		\item \label{prop:an-component-vertices} For every vertex $\vertA \in \crossingVertices{\StructA}$ and $i \in [2]$,
		there is exactly one vertex $\plvI{\vertA}{i} \in \verticesOf{i}{\StructA}$ such that $(\vertA,\plvI{\vertA}{i}) \in \pRel^{\StructA}$.
		Moreover, $\set{\plvA{\vertA},\plvB{\vertA}} \neq \set{\plvA{\vertB},\plvB{\vertB}}$
		for all distinct $\vertA,\vertB \in \crossingVertices{\StructA}$.
		\item \label{prop:an-direct-product-vertex}
		For every crossing vertex $\vertA \in \crossingVertices{\StructA}$
		its fiber~$\cc_\vertA$ 
		is uniquely determined by the set of fibers 
		$\set{\cc_{\plvA{\vertA}}, \cc_{\plvB{\vertA}}}$
		of the plain vertices~$\plvA{\vertA}$ and~$\plvB{\vertA}$,
		that is,
		whenever $\set{\cc_{\plvA{\vertA}}, \cc_{\plvB{\vertA}}} = \set{\cc_{\plvA{\vertB}}, \cc_{\plvB{\vertB}}}$,
		then $\cc_\vertA = \cc_\vertB$.
		Additionally, for every pair of vertices 
		$\vertB_1 \in \cc_{\plvA{\vertA}}^{\Struct} \cap \verticesOf{1}{\Struct}$
		and $\vertB_2 \in \cc_{\plvB{\vertA}}^{\Struct} \cap \verticesOf{2}{\Struct}$,
		there is a crossing vertex~$\vertC$ 
		in the fiber~$\cc_\vertA$ such that
		$\plvA{\vertC} = \vertB_1$ and $\plvB{\vertC} = \vertB_2$.
		\item \label{prop:an-direct-product-edge}
		In the same sense,
		the color~$\col_{(\vertA,\vertB)}$ of every inter-crossing edge $(\vertA,\vertB)$
		is uniquely determined by the set of plain colors
		$\set{\col_{(\plvA{\vertA},\plvA{\vertB})}, \col_{(\plvB{\vertA},\plvB{\vertB})}}$.
	\end{enumerate}
\end{lemma}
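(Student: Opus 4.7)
\bigskip

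The plan is to dispose of Properties~\ref{prop:an-plain-crossing} through~\ref{prop:an-component-vertices} by unwinding definitions, and then to prove Properties~\ref{prop:an-direct-product-vertex} and~\ref{prop:an-direct-product-edge} by exhibiting an explicit coherent configuration on $\applyplan{\buildplan}{\Struct}$ that already has the ``direct product'' shape and refines $\applyplan{\buildplan}{\Struct}$. The first four properties are essentially bookkeeping. The vertex set of $\applyplan{\buildplan}{\Struct}$ is by definition the disjoint union $\vertices{\Struct} \cup \crossVertexRel{\omega}^\Struct$, which gives~\ref{prop:an-plain-crossing}. The relations from $\sig$ are only on $\vertices{\Struct}$, and since $\Struct$ is normalized they only contain plain edges, which gives the second claim as well as the first part of~\ref{prop:an-plain-crossing}'s successor. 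The new relations from $\kappa$ consist only of pairs of crossing vertices (inter-crossing), and $\pRel$ is called out separately. For~\ref{prop:an-component-vertices}, a crossing vertex $\vertA$ is a set $\set{\vertB_1,\vertB_2}$ with $\vertB_i \in \verticesOf{i}{\Struct}$, so $\plvI{\vertA}{i} = \vertB_i$ is uniquely determined; distinct crossing vertices are distinct as sets and therefore give distinct unordered pairs.

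For Properties~\ref{prop:an-direct-product-vertex} and~\ref{prop:an-direct-product-edge} the plan is to construct a coherent configuration $\StructC'$ on $\nonHF{\applyplan{\buildplan}{\Struct}}$ with the direct-product property, verify that $\StructC'$ refines $\applyplan{\buildplan}{\Struct}$, and then appeal to the fact that $\coConf{\applyplan{\buildplan}{\Struct}}$ is \emph{coarsest} among all such refinements, so every fiber and every color of $\coConf{\applyplan{\buildplan}{\Struct}}$ is a union of fibers respectively colors of $\StructC'$. The fibers of $\StructC'$ are declared as follows: on plain vertices we take the fibers of $\coConf{\Struct}$; for every unordered pair $\set{\ccA, \ccB}$ of plain fibers of $\coConf{\Struct}$ with $\ccA^\Struct \subseteq \verticesOf{1}{\Struct}$ and $\ccB^\Struct \subseteq \verticesOf{2}{\Struct}$ such that there is a vertex plan $\set{\vcA,\vcB} \in \omega$ with $\ccA \subseteq \vcA$ and $\ccB \subseteq \vcB$ (these inclusions hold automatically because $\coConf{\Struct}$ refines $\Struct$), we introduce the fiber $\setcond{\set{\vertA_1,\vertA_2}}{\vertA_1 \in \ccA^\Struct, \vertA_2 \in \ccB^\Struct}$ of crossing vertices. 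The colors are defined analogously: plain colors of $\coConf{\Struct}$ for plain edges; $\pRel$-edges and their inverses split according to the crossing fiber and the adjacent plain fiber; inter-crossing edges $(\vertA,\vertB)$ colored by the unordered pair $\set{\col_{(\plvA{\vertA},\plvA{\vertB})}, \col_{(\plvB{\vertA},\plvB{\vertB})}}$ together with the two crossing fibers it connects.

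The main obstacle will be verifying the intersection-number property of $\StructC'$. Triples of only plain colors inherit coherence from $\coConf{\Struct}$ directly. Triples involving $\pRel$ reduce to cardinalities of preimages under the component projections, which are determined by a single fiber of $\coConf{\Struct}$. The interesting case is a triple of inter-crossing colors: given $(\vertA,\vertB) \in \colA^{\StructC'}$, the number of intermediate crossing vertices $\vertC$ with $(\vertA,\vertC) \in \colB^{\StructC'}$ and $(\vertC,\vertB) \in \colC^{\StructC'}$ factors into a product of intersection numbers from $\coConf{\Struct}$, one for each of the two components, once one fixes which plain component of $\vertC$ lies in $\verticesOf{1}{\Struct}$ and which in $\verticesOf{2}{\Struct}$. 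That case distinction is the reason we index inter-crossing colors by \emph{unordered} pairs of plain colors, and it is also what makes the intersection number independent of the choice of representative $(\vertA,\vertB)$. Once coherence is established, $\StructC'$ clearly refines $\applyplan{\buildplan}{\Struct}$ by construction.

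Finally, refining $\coConf{\applyplan{\buildplan}{\Struct}}$ by $\StructC'$ gives~\ref{prop:an-direct-product-vertex} and~\ref{prop:an-direct-product-edge}: two crossing vertices with the same unordered pair of plain-component fibers lie in the same fiber of $\StructC'$ and hence in the same fiber of the coarser $\coConf{\applyplan{\buildplan}{\Struct}}$, and similarly for inter-crossing edges. For the surjectivity clause of~\ref{prop:an-direct-product-vertex}, note that if $\vertA$ was produced by the vertex plan $\set{\vcA,\vcB} \in \omega$, then $\cc_{\plvA{\vertA}} \subseteq \vcA$ and $\cc_{\plvB{\vertA}} \subseteq \vcB$ (since $\coConf{\Struct}$ refines $\Struct$), so any $\vertB_1 \in \cc_{\plvA{\vertA}}^\Struct \cap \verticesOf{1}{\Struct}$ and $\vertB_2 \in \cc_{\plvB{\vertA}}^\Struct \cap \verticesOf{2}{\Struct}$ satisfy $\vertB_1 \in \vcA^\Struct$ and $\vertB_2 \in \vcB^\Struct$ in opposite components, and therefore $\set{\vertB_1,\vertB_2}$ is a crossing vertex created by that same vertex plan, landing in the same $\StructC'$-fiber and hence in $\cc_\vertA$.
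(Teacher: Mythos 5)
Your proposal is correct and follows essentially the same route as the paper: (A1)--(A4) by unwinding the construction, the surjectivity clause of~\ref{prop:an-direct-product-vertex} via the defining vertex plan together with the fact that $\coConf{\Struct}$ refines the vertex classes, and the uniqueness claims in~\ref{prop:an-direct-product-vertex}/\ref{prop:an-direct-product-edge} by exhibiting an explicit ``direct product'' coherent configuration and invoking coarseness of $\coConf{\applyplan{\buildplan}{\Struct}}$ (the paper likewise only sketches the coherence verification). One small repair: plain fibers of $\coConf{\Struct}$ need not lie in a single component (the two components may be isomorphic), so your crossing fibers should be cut down to $\setcond{\set{\vertA_1,\vertA_2}}{(\vertA_1,\vertA_2)\in(\ccA^\Struct\times\ccB^\Struct)\cap\crossingEdges{\Struct}}$ rather than assuming $\ccA^\Struct\subseteq\verticesOf{1}{\Struct}$; also note the paper additionally establishes the converse of~\ref{prop:an-direct-product-edge} (same inter-crossing color implies same set of plain colors, via counting $(\pRel,\colB,\inv{\pRel})$-colored paths), which is not literally in the statement but is used later to make ``the corresponding plain colors of a color'' well-defined.
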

\begin{proof}
	Let $\StructA = \Struct_1 \disunion \Struct_2$ be a normalized HF-structure and $\buildplan= (\omega, \kappa)$ be a building plan for~$\StructA$.
	Properties~\ref{prop:an-plain-crossing} to~\ref{prop:an-component-vertices}
	immediately follow from the construction of $\applyplan{\buildplan}{\Struct}$.
	To show Property~\ref{prop:an-direct-product-vertex},
	let~$\vertA$ be a crossing vertex in fiber~$\cc_\vertA$
	and let~$\cc_{\plvI{\vertA}{i}}$ be the fiber of~$\plvI{\vertA}{i}$ for every $i \in [2]$.
	First consider the claim that for every ${\vertB_1 \in \cc_{\plvA{\vertA}}^{\Struct} \cap \verticesOf{1}{\Struct}}$
	and ${\vertB_2 \in \cc_{\plvB{\vertA}}^{\Struct} \cap \verticesOf{2}{\Struct}}$,
	there is a crossing vertex~$\vertC$ such that
	$\plvA{\vertC} = \vertB_1$ and ${\plvB{\vertC} = \vertB_2}$.
	Because the crossing vertex~$\vertA$ exists,
	there is a vertex plan $\set{\vcA,\vcB} \in \omega$
	such that ${\set{\plvA{\vertA},\plvB{\vertA}} \in \crossVertexRel{\set{\vcA,\vcB}}^\Struct}$.
	Assume w.l.o.g.~that $\plvA{\vertA} \in \vcA^\Struct$ and that $\plvB{\vertA} \in \vcB^\Struct$.
	Because ${\vertB_1 \in \cc_{\plvA{\vertA}}^{\Struct} \subseteq \vcA^\Struct}$
	and ${\vertB_2 \in \cc_{\plvB{\vertA}}^{\Struct} \subseteq \vcB^\Struct}$,
	it follows that $\set{\vertB_1, \vertB_2} \in \crossVertexRel{\set{\vcA,\vcB}}^\Struct$
	and so by definition of $\applyplan{\buildplan}{\Struct}$
	there is a crossing vertex~$\vertC$ such that
	$\plvA{\vertC} = \vertB_1$ and $\plvB{\vertC} = \vertB_2$.
	To show that the fiber~$\cc_\vertA$ is uniquely determined by $\set{\cc_{\plvA{\vertA}},\cc_{\plvB{\vertA}}}$,
	it suffices to consider the special case of~\ref{prop:an-direct-product-edge}
	when considering loop colors.
	
	It remains to prove Property~\ref{prop:an-direct-product-edge}.
	Let $(\vertA,\vertB)$ be an inter-crossing edge of color~$\col_{(\vertA,\vertB)}$.
	Let the color of $(\plvI{\vertA}{i},\plvI{\vertB}{i})$ be $\col_{(\plvI{\vertA}{i},\plvI{\vertB}{i})}$ for every $i \in [2]$.
	
	For the first direction,
	let $(\vertA', \vertB')$ be a crossing edge such that
	$\col_{(\vertA',\vertB')} = \col_{(\vertA,\vertB)}$.
	Because every crossing vertex is adjacent via~$\pRel$ to exactly one plain vertex in each component, for every $i \in[2]$
	every inter-crossing edge has exactly two $(\pRel, \colB_i, \inv{\pRel})$-colored paths, where~$\colB_i$ is a plain color.
	For $(\vertA, \vertB)$, we have that $\set{\colB_1, \colB_2} = \set{\col_{(\plvA{\vertA},\plvA{\vertB})}, \col_{(\plvB{\vertA},\plvB{\vertB})}}$.
	Because $(\vertA',\vertB')$ has the same color as $(\vertA,\vertB)$, the same holds for $(\vertA',\vertB')$,
	i.e.,
	\[\setcond*{\col_{(\plvI{\vertA'}{i},\plvI{\vertB'}{i})}}{i \in [2]} = \setcond*{\col_{(\plvI{\vertA}{i},\plvI{\vertB}{i})}}{i \in [2]}.\]
	
	For the remaining direction,
	we have to show that two inter-crossing edges
	$(\vertA_1, \vertB_1)$ and $(\vertA_2, \vertB_2)$
	for which $\setcond{\col_{\plvI{\vertA_1}{i},\plvI{\vertB_1}{i}}}{i \in [2]} = \setcond{\col_{\plvI{\vertA_2}{i},\plvI{\vertB_2}{i}}}{i \in [2]}$
	have the same color $\col_{(\vertA_1, \vertB_1)} = \col_{(\vertA_2, \vertB_2)}$, where
	$(\vertA_1, \vertB_1)$ and $(\vertA_2, \vertB_2)$ are arbitrary.
	To do so,
	we show that there is a coherent configuration~$\StructC$
	refining $\coConf{\applyplan{\buildplan}{\Struct}}$
	such that $\StructC[\verticesOf{i}{\Struct}] = \coConf{\applyplan{\buildplan}{\Struct}}[\verticesOf{i}{\Struct}]$
	for every $i \in [2]$
	and~$\StructC$ satisfies the required property.
	Then every coarser coherent configuration satisfies the same statement, too.
	We only sketch the construction, the idea is based on the proof of Lemma~10 in~\cite{GroheSchweitzerWiebking2021}.
	The main difference is that~\cite{GroheSchweitzerWiebking2021} gives
	the construction for the~$\rel_\omega$ relation
	and not the~$\crossVertexRel{\omega}$ relation
	(cf.\ the proof of Lemma~\ref{lem:compute-sketch-building-plan}).
	Essentially, we replace ordered pairs of two colors with a set of at most two colors.
	
	The coherent configuration is defined as follows:
	The plain edges are colored according to~$\coConf{\Struct}$.
	Every inter-crossing edge $(\vertA, \vertB)$
	is colored with the set of colors $\set{\col_{(\plvA{\vertA}, \plvA{\vertB})}, \col_{(\plvB{\vertA}, \plvB{\vertB})}}$ of its corresponding plain edges.
	Note that by Lemma~\ref{lem:normalized-direct-product},
	this also determines the color of the edges $(\plvI{\vertA'}{i}, \plvI{\vertB'}{j})$ whenever $\vertA',\vertB' \in \set{\vertA,\vertB}$ and $\set{i,j}=[2]$, i.e.,~$\plvI{\vertA'}{i}$ and $\plvI{\vertB'}{j}$ are in different components, because the color of an edge determines the fiber of its endpoints.
	Similarly, an edge $(\vertA,\vertB)$ of a crossing vertex~$\vertA$ and a plain vertex~$\vertB$ is colored 
	with the fiber of~$\vertA$ (which is defined in the inter-crossing case for loops),
	the fiber of~$\vertB$,
	and whether $(\vertA, \vertB)$ is contained in~$\pRel$.
	Clearly, the sketched coherent configuration has the required property by construction.
\end{proof}
We say that the plain vertices~$\plvA{\vertA}$ and~$\plvB{\vertA}$
are the vertices \defining{corresponding to} the crossing vertex~$\vertA$.
Likewise, the plain edges $(\plvA{\vertA},\plvA{\vertB})$
and $(\plvB{\vertA},\plvB{\vertB})$ correspond to the
inter-crossing edge $(\vertA,\vertB)$.
The set of colors $\set{\rel_{(\plvA{\vertA},\plvA{\vertB})}, \rel_{(\plvB{\vertB},\plvB{\vertB})}}$ corresponds to the color $\rel_{(\vertA,\vertB)}$ (and similarly for crossing vertices).

In some sense, Properties~\ref{prop:an-direct-product-vertex}
and~\ref{prop:an-direct-product-edge}
mean that we actually do not need to construct the crossing vertices
because all information is determined by the corresponding plain vertices and edges.
We now show this formally.
\begin{lemma}
	\label{lem:compute-sketch-building-plan}
	There is a normalized DeepWL-algorithm
	that for
	every normalized HF-structure~$\StructA$
	and every building plan $\buildplan=(\omega, \kappa)$ for~$\StructA$
	computes $\sketch{\applyplan{\buildplan}{\Struct}}$ in polynomial time.
\end{lemma}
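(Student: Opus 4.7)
The plan is to describe the algebraic sketch $\sketch{\applyplan{\buildplan}{\Struct}}$ purely in combinatorial terms of $\sketch{\Struct}$ and the building plan $\buildplan=(\omega,\kappa)$, and then to observe that all required computations can be carried out by a normalized DeepWL-algorithm that never adds crossing vertices to its own cloud but only manipulates data on its work and interaction tapes.

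First I would enumerate the fibers of $\coConf{\applyplan{\buildplan}{\Struct}}$. By Lemma~\ref{lem:buildplan-almost-normalized}\ref{prop:an-direct-product-vertex}, every crossing fiber is uniquely indexed by an unordered pair $\set{\ccA,\ccB}$ of plain fibers with $\ccA\subseteq \vcA^\Struct$ and $\ccB\subseteq \vcB^\Struct$ for some vertex plan $\set{\vcA,\vcB}\in\omega$, where $\ccA$ and $\ccB$ lie in different components of~$\Struct$. Which fibers lie in which component, and hence which such pairs are realized, can be read off from $\sketch{\Struct}$ together with the decomposition granted by Lemma~\ref{lem:normalized-direct-product}. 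The plain fibers are inherited unchanged from $\sketch{\Struct}$.

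Next I would enumerate the candidate colors. By Lemma~\ref{lem:buildplan-almost-normalized}\ref{prop:an-direct-product-edge}, every inter-crossing color is captured by an unordered pair of plain colors on the respective components. Edges involving $\pRel$, $\pRelInv$, or one of the relation symbols $\relA$ introduced by some $(\relA,\setcond{\set{\relB_{1,i},\relB_{2,i}}}{i\in[k]})\in\kappa$ contribute additional candidate colors, each determined by the fibers of its endpoints and, for every $i\in[k]$, by which of the two ways (straight or swapped) the two corresponding plain edges can witness membership in $\relB_{1,i}^\Struct$ and $\relB_{2,i}^\Struct$. All of this information is encoded in the symbolic subset relation of $\sketch{\Struct}$, so the candidate colors can be listed in polynomial time, and the preliminary coloring they induce refines $\coConf{\applyplan{\buildplan}{\Struct}}$ by the coherent configuration exhibited at the end of the proof of Lemma~\ref{lem:buildplan-almost-normalized}. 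To obtain the coarsest coherent configuration I would then simulate two-dimensional Weisfeiler--Leman on this candidate. All needed intersection numbers are computable directly from $\sketch{\Struct}$ and $\buildplan$: triangles among plain vertices reuse the plain intersection function~$q$; triangles involving a $\pRel$-edge reduce to counting plain vertices in a specified fiber because every crossing vertex has exactly one $\pRel$-neighbor per component by Lemma~\ref{lem:buildplan-almost-normalized}\ref{prop:an-component-vertices}; and inter-crossing triangles factor as products of plain intersection numbers, with a combinatorial correction that accounts for the unordered representation of crossing colors and for the way the three sides of a triangle distribute between the two components. Iterating to a fixed point yields the intersection function, and from it the final symbolic subset relation with respect to $\sigA\disunion\set{\pRel}\disunion\setcond{\relA}{(\relA,M)\in\kappa}$ is read off.

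The hard part will be keeping the two-WL simulation honest for the inter-crossing and $\kappa$-definable colors: an unordered pair $\set{\colA_1,\colA_2}$ of plain colors makes naive product formulas overcount when $\colA_1=\colA_2$, and for each inter-crossing triangle one must case on which two of its three crossing vertices share a component on which ``side''. These cases must be enumerated carefully, but only a constant number of them arise per triangle, so the entire sketch computation runs in time polynomial in $|\sketch{\Struct}|+|\buildplan|$. Because the algorithm executes no $\addPairSym$-, $\sccSym$-, $\choiceSym$-, or $\refineSym$-operation on its own cloud, it is trivially normalized.
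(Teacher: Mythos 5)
Your proposal is correct in substance but takes a genuinely different route from the paper. The paper's proof does not compute the sketch combinatorially from scratch: it first observes that $\applyplan{\buildplan}{\Struct}$ is obtained from $\Struct$ by a single $\addUPair{\crossVertexRel{\omega}}$-operation (which, in the model of~\cite{GroheSchweitzerWiebking2021}, decomposes into an $\addPairSym$- followed by a $\contractSym$-execution on a crossing relation) together with $\createSym$-style definitions for the relation plans in~$\kappa$, and then invokes Lemmas~9 and~10 of~\cite{GroheSchweitzerWiebking2021} as black boxes to compute the sketch after each of these two operations in polynomial time \emph{without modifying the cloud}; the $\kappa$-relations are then handled as unions of inter-crossing colors via the symbolic subset relation. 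Your approach instead re-derives the content of those cited lemmas directly, enumerating the fibers and colors of $\coConf{\applyplan{\buildplan}{\Struct}}$ via Properties~\ref{prop:an-direct-product-vertex} and~\ref{prop:an-direct-product-edge} and computing the intersection function by explicit case analysis on how a triangle distributes over the two components. What the paper's route buys is brevity and a clean reuse of prior work; what yours buys is self-containedness, at the price of having to carry out (and get right) exactly the counting arguments that the cited lemmas encapsulate. Note that the well-definedness of your product formulas for inter-crossing intersection numbers is precisely the coherence of the configuration sketched at the end of the proof of Lemma~\ref{lem:buildplan-almost-normalized}, so you are implicitly leaning on that construction rather than only on the statement of the lemma.

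One point needs repair in your write-up: you cannot ``simulate two-dimensional Weisfeiler--Leman on this candidate'' and expect to reach $\coConf{\applyplan{\buildplan}{\Struct}}$, because the candidate is itself a coherent configuration refining the structure and WL only refines, never coarsens; started from an already-stable coloring it returns that coloring, which a priori could be strictly finer than the coarsest one. Either argue directly that the candidate already \emph{equals} $\coConf{\applyplan{\buildplan}{\Struct}}$ — which does follow, since Properties~\ref{prop:an-direct-product-vertex} and~\ref{prop:an-direct-product-edge} give the matching lower bound for crossing fibers and inter-crossing colors, the plain part restricts to $\coConf{\Struct}$, and crossing-to-plain colors are forced by the endpoint fibers and $\pRel$-membership — or run WL starting from the relations of $\applyplan{\buildplan}{\Struct}$ and use the candidate only as the granularity at which the symbolic counting is performed. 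With that correction the argument goes through.
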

\begin{proof}
	
	We show that there is a (non-normalized) DeepWL algorithm~$\dwla$
	which on input~$\StructA$ in the cloud and $\buildplan=(\omega, \kappa)$ on the work-tape
	constructs the structure $\applyplan{\buildplan}{\Struct}$
	in the cloud by first executing an $\addPairSym$-operation on a crossing relation and then a $\contractSym$-operation. This is done as follows.
	
	The crossing relations $\rel_{\set{\vcA,\vcB}}$ and $\crossVertexRel{\set{\vcA,\vcB}}$ 
	are DeepWL-computable for every vertex plan $\set{\vcA,\vcB} \in \omega$.
	In particular, the crossing relation $\crossVertexRel{\omega}^\Struct = \bigcup_{\set{\vcA,\vcB} \in \omega} \crossVertexRel{\set{\vcA,\vcB}}$ is DeepWL-computable.
	So~$\dwla$ executes $\addUPair{\crossVertexRel{\omega}}$ and
	obtains the relation~$\pRel$ as membership relation.
	For every relation plan $(\rel, \setcond{\set{\relB_1^i, \relB_2^i}}{i \in [k]}) \in \kappa$, the algorithm $\dwla$ defines $\rel^{\applyplan{\buildplan}{\Struct}}$ as follows.
	For every $i \in [k]$,~%
	$\dwla$ defines the relation~$\rel_i$ to be the relation with an $(\pRel, \relB_1^i, \pRelInv)$- and an $(\pRel, \relB_2^i, \pRelInv)$-colored path.
	Then~$\rel$ is obtained as the union of all~$\rel_i$.
	
	Similar to Lemma~\ref{lem:deepwl-wsc-plus},
	the $\addUPairSym$-operation can be simulated in the DeepWL-model of~\cite{GroheSchweitzerWiebking2021}
	by first an $\addPairSym$-execution and then 
	a $\contractSym$-execution.
	Recall that $\contract{\col}$ contracts every $\col$-SCC to a singleton vertex.
	So we first execute $\addPair{\crossVertexRel{\omega}}$
	and then contract the edges between the two vertices $\pairVtx{\vertA}{\vertB}$
	and $\pairVtx{\vertB}{\vertA}$ to obtain vertices $\set{\pairVtx{\vertA}{\vertB},\pairVtx{\vertB}{\vertA}}$.
		
	We now argue that there is a normalized DeepWL-algorithm~$\hat{\dwla}$
	computing the sketch of $\applyplan{\buildplan}{\Struct}$ without executing the operations (in fact, without modifying the cloud at all).
	Using Lemma~10 of~\cite{GroheSchweitzerWiebking2021},~%
	$\hat{\dwla}$ computes  the sketch after the $\addPairSym$-execution in polynomial time (which is possible because $\crossVertexRel{\omega}$ is crossing).
	Next,~$\hat{\dwla}$ computes the the sketch after the $\contractSym$-execution
	using Lemma~9 of~\cite{GroheSchweitzerWiebking2021}.
	Finally, the inter-crossing relations described by~$\kappa$
	are unions of inter-crossing colors
	and the subset-relation can be computed in polynomial time.
\end{proof}

\begin{definition}[Efficient Building Plan]
	A building plan $\buildplan=(\omega, \kappa)$ for a normalized $\sig$-HF-structure~$\Struct$
	is called \defining{efficient},
	if for every $\vertA \in \vertices{\Struct}$
	there is a vertex class $\vcA \in \sig$ such that
	$\vertA \in \vcA^\Struct$ and there is a vertex plan 
	$\set{\vcA,\vcB} \in \omega$ for some $\vcB \in \sig$.
\end{definition}
Intuitively, an efficient building plan makes use of all plain vertices
of~$\Struct$, which means
that~$\Struct$ does not contain unnecessary  vertices
to construct $\applyplan{\buildplan}{\Struct}$.

\begin{lemma}
	\label{lem:building-plan-efficient-size}
	If~$\buildplan$ is an efficient building plan for a normalized HF-structure~$\Struct$,
	then $|\Struct| \leq  2|\crossingVertices{\applyplan{\buildplan}{\Struct}}|$.
\end{lemma}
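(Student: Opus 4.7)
The plan is to prove the bound by double counting incidences between plain vertices of $\Struct$ and crossing vertices of $\applyplan{\buildplan}{\Struct}$, using the membership relation $\pRel$ introduced in the construction of $\applyplan{\buildplan}{\Struct}$. Concretely, I would consider the set
\[ P := \setcond[\big]{(\vertA,\vertC)}{\vertC \in \crossingVertices{\applyplan{\buildplan}{\Struct}},\ \vertA \in \vertices{\Struct},\ (\vertC,\vertA) \in \pRel^{\applyplan{\buildplan}{\Struct}}} \]
and estimate $|P|$ from the two sides of the incidence.

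First I would count $|P|$ via the crossing-vertex side. By Property~\ref{prop:an-component-vertices} of Lemma~\ref{lem:buildplan-almost-normalized}, every crossing vertex $\vertC \in \crossingVertices{\applyplan{\buildplan}{\Struct}}$ has exactly one $\pRel$-neighbor $\plvA{\vertC} \in \verticesOf{1}{\Struct}$ and exactly one $\pRel$-neighbor $\plvB{\vertC} \in \verticesOf{2}{\Struct}$, and $\plvA{\vertC} \neq \plvB{\vertC}$ since the two lie in disjoint components. Hence each crossing vertex contributes exactly two pairs, giving
\[ |P| = 2 \cdot |\crossingVertices{\applyplan{\buildplan}{\Struct}}|. \]

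Next I would count $|P|$ via the plain-vertex side, showing that every $\vertA \in \vertices{\Struct}$ contributes at least one pair, i.e., is the corresponding plain vertex of some crossing vertex. Fix $\vertA \in \vertices{\Struct}$ and assume without loss of generality $\vertA \in \verticesOf{1}{\Struct}$. By efficiency there exist a vertex class $\vcA \in \sig$ with $\vertA \in \vcA^\Struct$ and a vertex plan $\set{\vcA,\vcB} \in \omega$. Unfolding the definition of $\rel_{\set{\vcA,\vcB}}^\Struct$ and $\crossVertexRel{\set{\vcA,\vcB}}^\Struct$, a crossing vertex $\set{\vertA,\vertB}$ incident to $\vertA$ exists in $\applyplan{\buildplan}{\Struct}$ as soon as the vertex plan $\set{\vcA,\vcB}$ witnesses at least one crossing edge at $\vertA$; the efficiency condition is designed precisely to guarantee this (here one must read the definition so that the plans are required to be non-vacuous at $\vertA$, i.e., $\vcB$ has a vertex in the opposite component $\verticesOf{2}{\Struct}$, which is the only sensible reading if the plan is to contribute anything to $\vertA$). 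Thus $\vertA$ lies in $\plvA{\vertC}$ for some crossing $\vertC$ and contributes at least one pair to $P$, so
\[ |P| \geq |\vertices{\Struct}| = |\Struct|. \]

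Combining the two estimates gives $|\Struct| \leq 2 |\crossingVertices{\applyplan{\buildplan}{\Struct}}|$, as claimed. The main obstacle is the subtle step above: reading the efficiency condition so that every vertex class $\vcA$ occurring in the plan together with some $\set{\vcA,\vcB} \in \omega$ actually produces a crossing vertex for each $\vertA \in \vcA^\Struct$. If one interprets efficiency only syntactically (existence of a plan), one still needs to observe that a vertex plan $\set{\vcA,\vcB}$ with no vertex of $\vcB$ in the opposite component of $\vertA$ creates no crossing vertex at $\vertA$, so the clean bound requires (and in effect assumes) that such trivially useless plans are excluded—exactly the content of the adjective \emph{efficient}.
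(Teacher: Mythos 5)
Your double-counting of $\pRel$-incidences is essentially the paper's own (one-line) proof: each crossing vertex of $\applyplan{\buildplan}{\Struct}$ is $\pRel$-incident to exactly two plain vertices by Property~(A4), and efficiency is invoked to ensure every plain vertex is incident to at least one crossing vertex, so your write-up just makes the same argument explicit. Your closing remark is a fair observation about the literal wording of ``efficient'' (a vertex plan both of whose classes lie in the same component creates no crossing vertices), but in every use of the lemma the plans are cross-component, so the reading you adopt is the intended one and the proof stands.
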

\begin{proof}
	Every crossing vertex in $\crossingVertices{\applyplan{\buildplan}{\Struct}}$
	is, by Property~\ref{prop:an-component-vertices},
	incident to exactly two plain vertices.
	So there are at most twice as many plain as crossing vertices.
\end{proof}

\begin{lemma}
	\label{lem:buildplan-automorphisms}
	Let~$\Struct$ be a normalized HF-structure
	and~$\buildplan$ be a building plan for~$\Struct$.
	Then $\autGroup{\Struct} = \autGroup{\applyplan{\buildplan}{\Struct}}$
	(note that both structures are HF-structures with atom set~$\StructV$).
	Every crossing fiber~$\ccA$ satisfies that
	if $\ccA^{\applyplan{\buildplan}{\Struct}}$ is an $\applyplan{\buildplan}{\Struct}$-orbit,
	then for the corresponding plain fibers $\set{\ccA_1, \ccA_2}$ of~$\ccA$
	the set $\crossVertexRel{\set{\ccA_1,\ccA_2}}^\Struct$ 
	is an $\Struct$-orbit.
\end{lemma}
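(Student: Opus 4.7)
The plan is to prove both assertions separately, using the structural description of $\applyplan{\buildplan}{\Struct}$ given in the construction together with Lemma~\ref{lem:auts-flat-vs-hf}.

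For the equality of automorphism groups, I would observe that both $\Struct$ and $\applyplan{\buildplan}{\Struct}$ are HF-structures with the same atom set $\StructV$, so an automorphism of either is just a permutation of $\StructV$ that extends appropriately. The inclusion $\autGroup{\applyplan{\buildplan}{\Struct}} \subseteq \autGroup{\Struct}$ is immediate because $\Struct = \applyplan{\buildplan}{\Struct}[\vertices{\Struct}]$ is an induced substructure in the relevant sense (the plain relations of $\applyplan{\buildplan}{\Struct}$ are exactly those of $\Struct$, and any automorphism of the larger HF-structure must setwise preserve the set $\StructV$ of atoms on which $\Struct$ lives). For the converse, I would take $\autoA \in \autGroup{\Struct}$ and verify three things: (i)~$\autoA$ setwise preserves $\crossingVertices{\applyplan{\buildplan}{\Struct}}$, since for each vertex plan $\set{\vcA,\vcB}\in\omega$ the permutation preserves $\vcA^\Struct$ and $\vcB^\Struct$ setwise and acts on crossing vertices as $\autoA(\set{\vertA,\vertB}) = \set{\autoA(\vertA), \autoA(\vertB)}$, which lies in $\crossVertexRel{\set{\vcA,\vcB}}^\Struct$ again (the two endpoints stay in different components because $\autoA$ preserves the partition into components --- connectedness of each component is preserved); (ii)~$\autoA$ preserves $\pRel^{\applyplan{\buildplan}{\Struct}}$, by definition of the action on unordered pairs; (iii)~$\autoA$ preserves each relation $\rel^{\applyplan{\buildplan}{\Struct}}$ coming from a relation plan, because the defining condition is a disjunction of conditions on plain relations $\relB_{1,i}^\Struct$ and $\relB_{2,i}^\Struct$ all of which $\autoA$ preserves.

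For the second assertion, suppose $\ccA^{\applyplan{\buildplan}{\Struct}}$ is an orbit of $\applyplan{\buildplan}{\Struct}$, with corresponding plain fibers $\set{\ccA_1,\ccA_2}$. By Property~\ref{prop:an-direct-product-vertex} of Lemma~\ref{lem:buildplan-almost-normalized}, the map sending a crossing vertex $\vertA\in\ccA$ to the unordered pair $\set{\plvA{\vertA},\plvB{\vertA}}$ is a bijection between $\ccA^{\applyplan{\buildplan}{\Struct}}$ and $\crossVertexRel{\set{\ccA_1,\ccA_2}}^\Struct$; in fact, identifying crossing vertices with their HF-set representation, these two sets coincide. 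Using Lemma~\ref{lem:auts-flat-vs-hf}, the action of $\autGroup{\nonHF{\applyplan{\buildplan}{\Struct}}}$ on the atom-copies of crossing vertices agrees with the action of $\autGroup{\applyplan{\buildplan}{\Struct}}$ on the crossing vertices themselves. Combined with the first part of the lemma, this gives that if the automorphism group of $\applyplan{\buildplan}{\Struct}$ acts transitively on $\ccA^{\applyplan{\buildplan}{\Struct}}$, then $\autGroup{\Struct} = \autGroup{\applyplan{\buildplan}{\Struct}}$ acts transitively on $\crossVertexRel{\set{\ccA_1,\ccA_2}}^\Struct$, which is what we had to show.

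The only genuinely delicate point will be making the identification in (i) above precise, namely checking that $\autoA$ really cannot map a crossing vertex to a non-crossing one or swap the two components in a way that breaks membership in some $\crossVertexRel{\set{\vcA,\vcB}}^\Struct$; this is handled by noting that~$\Struct$ is normalized with connected components $\Struct_1,\Struct_2$, so any automorphism of $\Struct$ either preserves the two component atom-sets or swaps them setwise, and in either case the unordered-pair structure of crossing vertices is preserved. Everything else reduces to routine verification against the definition of $\applyplan{\buildplan}{\Struct}$.
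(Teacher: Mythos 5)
Your proposal is correct and follows essentially the same route as the paper's proof: both inclusions of the automorphism groups are established by the isomorphism-invariance of the construction of $\applyplan{\buildplan}{\Struct}$ in one direction and by the fact that $\pRel$ prevents crossing vertices from being mapped to plain ones in the other, and the orbit statement is then obtained from the equivariant correspondence $\vertA \mapsto \set{\plvA{\vertA},\plvB{\vertA}}$ given by $\pRel$ (your Property~\ref{prop:an-direct-product-vertex} invocation) together with the first part. The appeal to Lemma~\ref{lem:auts-flat-vs-hf} in your second paragraph is superfluous but harmless, since the orbit notion is already defined directly in terms of HF-structure automorphisms.
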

\begin{proof}
	We first show that $\autGroup{\Struct} \subseteq \autGroup{\applyplan{\buildplan}{\Struct}}$.
	The structure $\applyplan{\buildplan}{\Struct}$
	is defined in an isomorphism-invariant manner.
	Whenever a vertex or relation is added,
	it is done for all vertices/edges of a given vertex class/relation
	(cf.\ the proof of Lemma~\ref{lem:compute-sketch-building-plan} that shows
	that $\applyplan{\buildplan}{\Struct}$ can be obtained
	from~$\Struct$ by a DeepWL-algorithm).
	So every automorphism of~$\Struct$ extends to an automorphism of $\applyplan{\buildplan}{\Struct}$.
	To show $\autGroup{\applyplan{\buildplan}{\Struct}} \subseteq \autGroup{\Struct}$, 
	note that~$\Struct$ is contained in $\applyplan{\buildplan}{\Struct}$.
	Also, note that all relations added in $\applyplan{\buildplan}{\Struct}$
	are new ones and no relation of~$\Struct$ is changed.
	It is never possible that an automorphism maps a crossing vertex to a plain vertex because the~$\pRel$ relation is directed from crossing to plain vertices.	
	So every automorphism of $\applyplan{\buildplan}{\Struct}$
	is an automorphism of~$\Struct$.
	
	For the second part,
	let~$\ccA$ be a crossing fiber
	such that $\ccA^{\applyplan{\buildplan}{\Struct}}$ 
	is an $\applyplan{\buildplan}{\Struct}$-orbit.
	By construction of the~$\pRel$ relation,
	an automorphism~$\auto$ satisfies 
	$\auto(\vertA) = \vertB$
	if and only if
	$\auto(\set{\plvA{\vertA}, \plvB{\vertA}}) = \set{\plvA{\vertB},\plvB{\vertB}}$
	for every $\vertA,\vertB \in \ccA^{\applyplan{\buildplan}{\Struct}}$ 
	(cf.\ Property~\ref{prop:an-component-vertices}).
	Because $\crossVertexRel{\set{\ccA_1,\ccA_2}}^\Struct$
	contains exactly these pairs $\set{\plvA{\vertA}, \plvB{\vertA}}$,
	it is an $\applyplan{\buildplan}{\Struct}$-orbit, too,
	and thus also an $\Struct$-orbit since $\autGroup{\Struct} = \autGroup{\applyplan{\buildplan}{\Struct}}$.
\end{proof}

\paragraph{Comparison to~\cite{GroheSchweitzerWiebking2021}.}
To show that every DeepWL-algorithm can be simulated by an equivalent
normalized DeepWL-algorithm,
almost normalized structures are used as intermediate step in~\cite{GroheSchweitzerWiebking2021}.
The structures obtained from building plans
differ at some points from the
almost normalized structures:
\begin{itemize}
	\item Our notion of crossing vertices is defined only for building plans and not for general DeepWL-algorithms (and is for \emph{undirected} crossing edges).
	\item Our relation~$\pRel$
	does not distinguish between $\plvA{\vertA}$ and $\plvB{\vertA}$.
	This is needed so that a $\choiceSym$-operation on
	crossing vertices does not necessarily distinguish the two components.
	\item The relation~$\pRel$ assigns to every crossing vertex
	exactly one plain vertex in each component.
	This ensures that choice sets can still be witnessed.
	\item Properties similar to~\ref{prop:an-direct-product-vertex}
	and~\ref{prop:an-direct-product-edge}
	are always satisfied
	for almost normalized structures in~\cite{GroheSchweitzerWiebking2021}.
	In combination with $\refineSym$- and $\choiceSym$-operations,
	this would not be the case anymore.
\end{itemize}

\subsubsection{Building Plans and \texorpdfstring{$\sccSym$}{scc}-Operations}
Before we can start to design normalized DeepWL+WSC-algorithms,
we have to investigate $\sccSym$-operations.
Assume that for an inter-crossing relation~$\rel$ 
the $\scc{\rel}$-operation is to be executed
and we want to find a building plan simulating this.
The challenge is to construct new plain vertices
so that we find plain vertex classes which correspond
to the vertex class obtained by $\scc{\rel}$.
The first step is to analyze the SCCs
of the corresponding plain colors of~$\rel$ in the components.
In a second step, we show how we can define a building plan
to simulate the $\scc{\rel}$-operation.
We start with a lemma regarding SCCs in coherent configurations.

\begin{lemma}
	\label{lem:coconf-scc-cc}
	Let~$\StructC$ be a coherent configuration with signature~$\sigB$
	and $\colA,\colB \in \sigB$ be colors connecting vertices of the same fiber,~i.e., $\colA^\StructC \subseteq (\ccA^\StructC)^2$ 
	and $\colB^\StructC \subseteq (\ccA^\StructC)^2$ 
	for some fiber $\ccA \in \sigB$.
	Then 
	\begin{enumerate}
		\item \label{itm:coconf-scc-cc-one} every $\colA$-connected component is strongly $\colA$-connected and
		\item \label{itm:coconf-scc-cc-two} every $\set{\colA,\colB}$-connected component is strongly $\set{\colA,\colB}$-connected.
	\end{enumerate}
\end{lemma}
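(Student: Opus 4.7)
The plan is to reduce both parts to the following well-known combinatorial fact: in a finite directed graph in which every vertex has equal in-degree and out-degree, every weakly connected component is strongly connected (such components admit an Eulerian circuit which witnesses strong connectivity along every edge). Once this is established, the work lies entirely in showing that the coherent configuration axioms force the relevant relations to have the equal-degree property.

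For Part~\ref{itm:coconf-scc-cc-one}, I would proceed as follows. Fix a vertex $\vertA \in \ccA^\StructC$. The out-neighborhood of $\vertA$ in $\colA^\StructC$ consists of all $\vertC$ with $(\vertA,\vertC) \in \colA^\StructC$. Since $(\vertA,\vertA) \in \ccA^\StructC$, this count equals $q(\ccA,\colA,\inv{\colA})$, because any $\vertC$ with $(\vertA,\vertC)\in\colA^\StructC$ automatically satisfies $(\vertC,\vertA)\in (\inv{\colA})^\StructC$. Symmetrically, the in-degree of $\vertA$ in $\colA^\StructC$ equals $q(\ccA,\inv{\colA},\colA)$. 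Counting $|\colA^\StructC|$ once via out-degrees and once via in-degrees yields $|\ccA^\StructC|\cdot q(\ccA,\colA,\inv{\colA}) = |\ccA^\StructC|\cdot q(\ccA,\inv{\colA},\colA)$, so the two intersection numbers coincide. Consequently every vertex in $\ccA^\StructC$ has equal in- and out-degree in $\colA^\StructC$, and invoking the Eulerian fact above finishes the argument.

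For Part~\ref{itm:coconf-scc-cc-two}, I would observe that the same equal-degree property is preserved under disjoint union of colors. Since the axiom that colors partition $\StructVC^2$ implies $\colA^\StructC$ and $\colB^\StructC$ are disjoint, the in-degree (respectively out-degree) of any $\vertA \in \ccA^\StructC$ in $\colA^\StructC \cup \colB^\StructC$ is the sum of its in-degrees (respectively out-degrees) in $\colA^\StructC$ and $\colB^\StructC$. By Part~\ref{itm:coconf-scc-cc-one} applied separately to $\colA$ and to $\colB$, both of these sums coincide, so the union again satisfies the equal-degree condition on the finite vertex set $\ccA^\StructC$, and strong $\set{\colA,\colB}$-connectivity of each connected component follows.

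There is no real obstacle here; the only care required is the bookkeeping that identifies out- and in-degrees with intersection numbers (and the observation that loops, in the degenerate case where $\colA$ or $\colB$ is a sub-relation of $\diagrel{\StructVC}$, contribute equally to both, keeping the argument intact). The equal in- and out-degree fact implying coincidence of weak and strong components is standard and can be cited or proved in one line by Eulerian decomposition of each weakly connected component.
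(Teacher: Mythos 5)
Your proof is correct, but it takes a genuinely different route from the paper's. The paper proves Part~\ref{itm:coconf-scc-cc-one} by restricting the configuration to the connected component in question and citing a structural result on coherent configurations (Theorem~3.1.5 of Chen--Ponomarenko), and then derives Part~\ref{itm:coconf-scc-cc-two} by an edge-by-edge reduction: for each edge $(\vertA_i,\vertA_{i+1})$ of a $\set{\colA,\colB}$-path, Part~\ref{itm:coconf-scc-cc-one} supplies a reverse $\colA$- (or $\colB$-) path, so the whole path can be reversed. You instead give a self-contained counting argument: the intersection-number axiom makes $\colA^\StructC$ an in- and out-regular digraph on $\ccA^\StructC$ (out-degree $q(\ccA,\colA,\inv{\colA})$, in-degree $q(\ccA,\inv{\colA},\colA)$, equal by double-counting $|\colA^\StructC|$), and the standard Eulerian fact that balanced digraphs have weakly $=$ strongly connected components finishes both parts, with Part~\ref{itm:coconf-scc-cc-two} following because the balance condition is additive over the disjoint colors $\colA$ and $\colB$. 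Your approach buys elementarity and uniformity -- no appeal to the theory of primitive coherent configurations, and both parts fall out of one lemma -- at the cost of importing (or proving) the Eulerian fact; the paper's version is shorter on the page precisely because it outsources Part~\ref{itm:coconf-scc-cc-one} to the literature. One cosmetic remark: in Part~\ref{itm:coconf-scc-cc-two} you invoke ``Part~\ref{itm:coconf-scc-cc-one}'' where you really mean the equal-degree computation established \emph{inside} your proof of Part~\ref{itm:coconf-scc-cc-one}; it would be cleaner to isolate that computation as the claim you reuse.
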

\begin{proof}
	We start with Claim~\ref{itm:coconf-scc-cc-one}:
	If~$\col$ is itself a fiber, all connected components are trivial
	and the claim follows.
	Otherwise, let~$c$ be an $\col$-connected component.
	Then we can assume that~$\StructC$
	is a primitive coherent configuration,
	otherwise we can restrict~$\StructC$ to~$c$
	(all edges leaving~$c$ have different colors than edges contained in~$c$).
	Finally, it follows from Theorem 3.1.5.~in~\cite{ChenPonomarenko2019}
	that~$c$ is strongly $\col$-connected because~$\col$ is not a fiber.
	
	To show Claim~\ref{itm:coconf-scc-cc-two},
	let~$c$ be an $\set{\colA,\colB}$-connected component.
	To prove that~$c$ is strongly $\set{\colA,\colB}$-connected,
	let $(\vertA_1, \dots, \vertA_k)$ be a $\set{\colA,\colB}$-path in~$c$.
	We show that there is an $\set{\colA,\colB}$-path
	$(\vertB_1, \dots, \vertB_m)$ such that $\vertB_1 = \vertA_k$ and $\vertB_m = \vertA_1$.
	To do so, it suffices to show that for every $i \in [k-1]$
	there is a $\set{\colA,\colB}$-path from $\vertA_{i+1}$ to $\vertA_i$.
	So let $i \in [k-1]$ and w.l.o.g.~assume that $(\vertA_i, \vertA_{i+1}) \in \colA^\StructC$.
	Then by Claim~\ref{itm:coconf-scc-cc-one},
	the vertices~$\vertA_i$ and~$\vertA_{i+1}$ are in 
	the same $\colA$-SCC and in particular there is an $\colA$-path
	and so also an $\set{\colA,\colB}$-path from~$\vertA_{i+1}$ to~$\vertA_i$.
\end{proof}

\begin{lemma}
	\label{lem:contract-components-unique}
	Let $\Struct = \Struct_1 \disunion \Struct_2$ be a normalized HF-structure,~%
	$\buildplan$ be a building plan for~$\StructA$,
	$\colA$ be an inter-crossing color of $\StructB = \applyplan{\buildplan}{\StructA}$
	such that $\colA$-edges connect vertices of the same fiber $\ccA_\col$, i.e., $\col^\StructB \subseteq (\ccA_\col^\StructB)^2$,
	and $\colB, \colC \in \sigB$ be the corresponding plain colors of~$\col$.
	If~$c$ is an $\colA$-SCC,
	then $\setcond{\plvI{\vertA}{i}}{\vertA \in c}$ is a $\set{\colB,\colC}$-SCC
	for every $i \in [2]$.
\end{lemma}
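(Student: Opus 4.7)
The plan is to reduce the lemma to the tensor-product structure of $\col^\StructB$ inherited from $(\cc^1,\colB^\Struct)$ and $(\cc^2,\colC^\Struct)$, and then to invoke a classical fact about strongly connected components of tensor products. First I unpack what Lemma~\ref{lem:buildplan-almost-normalized} tells us about $\ccA_\col$. By Property~\ref{prop:an-direct-product-vertex}, the crossing fiber $\ccA_\col$ is determined by an unordered pair $\set{\cc^1,\cc^2}$ of plain fibers with $\cc^j \subseteq \verticesOf{j}{\Struct}$, and the map $\vertF \mapsto (\plvA{\vertF},\plvB{\vertF})$ is a bijection $\ccA_\col^\StructB \to \cc^1 \times \cc^2$. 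The two plain colors associated with $\col$ via Property~\ref{prop:an-direct-product-edge} sit in different components, so up to swapping names $\colB^\Struct \subseteq (\cc^1)^2$ and $\colC^\Struct \subseteq (\cc^2)^2$. The refinement constructed inside the proof of Property~\ref{prop:an-direct-product-edge} moreover gives the converse characterization
\begin{equation*}
	(\vertA,\vertB) \in \col^\StructB \iff (\plvA{\vertA},\plvA{\vertB}) \in \colB^\Struct \text{ and } (\plvB{\vertA},\plvB{\vertB}) \in \colC^\Struct,
\end{equation*}
so under the above bijection the digraph $(\ccA_\col^\StructB,\col^\StructB)$ is exactly the tensor (categorical) product of $(\cc^1,\colB^\Struct)$ and $(\cc^2,\colC^\Struct)$.

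Now I localize to the $\col$-SCC $c$. Fix any $\vertA \in c$, let $H_1$ be the $\colB$-connected component of $\plvA{\vertA}$ in $\cc^1$ and $H_2$ the $\colC$-connected component of $\plvB{\vertA}$ in $\cc^2$; by Lemma~\ref{lem:coconf-scc-cc} both $H_1$ and $H_2$ are already strongly connected. Projecting a $\col$-walk in $c$ via $\pi_1$ (respectively $\pi_2$) and dropping consecutive vertex repetitions, which only arise in the degenerate case that $\colB$ or $\colC$ is a fiber, yields a $\colB$-walk in $H_1$ (respectively a $\colC$-walk in $H_2$). Hence $c \subseteq H_1 \times H_2$ under the bijection, and $\pi_i(c) \subseteq H_i$ for both $i$. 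Since plain fibers live in a single component, the $\set{\colB,\colC}$-SCC of $\plvI{\vertA}{i}$ in $\Struct$ is exactly $H_i$, so it suffices to prove the reverse inclusion $\pi_i(c) \supseteq H_i$.

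This reverse inclusion is precisely the content of the classical tensor-product SCC theorem (Weichsel/McAndrew): the tensor product of two finite strongly connected digraphs of periods $d_1, d_2$ has exactly $\gcd(d_1,d_2)$ strongly connected components, each of which projects surjectively onto the vertex set of each factor. Since $c$ is one such SCC of $H_1 \times H_2$, this immediately gives $\pi_i(c) = H_i$.

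The main obstacle is importing or self-containedly reproving the surjectivity step for the tensor product. A coherence-based substitute is as follows: inside a strongly connected digraph extracted from a coherent configuration the set of walk lengths between any pair of vertices of a fixed color is, for all sufficiently large integers, a full arithmetic progression whose period is an invariant of the fiber. Choosing any common multiple $k$ of the two periods that is large enough, one realises a $\colB$-walk of length $k$ from $\plvA{\vertA}$ to an arbitrary target $\vertD \in H_1$ simultaneously with a $\colC$-walk of length $k$ in $H_2$ starting at $\plvB{\vertA}$, and combines this with an analogous pair of return walks of some common large length $k'$. Lifting the synchronised walks to $\col^\StructB$ via the direct product characterisation above produces a crossing vertex $\vertF$ with $\plvA{\vertF} = \vertD$ that is $\col$-reachable from $\vertA$ and from which $\vertA$ is $\col$-reachable, hence $\vertF \in c$. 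This establishes $\pi_1(c) = H_1$ directly; the argument for $i=2$ is symmetric.
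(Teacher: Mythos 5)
There is a genuine gap, and it sits at the very first structural step. You assume that the two corresponding plain colors split cleanly between the components --- $\colB^\Struct \subseteq (\cc^1)^2$ with $\cc^1 \subseteq \verticesOf{1}{\Struct}$ and $\colC^\Struct \subseteq (\cc^2)^2$ with $\cc^2 \subseteq \verticesOf{2}{\Struct}$ --- so that $\ccA_\col^\StructB$ is in bijection with $\cc^1 \times \cc^2$ and $\col^\StructB$ is the tensor product of $(\cc^1,\colB^\Struct)$ and $(\cc^2,\colC^\Struct)$. This is false in general: the fibers and colors of $\coConf{\Struct_1 \disunion \Struct_2}$ are computed on the whole disjoint union, and a plain fiber or plain color can (and, when $\Struct_1 \iso \Struct_2$ --- the case that actually matters for isomorphism testing --- must) have instances in both components. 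Property~\ref{prop:an-direct-product-edge} only fixes the \emph{unordered set} $\set{\colB,\colC}$; for one $\col$-edge the $\colB$-half may lie in $\Struct_1$ and the $\colC$-half in $\Struct_2$, for another $\col$-edge the roles may be swapped, and even $\colB=\colC$ is possible. Consequently your displayed equivalence is missing the swapped disjunct, $\col^\StructB$ is in general a union of two tensor products rather than one, and --- most importantly --- the set you must show to be an SCC is a $\set{\colB,\colC}$-SCC, not a $\colB$-SCC: when $\colC$ also has edges inside $\verticesOf{1}{\Struct}$, a $\set{\colB,\colC}$-walk there alternates between the two colors, and the two notions of connected component genuinely differ. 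The Weichsel/McAndrew theorem therefore does not apply to the digraph actually at hand.

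The paper copes with exactly this difficulty: it partitions $\col^\StructB$ into $K_1,K_2$ according to which component carries the $\colB$-half of an edge, shows that the projections of a $K_i$-SCC are a $\colB$-SCC on one side and a $\colC$-SCC on the other (its Claim~1, which is essentially your argument in the ``clean'' case), and then needs a separate synchronization argument (its Claim~2, illustrated in Figure~\ref{fig:cycles-scc-in-coherent-conf}) for the alternating case in which $\colB$ and $\colC$ connect the two corresponding fibers in opposite directions inside each component; a final case distinction assembles the $\set{\colB,\colC}$-SCCs from these pieces. Your closing ``coherence-based substitute'' (realising walks of a common large length in both factors and lifting them) is in the same spirit as that Claim~2, but as written it is applied to a product structure that the lemma's hypotheses do not provide, so the proof does not go through without essentially redoing the paper's case analysis.
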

\begin{proof}
	\setcounter{claim}{0}
	Define \[K_i := \setcond*{(\vertA,\vertB) \in \colA^\StructB} {(\plvI{\vertA}{i},\plvI{\vertB}{i}) \in \colB^\StructA}.\]
	That is, if $\colB\neq \colC$, we partition 
	$\colA^\StructB$ into~$K_1$ and~$K_2$ depending on whether the 
	corresponding $\colB$-edge is in~$\StructA_1$ or in~$\StructA_2$.
	If $\colB = \colC$, we just have $K_1 = K_2 = \colA^\StructB$.
	For a set~$c$ of crossing vertices we define $\plvI{c}{i} := \setcond{\plvI{\vertA}{i}}{\vertA \in c}$.
	We start to analyze the SCCs formed by the $K_i$-edges.
	
	\begin{claim}
		\label{clm:plain-vertices-sccs-one-side}
		Let $i \in [2]$ and $c$ be a $K_i$-SCC.
		Then the set~$\plvI{c}{i}$
		is an $\colB$-SCC.
		For $j \in [2]$  such that $\set{i,j} = [2]$,
		the set $\plvI{c}{j}$
		is an $\colC$-SCC.
	\end{claim}
	\begin{claimproof}
		We consider the part of the claim regarding $\colB$-SCCs.
		The part regarding $\colC$-SCCs is symmetric.		
		It is clear that if $(\vertA_1, \dots, \vertA_k)$ is a $K_i$-path,
		then $(\plvI{\vertA_1}{i}, \dots, \plvI{\vertA_k}{i})$ is an $\colB$-path.
		So~$\plvI{c}{i}$
		is contained in an $\colB$-SCC.
		If $\colB$-edges connect vertices in different fibers,
		then all $\colB$-SCCs are singletons.
		That is~$\plvI{c}{i}$ cannot be strictly contained in an $\colB$-SCC.
		
		So it remains to consider the case when~$\colB$ connects vertices in the same fiber.
		Then also~$\colC$ has to connect vertices in the same fiber
		(otherwise~$\colA$ would not be a color).
		Let $\set{\ccA, \ccB}$ be the fibers corresponding to~$\ccA_\col$.
		If~$\ccA$ has an incident $\colB$-edge, then~$\ccB$ has an incident $\colC$-edge (because otherwise~$\col$ would be empty, which is not allowed for colors).
		
		We show that~$\plvI{c}{i}$ is an $\colB$-connected component.
		This implies by Lemma~\ref{lem:coconf-scc-cc} that~$\plvI{c}{j}$ is an $\colB$-SCC.
		Let $(\vertA_1, \vertA_2) \in \colB^\Struct$
		be such that $\vertA_1 \in \plvI{c}{i}$.
		We show that there is an $\colA$-edge $(\vertB_1, \vertB_2)$
		such that $(\plvI{\vertB_1}{i}, \plvI{\vertB_2}{i}) = (\vertA_1, \vertA_2)$ and $\vertB_1, \vertB_2 \in c$,
		which implies that $\vertA_2 \in \plvI{c}{i}$ and by induction that $\plvI{c}{i}$ is an $\colB$-connected component.
		Because $\vertA_1 \in \plvI{c}{i}$,
		there is a vertex $\vertB_1 \in c$ such that $\plvI{\vertB_1}{i} = \vertA_1$.
		Assume w.l.o.g.~that $\vertA_1 \in \ccA^\Struct$.
		Then $\plvI{\vertB_1}{j} \in \ccB^\Struct$, where $\set{i,j} = [2]$.
		Because~$\vertA_1$ has an incident $\colB$-edge (namely $(\vertA_1, \vertA_2)$),
		$\plvI{\vertB_1}{j}$ has an incident $\colC$\nobreakdash-edge $(\plvI{\vertB_1}{j}, \vertC)$.
		Because~$\colB$ and~$\colC$ connect vertices of the same fiber,
		$\vertA_2 \in \ccA^\Struct$ and $\vertC \in \ccB^\Struct$.
		Then by Property~\ref{prop:an-direct-product-vertex}
		there is a vertex~$\vertB_2$ such that
		$\plvI{\vertB_2}{i} = \vertA_2$ and $\plvI{\vertB_2}{j} = \vertC$.
		Now, we have that $(\vertB_1,\vertB_2) \in \col^\StructB$
		by Property~\ref{prop:an-direct-product-edge},
		that $(\plvI{\vertB_1}{i},\plvI{\vertB_2}{i}) =(\vertA_1,\vertA_2) \in \colB^\Struct$,
		and that $(\plvI{\vertB_1}{j},\plvI{\vertB_2}{j}) = (\plvI{\vertB_1}{j},\vertC) \in \colC^\Struct$.
		That is,~$\vertB_1$ and~$\vertB_2$ are in the same $\col$-connected component.
		This, by Lemma~\ref{lem:coconf-scc-cc},
		implies that~$\vertB_1$ and~$\vertB_2$ are in the same $\col$-SCC
		and thus $\vertB_2 \in c$.
	\end{claimproof}

	\begin{claim}
		\label{clm:plain-vertices-sccs-alternating}
		Let~$\colB$ connect vertices of different fibers~$\ccA$ and~$\ccB$
		and let~$\colC$ connect vertices of the same different fibers
		but in the other direction,
		i.e., $\colB^{\Struct} \subseteq \ccA^\Struct \times \ccB^\Struct$
		and $\colC^\Struct \subseteq \ccB^\Struct \times \ccA^\Struct$ for $\ccA \neq \ccB$.
		Let $i \in [2]$ and~$c$ be an $\colA$-SCC.
		Then the set~$\plvI{c}{i}$
		is an $\set{\colB,\colC}$-SCC.
	\end{claim}
	\begin{claimproof}
		If $(\vertA_1, \dots, \vertA_k)$ is an $\col$-path,
		then $(\plvI{\vertA_1}{i}, \dots, \plvI{\vertA_k}{i})$ is an $\set{\colB,\colC}$-path.
		So~$\plvI{c}{i}$
		is contained in an $\set{\colB,\colC}$-SCC~$d$.
		For a sake of contradiction,
		assume that~$\plvI{c}{i}$ is strictly contained in~$d$.
		Then there is an $\set{\colB,\colC}$-path $(\vertB_1, \dots, \vertB_\ell)$ contained in~$d$
		with $\ell \geq 3$, $\vertB_1 \in \plvI{c}{i}$, $\vertB_\ell \in \plvI{c}{i}$,
		and $\vertB_k \notin \plvI{c}{i}$ for every $1 < k < \ell$.
		We observe the following:
		\begin{enumerate}[label=\alph*)]
			\item There is a (possibly empty) $\set{\colB,\colC}$-path
			$(\vertB_\ell, \dots, \vertB_{\ell'+1})$ contained in~$d$
			such that $\vertB_{\ell'+1} = \vertB_1$
			because~$d$ is an $\set{\colB,\colC}$-SCC.
			\item In every $\set{\colB,\colC}$-path
			the edge colors~$\colB$ and~$\colC$ alternate.
			Likewise, the fibers of the vertices alternate.
			Thus, every cycle consists of an even number of vertices.
		\end{enumerate}
		Let $(\vertA_1, \dots, \vertA_{m'}, \vertA_1)$ be a non-empty $\colA$-cycle (which possibly uses vertices multiple times)
		such that $\plvI{\vertA_1}{i} = \vertB_1$, $1 < m \leq m' $, and
		$\plvI{\vertA_m}{i} = \vertB_{\ell'}$.
		Such a cycle exists because~$c$ is an $\colA$-SCC and $\vertB_\ell \in \plvI{c}{i}$.	
		Consider the two following sequences $\tup{\beta} := (\vertB_1, \dots, \vertB_{\ell'})^{m'}$ and $\tup{\alpha} := (\plvI{\vertA_1}{j}, \dots, \plvI{\vertA_{m'}}{j})^{\ell'}$ of plain vertices of length $\ell' \cdot m'$,
		where~$j$ is chosen such that $\set{i, j} =[2]$,
		cf.\ Figure~\ref{fig:cycles-scc-in-coherent-conf}.
			\begin{figure}
			\centering
			\begin{tikzpicture}
				\tikzset{
					bicolor/.style n args={5}{
						dashed,dash pattern=on #3 off #3,#4,#1,
						postaction={draw,dashed,dash pattern=on #3 off #3,#5,#2,dash phase=#3}
					},
				}
			
				\tikzset{
					bicolor color fill/.code 2 args={
						\pgfdeclareverticalshading[%
						tikz@axis@top,tikz@axis@middle,tikz@axis@bottom%
						]{diagonalfill}{100bp}{%
							color(0bp)=(tikz@axis@bottom);
							color(50bp)=(tikz@axis@bottom);
							color(50bp)=(tikz@axis@middle);
							color(50bp)=(tikz@axis@top);
							color(100bp)=(tikz@axis@top)
						}
						\tikzset{shade, left color=#1, right color=#2, shading=diagonalfill}
					}
				}

				\tikzstyle{path} = [decorate,decoration={snake,amplitude=.4mm,segment length=2mm, post length = 1mm}, thick];
				
				\tikzstyle{sequence} = [black, draw,-Latex, double];
				
				\begin{scope}[scale=1.2]
					\node[vertex, label=173:{$\vertA_1$}] (u1) at (165:1) {};
					\node[vertex, label=00: {$\vertA_m$}] (um) at (0:1) {};
					\node[vertex, label=187:{$\vertA_{m'}$}] (ump) at (195:1){};
					
					\draw[->,draw, black, path]
					(159:1) arc(-21:-174:-1);
					\draw[->,draw, black, path]
					(-6:1) arc(174:21:-1) (ump);
					\path[->,draw, black, thick]
					(ump) edge node[right, font=\footnotesize] {$\col$}(u1);

					\node[align=center] (ai) at (0,-2) {\strut crossing vertices\\of $\applyplan{\buildplan}{\StructA}$\strut};
				\end{scope}

				\begin{scope}[shift={(5,0)},scale=1.2]
					
					\draw[gray, fill=gray!30!white]
					(135:0.8) arc (315:135:0.4)
					arc(135:40:1.6)
					arc(40:-140:0.4)
					arc(40:135:0.8);
					\node[gray!50!black] (dclab)  at (50:2.2) {$d\setminus \plvI{c}{i}$};
					
					\node[vertex, red, label=173:{$\vertB_1$}] (v1) at (165:1) {};
					\node[vertex, blue] (v2) at (135:1) {};
					\node[vertex, red] (v3) at (105:1) {};
					\node[vertex, blue] (v4) at (75:1) {};
					\node[vertex, blue, label=187:{$\vertB_{\ell'}$}] (vlp) at (195:1) {};
					\node[vertex,  bicolor color fill={red}{blue},shading angle=45, label=0  :{$\vertB_\ell$}]    (vl)  at (0:1){};
					
					\path[->, draw, magenta, thick]
					(v1) edge (v2)
					(v3) edge node[above, font=\footnotesize,] {$\colB$}  (v4);
					
					\path[->, draw, cyan, thick]
					(vlp) edge  (v1)
					(v2) edge node[above, font=\footnotesize, xshift=-1mm] {$\colC$}(v3);
					
					\path[->, draw, path, bicolor={cyan}{magenta}{13pt}{-}{->}]
					(69:1) arc(69:6:1) (vl);
					\path[->, draw, path, bicolor={cyan}{magenta}{13.6pt}{->}{-}]
					(-6:1) arc(-6: -159:1) (vlp);
					
					\draw[sequence]
					(170:0.45) arc (170:-170:0.45);
					\node (beta) at (0,0) {$\tup{\beta}$};
					\node[font=\scriptsize] (betaiter) at (0,-0.65) {\strut $m'$ times};
					
					\node[align=center] (ai) at (0,-2) {\strut plain vertices\\of $\StructA_i$\strut};

				\end{scope}
				
				\begin{scope}[shift={(-5,0)},scale=1.2]
					\node[vertex, blue, label=7:{$\plvI{\vertA_1}{j}$}] (ui1) at (15:1) {};
					\node[vertex, red] (ui2) at (45:1) {};
					\node[vertex, blue] (ui3) at (75:1) {};
					\node[vertex, red] (ui4) at (105:1) {};
					\node[vertex, red, label=-7:{$\plvI{\vertA_{m'}}{j}$}] (uimp) at (-15:1) {};
					
					\path[->, draw, cyan, thick]
					(ui1) edge (ui2)
					(ui3) edge  node[above, font=\footnotesize]  {$\colC$} (ui4);
					
					\path[->, draw, magenta, thick]
					(uimp) edge  (ui1)
					(ui2) edge node[above, font=\footnotesize, xshift=1mm] {$\colB$}(ui3);
					
					\draw[->, draw, path, bicolor={magenta}{cyan}{13pt}{-}{->}]
					(111:1) arc(111:339:1);
					
					\draw[sequence]
					(10:0.45) arc (10:350:0.45);
					\node (alpha) at (0,0) {$\tup{\alpha}$};
					\node[font=\scriptsize] (alphaiter) at (0,-0.65) {\strut $\ell'$ times};
					
					\node[align=center] (ai) at (0,-2) {\strut plain vertices\\of $\StructA_j$\strut};

				\end{scope}
				
				\path[->, draw, darkgreen, thick]
				(u1) to[bend right=5]  (ui1);
				\path[->, draw, darkgreen, thick]
				(u1) to[bend left=5] (v1);
				\path[->, draw, darkgreen, thick]
				(um) to[bend right=15] (vlp);
				\path[->, draw, darkgreen, thick]
				(ump) to[bend left=5] (uimp);
				
				\node[darkgreen] (prel1) at (2.5,0) {\strut $\pRel$};
				\node[darkgreen] (prel2) at (-2.5,0) {\strut $\pRel$};
				
			\end{tikzpicture}
			
			\caption{
				The situation in Claim~\ref{clm:plain-vertices-sccs-alternating} in the proof of Lemma~\ref{lem:contract-components-unique}:
				All vertices in~$\ccA$ are red, the ones in~$\ccB$ are blue.
				Whether the vertex~$\vertB_\ell$ is in~$\ccA$ or in~$\ccB$
				depends on whether~$\ell$ is even or odd.
				The sequence~$\tup{\alpha}$ iterates~$\ell'$~times the cycle $(\plvI{u_1}{j}, \dots ,\plvI{u_{m'}}{j},\plvI{u_1}{j})$
				and the sequence~$\tup{\beta}$ iterates~$m'$~times the cycle $(\vertB_1,\dots,\vertB_{\ell'},\vertB_1)$.
				This figure assumes that $\vertB_1 \in \ccA^\Struct$.
				In the case that $\vertB_1 \in \ccB^\Struct$,
				the colors~$\colB$ and~$\colC$ 
				and the fibers~$\ccA$ and~$\ccB$ need to be swapped.
			}
			\label{fig:cycles-scc-in-coherent-conf}
		\end{figure}

		For a vertex $\vertC \in \vertices{\StructA}$, denote by~$\ccA_\vertC$ the fiber containing~$\vertC$.
		For every pair ${(\vertC,\vertC') \in \vertices{\StructA}^2}$, denote by~$\col_{(\vertC,\vertC')}$ the color containing $(\vertC,\vertC')$.
		We show that
		\[\set{\ccA_{\beta_k},\ccA_{\alpha_k}} = \set{\ccA, \ccB}\]
		for every $k \in [\ell' \cdot m']$:
		First, consider the case $k= 1$.
		By construction,
		\[(\beta_1,\alpha_1) = (\vertB_1, \plvI{\vertA_1}{j}) = (\plvI{\vertA_1}{i}, \plvI{\vertA_1}{j}).\]
		Because~$\colA$ connects vertices in the same fiber,
		the corresponding fibers of~$\ccA_\col$ must be $\set{\ccA,\ccB}$
		and because $\vertA_1 \in \ccA_\col^\StructB$,
		it follows that $\set{\ccA_{\beta_1},\ccA_{\alpha_1}} = \set{\ccA,\ccB}$.
		Second, assume that $\set{\ccA_{\beta_k},\ccA_{\alpha_k}} = \set{\ccA, \ccB}$ for some $k  < \ell' \cdot m'$.
		We already have seen that the fibers~$\ccA$ and~$\ccB$
		alternate on $\set{\colB, \colC}$-paths,
		so in particular on the cycles $(\vertB_1, \dots, \vertB_{\ell'}, \vertB_1)$
		and $(\plvI{\vertA_1}{j}, \dots, \plvI{\vertA_{m'}}{j},\plvI{\vertA_1}{j})$.
		So if $\ccA_{\beta_k} = \ccA$, then $\ccA_{\beta_{k+1}} = \ccB$
		and vice versa
		and similar for~$\alpha_k$ and~$\alpha_{k+1}$.
		Hence, $\set{\ccA_{\beta_{k+1}},\ccA_{\alpha_{k+1}}} = \set{\ccA, \ccB}$.

		By Property~\ref{prop:an-direct-product-vertex},
		there is a sequence of crossing vertices
		$(\vertC_1, \dots, \vertC_{\ell'\cdot m'})$ such that
		$\plvI{\vertC_k}{i} = \beta_k$ and
		$\plvI{\vertC_k}{j} = \alpha_k$
		for all $k \in [\ell' \cdot m']$.
		We prove that $(\vertC_1, \dots, \vertC_{\ell'\cdot m'}, \vertC_1)$ is an $\colA$-cycle.
		We consider the sequences of colors
		\begin{align*}
		&\col_{(\plvI{\vertC_1}{i},\plvI{\vertC_2}{i})}, \dots ,
		\col_{(\plvI{\vertC_{\ell'\cdot m'-1}}{i},\plvI{\vertC_{\ell'\cdot m'}}{i})},
		\col_{(\plvI{\vertC_{\ell'\cdot m'}}{i},\plvI{\vertC_1}{i})}
		\text{ and} \\
		&\col_{(\plvI{\vertC_1}{j},\plvI{\vertC_2}{j})}, \dots ,
		\col_{(\plvI{\vertC_{\ell'\cdot m'-1}}{j},\plvI{\vertC_{\ell'\cdot m'}}{j})},
		\col_{(\plvI{\vertC_{\ell'\cdot m'}}{j},\plvI{\vertC_1}{j})}.
		\end{align*}
		Similar to the case of the fibers,
		the colors~$\colB$ and~$\colC$ alternate in both sequences
		and $\set{\col_{(\plvI{\vertC_k}{i},\plvI{\vertC_{k+1}}{i})},\col_{(\plvI{\vertC_k}{j},\plvI{\vertC_2}{{k+1}})}} =\set{\colB,\colC}$
		for every $k  < \ell'\cdot m'$ and
		\[\set[\big]{\col_{(\plvI{\vertC_{\ell'\cdot m'}}{i},\plvI{\vertC_1}{i})},\col_{(\plvI{\vertC_{\ell'\cdot m'}}{j},\plvI{\vertC_1}{j})}} =\set{\colB,\colC}.\]
		Because~$\colB$ and~$\colC$ are the corresponding colors of~$\colA$,
		$(\vertC_k,\vertC_{k+1}) \in \col^\StructB$ by Property~\ref{prop:an-direct-product-edge}
		for every $k <  \ell'\cdot m'$
		and $(\vertC_{\ell'\cdot m'},\vertC_1) \in \col^\StructB$.
		Thus,
		$(\vertC_1, \dots, \vertC_{\ell'\cdot m'},\vertC_1)$ is an $\colA$-cycle.
		
		Because $\beta_1 = \vertB_1 = \plvI{\vertA_1}{i}$
		and $\alpha_1 = \plvI{\vertA_{1}}{j}$,
		we have that $\vertC_1 = \vertA_1 \in c$.
		In particular, the cycle $(\vertC_1, \dots, \vertC_{\ell'\cdot m'},\vertC_1)$
		contains a vertex in~$c$.
		By construction,
		 $\plvI{\vertC_k}{i} = \vertB_{k}$ for every $1 < k \leq \ell'$.
		And by assumption on the path $(\vertB_1, \dots, \vertB_\ell)$,
		$\vertB_{k} \notin \plvI{c}{i}$ for every $1 < k < \ell$.
		Because $\ell \geq 3$,
		there is a $1 < k < \ell$ such that 
		$\vertB_{k} \notin \plvI{c}{i}$
		and thus $\vertC_k \notin c$.
		But this means that there is an $\col$-cycle containing a vertex
		in the $\col$-SCC~$c$ and a vertex not in~$c$,
		which is a contradiction.
	\end{claimproof}

	First consider the case that $\colB = \colC$.
	Then $K_1 = K_2 = \colA^\StructB$
	and the claim of the lemma follows immediately
	from Claim~\ref{clm:plain-vertices-sccs-one-side}.
	So consider the case that $\colB \neq \colC$.
	Let $\colB^\Struct \subseteq \ccA_\colB^\Struct \times \ccB_\colB^\Struct$, $\colC^\Struct \subseteq \ccA_\colC^\Struct \times \ccB_\colC^\Struct$,
	and recall that $\colA^\StructB \subseteq (\ccA_\colA^\StructB )^2$.
	Then, by Property~\ref{prop:an-direct-product-vertex},
	it follows that the corresponding fibers for~$\ccA_\colA$
	are $\set{\ccA_\colB, \ccA_\colC} = \set{\ccB_\colB, \ccB_\colC}$.
	We make the following case distinction:
	
	\begin{itemize}
		\item $\ccA_\colB \neq \ccB_\colB$ or $\ccA_\colC \neq \ccB_\colC$:
		Then 
		we have that $\ccA_\colB = \ccB_\colC \neq \ccB_\colB = \ccA_\colC$
		because $\set{\ccA_\colB, \ccA_\colC} = \set{\ccB_\colB, \ccB_\colC}$.
		The claim of the lemma follows immediately from Claim~\ref{clm:plain-vertices-sccs-alternating}.
		
		\item$\ccA_\colB= \ccB_\colB = \ccA_\colC = \ccB_\colC$:
		Let~$c$ be an $\colA$-SCC.
		By Lemma~\ref{lem:coconf-scc-cc},
		it suffices to show that~$\plvI{c}{i}$ is an $\set{\colB,\colC}$-connected component.
		So let $(\vertA, \vertB)$ be an $\set{\colB,\colC}$-edge
		and let $\vertA \in \plvI{c}{i}$.
		We show that then also $\vertB \in \plvI{c}{i}$
		which by induction shows that~$\plvI{c}{i}$ is an $\set{\colB,\colC}$-connected component.
		Assume w.l.o.g.~that $(\vertA, \vertB) \in \colB^\Struct$.
		Now for some $K_i$-SCC~$c'$ we have that
		$\vertA \in \plvI{c'}{i}$
		and by Claim~\ref{clm:plain-vertices-sccs-one-side}
		we also have that  $\vertB \in \plvI{c'}{i}$
		because by Lemma~\ref{lem:coconf-scc-cc} the vertices $\vertA$ and~$\vertB$ are in the same $\colB$-SCC.
		Because $c' \subseteq c$, we have that $\plvI{c'}{i} \subseteq \plvI{c}{i}$
		and thus that $\vertB \in \plvI{c}{i}$.\qedhere
\end{itemize}
\end{proof}

Now that we know that the SCCs of inter-crossing relations
correspond to SCCs of the corresponding plain relations,
we show that $\sccSym$-operations on inter-crossing relations
can be simulated using building plans.  

\begin{lemma}
	\label{lem:simulate-contract-buildplan}
	There is a normalized polynomial-time DeepWL-algorithm that
	for every  normalized HF-structure
	$\Struct = \Struct_1 \disunion \Struct_2$,
	every building plan~$\buildplan$ for~$\Struct$,
	and every inter-crossing color~$\col$ of $\applyplan{\buildplan}{\Struct}$
	halts with a normalized HF-structure~$\StructB$ in the cloud and
	writes a building plan~$\buildplan_\col$ for~$\StructB$ onto the tape
	which satisfies $\applyplancross{\buildplan_\col}{\StructB} \iso \applyplancross{\buildplan}{\Struct}_\col$,
	where $\applyplancross{\buildplan}{\Struct}_\col$
	denotes the structure obtained by executing $\scc{\col}$ on $\applyplancross{\buildplan}{\Struct}$.
	If~$\buildplan$ is efficient, then~$\buildplan_\col$ is efficient, too.
\end{lemma}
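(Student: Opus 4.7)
The plan is to exploit the correspondence established in Lemma~\ref{lem:contract-components-unique} between $\col$-SCCs of $\applyplancross{\buildplan}{\Struct}$ and plain $\{\colB,\colC\}$-SCCs in the two components, where $\{\colB,\colC\}$ is the pair of plain colors corresponding to $\col$. First I would have the algorithm compute (in the sense of the algebraic sketch) $\sketch{\applyplan{\buildplan}{\Struct}}$ via Lemma~\ref{lem:compute-sketch-building-plan} in order to identify $\{\colB,\colC\}$ using Property~\ref{prop:an-direct-product-edge}, and then issue $\create{\{\colB,\colC\}}$ followed by an $\sccSym$-operation on the resulting plain relation. This produces, in each component $\Struct_i$, a plain vertex class consisting of one representative per $\{\colB,\colC\}$-SCC, together with a plain membership relation; the enlarged structure is $\StructB$.

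The building plan $\buildplan_\col$ retains all plans of $\buildplan$ and augments them as follows. A new vertex plan pairs the two new SCC vertex classes across components, possibly after auxiliary $\addPairSym$-operations applied to the freshly created plain membership relation in order to obtain the correct multiplicity of crossings per pair of plain SCCs. A new relation plan encodes the target membership relation induced by $\scc{\col}$ by composing the existing $\pRel$-relation of $\buildplan$ with the plain membership relations just produced. Further relation plans extend the existing inter-crossing colors to edges incident to the new crossings, which by Properties~\ref{prop:an-direct-product-vertex} and~\ref{prop:an-direct-product-edge} are uniformly determined by the corresponding plain colors in $\StructB$.

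Correctness then follows by unfolding $\applyplan{\buildplan_\col}{\StructB}$: the original crossings of $\applyplan{\buildplan}{\Struct}$ are reproduced because $\buildplan_\col$ contains all plans of $\buildplan$, while the new crossings are in bijection with the $\col$-SCCs by Lemma~\ref{lem:contract-components-unique} together with the direct-product structure of crossing fibers. Efficiency of $\buildplan_\col$ is inherited from $\buildplan$ since every plain vertex class added to $\StructB$ is referenced by some new vertex plan. Polynomial runtime follows from the fact that only a polynomial number of $\createSym$-, $\sccSym$- and $\addPairSym$-operations are performed, each of which preserves polynomial bounds on the size of the cloud.

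The principal obstacle I expect lies in calibrating the multiplicity of new crossings in the case when several $\col$-SCCs project onto the same pair $(c_1,c_2)$ of plain SCCs, as can happen when the $\{\colB,\colC\}$-dynamics on the two components have non-trivial common periodicity. The auxiliary $\addPairSym$ step must therefore produce exactly $|c_1|\cdot|c_2|/k$ crossings per plain pair, where $k$ is the uniform $\col$-SCC size guaranteed by coherence and Lemma~\ref{lem:contract-components-unique}. Identifying a normalized, DeepWL-computable pattern on the plain membership relation that yields this count canonically (and in an isomorphism-invariant way) is the delicate point; once this is handled, the remaining verification reduces to routine coherence-preserving bookkeeping on the inter-crossing colors and the membership relation.
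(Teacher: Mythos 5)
Your overall route is the paper's: reduce $\scc{\col}$ to SCC computations on the corresponding plain colors via Lemma~\ref{lem:contract-components-unique}, execute a plain $\sccSym$-operation in the components, and extend the building plan by a vertex plan pairing the resulting plain SCC-classes across components together with a relation plan for the new membership relation (obtained by composing $\pRel$ with the plain membership relations). Two points of divergence. First, the paper makes a case distinction you omit: when $\col$ connects two \emph{different} crossing fibers, all $\col$-SCCs are trivial, and the paper handles this case separately by copying the corresponding plain fibers via $\addPairSym$ on loop colors rather than via an $\sccSym$-operation; your uniform treatment would need to be checked against this degenerate case. Second, and decisively, the ``delicate point'' you flag — producing the right \emph{number} of crossing vertices when several $\col$-SCCs project onto the same pair of plain SCCs — is exactly where your proposal stops being a proof. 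You reduce the lemma to the existence of a normalized, isomorphism-invariant gadget creating $|c_1|\cdot|c_2|/k$ crossings per pair of plain SCCs, but you do not construct it, so the argument is incomplete as it stands.

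For comparison: the paper performs no multiplicity calibration at all. It adds the single vertex plan $\set{\vcA,\vcA}$, i.e.\ it creates exactly one crossing vertex per pair of $\vcA$-vertices in different components, and then verifies only that every such pair arises from \emph{some} $\col$-SCC (via Property~\ref{prop:an-direct-product-vertex}); injectivity of the map from $\col$-SCCs to pairs of plain SCCs is not discussed. So you cannot discharge your concern simply by following the paper more closely — the scenario you worry about (e.g.\ two isomorphic directed $n$-cycles as components with $\col$ the inter-crossing color whose corresponding plain color is the successor color, so that the $n$ ``diagonal'' $\col$-SCCs all project onto one and the same pair of plain SCCs) is precisely the situation in which a one-crossing-per-pair construction and an $m$-to-one correspondence cannot both be right. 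To complete your proof you must either show that the correspondence is a bijection in all cases covered by the lemma (or in all cases where it is later invoked), or actually build the counting gadget; neither is the ``routine coherence-preserving bookkeeping'' your last paragraph promises.
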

\begin{proof}
	Let $\Struct = \Struct_1 \disunion \Struct_2$ be a normalized HF-structure, $\buildplan = (\omega, \kappa)$ be a building plan for~$\Struct$,
	and~$\col$ be an inter-crossing color of $\applyplan{\buildplan}{\Struct}$.
	We distinguish the following two cases:
	Assume first that the color~$\col$ connects vertices in different fibers~$\ccA$ and~$\ccB$,
	that is, $\col^{\applyplan{\buildplan}{\Struct}} \subseteq \ccA^{\applyplan{\buildplan}{\Struct}} \times \ccB^{\applyplan{\buildplan}{\Struct}}$.
	Here clearly all $\col$-SCCs are trivial
	and the $\sccSym$-operation would create a new vertex
	for every $\ccA$- and every $\ccB$-vertex.
	To create these vertices with the building plan,
	let the corresponding fibers of~$\ccA$ and~$\ccB$
	be~$\ccA_1$,~$\ccA_2$,~$\ccB_1$, and~$\ccB_2$.
	We create copies of these fibers by executing $\addPairSym$ for the loops
	and obtain component relations $\rel_{\leftT}$ and $\rel_{\rightT}$,
	which coincide.
	The new vertices end up in new fibers~$\ccA_1'$,~$\ccA_2'$,~$\ccB_1'$, and~$\ccB_2'$.
	The algorithm updates
	\begin{align*}
		\omega &\leftarrow \omega \cup \set[\big]{\set{\ccA_1', \ccA_2'}, \set{\ccB_1', \ccB_2'}} \text{ and}\\
		\kappa &\leftarrow \kappa \cup \set[\big]{(\relB,\set{\rel_{\leftT}, \rel_{\rightT}})},
	\end{align*}
	where~$\relB$ serves as new membership relation
	(and just relates a $\ccA$-vertex to its copy).
	
	Otherwise,~$\col$ connects vertices in the same fiber.
	Let this fiber be~$\ccA$ and let~$\ccA_1$ and~$\ccA_2$ be the
	fibers corresponding to~$\ccA$
	and~$\col_1$ and~$\col_2$ be the colors corresponding to~$\col$.
	Note that in this case every $\set{\col_1, \col_2}$-SCCs is nontrivial:
	w.l.o.g.~$\col_1$ has to connect~$\ccA_1$ to~$\ccA_2$
	and~$\col_2$ has to connect~$\ccA_2$ to~$\ccA_1$.
	Because~$\col_1$ and~$\col_2$ are fibers,
	every $\ccA_1$-vertex and every $\ccA_2$-vertex has one outgoing and one incoming 
	$\set{\col_1,\col_2}$-edge,
	so there must be a cycle and in particular one nontrivial $\set{\col_1, \col_2}$-SCCs.
	But by the properties of  a coherent-configuration,
	every $\set{\col_1, \col_2}$-SCCs has the same size and is thus nontrivial.
	
	By Lemma~\ref{lem:contract-components-unique},
	we can construct vertices for SCCs of the corresponding plain vertices
	such that for every $\col$-SCC
	there is a corresponding pair of $\set{\col_1,\col_2}$-SCCs.
	So we start with executing $\scc{\set{\col_1,\col_2}}$ (formally we have to create a relation as the union of $\col_1$ and $\col_2$)
	and obtain a new plain vertex class~$\vcA$ containing the new SCC-vertices
	and a plain membership relation~$\rel$.
	We then create for each pair of $\vcA$-vertices in different components,
	so for such a pair of $\set{\col_1,\col_2}$-SCCs,
	new crossing vertices and the inter-crossing membership relation~$\relB$  by updating
	\begin{align*}
		\omega &\leftarrow \omega \cup \set[\big]{\set{\vcA, \vcA}} \text{ and}\\
		\kappa &\leftarrow \kappa \cup \set[\big]{(\relB,\set{\relA,\relA})}.
	\end{align*}
	Indeed, every newly created vertex by~$\omega$ corresponds to an $\col$-SCC:
	Because the $\col$-SCCs are nontrivial,
	every $\vcA$-vertex was obtained from an $\set{\ccA_1,\ccA_2}$-SCC
	containing at least one~$\ccA_1$-vertex and at least one~$\ccA_2$-vertex
	(if $\ccA_1 = \ccA_2$ this is trivial and if $\ccA_1 \neq \ccA_2$
	every nontrivial $\set{\col_1,\col_2}$-SCC has to contain one~$\ccA_1$- and one~$\ccA_2$-vertex because~$\col_1$ has to connect~$\ccA_1$ to~$\ccA_2$ and~$\col_2$ the other way around).
	So for every pair of $\vcA$-vertices in different components,
	we can find a $\ccA_1$\nobreakdash-vertex $\vertA \in \vertices{\StructA_1}$ in one component and a $\ccA_2$\nobreakdash-vertex $\vertB \in \vertices{\StructA_2}$ in the other component. 
	By Property~\ref{prop:an-direct-product-vertex},
	there is a $\ccA$-vertex~$\vertC$ such that $\plvA{\vertC} = \vertA$
	and $\plvA{\vertC} = \vertB$.
	But this means there is an $\col$-SCC, which corresponds to the two $\vcA$-vertices, namely the one containing~$\vertC$.
	In the same manner,~$\relB$ correctly defines the inter-crossing membership relation for the new $\col$-SCC vertices.
	
	Finally, to see that~$\StructB$ is normalized,
	note that we only executed a single $\addPairSym$-operation or a single $\sccSym$-operation for a plain relation.
	It is also clear that the property of being efficient is preserved
	because every newly created vertex is in a fiber~$\ccA_1'$,~$\ccA_2'$,~$\ccB_1'$, and~$\ccB_2'$ or in the vertex class~$\vcA$
	and all of them are used in the building plan.
	Obviously, the algorithm runs in polynomial time.
\end{proof}

\subsubsection{Simulation}
\label{sec:simulation}
Now, we finally want to simulate arbitrary DeepWL+WSC-algorithms with normalized ones.
Recall that we still have to define how normalized DeepWL+WSC-algorithms encode sets of witnessing automorphisms.
We do this now.

Let~$\Struct$ be a normalized HF-structure and~%
$\buildplan$ be a building plan for~$\Struct$.
A tuple of crossing relations $(\rel_{\text{aut}}, \rel_{\text{dom}}, \rel_{\text{img}})$
and a crossing vertex $\vertC_\auto$
\defining{encode the
partial map} $\auto \colon \StructV\to \StructV$ as follows (cf.\ Figure~\ref{fig:encoding-automorphism-normalized}):
we have $\auto(\vertA) = \vertB$ in case that
there exists exactly one crossing vertex~$\vertC$
such that $(\vertC_\auto, \vertC) \in \rel_{\text{aut}}^{\applyplan{\buildplan}{\Struct}}$,
$(\vertC,\vertA') \in \rel_{\text{dom}}^{\applyplan{\buildplan}{\Struct}}$, and $(\vertC,\vertB') \in \rel_{\text{img}}^{\applyplan{\buildplan}{\Struct}}$,
where~$\vertA'$ and~$\vertB'$ are crossing vertices
and~$\vertA$ and~$\vertB$ are the only atoms
for which $(\vertA',\vertA) \in \pRel^{\applyplan{\buildplan}{\Struct}}$
and $(\vertB', \vertB) \in \pRel^{\applyplan{\buildplan}{\Struct}}$.
Recall here that every crossing vertex has exactly two $\pRel$-neighbors.
So, the other $\pRel$-neighbor of~$\vertA'$ and~$\vertB'$ must not be an atom.
While introducing the vertices~$\vertA'$ and~$\vertB'$ into the definition seems odd at first, this will simplify technical aspects in the following.
A tuple of relations $(\rel_{\text{aut}}, \rel_{\text{dom}}, \rel_{\text{img}})$
\defining{encodes the set of partial maps}
$\setcond{\auto}{\vertC_\auto \text{ encodes } \auto \text{ for some }(\vertC_\auto, \vertC) \in \rel_{\text{aut}}^\Struct}$.
\begin{figure}
	\centering
	\begin{tikzpicture}
		\draw[draw=none, use as bounding box] (-6.8,0.2) rectangle (8.8,-4.2);
		
		\node[vertex,label=0:{$\vertC_\auto$}] (autoV) at (0,0) {};
		\node[vertex, label=0:{$\vertC$}, below=1 of autoV] (vertC) {};
		\node[vertex, right=1.5 of vertC] (vertC2) {};
		\node[right=0.5 of vertC2] (vertC3) {$\dots$};
		\node[vertex, label=180:{$\vertA'$}, below left=of vertC] (vertAP) {};
		\node[vertex, label=0:{$\vertB'$}, below right=of vertC] (vertBP) {};
		\node[vertex, label=180:{$\vertA$}, below left= 1 and 0.5 of vertAP] (vertA) {};
		\node[vertex, black!40!white, below right= 1 and 0.5 of vertAP] (vertA2) {};
		\node[vertex, label=0:{$\vertB$}, below right= 1 and 0.5 of vertBP] (vertB) {};
		\node[vertex,black!40!white, below left= 1 and 0.5 of vertBP] (vertB2) {};
		
		\node[below = 0.2 of vertA, font=\footnotesize](atomU){atom};
		\node[below = 0.2 of vertB, font=\footnotesize](atomV){atom};
		\node[font=\footnotesize,black!50!white] at ($0.5*(atomU) + 0.5*(atomV)$)  {non-atoms};
		
		\path[->]
		(autoV) edge node [left]{$\rel_{\text{aut}}$} (vertC)
		(autoV) edge node [right, xshift=0.1cm]{$\rel_{\text{aut}}$} (vertC2);
		\path[->,blue]
		(vertC) edge node [left, xshift=-0.1cm]{$\rel_{\text{dom}}$} (vertAP);
		\path[->,red]
		(vertC) edge node [right]{$\rel_{\text{img}}$}(vertBP);
		\path[->,darkgreen]
		(vertAP) edge node [left]{$\pRel$} (vertA)
		(vertBP) edge node [right]{$\pRel$} (vertB);
		\path[->,darkgreen!50!white]
		(vertAP) edge node [right]{$\pRel$} (vertA2)
		(vertBP) edge node [left]{$\pRel$} (vertB2);
		
		\draw [decorate,
		decoration = {brace}] let \p1 = (autoV),\p2 = (vertAP) in
		($(4, \y1+0.2cm)$) --  ($(4, \y2-0.2cm)$)
		node[pos=0.5,black, right=0.2cm,align=left]
		{crossing vertices of $\applyplan{\buildplan}{\StructA}$};
		
		\draw [decorate,
		decoration = {brace}] let \p1 = (vertA) in
		($(4, \y1+0.3cm)$) --  ($(4, \y1-0.3cm)$)
		node[pos=0.5,black, right=0.2cm]
		{plain vertices of $\StructA$};
	\end{tikzpicture}
	\caption{Encoding of an automorphism~$\autoA$
	of a normalized HF-structure~$\StructA$ for normalized DeepWL+WSC algorithms
	by a building plan~$\buildplan$, a tuple of crossing relations $(\rel_{\text{aut}}, \rel_{\text{dom}}, \rel_{\text{img}})$,
	and a vertex $\vertC_\auto$:
	The figure shows the encoding of $\auto(\vertA) = \vertB$.
	The vertices~$\vertC_\auto$,~$\vertC$,~$\vertA'$, and~$\vertB'$
	are crossing vertices of $\applyplan{\buildplan}{\StructA}$.
	The vertices~$\vertA$ and~$\vertB$ are atoms
	and the other $\pRel$-neighbor of~$\vertA'$ respectively~$\vertB'$
	is not an atom.}
	\label{fig:encoding-automorphism-normalized}
\end{figure}

The witnessing machine~$\dwlmwit$ of a normalized DeepWL+WSC-algorithm on input~$\StructA$
outputs a set of witnessing automorphisms
by writing a tuple $(\buildplan, \rel_{\text{aut}}, \rel_{\text{dom}}, \rel_{\text{img}})$
of a building plan~$\buildplan$ for the final content of the cloud of~$\dwlmwit$
and three relations for which~$\buildplan$ contains a relation plan
on the interaction-tape.
	
Note that, with this definition, a normalized DeepWL+WSC-algorithm is formally not a (non-normalized) DeepWL+WSC-algorithm anymore, because it encodes sets of  witnessing automorphisms differently.
Finally, we are ready to simulate an arbitrary DeepWL+WSC-algorithm
with a normalized one.

\begin{definition}[Simulating a Structure]
	A pair $(\Struct, \buildplan)$
	of a normalized HF-structure~$\Struct$ and
	a building plan $\buildplan=(\omega, \kappa)$ for~$\Struct$
	\defining{simulates} an HF-structure~$\hat{\Struct}$
	with the same atoms as~$\Struct$
	if
	\begin{enumerate}[label=(S\arabic*)]
		\item \label{prop:sim-iso} $\applyplancross{\buildplan}{\Struct} \iso \nonHF{\hat{\Struct}}$,
		\item \label{prop:sim-efficient} $\buildplan$ is efficient,
		\item \label{prop:sim-atoms} there is a relation plan $(\vcA, \set{\vcB_1, \vcB_2}) \in \kappa$ defining a crossing vertex class~$\vcA$,
		such that every isomorphism $\autoA \colon \nonHF{\hat{\Struct}} \to \applyplancross{\buildplan}{\Struct}$ satisfies $\autoA(\hat{\StructV}) = \vcA^{\applyplan{\buildplan}{\Struct}}$,
		i.e.,~the atoms of~$\hat{\Struct}$ are mapped precisely onto the $\vcA$-vertices,
		\item \label{prop:sim-bijection} the $\pRel$-relation is a perfect matching between the atoms~$\StructV$ of~$\Struct$ and $\vcA^{\applyplan{\buildplan}{\Struct}}$, and
		\item \label{prop:sim-auto} via this bijection between~$\StructV$ and  $\vcA^{\applyplan{\buildplan}{\Struct}}$ we have that
		$\autGroup{\nonHF{\Struct}} = \autGroup{\applyplancross{\buildplan}{\Struct}}$.
	\end{enumerate} 
\end{definition}
Note that the definition above only relates $\applyplancross{\buildplan}{\Struct}$ (and not $\applyplan{\buildplan}{\Struct}$) to~$\hat{\Struct}$.
This definition reduces the need for case distinctions in the simulation:
Crossing vertices of $\applyplan{\buildplan}{\Struct}$ are always used to simulate~$\hat{\Struct}$
and plain vertices are always used to create crossing vertices.
Now that we have a notion of simulating a structure, we can also simulate DeepWL+WSC-algorithms:

\begin{definition}[Simulating an Algorithm]
	Let $\dwla = (\dwlmout, \dwlmwit, \dwla_1, \dots, \dwla_\ell)$
	and $\hat{\dwla} = (\hatdwlmout, \hatdwlmwit, \hat{\dwla}_1, \dots, \hat{\dwla}_\ell)$
	be DeepWL+WSC-algorithms.
	The algorithm~$\dwla$ 
	\defining{simulates}~$\hat{\dwla}$ 
	if~$\dwla_i$ simulates~$\hat{\dwla}_i$ for all $i \in [\ell]$
	and for every structure~$\hat{\StructA}$ 
	and every pair $(\Struct, \buildplan)$ simulating $\hat{\StructA}$
	the algorithm~$\dwla$ on input $(\Struct, \buildplan)$
	accepts (or respectively rejects)
	whenever~$\hat{\dwla}$ on input~$\hat{\Struct}$ accepts (or respectively rejects).
\end{definition}
We do not care about~$\hat{\dwla}$ failing
because in the following we will always assume that this is not the case.

\begin{lemma}
	\label{lem:deepwl-wsc-simulate-normalized}
	For every polynomial-time DeepWL+WSC-algorithm~$\hat{\dwla}$,
	there is a normalized polynomial-time DeepWL+WSC-algorithm~$\dwla$ simulating~$\hat{\dwla}$.
\end{lemma}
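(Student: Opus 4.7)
The plan is a structural induction on the nesting depth of $\hat{\dwla} = (\hatdwlmout, \hatdwlmwit, \hat{\dwla}_1, \dots, \hat{\dwla}_\ell)$. The induction hypothesis yields normalized polynomial-time algorithms $\dwla_1, \dots, \dwla_\ell$ simulating $\hat{\dwla}_1, \dots, \hat{\dwla}_\ell$, and, applying Lemma~\ref{lem:pure} first, I may assume that $\hatdwlmout$ and $\hatdwlmwit$ are pure, so that every $\addPairSym$-, $\sccSym$-, $\refineSym$-, and $\choiceSym$-operation acts on a single color or fiber. I then construct normalized machines $\dwlmout$ and $\dwlmwit$ that, on input $(\Struct, \buildplan)$ simulating $\hat{\Struct}$, maintain throughout the computation a current pair $(\Struct', \buildplan')$ simulating the current content of the cloud of $\hatdwlmout$ (respectively $\hatdwlmwit$). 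By Lemma~\ref{lem:compute-sketch-building-plan}, composed with Lemma~\ref{lem:restrict-sketch} applied to the distinguished crossing vertex class given by~\ref{prop:sim-atoms}, the simulating machine can compute $\sketch{\hat{\Struct}}$ from $(\Struct', \buildplan')$ in polynomial time, and hence track the transitions of $\hatdwlmout$ without ever writing down the non-normalized structure.

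Each cloud-modifying operation of $\hatdwlmout$ is translated as follows. A $\create{\pi}$-operation merely records a union of existing colors and is implemented by extending $\buildplan'$ by one relation plan, leaving $\Struct'$ untouched. An $\addPair{\col}$-operation on a plain color is performed directly in $\Struct'$; on an inter-crossing color $\col$ with corresponding plain colors $\set{\colB_1, \colB_2}$, Property~\ref{prop:an-direct-product-edge} lets me replay it on $\Struct'$ by two plain $\addPairSym$-operations on $\colB_1$ and $\colB_2$ together with a new vertex plan pairing the two resulting plain vertex classes and two relation plans encoding the component relations via $\rel_{\leftT}$ and $\rel_{\rightT}$. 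A plain $\sccSym$-operation is again performed directly, while an inter-crossing $\sccSym$ is exactly handled by Lemma~\ref{lem:simulate-contract-buildplan}. Lemma~\ref{lem:buildplan-automorphisms} ensures that property~\ref{prop:sim-auto} is preserved in every step, and efficiency of $\buildplan'$ is maintained by construction.

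The hard step is the $\refineSym$-operation. When $\hatdwlmout$ executes $\refine{\cc}{i}$ on a crossing fiber $\cc$ with corresponding plain fibers $\set{\ccA_1, \ccA_2}$, naively invoking $\dwla_i$ on each crossing edge separately would split plain fibers in an uncoordinated way and destroy the direct-product structure captured by Properties~\ref{prop:an-direct-product-vertex} and~\ref{prop:an-direct-product-edge}, as illustrated in Figure~\ref{fig:refine-and-direct-product-property}. The remedy, prepared in Section~\ref{sec:normalized-normalized}, is a two-stage procedure. First, for each of $\ccA_1$ and $\ccA_2$, Lemma~\ref{lem:compute-distinguish-vertices-function} furnishes a polynomial-time normalized algorithm computing a function $f$ that distinguishes plain vertices according to the multiset of runs of $\dwla_i$ on crossing edges incident to them; using Lemma~\ref{lem:deepwl-refine-with-function} I split the plain fibers accordingly. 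After this pre-refinement, Lemma~\ref{lem:distinguish-cross-relation-distinguish-vertices} guarantees that $\dwla_i$ yields the same verdict on every crossing edge within a single rectangle of refined plain fibers, so I may invoke $\dwla_i$ once per rectangle and record the result by adding relation plans that select the appropriate union of refined inter-crossing colors. Exactly at this step the ``direct product'' property is re-established for $\buildplan'$.

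The $\choice{\cc}$-operation on a crossing fiber is handled by Lemma~\ref{lem:buildplan-automorphisms}: since $\cc^{\applyplan{\buildplan'}{\Struct'}}$ can only be an orbit when $\crossVertexRel{\set{\ccA_1,\ccA_2}}^{\Struct'}$ is an $\Struct'$-orbit, I execute $\choice{\crossVertexRel{\set{\ccA_1,\ccA_2}}}$ in $\Struct'$ and individualise the resulting unordered pair as in the $\choiceSym$-simulation of Lemma~\ref{lem:pure}. The witnessing machine $\dwlmwit$ is built by the same inductive construction, with one additional translation: after internally simulating $\hatdwlmwit$ to obtain a triple $(\rel_{\text{aut}}, \rel_{\text{dom}}, \rel_{\text{img}})$ encoding witnessing automorphisms in the non-normalized sense, I push these relations through the bijection $\pRel$ between atoms of $\Struct'$ and the simulated atom class guaranteed by~\ref{prop:sim-bijection}, producing crossing relations that form a building-plan encoding of the same automorphism set. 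Polynomial runtime propagates inductively since each simulated step adds only polynomially many plain vertices and relations to $\Struct'$ and each recursive call of $\dwla_i$ is polynomial by hypothesis.
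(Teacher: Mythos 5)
Your proposal is correct and follows essentially the same route as the paper's proof: induction on nesting depth, reduction to pure algorithms, maintaining a pair $(\Struct',\buildplan')$ simulating the current cloud via Lemmas~\ref{lem:compute-sketch-building-plan} and~\ref{lem:simulate-contract-buildplan}, the two-stage pre-refinement of plain fibers (Lemmas~\ref{lem:compute-distinguish-vertices-function}, \ref{lem:deepwl-refine-with-function}, \ref{lem:distinguish-cross-relation-distinguish-vertices}) before inter-crossing $\refineSym$-operations, and re-encoding witnessing automorphisms through $\pRel$. The only cosmetic slip is the plain/inter-crossing case split for the simulated machine's operations — since the simulated structure is $\applyplancross{\buildplan}{\Struct}$, every operation of $\hat{\dwlm}$ acts on an inter-crossing color, so only that branch ever occurs.
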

\begin{proof}
	We will construct normalized DeepWL+WSC-algorithms,
	which uses the additional operations of Lemma~\ref{lem:deepwl-wsc-plus}.
	One easily sees that the reductions in the proof of Lemma~\ref{lem:deepwl-wsc-plus}
	preserve being normalized.
	 
	The proof is by induction on nesting DeepWL+WSC-algorithms.
	For this, suppose $\hat{\dwla} = (\hatdwlmout, \hatdwlmwit, \hat{\dwla}_1, \dots, \hat{\dwla}_\ell)$ and assume by induction hypothesis
	that there are normalized DeepWL+WSC-algorithms~$\dwla_i$
	simulating~$\hat{\dwla}_i$ for every $i \in [\ell]$.
	By Lemma~\ref{lem:pure}, we can assume that~$\hat{\dwla}$ is pure.
	
	Let $\hat{\dwlm} \in \set{\hatdwlmout, \hatdwlmwit}$.
	We will now construct a DeepWL+WSC-machine~$\dwlm$ simulating~$\hat{\dwlm}$.
	Let $\hat{\Struct}_0$ be the input HF-structure of~$\hat{\dwlm}$.
	Let $\hat{\Struct}_0, \dots, \hat{\Struct}_k$
	be the sequence of HF-structures in the cloud of~$\hat{\dwlm}$
	and let $(\Struct_0, \buildplan_0)$ simulate~$\hat{\Struct}_0$.
	We construct the machine~$\dwlm$ inductively (independently of the specific input~$\StructA_0$).
	We say that the machine~$\dwlm$ on input $(\Struct_0, \buildplan_0)$ \defining{simulates the $t$-th step}
	for $t \in [k]$
	if the content of the cloud of~$\dwlm$ is~$\Struct_t$
	and there is a building plan $\buildplan_t=(\omega_t, \kappa_t)$ for~$\Struct_t$ written onto the working tape
	such that $(\Struct_t, \buildplan_t)$ simulates~$\hat{\Struct}_t$.
	Assume that we constructed a machine~$\dwlm$ simulating the $k$-th
	step, then the content of the cloud does not change anymore
	and we can just track the run of the Turing machine of~$\hat{\dwlm}$
	until it halts.
	
	We construct by induction on~$t$ a machine~$\dwlm$ simulating the $t$-th step
	for every $t \leq k$.
	For $t = 0$, the claim holds by assumption
	that $(\Struct_0, \buildplan_0)$ simulates~$\hat{\Struct}_0$.
	Now assume that~$\dwlm$ simulates the $t$-th step and
	that $(\Struct_t, \buildplan_t)$ simulates~$\hat{\Struct}_t$.
	By Lemma~\ref{lem:compute-sketch-building-plan},
	the machine~$\dwlm$ computes $\sketch{\applyplan{\buildplan_t}{\Struct_t}}$
	in polynomial time.
	From this sketch, it computes
	$\sketch{\applyplancross{\buildplan_t}{\Struct_t}}$
	using Lemma~\ref{lem:restrict-sketch}.
	
	Because $(\Struct_t, \buildplan_t)$ simulates~$\hat{\Struct}_t$,
	by Property~\ref{prop:sim-iso}, it holds that
	$\applyplancross{\buildplan_t}{\Struct_t}
	\iso \nonHF{\hat{\Struct}_t}$
	and thus $\sketch{\applyplancross{\buildplan_t}{\Struct_t}}=\sketch{\hat{\Struct}_t}$.
	So~$\dwlm$ can track the run of~$\hat{\dwlm}$
	until~$\hat{\dwlm}$ executes an operation modifying the cloud.
 	We make a case distinction on this operation:

 	\paragraph{$\addPair{\col}$:}
 	The color~$\col$ is an inter-crossing color in $\applyplan{\buildplan_t}{\Struct_t}$
 	because $\applyplancross{\buildplan_t}{\Struct_t}  \iso \nonHF{\hat{\Struct}_t}$.
 	Let~$\col_1$ and~$\col_2$ be the two plain colors corresponding to~$\col$.
 	The machine~$\dwlm$ executes $\addPair{\col_1}$ and $\addPair{\col_2}$
 	(if $\col_1 = \col_2$, only one operation is executed)
 	and obtains vertex classes~$\vc_1$ and~$\vc_2$
 	and component relations~$\rel_{i,d}$ for every $i \in [2]$ and $d \in \set{\leftT, \rightT}$
 	(here if $\col_1 = \col_2$, we have $\vc_1 = \vc_2$ and $\rel_{1,d} = \rel_{2,d}$).
 	We set 
 	\begin{align*}
 		\omega_{t+1} &:= \omega_t \cup \set[\big]{\set{\vc_1, \vc_2}} \text{ and}\\
 		\kappa_{t+1} &:= \kappa_t \cup \setcond[\big]{(\col_{d}, \set{\set{\rel_{1,d}, \rel_{2, d}}})}{d \in \set{\leftT, \rightT}}.
 	\end{align*}
 	That is, precisely for every set $\set{\vertA, \vertB}$
 	of vertices $\vertA \in \vc_1^{\Struct_{t+1}}$
 	and $\vertB \in \vc_2^{\Struct_{t+1}}$,
 	which means by construction for every set $\set{e_1, e_2}$
 	of edges $e_1  \in  \col_1^{\Struct_t}$
 	and $e_2 \in  \col_2^{\Struct_t}$,
 	which again means by Property~\ref{prop:an-direct-product-edge}
 	for every edge $e \in \col^{\applyplan{\buildplan_t}{\Struct_t}}$,
 	a vertex is added in $\applyplan{\buildplan_{t+1}}{\Struct_{t+1}}$.
 	
 	We create the component relations from the~$\rel_{i,d}$.
 	That is, ${\applyplancross{\buildplan_{t+1}}{\Struct_{t+1}} \iso \nonHF{\hat{\Struct}_{t+1}}}$
 	and so we maintained~\ref{prop:sim-iso}.
 	Property~\ref{prop:sim-efficient} is satisfied
 	because all newly created plain vertices are either in~$\vcA_1$ or~$\vcA_2$.
 	Properties~\ref{prop:sim-atoms} and~\ref{prop:sim-bijection}
 	will always be maintained if they initially hold (and we do not remove entries from the building plan).
 	Finally, property~\ref{prop:sim-auto} is maintained
 	because we did not make any choices and thus the automorphism
 	of both structures stay exactly the same by Corollary~\ref{cor:operations-apart-choice-auto-invariant}
 	(note that by Lemma~\ref{lem:auts-flat-vs-hf}
 	both~$\Struct_{t+1}$ and~$\nonHF{\Struct_{t+1}}$ have the same automorphisms). 	
 	So we simulated the $(t+1)$-th step.

	\paragraph{$\scc{\col}$:}
	Again,~$\col$ is an inter-crossing color in $\applyplan{\buildplan_t}{\Struct_t}$.
	The machine~$\dwlm$ simulates the $\sccSym$-operation
	using Lemma~\ref{lem:simulate-contract-buildplan}:
	Properties~\ref{prop:sim-iso} and~\ref{prop:sim-efficient} are ensured by the lemma,
	all other properties hold by the same reasons as for the $\addPairSym$-operation,
	and so the $(t+1)$-th step is simulated.

 	\paragraph{$\create{\pi}$:}
 	Let $\pi = \set{\colA_1, \dots, \colA_k}$,
 	where all~$\col_i$ are inter-crossing colors in $\applyplan{\buildplan_t}{\Struct_t}$,
 	and let~$\rel$ be the relation to be created.
 	Let~$\colB_i$ and~$\colC_i$ be the corresponding colors for~$\colA_i$
 	for every $i \in [k]$
 	(Property~\ref{prop:an-direct-product-edge}).
 	We update
 	\begin{align*}
 		\omega_{t+1} &:= \omega_t \text{ and}\\
 		\kappa_{t+1} &:= \kappa_t \cup \set[\big]{(\rel, \setcond{\set{\colB_i, \colC_i}}{i \in [k]})}.
 	\end{align*}
 	Because the cloud of~$\dwlm$ is not modified,
 	$\Struct_{t+1} = \Struct_{t}$ is still normalized.
 	In particular, Property~\ref{prop:sim-efficient} is maintained because no new vertices are created.
 	By construction Property~\ref{prop:sim-iso} is satisfied.
 	Again by the same reasons as before, 
 	the remaining properties are satisfied and the $(t+1)$-th step is simulated.
 	
 	\paragraph{$\refine{\ccA}{k}$:}
 	Because~$\hat{\dwla}$ is pure, $\refineSym$ is
 	only executed for fibers.
 	Let~$\ccA_1$ and~$\ccA_2$ be the plain fibers corresponding to~$\ccA$
 	and let~$\rel_\ccA$ be the crossing color such that
 	${\rel_\ccA^\Struct = \rel_{\set{\ccA_1,\ccA_2}}^\Struct}$,
 	that is,~$\rel_\ccA$ precisely connects all the pairs of vertices corresponding to a $\ccA$-vertex.

 	We cannot simply execute $\refine{\rel_\ccA}{k, \buildplan_t}$
 	because this might create a relation
 	which connects the two components of~$\Struct_t$
 	and so the content of the cloud would not be normalized anymore.
 	(Recall the example in Section~\ref{sec:normalized-normalized}.)
 	We are going to refine~$\ccA_1$ and~$\ccA_2$,
 	then decompose~$\rel_\ccA$ into multiple colors,
 	such that either all edges of a color are accepted by~$\dwla_k$ or no edge of the color is.
 	This results in the structure~$\Struct_{t+1}$.
 	We then refine one (crossing) color after the other.
 	This way,
 	for every color, either no new relation or an empty relation is created
 	and the structure stays normalized.
 	
 	Let $\vertB,\vertB' \in \ccA_2^{\Struct_t}$ be in the same component of~$\StructA_t$,
 	say w.l.o.g.~$\vertB,\vertB' \in \ccA_2^{\Struct_t} \cap \verticesOf{2}{\Struct_t}$.
 	Then by Lemma~\ref{lem:distinguish-cross-relation-distinguish-vertices},
 	we can distinguish~$\vertB$ and~$\vertB'$
 	using the sets
 	\begin{align*}
 		&\setcond[\big]{\run{\dwla_k}{(\Struct, \set{\vertC,\vertB}}}{\vertC \in \ccA_1^{\Struct_t} \cap \verticesOf{1}{\StructA_t}} \text{ and }\\
 		&\setcond[\big]{\run{\dwla_k}{(\Struct, \set{\vertC,\vertB'}}}{\vertC \in \ccA_1^{\Struct_t} \cap \verticesOf{1}{\StructA_t}}
 	\end{align*}
 	if for some $\vertA\in \ccA_1^{\Struct_t}$
 	we have that $(\Struct_t, \set{\vertA,\vertB})$ is accepted by~$\dwla_k$
 	but $(\Struct_t, \set{\vertA,\vertB'})$ is not.
 	There is a normalized DeepWL+WSC-algorithm computing a function 
 	distinguishing the same (or more) vertices as these sets according to Lemma~\ref{lem:compute-distinguish-vertices-function}
 	(note that because we input all $(\vertA,\vertB)$ or  $\set{\vertA,\vertB}$
 	during the $\refineSym$-operation,~$\dwla_k$
 	indeed never fails because otherwise~$\hat{\dwla}_k$ would have failed).
 	We can refine~$\ccA_2$
 	according to these sets by Lemma~\ref{lem:deepwl-refine-with-function}.
 	The same procedure is performed for~$\ccA_1$.
 	After that, we obtain the structure~$\Struct_{t+1}'$ in the cloud.
 	
 	Before we can use~$\dwla_k$ to refine the edges,
 	we have to slightly modify it:
 	The input of~$\dwla_k$ is $\Struct_t' := ((\Struct_{t+1}', \set{\vertA, \vertB}), \buildplan_t')$
 	where some $\set{\vertA, \vertB} \in \rel_\ccA^{\Struct_t}$
 	is individualized, i.e.,~there is a new vertex class $\vc_{\set{\vertA,\vertB}}$ only containing~$\vertA$ and~$\vertB$.
 	(If~$\rel_\ccA$ is directed, the argument still applies because if~$\rel_\ccA$ is directed, then $\rel_\ccA^{\StructA_t}\subseteq \ccA_1^{\StructA_t} \times \ccA_2^{\StructA_t}$ and the ordered tuple $(\vertA,\vertB)$ can be recovered from the set $\set{\vertA,\vertB}$).
 	This corresponds to individualizing one $\ccA$-vertex
 	(recall that~$\pRel$ does \emph{not} distinguish the two components of~$\Struct_t'$).
	In order to represent this individualization of the $\ccA$-vertex,
	the algorithm~$\dwla_k$ modifies the building plan as follows:
 	$\kappa_t' := \kappa_t \cup \set{(\relB', \set{(\vc_{\set{\vertA,\vertB}},\vc_{\set{\vertA,\vertB}})})}$
	 and $\buildplan_t' := (\omega_t, \kappa_t')$.
 	This way, the vertex~$\vertC$ corresponding to $\set{\vertA, \vertB}$
 	in $\applyplan{\buildplan_t'}{\Struct_t'}$
 	gets individualized by the relation~$\relB'$.
 	Now $(\Struct_t', \buildplan_t')$ simulates the structure
	$(\hat{\Struct_t},\vertC)$, 
	where the $\ccA$-vertex~$\vertC$ is individualized.
 	All conditions for the induction hypothesis are satisfied
 	and we can run the algorithm~$\dwla_k$
 	by induction hypothesis.
 
 	So we decompose~$\rel$ into colors $\colA_1, \dots , \colA_m$
 	and  execute $\refine{\colA_i}{k, \buildplan_t}$
 	for every $i \in [m]$.
 	By Lemma~\ref{lem:distinguish-cross-relation-distinguish-vertices},
 	either all edges in a color are accepted or none of them.
 	So the machine can compute the set $I := \setcond{i \in [m]}{\dwla_k \text{ accepts the } \colA_i \text{-edges}}$
 	by distinguishing for every~$i$ whether no relation is created (so $i \in I$) or an empty relation is created (so $i \notin I$).
 	After that, the structure~$\Struct_{t+1}$ is in the cloud.
 	In particular,~$\Struct_{t+1}$ and all intermediate steps are normalized
 	because we only create empty relations.
 	Let $\set{\colB_i, \colC_i}$ be the corresponding colors for~$\colA_i$
 	for every $i \in [m]$.
 	We create the vertex class~$\vcB$ in $\applyplan{\buildplan_t}{\Struct_t}$,
 	which is the result of the $\refine{\ccA}{k}$-operation of~$\hat{\dwlm}$, as follows:
 	\begin{align*}
 		\omega_{t+1}&:= \omega_t, \\
 		\kappa_{t+1}&:= \kappa_t \cup \set[\big]{(\vcB, \setcond{\set{\colB_i,\colC_i}}{i \in I})}.
 	\end{align*}
 	This establishes Property~\ref{prop:sim-iso}.
 	The other properties still hold for the same reasons as before
 	and the $(t+1)$-th step is simulated.

 	\paragraph{$\choice{\ccA}$:}
 	Recall that in this case $\hat{\dwlm} = \hatdwlmout$
 	because~$\hatdwlmwit$ is choice-free.
 	Again because~$\hat{\dwla}$ is pure,~%
 	$\ccA$ is a crossing fiber in $\applyplan{\buildplan_t}{\Struct_t}$.
 	
 	The machine defines the undirected relation~$\rel_\ccA$
 	as in the $\refineSym$-case
 	and executes $\choice{\rel_\ccA}$.
 	By the semantics of the $\choiceSym$-operator,
 	this individualizes an undirected $\rel_\ccA$-edge
 	(recall that, to individualize this edge, 
 	we obtain a vertex class containing both endpoints of the edge
 	and the two components of~$\Struct_t$ are not connected).
 	Using this edge we can individualize the corresponding crossing vertex
 	in $\applyplan{\buildplan_t}{\Struct_t}$
 	by defining a singleton fiber~$\ccB$ similar to the $\refineSym$-case.
 	This shows Property~\ref{prop:sim-iso}.
 	Properties~\ref{prop:sim-efficient}, \ref{prop:sim-atoms}, and~\ref{prop:sim-bijection}
 	still hold as seen before.
 	To show Property~\ref{prop:sim-auto},
 	observe that
 	every automorphism of $\applyplancross{\buildplan_{t+1}}{\Struct_{t+1}}$ is an automorphism of 
 	$\applyplancross{\buildplan_t}{\Struct_t}$, which additionally fixes the singleton crossing vertex in~$\ccB$.
 	Thus, such an automorphism also has to fix the corresponding undirected $\rel_\ccA$-edge and so it is an automorphism of $\applyplan{\buildplan_{t+1}}{\Struct}$.
 	Vice versa, every automorphism fixing this edge also fixes the crossing vertex
 	and the $(t+1)$-th step is simulated.
 	
 	\bigskip
 	Finally, we have to alter~$\dwlmwit$: when~$\hatdwlmwit$
 	writes the tuple of automorphism-encoding relations $(\rel_{\text{aut}}, \rel_{\text{dom}}, \rel_{\text{img}})$
 	onto the interaction-tape,~$\dwlmwit$ only constructed relation plans for these relations. Now~$\dwlmwit$ writes $(\buildplan, \rel_{\text{aut}}, \rel_{\text{dom}}, \rel_{\text{img}})$ onto the interaction-tape,
 	where~$\buildplan$ is the building plan maintained by~$\dwlmwit$.
 	
 	To show that $(\dwlmout, \dwlmwit, \dwla_1, \dots, \dwla_\ell)$
 	indeed simulates $(\hatdwlmout, \hatdwlmwit, \hat{\dwla}_1, \dots, \hat{\dwla}_\ell)$,
 	it remains to show that all choices are witnessed
 	(if not all choices of~$\hat{\dwlm}$ are witnessed,~%
 	$\dwlm$ is allowed to do anything).
 	First, the machine~$\dwlmout$ only executes a $\choice{\rel_\ccA}$-operation
 	when~$\hatdwlmout$ executes the operation $\choice{\ccA}$.
 	Because~$\rel_\ccA$ is undirected
 	and due to the semantics of the choice operator for undirected edges,
 	we have to witness that ${\crossVertexRel{\ccA} := \crossVertexRel{\set{\ccA_1,\ccA_2}}}$ is an orbit,
 	where $\set{\ccA_1,\ccA_2}$ are the corresponding plain fibers of~$\ccA$
 	(cf.\ the definition of $\rel_\ccA = \rel_{\set{\ccA_1, \ccA_2}}$).
 	Assume that $\choice{\ccA}$ is executed with the HF-structure~$\hat{\StructA}_t$ in the cloud.
 	Let~$N$ be the set of automorphisms
 	encoded by $(\rel_{\text{aut}}, \rel_{\text{dom}}, \rel_{\text{img}})$,
 	which witnesses~$\ccA$ as orbit
 	fixing all previous steps $\hat{\Struct}_{j_1}, \dots, \hat{\Struct}_{j_\ell}$ (where $j_1 < \dots < j_\ell$),
 	in which $\choiceSym$-operations where executed by~$\hatdwlmout$.
 	Then the tuple $(\buildplan, \rel_{\text{aut}}, \rel_{\text{dom}}, \rel_{\text{img}})$
 	precisely encodes the same set of automorphisms~$N$ for~$\Struct_t$:
 	for every vertex $\hat{\vertC}_\auto$ encoding a partial map~$\auto$
 	via $(\rel_{\text{aut}}, \rel_{\text{dom}}, \rel_{\text{img}})$ in~$\hat{\Struct}_t$,
 	the corresponding vertex~$\vertC_\auto$ in $\applyplan{\buildplan}{\StructA_t}$
 	encodes the same partial map because of the following.
 	By the definition of encoding automorphisms by DeepWL+WSC-algorithms,
 	the vertex~$\hat{\vertC}_\auto$ encodes that $\auto(\hat{\vertA}) = \hat{\vertB}$
 	for atoms~$\hat{\vertA}$ and~$\hat{\vertB}$
 	if there is a unique vertex~$\hat{\vertC}$ such that 
 	$(\hat{\vertC}_\auto,\hat{\vertC})$ is in~$\rel_{\text{aut}}$,
 	$(\hat{\vertC},\hat{\vertA})$ is in~$\rel_{\text{dom}}$, and
 	$(\hat{\vertC},\hat{\vertB})$ is in~$\rel_{\text{img}}$.
 	By Properties~\ref{prop:sim-atoms} and~\ref{prop:sim-bijection},
 	the corresponding vertices~$\vertC$,~$\vertA$, and~$\vertB$
 	in $\applyplan{\buildplan}{\StructA}$ are crossing vertices
 	such that~$\vertA$ has~$\hat{\vertA}$ as $\pRel$-neighbor (and no other atom)
 	and likewise~$\vertB$ has~$\hat{\vertB}$ as $\pRel$-neighbor.
 	This exactly is the definition of encoding automorphisms for normalized DeepWL+WSC-algorithms.
 	By Property~\ref{prop:sim-auto} and Lemma~\ref{lem:auts-flat-vs-hf},
 	the automorphisms of~$\Struct_t$ and~$\hat{\Struct}_t$ are equal.
 	By Lemma~\ref{lem:buildplan-automorphisms},~%
 	$N$ witnesses $\crossVertexRel{\ccA}$ as orbit
 	fixing $\crossVertexRel{\ccA_1}^{\Struct_{j_1}}, \dots, \crossVertexRel{\ccA_\ell}^{\Struct_{j_\ell}}$ in the simulation.
 	
 	Finally, to see that the algorithm runs in polynomial time,
 	we note that~$\dwlm$ needs polynomially many steps to simulate a single step of~$\hat{\dwlm}$.
 	So it remains to argue that the structure in the cloud is of polynomial size. 
 	By Property~\ref{prop:sim-efficient},
 	the building plans are always efficient.
 	Then, by Lemma~\ref{lem:building-plan-efficient-size},
 	we have that $|\Struct_t| \leq 2|\hat{\Struct}_t|$.
 	Because additionally the number of relations in~$\Struct_t$ is polynomially bounded,
 	$|\sketch{\Struct_t}|$ is polynomially bounded, too.
\end{proof}

\begin{corollary}
	\label{cor:normalized-iso}
	Let~$\GraphClass$ be a class of binary structures.
	If there is a polynomial-time DeepWL+WSC-algorithm deciding isomorphism for~$\GraphClass$,
	then there is a normalized\linebreak polynomial-time DeepWL+WSC-algorithm deciding isomorphism for~$\GraphClass$.
\end{corollary}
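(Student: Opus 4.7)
The plan is to apply Lemma~\ref{lem:deepwl-wsc-simulate-normalized} to the given polynomial-time isomorphism-deciding DeepWL+WSC-algorithm $\hat{\dwla}$, but with one preparatory step, since the simulation lemma produces a normalized algorithm that works on pairs $(\Struct_0, \buildplan_0)$ simulating the intended input $\hat{\Struct}_0$, whereas the corollary wants a normalized algorithm that takes the isomorphism-testing input $\StructA_1 \disunion \StructA_2$ directly.

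First, by Lemma~\ref{lem:deepwl-wsc-simulate-normalized}, from~$\hat{\dwla}$ we obtain a normalized polynomial-time DeepWL+WSC-algorithm~$\dwla$ simulating~$\hat{\dwla}$. Next, I would design a normalized polynomial-time preprocessing stage~$\dwla_{\text{prep}}$ which, given the normalized HF-structure $\hat{\Struct}_0 = \StructA_1 \disunion \StructA_2$ in its cloud, produces in its cloud a normalized HF-structure~$\Struct_0$ on the same atom set together with a building plan~$\buildplan_0$ on its work-tape such that $(\Struct_0, \buildplan_0)$ simulates~$\hat{\Struct}_0$ in the sense of Properties~\ref{prop:sim-iso}--\ref{prop:sim-auto}. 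The idea is to have~$\dwla_{\text{prep}}$ create, in each component, one fresh ``auxiliary'' non-atom vertex per atom of the other component (this is doable by executing $\addPairSym$ on appropriate loops and then unary membership-encoding as in the proof of Lemma~\ref{lem:deepwl-wsc-plus}), and then to use a vertex plan pairing each atom in one component with its auxiliary counterpart in the other. The resulting crossing vertex class~$\vcA$ is then in a canonical $\pRel$-matching with the atoms of~$\hat{\Struct}_0$, and relation plans in~$\buildplan_0$ are chosen to transcribe the $\sig$-relations of~$\hat{\Struct}_0$ onto these crossing vertices. This makes~$(\Struct_0, \buildplan_0)$ efficient, satisfies~\ref{prop:sim-iso}--\ref{prop:sim-bijection} by construction, and satisfies~\ref{prop:sim-auto} by Lemma~\ref{lem:buildplan-automorphisms} combined with Corollary~\ref{cor:operations-apart-choice-auto-invariant} (no $\choiceSym$-operation is used during preprocessing).

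Concatenating~$\dwla_{\text{prep}}$ with~$\dwla$ then gives a normalized polynomial-time DeepWL+WSC-algorithm that on input $\StructA_1 \disunion \StructA_2$ first builds~$(\Struct_0, \buildplan_0)$ and then runs~$\dwla$ on it. By the simulation property,~$\dwla$ accepts~$(\Struct_0, \buildplan_0)$ iff~$\hat{\dwla}$ accepts~$\hat{\Struct}_0$, iff $\StructA_1 \iso \StructA_2$. Since both~$\hat{\dwla}$ (and hence~$\dwla$) and~$\dwla_{\text{prep}}$ run in polynomial time and the encoding~$(\Struct_0, \buildplan_0)$ has size polynomial in~$|\sketch{\hat{\Struct}_0}|$, the overall algorithm is polynomial. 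Normalization is preserved throughout, since~$\dwla_{\text{prep}}$ only performs plain operations ($\addPairSym$, $\sccSym$, $\createSym$, $\renameSym$) within each component of~$\hat{\Struct}_0$ and~$\dwla$ is normalized.

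The main technical obstacle is the construction of~$\dwla_{\text{prep}}$ in a way that truly satisfies all five simulation conditions; in particular, ensuring~\ref{prop:sim-atoms}---that every isomorphism $\nonHF{\hat{\Struct}_0} \to \applyplancross{\buildplan_0}{\Struct_0}$ necessarily maps atoms onto~$\vcA$---requires using the relation plans in~$\buildplan_0$ to distinguish the crossing vertex class representing the atoms from any other crossing vertex class that might be created. If one chooses the minimal encoding outlined above (one pairing per atom together with plain copies of the original relations lifted via relation plans to the crossing layer), this is automatic because the crossing vertex class in question is the only one carrying the lifted $\sig$-relations. Everything else is a routine bookkeeping exercise.
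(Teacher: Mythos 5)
Your overall strategy is the paper's: invoke Lemma~\ref{lem:deepwl-wsc-simulate-normalized} to get a normalized algorithm~$\dwla$ simulating~$\hat{\dwla}$, and then construct an initial pair $(\Struct,\buildplan)$ simulating the input $\StructA_1\disunion\StructA_2$ so that~$\dwla$ can be run on it. The gap is in your construction of~$\buildplan$. You propose to create, in each component, one auxiliary vertex \emph{per atom of the other component} and then to pair ``each atom in one component with its auxiliary counterpart in the other'' via a vertex plan. A vertex plan $\set{\vcA,\vcB}$ cannot express a matching: by definition (and by Property~\ref{prop:an-direct-product-vertex}) it creates a crossing vertex for \emph{every} cross-component pair of a $\vcA$-vertex and a $\vcB$-vertex, i.e., a full direct product. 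With your classes this yields on the order of $|\StructVA_1|^2+|\StructVA_2|^2$ crossing vertices rather than one per atom, so the $\pRel$-relation is not a perfect matching with the atoms and Property~\ref{prop:sim-bijection} fails (as does~\ref{prop:sim-iso}, since $\applyplancross{\buildplan}{\Struct}$ is then far too large). Moreover, there is no way for a normalized algorithm to single out, inside component~$2$, a canonical ``counterpart'' of a specific atom of component~$1$ — doing so would already amount to computing an isomorphism-like correspondence between the components.

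The fix, which is what the paper does, is to make one side of the vertex plan a \emph{singleton per component}: execute $\scc{\relB}$ for the relation~$\relB$ of all plain edges to obtain a vertex class~$\vcB'$ with exactly one vertex in each component, rename each original relation~$\rel$ to~$\rel'$, and take $\omega=\set{\set{\vcB,\vcB'}}$ where~$\vcB$ is the class of atoms. Then each atom~$\vertA$ of component~$i$ yields exactly one crossing vertex $\set{\vertA,\vertA_j}$ (paired with the unique $\vcB'$-vertex of the other component~$j$), giving the required perfect matching. The relations are recovered by relation plans of the form $(\rel,\set{\set{\rel',\vcB'}})$, which put a crossing edge into~$\rel$ exactly when the corresponding plain edges are the $\vcB'$-loop on one side and an $\rel'$-edge on the other; this makes $\applyplancross{\buildplan}{\Struct}\iso\nonHF{\Struct}$ and yields Properties~\ref{prop:sim-iso}--\ref{prop:sim-auto}. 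Your remarks about polynomial time, preservation of normalization, and Property~\ref{prop:sim-atoms} then go through essentially as you state them.
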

\begin{proof}
	Let~$\hat{\dwla}$ be a polynomial-time DeepWL+WSC-algorithm
	deciding isomorphism for~$\GraphClass$
	and let~$\dwla$ be a normalized polynomial-time DeepWL+WSC-algorithm
	simulation~$\hat{\dwla}$ given by Lemma~\ref{lem:deepwl-wsc-simulate-normalized}.
	Let $\Struct_1, \Struct_2 \in \GraphClass$
	and $\Struct = \Struct_1 \disunion \Struct_2$.
	To execute~$\dwla$ to decide isomorphism,
	we first have to construct a building plan~$\buildplan$
	such that $(\Struct, \buildplan)$ simulates~$\Struct$.
	
	This is done as follows:
	First, we define the vertex class~$\vcB$ of all atoms.
	Second, we create a plain vertex~$\vertA_i$ for both components~$\Struct_i$,
	and a vertex class~$\vcB'$ only containing~$\vertA_1$ and~$\vertA_2$.
	To do so, we define a relation~$\relB$ connecting all atoms in the same component.
	We execute $\scc{\relB}$ and obtain a vertex class~$\vcB'$
	containing one vertex per component.
	Then we rename every relation $\rel \in \sigA$ to a fresh relation symbol~$\rel'$.
	We then set $\buildplan := (\omega, \kappa)$, where
	\begin{align*}
		\omega &:= \set[\big]{\set{\vcB,\vcB'}} \text{ and}\\
		\kappa &:= \set[\big]{(\vcA,\set{\vcB,\vcB'})} \cup \setcond[\big]{(\rel, \set{\set{\rel', \vcB'}})}{\rel \in \sigA}.
	\end{align*}
	We claim that $(\Struct, \buildplan)$ simulates $\Struct$.
	With~$\omega$, we create for every atom $\vertA\in \StructV_i$ in $\vcB$
	a single crossing vertex, because~$\vcB'$ contains exactly one vertex in $\Struct_{j}$ (where $\set{i,j} = [2]$).
	Two such crossing vertices~$\vertA$ and~$\vertB$ are in the relation~$\rel$
	if between the corresponding vertices there is a $\vcB'$-edge (the $\vcB'$-loop)
	and an $\rel'$-edge.
	That is, $\plvI{\vertA}{i} = \plvI{\vertB}{i}$ is a $\vcB'$-vertex,
	so~$\vertA$ and~$\vertB$ are copies of plain vertices of the same component $\Struct_j$,
	and $(\plvI{\vertA}{j}, \plvI{\vertB}{j}) \in \rel^\Struct$,
	where~$\rel$ was renamed to~$\rel'$.
	So we have established Property~\ref{prop:sim-iso}.
	Clearly, every vertex is used in the building plan because every vertex
	is either in~$\vcB$ or in~$\vcB'$, thus Property~\ref{prop:sim-efficient}
	is satisfied.
	Every crossing vertex is adjacent via~$\pRel$ to exactly one atom in~$\StructV$ because the~$\vcB'$-vertices are not atoms.
	The relation plan $(\vcA,\set{\vcB,\vcB'})$
	creates a vertex class $\vcA$ containing all crossing vertices.
	Because each crossing vertex is $\pRel$-connected to an atom and every atom to a $\vcA$-vertex,
	we established Properties~\ref{prop:sim-atoms} and~\ref{prop:sim-bijection}.
	Lastly, we also established Property~\ref{prop:sim-auto}
	because $\applyplan{\buildplan}{\Struct}$
	consists essentially of two copies of~$\Struct$ connected by a perfect matching.
	A similar construction (but without building plans) can be found in the proof of Lemma~11 in~\cite{GroheSchweitzerWiebking2021}.
\end{proof}

\subsubsection{Deciding Isomorphism and Internal Runs}
In this section we prove that for every class of structures~$\GraphClass$,
isomorphism is decidable by a polynomial-time DeepWL+WSC-algorithm
if and only if a complete invariant is computable by a polynomial-time  DeepWL+WSC-algorithm.
To do so, we consider the internal runs of normalized DeepWL+WSC-algorithms.
We say that the components of a normalized HF-structure $\StructA_1 \disunion \StructA_1$ are \defining{distinguished} if $\sketch{\StructA_1} \neq \sketch{\StructA_2}$.

\begin{lemma}
	\label{lem:same-run-normalized}
	For every  normalized polynomial-time DeepWL+WSC-algorithm~$\dwla$,
	there is a normalized polynomial-time DeepWL+WSC-algorithm~$\dwla'$ with the following properties:
	\begin{enumerate}[label=\alph*)]
		\item For every normalized HF-structure~$\StructA$, the algorithm~$\dwla'$ accepts (respectively rejects)~$\StructA$ whenever~$\dwla$ accepts (respectively rejects)~$\StructA$.
		\item For every normalized HF-structures~$\StructA = \StructA_1\disunion\StructA_2$ and~$\StructB = \StructB_1\disunion\StructB_2$,
		 if $\run{\dwla'}{\StructA} = \run{\dwla'}{\StructB}$,
		then $\run{\dwla'}{\StructA} \in \setcond{\run{\dwla'}{\StructA_1 \disunion \StructB_i}}{i \in [2]}$.
	\end{enumerate}
\end{lemma}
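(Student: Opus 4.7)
The plan is to first apply Lemma~\ref{lem:normalized-distinguish-components} to~$\dwla$, obtaining a normalized algorithm~$\hat{\dwla}$ whose run and accept/reject behaviour on a distinguished input $(\StructA, \StructVA_1)$ determine those of~$\dwla$ on~$\StructA$, and for which equal $\hat{\dwla}$-runs force equal $\dwla$-runs. I then define~$\dwla'$ so that on input $\StructA_1 \disunion \StructA_2$ it first isolates the two connected components as (atomic) vertex classes $\vcA_1, \vcA_2$, using a $\sccSym$-execution on the plain connectivity relation, then invokes~$\hat{\dwla}$ twice (packaged as nested subalgorithms called through $\refineSym$-operations on the singleton vertex classes) with each~$\vcA_i$ playing the role of the distinguished component, and finally records the two resulting internal runs in a canonical, symmetric fashion on the work tape (for instance, as a lexicographically sorted pair). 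The algorithm~$\dwla'$ accepts precisely when $\hat{\dwla}$ accepts either of these calls, which agree by Lemma~\ref{lem:normalized-distinguish-components}. Property~(a) is then immediate from the same lemma.

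For property~(b), equality $\run{\dwla'}{\StructA} = \run{\dwla'}{\StructB}$ forces the canonically-ordered pair of $\hat{\dwla}$-runs to agree, so after relabelling one may assume
\[
\run{\hat{\dwla}}{(\StructA, \StructVA_i)} \;=\; \run{\hat{\dwla}}{(\StructB, \StructVB_{\sigma(i)})}
\]
for some permutation $\sigma \in \set{\mathrm{id},(1\,2)}$ and every $i \in [2]$. The crux is the following swap claim: if $\run{\hat{\dwla}}{(\StructA_1\disunion\StructA_2, \StructVA_1)} = \run{\hat{\dwla}}{(\StructB_1\disunion\StructB_2, \StructVB_1)}$, then also $\run{\hat{\dwla}}{(\StructA_1 \disunion \StructB_2, \StructVA_1)}$ coincides with both of these. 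The justification relies on two observations. First, by Lemma~\ref{lem:normalized-direct-product} the algebraic sketch of a normalized structure is determined by the sketches of its two components together with the distinguishment; second, a normalized DeepWL+WSC-machine interacts with the non-distinguished component only through the sketches it reads from the interaction tape and through $\choiceSym$- and $\refineSym$-operations whose outcomes are themselves isomorphism-invariant on orbits. Consequently, equal runs force the sketches of the non-distinguished components to coincide step by step, and substituting $\StructA_2$ by $\StructB_2$ (which by the assumed run-equality initially has the same sketch) keeps the entire sketch-transcript invariant, hence yields the same internal run. Applying the claim to both $\hat{\dwla}$-invocations produces the desired index $i \in [2]$ with $\run{\dwla'}{\StructA_1 \disunion \StructB_i} = \run{\dwla'}{\StructA}$. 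Polynomial runtime is preserved because $\dwla'$ only performs $\mathcal{O}(1)$ additional operations and two calls to~$\hat{\dwla}$.

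The main obstacle will be giving a rigorous proof of the swap claim. This requires an induction on the number of steps executed by $\hat{\dwla}$, showing that if the sequence of configurations, sketches and subalgorithm-runs has coincided up to step~$k$, then the operation chosen at step $k+1$ is the same and its effect on the non-distinguished component's sketch is the same. The subtle cases are $\choiceSym$- on a fiber meeting both components (where one must invoke Corollary~\ref{cor:witnessed-all-or-none} and Corollary~\ref{cor:wsc-orbit} to argue that the choice set, and the witnessed choice, evolve identically across the swap) and $\refineSym$-on a fiber meeting both components (where one must invoke the inductive hypothesis on the subalgorithm, which is itself normalized by Lemma~\ref{lem:deepwl-wsc-simulate-normalized}). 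Handling these cases carefully, while keeping the entire construction normalized, is the technical heart of the proof.
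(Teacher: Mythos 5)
Your overall architecture matches the paper's: both proofs use Lemma~\ref{lem:normalized-distinguish-components} to force the two components to be distinguished early, run the resulting algorithm once per component (so the run of~$\dwla'$ records a symmetric pair of component-tagged runs), and then reduce property~(b) to a ``swap claim'' about the hybrid structure $\StructA_1\disunion\StructB_2$. Property~(a) and the polynomial-time bound are fine as you argue them.

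The problem is that the swap claim \emph{is} the lemma, and you have not proven it --- you explicitly defer ``the technical heart of the proof,'' and the deferred part is exactly where the real difficulties sit. Two concrete points. First, your plan to handle a $\choiceSym$- or $\refineSym$-operation ``on a fiber meeting both components'' directly inside the induction does not work as stated: if a choice set spans both components, the witnessing automorphisms in~$\StructA$ may exchange $\StructA_1$ and $\StructA_2$, and there is no reason such automorphisms exist in the hybrid $\StructA_1\disunion\StructB_2$; Corollaries~\ref{cor:witnessed-all-or-none} and~\ref{cor:wsc-orbit} say nothing about transferring witnesses between different structures. The paper avoids this by first converting~$\dwla$ into a \emph{nice} algorithm in which, after the single initial component-distinguishing $\refineSym$, every $\choiceSym$ and every $\refineSym$ acts on a \emph{plain} color (a crossing $\refineSym$ is rewritten as two nested plain ones, a crossing $\choiceSym$ as two plain ones); only then do the witness sets decompose as products $N_1\times N_2$ over the components, so that $N_1^{\StructA}\times N_2^{\StructB}$ witnesses the same plain orbits in the hybrid. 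You need this (or an equivalent) normal form before the induction can go through. Second, even granting plainness, the inductive step for $\refine{\col}{k}$ requires more than ``sketches coincide'': one must match the \emph{multisets} of subalgorithm runs $\msetcondition{\run{\dwla_k}{(\StructA^j,x)}}{x\in\tilde{\col}^{\StructA^j}}$ edge by edge via a bijection~$\zeta$ respecting components, and apply the outer induction hypothesis (on nesting depth) to~$\dwla_k$ for each individual edge to conclude that the refined relation in the hybrid is the union of the refined relations in $\StructA_1$ and $\StructB_2$. Your sketch gestures at this but supplies no argument, and without it the claim that ``the entire sketch-transcript stays invariant'' is unsubstantiated. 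As it stands the proposal is a correct plan with the central proof obligation left open.
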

\begin{proof}
	A normalized DeepWL+WSC-algorithm is \defining{nice} on input~$\StructA$ if
	\begin{enumerate}
		\item whenever $\refine{\relColA}{i}$ is executed
		and the components of the current structure in the cloud~$\StructA'$ are not distinguished,~%
		then $\relColA^{\Struct'} = \set{\vertA_1,\vertA_2}$ 
		such that~$\vertA_1$ and~$\vertA_2$ are in different components of~$\StructA'$,
		\item for every other $\refine{\relColA}{i}$-execution,
		$\relColA$ is a plain color, and
		\item whenever $\choice{\col}$ is executed,
		then the two components of the current structure in the cloud are distinguished
		and~$\col$ is a plain color.
	\end{enumerate}
	Note that if $\dwla = (\dwlmout,\dwlmwit,\dwla_1, \dots, \dwla_\ell)$
	is nice on input~$\StructA$,
	then $\dwla_1, \dots, \dwla_\ell$ must be nice on all inputs
	during $\refineSym$-operations of~$\dwlmout$ or~$\dwlmwit$.
	
	\begin{claim}
		\label{clm:normalized-to-nice}
		For every normalized polynomial-time DeepWL+WSC-algorithm~$\dwla$,
		there is normalized polynomial-time DeepWL+WSC-algorithm~$\dwla'$
		such that for every normalized HF-structure~$\StructA$ on which~$\dwla$ does not fail,~%
		$\dwla'$ is nice on~$\StructA$
		and~$\dwla'$ accepts (respectively rejects)~$\StructA$
		if~$\dwla$ accepts (respectively rejects)~$\StructA$.
	\end{claim}
	\begin{claimproof}
		Let~$\dwla$ be a normalized polynomial-time DeepWL+WSC-algorithm.
		Let~$\dwla''$ be the normalized polynomial-time DeepWL+WSC-algorithm given for~$\dwla$
		by Lemma~\ref{lem:normalized-distinguish-components}.
		That is, for every normalized HF-structure $\StructA_1 \disunion \StructA_2$,~%
		$\dwla''$ accepts (respectively rejects) 
		$(\StructA_1 \disunion \StructA_2, \StructVA_i)$
		whenever~$\dwla$ accepts (respectively rejects) $\StructA_1 \disunion \StructA_2$ for every  $i \in [2]$
		(the lemma states it only for $i=1$, the case $i=2$  follows from exchanging the two components).
		Then define the DeepWL+WSC-algorithm $\dwla'= (\dwlmout, \dwlmwit, \dwla'')$
		as follows:
		First, the machine~$\dwlmout$ creates a vertex class~$\vc$ containing one vertex
		per component by executing $\scc{\rel}$ for the relation~$\rel$ of plain edges.
		Next,~$\dwlmout$ executes $\refine{\vc}{1}$,
		i.e.,~$\dwlmout$ refines~$\vc$ with~$\dwla''$
		(if~$\vc$ decomposes into two fibers~$\ccA$ and~$\ccB$,
		then~$\dwlmout$ executes $\refine{\ccA}{1}$ because we are only allowed to refine colors).
		Individualizing one $\vc$\nobreakdash-vertex corresponds to creating a relation containing the vertices of one component,
		so~$\dwla''$ will accept (respectively reject) the $\vc$-vertices
		whenever~$\dwla$ accepts (respectively rejects) the input to~$\dwla$.
		Then~$\dwlm$ accepts (respectively rejects) accordingly.
		The witnessing machine~$\dwlmwit$ is not used and immediately halts.
		
		It is clear that~$\dwla'$ accepts (respectively rejects) if~$\dwla$ does so.
		We show how we have to modify~$\dwla'$ such that~$\dwla'$ is nice on every input:
		\begin{enumerate}
			\item By construction,~$\dwla'$ executes a single $\refineSym$-operation when the two components are not distinguished.
			Indeed, the refined relation is a vertex class containing a single vertex of each component.
			After that, the two components are distinguished (and remain so)
			because on of the two vertices is individualized.
			\item A $\choice{\col}$-operation is only executed after the initial $\refineSym$-operation, so the components of the structure~$\StructA$ in the cloud are distinguished.
			Assume that~$\col$ is a crossing color,
			then $\col^\Struct \subseteq \ccA^\StructA \times \ccB^\StructA$ for two fibers $\ccA$ and $\ccB$ in different components
			because~$\StructA$ is normalized.
			We can equivalently execute
			$\choice{\ccA}$ and $\choice{\ccB}$.
			The automorphisms witnessing $\choice{\col}$
			will also witness the two other operations
			because $\col^\Struct$ is an orbit if and only if~%
			$\ccA^\StructA$ and~$\ccB^\StructA$ are orbits.
			\item  For all  $\refine{\relColA}{i}$-operations apart from the first one,
			the components of the structure~$\StructA$ in the cloud are distinguished.
			By decomposing~$\relColA$ into its colors,
			we can assume that~$\relColA$ is a color~$\col$.
			If~$\col$ is a crossing color,
			then again $\col^\Struct \subseteq \ccA^\StructA \times \ccB^\StructA$ for two different fibers~$\ccA$ and~$\ccB$ in different components because~$\StructA$ is normalized
			and its components are distinguished.
			Let~$\dwla_i$ be the DeepWL+WSC-algorithm used to refine~$\col$.
			We know that either every $(\vertA,\vertB) \in \col^\Struct$ is accepted by~$\dwla_i$ or 
			every $(\vertA,\vertB) \in \col^\Struct$ is rejected by~$\dwla_i$ because~$\dwla$ is normalized.
			These are the two only possibilities without creating a crossing relation, which would make the structure non-normalized.
			So we can equivalently execute two nested $\refineSym$-operations:
			we first refine~$\ccA$ with a new algorithm~$\dwla_i'$.
			The algorithm~$\dwla_i'$, which gets $(\Struct, \vertA)$ for some $\vertA \in \ccA^\Struct$ as input,
			immediately refines~$\ccB$
			with the algorithm~$\dwla_i$,
			which then gets $(\Struct,\vertA\vertB)$ for some $\vertB \in \ccB^\Struct$ as input.
			The algorithm~$\dwla_i'$ accepts $(\Struct, \vertA)$
			if~$\dwla_i$ accepts $(\Struct,\vertA\vertB)$ for every $\vertB \in \ccB^\Struct$.
			If all $\ccA$-vertices are accepted by~$\dwla_i'$,
			then $(\StructA, \vertA\vertB)$ is accepted by~$\dwla_i$
			for every 
			$(\vertA,\vertB) \in \col^\StructA$ 
			and thus no relation is created.
			Otherwise,~$\dwla_i$ rejects $(\StructA, \vertA\vertB)$ for every 
			$(\vertA,\vertB) \in \col^\StructA$.
			Thus,~$\dwla_i'$ rejects $(\StructA, \vertA)$ for every $\vertA \in \ccA^\Struct$ and an empty relation is created.
			\qedhere
		\end{enumerate} 
	\end{claimproof}

	\begin{claim}
		\label{clm:normalized-and-nice-combine-run}
		For every normalized DeepWL+WSC-algorithm~$\dwla$
		and every normalized HF-structures~$\StructA = \StructA_1 \disunion \StructA_2$ and~$\StructB = \StructB_1 \disunion \StructV_2$,
		if $\dwla$ is nice on~$\StructA$ and~$\StructB$ such that
		$\run{\dwla}{\StructA} = \run{\dwla}{\StructB} \neq \choiceError$,
		then $\run{\dwla}{\StructA} \in \setcond{\run{\dwla}{\StructA_1 \disunion \StructB_i}}{i \in [2]}$.
	\end{claim}
	\begin{claimproof}
	The proof is by induction on the nesting depth of DeepWL+WSC-algorithms.
	Let $\dwla = (\dwlmout, \dwlmwit, \dwla_1, \dots, \dwlm_\ell)$ be a normalized DeepWL+WSC-algorithm
	and let $\StructA = \StructA_1 \disunion \StructA_2$
	and $\StructB = \StructB_1 \disunion \StructB_2$ be two normalized HF-structures
	such that~$\dwla$ is nice on~$\StructA$ and~$\StructB$ and
	$\run{\dwla}{\StructA} = \run{\dwla}{\StructB}$.
	Let $\StructA^1, \dots, \StructA^m$ be the sequence of structures in the cloud of~$\dwlmout$ on input~$\StructA$,
	$\StructA^i = \StructA^i_1 \disunion \StructA^i_2$ for every $i \in [m]$,
	and similarly let $\StructB^1, \dots, \StructB^m$ be the same sequence on input~$\StructB$ and
	$\StructB^i = \StructB^i_1 \disunion \StructB^i_2$ for every $i \in [m]$.
	Because $\run{\dwla}{\StructA} = \run{\dwla}{\StructB}$ implies $\run{\dwlmout}{\StructA} = \run{\dwlmout}{\StructB}$,
	the two sequences have indeed the same length and satisfy
	$\sketch{\StructA^j} = \sketch{\StructB^j}$ for every $j \in [m]$.
	From Lemma~\ref{lem:normalized-direct-product}
	it follows that 
	$\setcond{\sketch{\StructA^j_i}}{i \in [2]} = 
	\setcond{\sketch{\StructB^j_i}}{i \in [2]}$
	for every $j \in [m]$.
	Moreover, if the components of~$\Struct^j$ are distinguished for some $j \in[m]$,
	then the components of~$\Struct^k$ are distinguished for every $k \geq j$
	(and likewise for the~$\StructB^j)$.
	So there is permutation ${\rho \colon [2] \to [2]}$ such that
	$\sketch{\StructA^j_i} = \sketch{\StructB^j_{\rho(i)}}$ for every $i \in[2]$
	and $j \in [m]$.
	Assume w.l.o.g.~that~$\rho$ is the identity map.
	
	Consider the HF-structure $\StructC = \StructA_1 \disunion \StructB_2$.
	We claim that $\run{\dwla}{\StructC} = \run{\dwla}{\StructA} = \run{\dwla}{\StructB}$.
	We first show that $\run{\dwlmout}{\StructC} = \run{\dwlmout}{\StructA} = \run{\dwlmout}{\StructB}$.
	Then in particular the sequence of structures $\StructC^1, \dots, \StructC^{m'}$ in the cloud of~$\dwlmout$ on input~$\StructC$
	satisfies $m' = m$ and $\StructC^j_1 = \StructA^j_1$ and $\StructC^j_2 = \StructB^j_2$ for every $j \in [m]$.
	This implies by Lemma~\ref{lem:normalized-direct-product}
	that $\sketch{\StructC^j} = \sketch{\StructA^j} = \sketch{\StructB^j}$ for every $j \in [m]$. 
	We show by induction on~$j$,
	that $\StructC^j_1 = \StructA^j_1$ and $\StructC^j_2 = \StructB^j_2$
	and that the  configuration of $\dwlmout$
	is equal when the $j$-th cloud modifying operation is performed
	on input~$\StructA$,~$\StructB$, and~$\StructC$
	for all $j \in [m]$
	(for~$\StructA$ and~$\StructB$ the claim follows from $\run{\dwlmout}{\StructA} = \run{\dwlmout}{\StructB}$).
	
	For $j = 1$,
	this is the case by construction: $\StructC^1 = \StructC =  \StructA_1 \disunion \StructB_2$ and~$\dwlmout$ is started in its initial state.
	
	So assume that the claim holds for $j \geq 1$
	and we show that it holds for $j+1$.
	Because the sketches of the structures in the cloud are equal
	and the Turing machine of~$\dwlmout$ is in the same configuration,
	the run of~$\dwlmout$ on~$\StructC_j$ is equal to the one on~$\StructA_j$ (or equally on~$\StructB_j$)
	until the next cloud-modifying operation is executed.
	Because the Turing machines are in the same configuration,
	the same operation is executed for~$\StructC^j$
	as for~$\StructA^j$ and~$\StructB^j$.
	\begin{enumerate}[label=\alph*)]
		\item If the operation modifying the cloud is an $\addPairSym$-, $\sccSym$-, or $\createSym$-operation,
		the effect of the operation is clearly given by the effect on each component because~$\dwlmout$ is normalized.
		So, since $\StructC^j_1 = \StructA^j_1$ and $\StructC^j_2 = \StructB^j_2$,
		we have that $\StructC^{j+1}_1 = \StructA^{j+1}_1$ and $\StructC^{j+1}_2 = \StructB^{j+1}_2$.
		
		\item  Assume that~$\dwlmout$ executes $\refine{\col}{k}$.
		Because $\run{\dwlmout}{\StructA} = \run{\dwlmout}{\StructB}$,
		we have
		\[\bigmsetcondition{\run{\dwla_k}{(\StructA^j, x)}}{x \in \tilde{\col}^{\StructA^j}} = \bigmsetcondition{\run{\dwla_k}{(\StructB^j, x)}}{x \in \tilde{\col}^{\StructB^j}}.\]
		Recall that $\tilde{\col}^{\Struct^j} = \relColA^{\Struct^j}$ if $\col$ is directed
		and $\tilde{\col}^{\Struct^j} = \setcond{\set{\vertA,\vertB}}{(\vertA,\vertB) \in \col^{\Struct^j}}$  if $\col$ is undirected.
		So there is a bijection $\zeta \colon \tilde{\col}^{\StructA^j} \to \tilde{\col}^{\StructB^j}$ such that 
		$\run{\dwla_k}{(\StructA^j, x)} = \run{\dwla_k}{(\StructB^j, \zeta(x))}$
		for every $x \in \tilde{\col}^{\StructA^j}$.
		
		Assume first that~$\col$ is a crossing color and undirected in~$\StructA^j$,~$\StructB^j$, and thus in~$\StructC^j$.
		So the components are not distinguished in all three structures
		(otherwise, all crossing colors are directed).
		Because~$\dwla$ is nice on~$\StructA$ and~$\StructB$,~%
		$\col$ is a fiber and contains one vertex per component.
		So assume $\col^{\StructA^j} =\set{\vertA_1, \vertA_2}$ and
		$\col^{\StructB^j} =\set{\vertB_1, \vertB_2}$,
		and thus $\col^{\StructC^j} =\set{\vertA_1, \vertB_2}$.
		
		We can assume that~$\zeta$ satisfies $\zeta(\vertA_i) = \vertB_i$ for every $i \in [2]$ because we assumed that $\rho$ is the identity map
		and so $\sketch{\StructA^j_i} = \sketch{\StructB^j_i}$ for every $i \in [2]$.
		That is, $\run{\dwla_k}{(\StructA^j, \vertA_i)} = \run{\dwla_k}{(\StructB^j, \vertB_i)}$ for every $i \in [2]$.
		
		Let $\vertC \in \col^{\StructC_j}$
		and w.l.o.g.~assume that $\vertC = \vertA_1$ (the case $\vertC= \vertB_2$ is symmetric).
		So $(\StructC^j_1,\vertC) = (\StructA^j_1, \vertA_1)$,
		$\StructC^j_2 = \StructB^j_2$,
		and thus $(\StructC^j,\vertC) = (\StructA^j_1, \vertA_1) \disunion \StructB^j_2$.
		Because $\run{\dwla_k}{(\StructA^j, \vertA_1)} = \run{\dwla_k}{(\StructB^j, \vertB_1)}$
		and~$\dwla_k$ is nice on~$\StructA^j$ and~$\StructB^j$,
		we can apply the outer induction hypothesis
		and conclude that
		\begin{align*}
			\run{\dwla_k}{(\StructC^j, \vertC)} &= \run{\dwla_k}{(\StructA^j, \vertA_1)} = \run{\dwla_k}{(\StructB^j, \vertB_1)}.
		\intertext{Note that we assumed that~$\rho$ is the identity map
		and hence}
			\text{if } \run{\dwla_k}{\StructA^j} &\in \setcond{\run{\dwla_k}{\StructA^j_1 \disunion \StructB^j_i}}{i \in [2]},\\
			\text{then } \run{\dwla_k}{\StructA^j} &= \run{\dwla_k}{\StructA^j_1 \disunion \StructB^j_2} = \run{\dwla_k}{\StructC^j}.
		\end{align*}
		In particular,~$\dwla_k$ accepts (respectively rejects) $(\StructC^j, \vertA_1)$
		if and only if~$\dwla_k$ accepts (respectively rejects) $(\StructA^j, \vertA_1)$.
		The same holds (by symmetry) for $(\StructC^j, \vertB_2)$ and $(\StructB^j, \vertA_2)$.
		Thus, the resulting vertex class~$\vcB$ of the 
		\refineSym-operation 
		satisfies $\vcB^{\StructC^{j+1}} = \vcB^{\StructA^{j+1}_1} \cup \vcB^{\StructB^{j+1}_2}$.
		That is, 
		$\StructC^{j+1}_1 = \StructA^{j+1}_1$,
		$\StructC^{j+1}_2 = \StructB^{j+1}_2$, and
		\begin{align*}
			\bigmsetcondition{\run{\dwla_k}{(\StructC^j,\vertC)}}{\vertC \in \set{\vertA_1,\vertB_2}}& = 
		\bigmsetcondition{\run{\dwla_k}{(\StructA^j,\vertC)}}{\vertC \in \set{\vertA_1,\vertA_2}}\\
		& = 
		\bigmsetcondition{\run{\dwla_k}{(\StructB^j,\vertC)}}{\vertC \in \set{\vertB_1,\vertB_2}}.
		\end{align*}
		If otherwise~$\col$ is not crossing or not undirected,
		then~$\col$ is a plain color in~$\StructA^j$
		(and thus in~$\StructB^j$ and~$\StructC^j$) because~$\dwla$ is nice.
		So every $x \in \tilde{\col}^{\StructC^j}$
		consists solely of vertices of either $\vertices{\StructC^j_1}$ or $\vertices{\StructC^j_2}$.
		Assume w.l.o.g.~that $x$ consists of vertices of $\vertices{\StructC^j_1}$.
		Then $(\StructC^j_1,x) = (\StructA^j_1,x)$,
		$\StructC^j_2 = \StructB^j_2$,
		and $(\StructC^j,x) = (\StructA^j_1,x) \disunion \StructB^j_2$.
		
		Because $\run{\dwla_k}{(\StructA^j, x)} = \run{\dwla_k}{(\StructB^j, \zeta(x))}$,
		we can apply the outer induction hypothesis to~$\dwla_k$ and
		obtain that
		\[\run{\dwla_k}{(\StructA^j, x)} = \run{\dwla_k}{(\StructB^j, \zeta(x))} = \run{\dwla_k}{(\StructC^j, x)}.\]
		In particular,~$\dwla_k$ accepts $(\StructC^j, x)$ if and only if it accepts $(\StructA^j, x)$.
		So let~$\relColB$ be the relation obtained from $\refine{\col}{k}$.
		Then $\relColB^{\StructC^{j+1}} = \relColB^{\StructC^{j+1}_1} \cup \relColB^{\StructC^{j+1}_2}$,
		$\relColB^{\StructC^{j+1}_1} = \relColB^{\StructA^{j+1}_1}$, and 
		$\relColB^{\StructC^{j+1}_2} = \relColB^{\StructB^{j+1}_2}$.
		That is, 
		$\StructC^{j+1}_1 = \StructA^{j+1}_1$ and $\StructC^{j+1}_2 = \StructB^{j+1}_2$.
		
		Because the components of~$\StructA^j$ and~$\StructB^j$ are distinguished,~%
		$\zeta$ maps an edge of the $i$-th component of~$\StructA^j$ to the $i$-th component of~$\StructB^j$ (otherwise the sketches differ immediately).
		That is,
		\begin{align*}
			\bigmsetcondition{\run{\dwla_k}{(\StructA^j, x)}}{x \in \tilde{\col}^{\StructA^j}} &= \bigmsetcondition{\run{\dwla_k}{(\StructA^j, x)}}{x \in \tilde{\col}^{\StructA^j} \cap \vertices{\StructA^j_1}} \cup {}\\
			&\hiddenEq \bigmsetcondition{\run{\dwla_k}{(\StructA^j, x)}}{x \in \tilde{\col}^{\StructA^j}\cap \vertices{\StructA^j_2}}
			\intertext{and likewise for $\StructB_j$.
				So we have}
			\bigmsetcondition{\run{\dwla_k}{(\StructC^j, x)}}{x \in \tilde{\col}^{\StructC^j}} &= \bigmsetcondition{\run{\dwla_k}{(\StructA^j, x)}}{x \in \tilde{\col}^{\StructA_j} \cap \vertices{\StructA^j_1}} \cup {}\\
			&\hiddenEq \bigmsetcondition{\run{\dwla_k}{(\StructB^j, x)}}{x \in \tilde{\col}^{\StructB^j}\cap \vertices{\StructB^j_2}}
			\intertext{and thus}
			\bigmsetcondition{\run{\dwla_k}{(\StructC^j, x)}}{x \in \tilde{\col}^{\StructC^j}} &= \bigmsetcondition{\run{\dwla_k}{(\StructA^j, x)}}{x \in \tilde{\col}^{\StructA^j}}\\
			& = \bigmsetcondition{\run{\dwla_k}{(\StructB^j, x)}}{x \in \tilde{\col}^{\StructB^j}}.
		\end{align*}
		\item
		Assume that~$\dwlmout$ executes $\choice{\col}$.
		Because~$\dwla$ is nice on~$\StructA$ and~$\StructB$,~%
		$\col$ is a plain color and
		the components of~$\StructA^j$ and~$\StructB^j$ are distinguished.
		That is,
		\[\sketch{\StructA_1} = \sketch{\StructB_1}=\sketch{\StructC_1} \neq \sketch{\StructA_2} = \sketch{\StructB_2} = \sketch{\StructC_2}.\]
		Hence, the components of~$\StructC$ are distinguished, too.
		Because~$\col$ is plain,~%
		$\col$ occurs solely in one component, say w.l.o.g.~the first.
		Let $x \in \tilde{\col}^{\StructC^j}$ be a chosen element
		and $\StructC^{j+1} =(\StructC^j, x)$.
		Then $(\StructC^j_1, x) = (\StructA^j_1, x)$,
		$\StructC^j_2 = \StructB^j_2$,
		and $(\Struct^j,x) = (\StructA^j_1, x) \disunion \StructB^j_2$
		given that we also chose~$x$ in the execution of~$\dwlmout$ on~$\StructA$
		(which for the sketch of course does not matter if all choices are witnessed).
	\end{enumerate}
	We have proven that~$\dwlmout$ satisfies that
	$\run{\dwlmout}{\StructC} = \run{\dwlmout}{\StructA} = \run{\dwlmout}{\StructB}$.
	To show that $\run{\dwla}{\StructC} = \run{\dwla}{\StructA} = \run{\dwla}{\StructB}$,
	it remains to show that all choices are witnessed and
	that whenever the $j$-th cloud-modifying operation was a $\choiceSym$-operation,
	then
	\[\run{\dwlmwit}{\StructC^m \labeledUnion \StructC^j} = \run{\dwlmwit}{\StructA^m \labeledUnion \StructA^j} =
	\run{\dwlmwit}{\StructB^m \labeledUnion \StructB^j}.\]
	So suppose that the $j$-th operation is a $\choice{\col}$-operation
	for an arbitrary $j \in [m]$.
	We have already seen that
	$\StructC^k_1 = \StructA^k_1$
	and 
	$\StructC^k_2 = \StructB^k_2$
	for every $k \in \set{j, m}$,
	so the same applies to $\StructC^m \labeledUnion \StructC^j$.
	By analogous reasoning as for~$\dwlmout$,
	the witnessing machine~$\dwlmwit$
	satisfies
	$\run{\dwlmwit}{\StructC^m \labeledUnion \StructC^j} = 
	\run{\dwlmwit}{\StructA^m \labeledUnion \StructA^j} =
	\run{\dwlmwit}{\StructB^m \labeledUnion \StructB^j}$.
	In particular,~$\dwlmwit$ writes the same tuple
	$(\buildplan, \rel_{\text{aut}}, \rel_{\text{dom}}, \rel_{\text{img}})$
	onto the interaction-tape in all cases.
	Assume that $(\buildplan, \rel_{\text{aut}}, \rel_{\text{dom}}, \rel_{\text{img}})$
	witnesses~$\col^{\StructA^j}$ and~$\col^{\StructB^j}$ as orbit
	(if that is not the case,~$\dwla$ fails on~$\StructA$ and~$\StructB$
	and there is nothing to show).
	Because~$\dwla$ is nice on~$\StructA$ and~$\StructB$,
	a $\choiceSym$-operation is
	executed only if both components are distinguished.
	So no automorphism maps one component to the other
	and every automorphism~$\autoA$ of~$\StructA^j$ is induced by two automorphisms~$\autoA_1$ and~$\autoA_2$,
	one for each component.
	We write $\autoA = (\autoA_1, \autoA_2)$.
	Similarly, every automorphism~$\autoB$ of~$\StructB^j$ decomposes into $\autoB = (\autoB_1, \autoB_2)$.
	Then the map $(\autoA_1, \autoB_2)$ defined by~$\autoA_1$ on the first and by~$\autoB_2$
	on the second component is an automorphism of~$\StructC^j$.
	Because the defined relations~$\rel_{\text{aut}}$,~$\rel_{\text{dom}}$, and~$\rel_{\text{img}}$ are isomorphism-invariant
	(the witnessing machine is choice-free),
	they encode a set of automorphisms~$N^{\StructA^j}$ in $\applyplan{\buildplan}{\StructA^m \labeledUnion \StructA^j}$
	which decomposes into sets of automorphisms~$N^{\StructA^j}_1 = \setcond{\autoA}{(\autoA,\autoB) \in N^{\StructA^j}_1}$ 
	and likewise in~$N^{\StructA^j}_2$
	such that $N^{\StructA^j} = N^{\StructA^j}_1 \times N^{\StructA^j}_2$.
	Similarly, $N^{\StructB^j} = N^{\StructB^j}_1 \times N^{\StructB^j}_2$ for $\applyplan{\buildplan}{\StructB^m \labeledUnion \StructB^j}$.
	So on $\StructC^m \labeledUnion \StructC^j$,
	$(\buildplan, \rel_{\text{aut}}, \rel_{\text{dom}}, \rel_{\text{img}})$
	encodes the set of automorphisms $N^{\StructC^j} = N^{\StructA^j}_1 \times N^{\StructB^j}_2$.
	Clearly,~$N^{\StructC^j}$ witnesses exactly the same plain relations (which are either contained in~$\StructA^j_1$
	or in~$\StructB^j_2$)
	as orbit,
	which~$N^\StructA_1$ and~$N^\StructB_2$ witness as orbits.
	\end{claimproof}
	Finally, 
	let~$\dwla$ be a normalized polynomial-time DeepWL+WSC-algorithm
	and let~$\dwla'$ be the nice and normalized polynomial-time DeepWL+WSC-algorithm
	given by Claim~\ref{clm:normalized-to-nice} for~$\dwla$.
	Let~$\StructA = \StructA_1 \disunion \StructA_2$ and~$\StructB = \StructB_1 \disunion \StructB_2 $ be two normalized HF-structures
	on which~$\dwla$ does not fail.
	Then~$\dwla'$ accepts (respectively rejects)~$\StructC$
	if~$\dwla$ accepts (respectively rejects)~$\StructC$
	for every $\StructC \in \set{\StructA,\StructB}$.
	Assume that $\run{\dwla'}{\Struct} = \run{\dwla'}{\StructB}$.
	Then $\run{\dwla'}{\Struct} \in \setcond{\run{\dwla'}{\StructA_1 \disunion \StructB_i}}{i \in [2]}$ by Claim~\ref{clm:normalized-and-nice-combine-run}
	because $\dwla'$ is nice.
\end{proof}

\begin{theorem}
	\label{thm:deepwl-wsc-iso-iff-canonical-inv}
	Let $\GraphClass$ be a class of binary $\sig$-structures.
	Then the following are equivalent:
	\begin{enumerate}
		\item \label{itm:iso-deepwl} There is a polynomial-time DeepWL+WSC-algorithm deciding isomorphism on~$\GraphClass$.
		\item \label{itm:compl-inv-deepwl} There is a polynomial-time DeepWL+WSC-algorithm computing some complete invariant for~$\GraphClass$.
	\end{enumerate}
	\end{theorem}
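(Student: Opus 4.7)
The plan is to exploit the structural properties of normalized DeepWL+WSC-algorithms to convert an isomorphism decider into an invariant computer, while the reverse direction is an immediate comparison.

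For \ref{itm:compl-inv-deepwl}~$\Rightarrow$~\ref{itm:iso-deepwl}, given a polynomial-time DeepWL+WSC-algorithm $\dwla_{\text{inv}}$ computing a complete invariant, I would build the isomorphism decider as follows: on input $\StructA \disunion \StructB$, first separate the two connected components (DeepWL-definable), then restrict the algebraic sketch to each component via Lemma~\ref{lem:restrict-sketch}, then run $\dwla_{\text{inv}}$ on each component to produce two strings on the work-tape, and accept if and only if the two strings coincide bit-for-bit.

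For the harder direction \ref{itm:iso-deepwl}~$\Rightarrow$~\ref{itm:compl-inv-deepwl}, starting from a polynomial-time isomorphism decider $\dwla$, I would first apply Corollary~\ref{cor:normalized-iso} to replace $\dwla$ by an equivalent normalized polynomial-time algorithm, and then apply Lemma~\ref{lem:same-run-normalized} to obtain a normalized polynomial-time algorithm $\dwla'$ preserving the accept/reject behaviour and additionally satisfying the shared-run property: whenever $\run{\dwla'}{\StructA_1 \disunion \StructA_2} = \run{\dwla'}{\StructB_1 \disunion \StructB_2}$, the same run occurs on $\StructA_1 \disunion \StructB_i$ for some $i \in [2]$. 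My candidate complete invariant of $\StructA \in \GraphClass$ is the canonical binary encoding of $\run{\dwla'}{\StructA \disunion \StructA}$, where the two copies are understood to have formally disjoint atom sets; this is well-defined since internal runs of DeepWL+WSC-algorithms are isomorphism-invariant.

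Verifying completeness is quick. If $\StructA \iso \StructB$, then $\StructA \disunion \StructA \iso \StructB \disunion \StructB$ and so the two invariants agree. Conversely, if the invariants agree, the shared-run property applied with $\StructA_1 = \StructA_2 = \StructA$ and $\StructB_1 = \StructB_2 = \StructB$ forces this common run to equal $\run{\dwla'}{\StructA \disunion \StructB}$ as well; since the original run is accepting (because $\StructA \iso \StructA$), $\dwla'$ must also accept $\StructA \disunion \StructB$, which gives $\StructA \iso \StructB$.

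The hard part will be realizing this invariant as the output of a polynomial-time DeepWL+WSC-algorithm that receives $\StructA$ alone rather than $\StructA \disunion \StructA$, since fresh atoms cannot be produced by the DeepWL+WSC machinery. My plan is to build a meta-algorithm which first computes $\sketch{\StructA \disunion \StructA}$ from $\sketch{\StructA}$ via Lemma~\ref{lem:normalized-direct-product}, and then simulates $\dwla'$ step by step on this virtual doubled structure. Because $\dwla'$ is normalized and its execution on each component is essentially decoupled (as exploited in the proof of Lemma~\ref{lem:same-run-normalized}), each of its operations on $\StructA \disunion \StructA$ can be tracked while maintaining only $\StructA$ in the cloud together with a symbolic mirror whose sketch is deducible from that of $\StructA$. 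The subtleties that make this step non-trivial, and where I expect to spend most of the care, are (i) ensuring that the choice sets produced by $\dwla'$, which on the doubled structure may be orbits mixing the two copies, are represented and witnessed correctly using only objects available in the plain structure $\StructA$, and (ii) translating the witnessing relations output by the witnessing machine of $\dwla'$ back into outputs of the simulating algorithm. Polynomial runtime follows because $\dwla'$ runs in polynomial time and the simulation incurs only a polynomial overhead by Corollary~\ref{cor:operations-apart-choice-auto-invariant} and the construction of Section~\ref{sec:simulation}.
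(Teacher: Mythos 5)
Your treatment of the main direction \ref{itm:iso-deepwl}~$\Rightarrow$~\ref{itm:compl-inv-deepwl} is exactly the paper's argument: normalize via Corollary~\ref{cor:normalized-iso}, upgrade via Lemma~\ref{lem:same-run-normalized}, take $f(\StructA) := \run{\dwla'}{\StructA \disunion \StructA}$, and derive completeness from the shared-run property precisely as you do. (You are in fact more explicit than the paper about why $f$ is computable from the single input $\StructA$; the paper leaves this implicit, and your plan of reconstructing $\sketch{\StructA\disunion\StructA}$ from $\sketch{\StructA}$ via Lemma~\ref{lem:normalized-direct-product} and mirroring the plain operations is the right way to discharge it, with the caveats you already name.)

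There is, however, a genuine gap in your easy direction \ref{itm:compl-inv-deepwl}~$\Rightarrow$~\ref{itm:iso-deepwl}. You cannot simply ``run $\dwla_{\text{inv}}$ on each component'' of $\StructA \disunion \StructB$: the semantics of a $\choiceSym$-operation requires the choice set to be witnessed as an orbit of the \emph{entire} structure currently in the cloud (together with all earlier intermediate structures), not of one component in isolation. Precisely in the interesting case $\StructA \iso \StructB$, there are automorphisms of $\StructA \disunion \StructB$ exchanging the two components, so a set that is an orbit of $\StructA$ alone is only half of an orbit of the union, the witnessing machine cannot certify it, and the algorithm fails. Restricting the \emph{sketch} to one component via Lemma~\ref{lem:restrict-sketch} does not help, since the cloud itself (against which orbits are witnessed) still contains both components. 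The fix is the one the paper uses: first execute $\sccSym$ on the plain-edge relation to obtain one vertex per component, individualize such a component-representative (e.g.\ by refining that two-element vertex class with the invariant-computing algorithm, as in Lemma~\ref{lem:deepwl-refine-with-function}), and only then run $\dwla_{\text{inv}}$ on each now-distinguished component. With that repair the direction goes through; without it, the constructed algorithm does not decide isomorphism because it fails exactly on isomorphic inputs.
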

\begin{proof}
	To prove that Condition~\ref{itm:compl-inv-deepwl} implies
	let~$\dwla$ be a DeepWL+WSC-algorithm computing a complete invariant for~$\GraphClass$.
	We want, on input $\StructA\disunion\StructB$ for $\StructA , \StructB \in \GraphClass$,
	to run the algorithm~$\dwla$ on both structures in parallel
	and accept if the invariants are equal.
	Here we are faced with a similar issue as in Theorem~\ref{thm:cpt+wsc-all-equiv}:
	we cannot simulate the computation on one component in the disjoint union
	if the components are not distinguished because we then possibly cannot witness orbits if the components are isomorphic.
	In the DeepWL+WSC setting,
	a complete invariant is a function $f\colon \GraphClass \to \{0,1\}^*$.
	We first execute $\scc{\rel}$ for the relation of plain edges in $\StructA \disunion \StructB$.
	This way, we obtain two vertices $\vertA_\StructA$ and $\vertB_\StructB$
	related to all~$\StructA$-atoms respectively~$\StructB$-atoms
	in a vertex class~$\vc$ (similar to the proof of Lemma~\ref{lem:same-run-normalized}).
	Now, in $(\StructA \disunion \StructB, \vertA_\StructC)$ 
	for $\StructC \in \set{\StructA,\StructB}$,
	the components are distinguished and we can execute~$\dwla$
	to compute $f(\StructC)$ by ignoring the other component.
	Using Lemma~\ref{lem:deepwl-refine-with-function},
	we execute $\refine{\vcA}{f}$.
	If this results into two singleton vertex classes,
	then $f(\StructA) \neq f(\StructB)$ and thus $\StructA \not\iso \StructB$.
	Otherwise, $\StructA \iso \StructB$.
	
	To prove that Condition~\ref{itm:iso-deepwl} implies Condition~\ref{itm:compl-inv-deepwl}, let~$\dwla$ be a DeepWL+WSC-algorithm 
	deciding isomorphism.
	By Corollary~\ref{cor:normalized-iso}, we can assume that~$\dwla$ is normalized
	and, by Lemma~\ref{lem:same-run-normalized},
	we can assume that
	if $\run{\dwla}{\StructA} = \run{\dwla}{\StructB}$,
	then
	\[\run{\dwla}{\StructA} \in \setcond*{\run{\dwla}{\StructA[\verticesOf{1}{\StructA}] \disunion \StructB[\verticesOf{i}{\StructB}]}}{i \in [2]}\]
	for all normalized HF-structures~$\StructA$ and~$\StructB$.
	We show that $f(\StructA) := \run{\dwla}{\StructA \disunion \StructA}$ for every $\StructA \in \GraphClass$ is
	a complete invariant for~$\GraphClass$.
	Clearly, if $\StructA \iso \StructB$, then $f(\StructA) = f(\StructB)$
	(runs are isomorphism-invariant).
	For the other direction, assume that $f(\StructA) = \run{\dwla}{\StructA \disunion \StructA} = \run{\dwla}{\StructB \disunion \StructB} =  f(\StructB)$.
	Because~$\dwla$ decides isomorphism,~%
	$\dwla$ accepts $\StructA \disunion \StructA$ and $\StructB \disunion \StructB$.
	By Lemma~\ref{lem:same-run-normalized},
	\[\run{\dwla}{\StructA \disunion \StructA}  \in  \set{\run{\dwla}{\StructA \disunion \StructB}},\]
	i.e., $\run{\dwla}{\StructA \disunion \StructA}= \run{\dwla}{\StructA \disunion \StructB}$.
	So~$\dwla$ also accepts $\StructA \disunion \StructB$
	and we finally conclude $\StructA\iso \StructB$
	because~$\dwla$ decides isomorphism.
\end{proof}

\subsection{From DeepWL+WSC to CPT+WSC}
To finally prove Theorem~\ref{thm:cpt+wsc-iso-implies-inv},
it remains to show that CPT+WSC simulates polynomial-time DeepWL+WSC-algorithms.
\begin{lemma}
	\label{lem:deepwl-wsc-to-cpt-wsc}
	If a function~$f$ is computable (respectively a property~$P$ is decidable)
	by a polynomial-time DeepWL+WSC-algorithm,
	then~$f$ (respectively~$P$) is CPT+WSC-definable.
\end{lemma}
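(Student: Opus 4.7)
The plan is to proceed by induction on the nesting depth of the DeepWL+WSC-algorithm $\dwla = (\dwlmout, \dwlmwit, \dwla_1, \dots, \dwla_\ell)$, mirroring the reverse translation in Lemma~\ref{lem:cpt-wsc-to-deepwl-wsc} and reusing the CPT-simulation of DeepWL from~\cite{GroheSchweitzerWiebking2021} wherever no choices are involved. By the induction hypothesis, each nested subalgorithm $\dwla_i$ is simulated by a CPT+WSC-term (or formula) $\formA_i$. What remains is to build a single WSC-fixed-point operator that simulates $\dwlmout$, using $\formA_1,\dots,\formA_\ell$ to execute $\refineSym$-operations and a further CPT+WSC-term for the witnessing machine $\dwlmwit$ (which is choice-free but may still call the $\dwla_i$).

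First I would encode the state of the DeepWL+WSC computation as a hereditarily finite set. Concretely, the current content of the cloud is represented by an HF-set storing the atoms, the ``HF-vertices'' (pairs $\pairVtx{a}{i}$) and the interpretations of all relation symbols; the algebraic sketch $\sketch{\Struct}$ is then CPT-definable from this, since two-dimensional Weisfeiler–Leman is a CPT-term and the symbolic subset relation and intersection numbers can be read off. The Turing machine configuration (state, head positions, tape contents) is stored as an $\HF{\emptyset}$-set alongside it. With this encoding, the $\addPairSym$-, $\sccSym$- and $\createSym$-operations become CPT-definable transformations of the combined state, exactly as in~\cite{GroheSchweitzerWiebking2021}, while a $\refine{\relColA}{i}$-operation is handled by a comprehension term that evaluates $\formA_i$ on each individualized pair/edge of $\relColA$ and collects those that are accepted.

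The main new ingredient is the simulation of $\choiceSym$-operations via a WSC-fixed-point operator. Between two consecutive choices the computation of $\dwlmout$ is deterministic, so ``run until the next $\choiceSym$-operation or halt'' is an ordinary CPT iteration that I would package as the step term $\termStep$. The choice term $\termChoice$ extracts the relation $\tilde{\relColA}^\Struct$ to choose from out of the stored state. The witnessing term $\termWit$ applies the CPT+WSC-simulation of the choice-free machine $\dwlmwit$ on the encoded labeled union $\Struct_n \labeledUnion \Struct_i$ and decodes the triple $(\rel_{\text{aut}}, \rel_{\text{dom}}, \rel_{\text{img}})$ written on the interaction tape into the corresponding HF-set of partial maps on atoms. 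The output formula $\formOut$ reads off whether the Turing machine halts in an accepting configuration in the computed fixed-point. The polynomial bound on the resulting CPT+WSC-term is inherited from the polynomial time bound of $\dwla$; computing a function rather than deciding a property is handled by Lemma~\ref{lem:extended-iteration-term}, which lets us replace $\formOut$ by an output term.

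The main obstacle will be to align the two notions of witnessing. The WSC-fixed-point operator demands that $\termWit$, evaluated on the final fixed-point $b_n$, yield automorphisms fixing every prior intermediate set $b_1,\dots,b_i$; DeepWL+WSC, in turn, requires $\relColA_k^{\Struct_k}$ to be an $(\Struct_0,\Struct_1,\dots,\Struct_k)$-orbit. Because each intermediate set $b_i$ encodes the entire cloud $\Struct_i$ together with an $\HF{\emptyset}$-set recording the Turing configuration (which is invariant under every atom permutation), fixing $b_i$ is equivalent to fixing $\Struct_i$; this matching is exactly why Section~\ref{sec:semantics-choice-operators} grants the witnessing term access to the whole fixed-point. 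I still need to verify that the encoded transition, choice and witnessing functions are isomorphism-invariant: this follows, for the ``between choices'' segments, from the choice-freeness of the intervening operations together with Corollary~\ref{cor:operations-apart-choice-auto-invariant}, and for the nested calls from the induction hypothesis applied to the $\dwla_i$ and to $\dwlmwit$. Once these checks are in place, the resulting WSC-fixed-point operator evaluates to the set of accepting fixed-points of $\dwla$ exactly when $\dwla$ accepts, yielding the desired CPT+WSC-definition.
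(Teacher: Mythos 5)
Your proposal is correct and follows essentially the same route as the paper's proof: induction on nesting depth, encoding the cloud and Turing configuration as an HF-set, simulating the sketch via two-dimensional Weisfeiler--Leman in CPT, handling $\refineSym$ via the inductively obtained formulas, packaging the computation between choices as the step term of a WSC-fixed-point operator with the witnessing machine as witnessing term, and using Lemma~\ref{lem:extended-iteration-term} for function output. The paper's justification that choices are witnessed is the same observation you make, namely that the simulated cloud consists of the same HF-sets as in the DeepWL+WSC execution.
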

\begin{proof}
	We follow the same strategy as in Lemma~17 of~\cite{GroheSchweitzerWiebking2021}:
	Polynomial time DeepWL-algorithms can be simulated in CPT
	because CPT can execute the two-dimensional Weisfeiler-Leman algorithm
	to compute the needed coherent configurations and their algebraic sketches
	and because $\addPairSym$- and $\sccSym$-operations
	(the only ones modifying the vertex set of the cloud)
	can easily be simulated by set operations.
	So, we can maintain a set representing the vertex set of the structure in the cloud and further sets for the relations and colors in CPT.
	
	To extend this proof to DeepWL+WSC,
	we again proceed by induction on the nesting depth of the algorithm.
	We encode an HF-structure $\StructA=(\StructV, \StructHFV, \rel_1^\Struct, \dots ,\rel_k^\Struct)$ as an $\HF{\StructVA}$-set (using Kuratowski encoding for tuples).
	We then evaluate CPT+WSC-formulas or terms over the structure with atoms~$\StructV$ and no relations.
	We denote this structure with~$\StructA_0$.
	The CPT+WSC-formulas and terms will have a free variable~$x$,
	to which the $\StructA$-encoding $\HF{\StructVA}$-set is passed.
	
	We show that for every polynomial-time DeepWL+WSC-algorithm~$\dwla$,
	there is a CPT+WSC-term~$\termA(x)$
	such that for every
	HF-structure~$\StructA$,~%
	the term~$\termA(x)$ defines the final configuration of the output machine on input~$\StructA$ (using some appropriate encoding of $0/1$ strings in $\HF{\emptyset}$-sets) if~$\dwla$ does not fail on input~$\StructA$.
	To do so, let $\dwla = (\dwlmout, \dwlmwit, \dwla_1, \dots, \dwla_\ell)$ be a DeepWL+WSC-algorithm
	and let by induction hypothesis $\termA_1(x), \dots, \termA_\ell(x)$ be CPT+WSC-terms
	such that~$\termA_i(x)$ defines the final configuration of the output machine of~$\dwla_i$
	for every $i \in [\ell]$.
	Let, for every $i \in [\ell]$, 
	$\formA_i(x)$ be a CPT+WSC-formula which defines whether 
	the configuration defined by $\termA_i(x)$ is accepting,
	i.e., the head on the work-tape points on a~$1$.
	The Turing machine of~$\dwlmout$ gets simulated using a WSC-fixed-point operator, which uses a variable~$y$ to maintain the structure in the cloud and the configuration of~$\dwlmout$.
	As for DeepWL, $\addPairSym$- and $\sccSym$-operations are
	simulated using set constructions on the structure in the cloud.
	The algebraic sketch is defined
	using the two-dimensional Weisfeiler-Leman algorithm.
	A $\refine{\rel_j}{i}$-operation is simulated as follows:
	Let $\StructA=(\StructV, \StructHFV, \rel_1^\Struct, \dots ,\rel_k^\Struct)$ be the current HF-structure in the cloud as maintained by the WSC-fixed-point operator.
	First assume that~$\rel_j$ is directed.
	We then obtain with the term
	\[\setcond[\big]{z}{z \in \termB_{2+j}(y), \formA_i(\termC(y, \set{z_1}, \set{z_2}))}\]
	the output of the $\refineSym$-operation,
	where
	\begin{itemize}
		\item 	$\termB_{2+j}(y)$ defines the $(2+j)$-th entry in the tuple encoding~$\StructA$, i.e., the set~$\rel_j^\Struct$,
		\item $z_i$ defines the $i$-th entry of the pair~$z$, and
		\item $\termC(y, \set{z_1}, \set{z_2})$ extends the structure in~$y$ by the two new relations $\set{z_1}$ and $\set{z_2}$.
	\end{itemize}
	Here, we individualize~$z_1$ and~$z_2$ by putting them into new singleton relations.
	In the case that~$\rel$ is undirected, we proceed similarly but 
	only create one new relation $\set{z_1,z_2}$.
	So we can simulate the machine~$\dwlmout$ until it makes a $\choiceSym$- operation.
	We now use the variable
	\begin{itemize}
		\item $x$ for the input structure,
		\item $y$ for the pair of the current structure~$\StructA$ in the cloud and the current configuration~$c$ of~$\dwlmout$ (as before), and 
		\item $z$ for the chosen element for the last $\choiceSym$-operation (or~$\emptyset$ if the machine has to be started).
	\end{itemize}
	Let $\termA_{\text{step}}(x,y,z)$ be a CPT+WSC-term
	which simulates~$\dwlmout$ in configuration~$c$ and~$\StructA$ in the cloud
	until~$\dwlmout$ executes the next $\choiceSym$-operation or halts.
	The term $\termA_{\text{step}}(x,y,z)$
	outputs the pair of the obtained configuration and the obtained structure in the cloud.
	If~$\dwlmout$ has to be started, i.e.,~$y$ is assigned to~$\emptyset$,
	the machine is started in the initial configuration on the structure passed to~$x$.
	Second, let $\termA_{\text{choice}}(y)$ be a CPT+WSC-term
	which defines the choice set of the next $\choiceSym$-operation
	to be made
	(or the empty set if the machine halted or was started).
	Third, let $\termA_{\text{wit}}(y,y')$ be the CPT+WSC-term which simulates
	the witnessing machine~$\dwlmwit$
	on the labeled union of the structures passed to~$y$ and~$y'$
	and defines the set of automorphisms outputted by~$\dwlmwit$
	(or more precisely on the structures contained in the pairs passed to~$y$ and~$y'$).
	Last, let $\termA_{\text{out}}(y)$ be a CPT+WSC-term
	extracting the  configuration of the Turing machine passed to~$y$,
	which is encoded as $\HF{\emptyset}$-set.
	We use the WSC-fixed-point operator defining $\HF{\emptyset}$-sets
	from Lemma~\ref{lem:extended-iteration-term}
	to define the final configuration of $\dwlmout$
	when~$\dwla$ is executed on the structure passed to~$x$ via
	\[\formA(x) := \wscFormbig{y}{z}{\termA_{\text{step}}(x,y,z)}{\termA_{\text{choice}}(y)}{\termA_{\text{wit}}(y,z)}{\termA_{\text{out}}(y)},\]
	where we have to replace~$y'$ with~$z$ in $\termA_{\text{wit}}(y,y')$
	to satisfy the formal requirements of the WSC-fixed-point operator.

	Because the elements of the structure in the cloud are obtained by the same HF-sets as by the DeepWL+WSC-algorithm,
	all choices are witnessed successfully in the formula
	if they are witnessed in the algorithm.
	
	For the case of a DeepWL+WSC-computable function~$f$,
	we extract the content of the work-tape from the set defined by~$\formA$.
	For a DeepWL+WSC-computable property, we
	check whether the defined configuration is accepting.
\end{proof}

\begin{corollary}
	A function~$f$ is computable or a property~$P$ is decidable
	by a polynomial-time DeepWL+WSC-algorithm
	if and only if~$f$ or~$P$, respectively, are CPT+WSC-definable.
\end{corollary}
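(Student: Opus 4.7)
The plan is to derive both implications directly from the two main translation lemmas of this section, namely Lemma~\ref{lem:cpt-wsc-to-deepwl-wsc} (translating CPT+WSC into DeepWL+WSC) and Lemma~\ref{lem:deepwl-wsc-to-cpt-wsc} (translating DeepWL+WSC back into CPT+WSC). The ``only if''~direction, for both properties and functions, is precisely the content of Lemma~\ref{lem:deepwl-wsc-to-cpt-wsc} and so requires no additional work.

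For the ``if''~direction, Lemma~\ref{lem:cpt-wsc-to-deepwl-wsc} immediately handles the case of a CPT+WSC\nobreakdash-definable property~$P$. It remains to handle the function case: given a CPT+WSC\nobreakdash-definable function $f\colon\GraphClass\to\set{0,1}^*$, I would construct a polynomial-time DeepWL+WSC-algorithm~$\dwla$ that on input~$\Struct\in\GraphClass$ writes $f(\Struct)$ on its work-tape. To do so, I would encode the ``bit-extraction'' properties as CPT+WSC-formulas. Specifically, using the CPT+WSC-term defining~$f$ (which by Section~\ref{sec:defining-sets} we may assume returns an $\HF{\emptyset}$-set together with a total order on its transitive closure, so that a binary encoding of $f(\Struct)$ is uniformly definable), I would form for every index~$i$ the CPT+WSC-formula $\formA_{\text{bit}}(i,x)$ stating ``the $i$-th bit of $f(x)$ is~$1$'' and the CPT+WSC-formula $\formA_{\text{len}}(i,x)$ stating ``the binary encoding of $f(x)$ has length at least~$i$''. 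Both formulas are CPT+WSC-definable by straightforward manipulation of the set~$f(x)$.

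Applying Lemma~\ref{lem:cpt-wsc-to-deepwl-wsc} to $\formA_{\text{bit}}$ and $\formA_{\text{len}}$ yields polynomial-time DeepWL+WSC-algorithms $\dwla_{\text{bit}}$ and $\dwla_{\text{len}}$ deciding these properties (where~$i$ is passed via the ordered-input extension established in Lemma~\ref{lem:deepwl-wsc-plus}). The algorithm~$\dwla$ then iterates $i=1,2,\dots$, calling $\dwla_{\text{len}}$ to detect when the length is exceeded and $\dwla_{\text{bit}}$ to write each successive bit of~$f(\Struct)$ on the work-tape. Since the polynomial bound of the CPT+WSC-term defining~$f$ bounds $|f(\Struct)|$ polynomially in $|\Struct|$, this loop terminates after polynomially many iterations, and $\dwla$ runs in polynomial time.

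I do not expect any serious obstacle here: both main lemmas already do the heavy lifting, and the only minor technicality is the bit-by-bit simulation for the function version of the forward direction. The routine verification is that the ordered-input mechanism of Lemma~\ref{lem:deepwl-wsc-plus} interacts correctly with the witnessed-choice semantics, which is immediate because neither $\dwla_{\text{bit}}$ nor $\dwla_{\text{len}}$ introduces additional choices beyond those already present in the simulation of~$f$.
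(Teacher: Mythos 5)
Your proposal is correct and follows essentially the same route as the paper, whose entire proof is the one-line observation that the claim follows from Lemma~\ref{lem:cpt-wsc-to-deepwl-wsc} and Lemma~\ref{lem:deepwl-wsc-to-cpt-wsc}. Your bit-extraction argument for the function case of the forward direction (which Lemma~\ref{lem:cpt-wsc-to-deepwl-wsc} formally covers only for properties) is a sound way to fill a detail the paper leaves implicit, and it uses only machinery already established in the paper.
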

\begin{proof}
	The claim follows from Lemmas~\ref{lem:cpt-wsc-to-deepwl-wsc}
	and~\ref{lem:deepwl-wsc-to-cpt-wsc}.
\end{proof}

We finally prove Theorem~\ref{thm:cpt+wsc-iso-implies-inv}
and show that a CPT+WSC-definable isomorphism test implies
a CPT+WSC-definable complete invariant.
\begin{proof}[Proof of Theorem~\ref{thm:cpt+wsc-iso-implies-inv}]
	Let~$\GraphClass$ be a class of binary $\sig$-structures
	and let~$\formA$ be a CPT+WSC-formula defining isomorphism of~$\GraphClass$.
	Then there is a polynomial-time DeepWL+WSC-algorithm deciding isomorphism of~$\GraphClass$ by Lemma~\ref{lem:cpt-wsc-to-deepwl-wsc}.
	By Theorem~\ref{thm:deepwl-wsc-iso-iff-canonical-inv},
	there is a complete invariant computable by a polynomial-time DeepWL+WSC-algorithm. 
	Finally, from Lemma~\ref{lem:deepwl-wsc-to-cpt-wsc} it follows
	that this complete invariant is CPT+WSC-definable.
\end{proof}

We note that the translation from polynomial-time DeepWL+WSC-algorithms
into CPT-formulas cannot be effective
because the polynomial bounding the running time of a DeepWL+WSC-algorithm
is not given explicitly,
but CPT-formulas have to provide them explicitly.
However, this could be achieved by
equipping DeepWL+WSC-algorithms with explicit polynomial bounds.

\section{The CFI-Query}
\label{sec:cfi}

Fixed-point logic (IFP)  was shown not to capture \PTime{} by Cai, Fürer, and Immerman~\cite{CaiFI1992} using the so-called CFI graphs.
These graphs come with the problem to decide whether a given CFI graph is even or not. This is called the CFI-query.
CFI graphs and their generalizations turned out to be useful to separate various logics from \PTime{}, most recently rank logic~\cite{Lichter21}.
So it is of particular interest whether the CFI-query is CPT-definable. 

Already defining restricted versions of the  CFI-query is rather difficult in CPT:
the best currently known result is that the CFI-query for base graphs
of logarithmic color class size or base graphs with linear maximal degree is CPT-definable~\cite{PakusaSchalthoeferSelman2016}.
The technique for logarithmic color class size is based on constructing deeply-nested sets which are invariant under isomorphisms of the CFI graphs
and encode their parity.
For general base graphs, it is not clear how such sets can be constructed
while still satisfying polynomial bounds.
However, in the presence of symmetric choice, defining the CFI-query becomes easier.

For totally ordered base graphs,
IFP extended  with a suitable fixed-point operator with witnessed symmetric choice defines the CFI-query~\cite{GireHoang98}.
While the choice operator in~\cite{GireHoang98} has slightly different semantics
than the one in CPT+WSC, the same approach can be formulated in CPT+WSC,
which is not surprising since CPT captures IFP.

To construct a class of base graphs
for which the CFI-query it is not known to be definable in CPT,
we consider not necessarily totally ordered base graphs. 
Because every automorphism of the base graph translates to automorphisms of the CFI graph and because we need to choose from orbits,
we assume that the base graphs have CPT distinguishable orbits.
This turns out to be sufficient to define the CFI-query in CPT+WSC.

\paragraph{The CFI Construction.}
We first review the CFI construction and its crucial properties.
Afterwards, we review the approach to define the CFI-query on totally ordered base graphs
in IFP with witnessed symmetric choice from~\cite{GireHoang98}.

A \defining{base graph} is a simple, undirected, and connected graph.
For a base graph $G=(V,E)$
and a function
$g \colon E \to \FF_2$,
we define the \defining{CFI graph} $\CFI{G,g}$ as follows.
For every directed base edge $\bEdge=(\bVertA,\bVertB)$ of every $\set{\bVertA,\bVertB} \in E$,
there is a pair of \defining{edge vertices} $\vertA_{\bEdge,0}$ and $\vertA_{\bEdge,1}$
(note that, since~$G$ is undirected, we have these edge vertex pairs for every direction).
For every base vertex $\bVertA \in V$ of degree~$d$,
we add a degree-$d$ CFI gadget:
the vertex set of the gadget is $C_\bVertA := \setcond{\vertB_{\bVertA,\tup{a}}}{\tup{a}\in\FF_2^{\neighbors{G}{\bVertA}}, \sum \tup{a} = 0}$, where $\sum \tup{a} := \sum_{\bVertB \in \neighbors{G}{\bVertA}} a(\bVertB)$.
For every base vertex $\bVertA \in V$ and every $\bVertB \in \neighbors{G}{\bVertA}$, we add the edges $\set{\vertA_{(\bVertA,\bVertB),i}, \vertB_{\bVertA, \tup{a}}}$
whenever $\tup{a}(\bVertB) = i$ for every $i \in \FF_2$.
Finally, for every edge $\set{\bVertA,\bVertB} \in E$,
we add the edges $\setcond{{\set{\vertA_{(\bVertA,\bVertB),i},\vertA_{(\bVertB,\bVertA),j}}}}{i + j = g(\set{\bVertA,\bVertB})}$.
We say that a \defining{gadget vertex} $\vertB_{\bVertA, \tup{a}}$ has \defining{origin}~$\bVertA$
and write $\orig{\vertB_{\bVertA, \tup{a}}} = \bVertA$
and similar for edge vertices $\orig{\vertA_{(\bVertA, \bVertB)}} = (\bVertA,\bVertB)$.
We extend the notation to tuples
of vertices $\orig{\tup{\vertC}} = (\orig{\vertC_1}, \dots, \orig{\vertC_{|\vertC|}})$.

Its well-known~\cite{CaiFI1992} that up to isomorphism there are exactly two CFI graphs
for a given base graph $G=(E,V)$, namely the one with $\sum g := \sum_{e \in E}g(e) = 0$ called \emph{even}
and the one with $\sum g = 1$ called \emph{odd}.
The \defining{CFI-query} is to decide whether a given CFI graph is even.

We also allow colored base graphs:
If the base graph is colored (that is, a structure $G=(V,E,\spleq)$ where~$\spleq$ is a total preorder on~$V$),
then the vertices are colored according to their origin.
The color class size of~$G$ is the size of its largest $\spleq$-equivalence class.
For a class of base graphs~$\GraphClass$,
we denote by $\CFI{\GraphClass} := \setcond{\CFI{G,g}}{G=(V,E) \in \GraphClass, g \colon E \to \FF_2}$ the class
of CFI graphs over~$\GraphClass$.

We recall some well-known facts on automorphisms of CFI graphs.
Let ${G=(V,E, \leq)}$ be a totally ordered base graph, $g \colon E \to \FF_2$ be arbitrary, $\Struct = \CFI{G,g}$, and $\tup{\vertA} \in \StructV^*$.
All automorphisms of $(\Struct, \tup{\vertA})$ are composed out of cycles in the base graph as follows.
Let $(\bVertA_1, \dots , \bVertA_k)$ with $\bVertA_1 = \bVertA_k$ be a cycle in $G - \orig{\tup{\vertA}}$
(the graph obtained from~$G$ by deleting all the origin base vertex of all gadget vertices in~$\tup{\vertA}$ and all origin base edges of all edge vertices in~$\tup{\vertA}$).
Then there is an automorphism that swaps the edge vertex pairs for exactly the edges $\set{\bVertA_{i}, \bVertA_{i+1}}$ for all $i \in [k-1]$
and maps the gadget vertices $C_{\bVertA_i}$ accordingly.
It holds that the two edge vertices $\vertA_{\bEdge,0}$ and $\vertA_{\bEdge,1}$
are in the same $1$-orbit of $(\Struct, \tup{\vertA})$
if and only if there is a cycle in $G - \orig{\tup{\vertA}}$
containing~$\bEdge$.

The next important property is that if $\bVertA\in V$ is a base vertex of degree~$d$ and
for $d-1$ edge vertex pairs incident to~$C_\bVertA$ an edge vertex is contained in~$\tup{\vertA}$,
then we can define in IFP (and so in CPT) a total order on~$C_\bVertA$ and
the edge vertices of the single remaining incident edge.

Gire and Hoang~\cite{GireHoang98} showed that
IFP extended with a fixed-point operator with witnessed symmetric choice
defines the CFI-query for totally ordered base graphs.
Although the semantics of the choice operator is slightly different from the one presented in the article, the same arguments also work in CPT+WSC:
Maintain a tuple~$\tup{\vertA}$ of individualized edge vertices
and check whether there is still a base edge~$\bEdge$
contained in a cycle in $G - \orig{\tup{\vertA}}$.
If this is the case, use $\set{\vertA_{\bEdge,0},\vertA_{\bEdge,1}}$
as choice set and extend~$\tup{\vertA}$ by the chosen vertex.
Once no such edge exists, we can define a total order on the CFI graph
and e.g.~by the Immerman-Vardi Theorem decide whether it is even.
Finally, to define the necessary automorphisms for the edge vertices of a base edge~$\bEdge$, we pick a minimal (with respect to the total order in the base graph)
cycle in $G - \orig{\tup{\vertA}}$, swap the edge vertices and permute the gadget vertices accordingly.

\paragraph{Unordered Base Graphs.}
In the next step, we consider unordered base graphs.
Let $G = (V,E, \spleq)$ be a colored base graph and $g \colon E \to \FF_2$.
If~$G$ contains a vertex of degree at least~$3$,
every automorphism of $\CFI{G,g}$ is composed of an automorphism of the base graph~$G$ and a ``CFI-automorphism'' from a totally ordered version of~$G$~\cite{Pago2021CFI}.
In particular,
if an orbit of $\CFI{G,g}$ contains vertices with different origins,
then these origins form an orbit of~$G$.
Hence, defining orbits of $\CFI{G,g}$ is at least as hard as defining orbits of~$G$.

Thus, when we want to define the CFI-query for a class of base graphs~$\GraphClass$
using the choice operator,
it is reasonable to assume that~$\GraphClass$ has definable orbits.
We now show that assuming that~$\GraphClass$ has CPT-definable orbits suffices to define the CFI-query in CPT+WSC.

\begin{lemma}
	\label{lem:cfi-cpt-ready-for-individualization}
	Let~$\GraphClass$ be a class of (colored) base graphs with CPT-distinguishable $2$-orbits.
	Then $\CFI{\GraphClass}$ is ready for individualization in CPT.
\end{lemma}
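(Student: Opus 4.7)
The plan is to define the term $\termA(\indVar)$ in three steps: first reduce to the base graph, then single out a minimum ``live'' base-edge orbit using the $2$-orbit preorder, and finally return the corresponding set of edge vertices as the desired $1$-orbit.

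First, from a CFI graph $\Struct = \CFI{G,g}$ and a tuple $\tup{a}$ of $\Struct$-vertices I would CPT-define the base graph $G$ and the origin map $\orig{\cdot}$. Gadget and edge vertices can be distinguished in CPT by their degree (a gadget vertex at $\bVertA$ has degree equal to the base-degree of $\bVertA$, while an edge vertex has degree $2^{d-2}+1$ where $d$ is the degree of its tail) together with the local neighborhood structure, so the partition into the two types is CPT-definable. From it the base graph $G$, its edge set, and the tuple $\orig{\tup{a}}$ are CPT-definable, and so is the subgraph $G - \orig{\tup{a}}$.

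Second, since $\GraphClass$ has CPT-distinguishable $2$-orbits, I obtain a CPT-definable total preorder on pairs of vertices of $(G, \orig{\tup{a}})$, hence on directed base edges and on their $2$-orbits. Call a directed base edge $\bEdge = (\bVertA, \bVertB)$ \emph{live} if $\bEdge \notin \orig{\tup{a}}$, neither $\vertA_{\bEdge,0}$ nor $\vertA_{\bEdge,1}$ lies in $\tup{a}$, and the undirected edge $\set{\bVertA,\bVertB}$ belongs to a cycle of $G - \orig{\tup{a}}$. Liveness is CPT-definable and invariant under $(G, \orig{\tup{a}})$-automorphisms, so every $2$-orbit of directed edges is uniformly live or non-live. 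If any live $2$-orbit $B$ exists, take the minimum such $B$ in the preorder and return
\[
O \;:=\; \bigcup_{\bEdge \in B} \set{\vertA_{\bEdge,0}, \vertA_{\bEdge,1}}.
\]
If no live $2$-orbit exists, either $\tup{a}$ exhausts $\vertices{\Struct}$, in which case any $1$-orbit among the entries of $\tup{a}$ can be returned, or $G - \orig{\tup{a}}$ is a forest, in which case a standard IFP-style propagation through the forest shows that each remaining $(\Struct, \tup{a})$-orbit is a singleton and one can output the minimum such singleton ordered via the base-graph $1$-orbit order derived from the $2$-orbit preorder.

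The main verification is that in the principal case $O$ is really a single $1$-orbit of $(\Struct, \tup{a})$. Here I would invoke the description of $\autGroup{(\Struct,\tup{a})}$ from~\cite{Pago2021CFI}: every such automorphism factors as a base-graph automorphism $\pi$ of $(G, \orig{\tup{a}})$ together with a ``twist'' $\tau$ supported on a cycle of $G - \orig{\tup{a}}$ that vanishes on edges whose edge-vertices are individualized. Base-graph automorphisms act transitively on the $2$-orbit $B$; for any $\bEdge \in B$, a cycle in $G - \orig{\tup{a}}$ through $\bEdge$ furnishes a twist swapping $\vertA_{\bEdge,0}$ and $\vertA_{\bEdge,1}$; together these act transitively on $O$, so $O$ is a single $1$-orbit, and it is disjoint from $\tup{a}$ by liveness. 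The main obstacle will be carefully establishing this cycle-twist decomposition of $\autGroup{(\Struct, \tup{a})}$ in the \emph{colored} base-graph setting and handling the low-degree corner cases: the clean factorization holds when $G$ has a base vertex of degree at least three, so base graphs of maximum degree two (disjoint unions of paths and cycles) must be treated separately. This corner case is elementary because such $G$ is effectively ordered by its path/cycle structure, reducing the CFI-query on it to tallying $g$-values along cycles.
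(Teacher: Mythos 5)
Your main case --- a ``live'' base-edge $2$-orbit exists, i.e.\ $G-\orig{\tup{a}}$ still has a cycle through a non-individualized edge --- is exactly the paper's argument: pick the minimal such $2$-orbit under the definable preorder and output the union of the corresponding edge-vertex pairs; your justification via the base-automorphism/twist factorization of $\autGroup{(\Struct,\tup{a})}$ is also in line with what the paper relies on.

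The genuine gap is in the remaining case. When $G-\orig{\tup{a}}$ is a forest you claim that every $(\Struct,\tup{a})$-orbit is a singleton and that you can output ``the minimum such singleton''. This is false: destroying all cycles only kills the CFI twists, not the automorphisms of $\CFI{G,g}$ induced by automorphisms of the colored base graph $(G,\orig{\tup{a}})$. Having CPT-\emph{distinguishable} $2$-orbits means the orbit partition is definable and orderable, not that the orbits are trivial --- indeed the corollary following the lemma applies it to base graphs with color classes of size $\Omega(\sqrt{n})$, and already a star (or a path) with interchangeable leaves and $\tup{a}=\emptyset$ gives a forest whose CFI graph has nontrivial orbits. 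In that situation there is no canonical minimum vertex: ordering vertices via the base-graph $1$-orbit order leaves every vertex over a nontrivial base orbit tied, so your term either fails to output a single well-defined object or is not isomorphism-invariant. The paper inserts an explicit intermediate case here: using the absence of cycles it first orders each gadget and each edge-vertex pair \emph{locally}, then checks whether $(G,\orig{\tup{a}})$ still has a nontrivial $2$-orbit, picks the minimal such orbit $O$, and outputs $\setcond{\vertA_{\bEdge,i}}{\bEdge\in O,\ \vertA_{\bEdge,i}\text{ locally minimal among } \vertA_{\bEdge,0},\vertA_{\bEdge,1}}$, which is a nontrivial $1$-orbit of $(\Struct,\tup{a})$ disjoint from $\tup{a}$. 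Only when, in addition, all base orbits are trivial does the local order extend to a total order so that a minimal non-individualized vertex can be returned. You need to add this middle case; the rest of the proposal (recovering the base graph, the liveness notion, and the transitivity argument in the cycle case) is sound.
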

\begin{proof}
	Let $\formA_{\text{orb}}$ be a CPT-formula
	distinguishing $2$-orbits for~$\GraphClass$.
	Let $G=(V,E,\spleq) \in \GraphClass$, $\Struct = \CFI{G,g}$ for some $g \colon E \to \FF_2$, and $\tup{\vertA} \in \StructV^*$.
	On input~$\Struct$ and~$\tup{\vertA}$,
	we first define the base graph $(G, \orig{\tup{\vertA}})$
	and define its $2$-orbits
	(note here that $\orig{\tup{\vertA}}$ is a tuple containing vertices or pairs of vertices of~$G$, so it can be encoded by a tuple of vertices of~$G$).
	This is done as follows:
	A base vertex~$\bVertA$ of the base graph $(G, \orig{\tup{\vertA}})$
	is represented by the set~$C_\bVertA$ of all vertices with origin~$\bVertA$.
	There is an edge between two base vertices
	represented by~$C_\bVertA$ and~$C_\bVertB$
	whenever there is an edge $\set{\vertA,\vertB}$
	for some $\vertA \in C_\bVertA$ and $\vertB \in C_\bVertB$.
	We then use the CPT-formula $\formA_{\text{orb}}$
	to define a total preorder on the base graph.
	Note that, because $\formA_{\text{orb}}$ is a CPT-formula,
	we do not have to consider choices here.
		
	We now check whether there is a $2$-orbit
	containing a directed edge $(\bVertA,\bVertB)$
	as part of a cycle of $G - \orig{\tup{\vertA}}$.
	If that is the case, we can canonically choose the minimal such $2$-orbit~$O$.
	Then the set of edge vertices $\setcond{\vertA_{\bEdge,i}}{\bEdge \in  O, i \in \ZZ_2}$ is a $1$-orbit of $(\Struct, \tup{\vertA})$.
	In particular, it is nontrivial and so disjoint with $\orig{\tup{\vertA}}$.
	
	In the other case that there is no such cycle,
	we can use the properties of CFI graphs
	to order the vertices of each gadget and each edge vertex pair.
	This does not necessarily define a total order on all vertices,
	because the base graph possibly has non-trivial orbits (but no cycles).
	So we check whether there is a nontrivial $2$-orbit in the base graph.
	If that is the case, we again choose the minimal such orbit~$O$.
	Now the set $\setcond{\vertA_{\bEdge,i}}{\bEdge \in O, \vertA_{\bEdge,i} \text { is minimal among } \vertA_{\bEdge,0} \text { and }\vertA_{\bEdge,1}}$
	is a nontrivial $1$-orbit of $(\Struct, \tup{\vertA})$
	and so again disjoint with $\orig{\tup{\vertA}}$.
	
	Finally, if there is neither a cycle in $G - \orig{\tup{\vertA}}$
	nor a nontrivial orbit in $(G, \orig{\tup{\vertA}})$,
	then the ordering of the vertices per gadget and edge vertex pairs
	extends to a total order of~$\Struct$
	and we output the minimal vertex not contained in~$\tup{\vertA}$. 
\end{proof}

\begin{corollary}
	\label{cor:cpt-to-cpt-wsc-distinguishable-orbits-cfi-query}
	For every class of possibly colored base graphs~$\GraphClass$ with CPT-distinguishable orbits, CPT+WSC defines the CFI-query.
\end{corollary}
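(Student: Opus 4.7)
The plan is to chain together the preceding lemma with the canonization result of Theorem~\ref{thm:cpt+wsc-all-equiv} and then to appeal to the Immerman--Vardi Theorem. Concretely, I would first invoke Lemma~\ref{lem:cfi-cpt-ready-for-individualization}, which says that whenever the base graph class~$\GraphClass$ has CPT-distinguishable $2$-orbits, the class $\CFI{\GraphClass}$ is ready for individualization in CPT. Since every CPT-term is a CPT+WSC-term, $\CFI{\GraphClass}$ is a fortiori ready for individualization in CPT+WSC.

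Next I would apply the implication \ref{alleq-ready}$\Rightarrow$\ref{alleq-canon} of Theorem~\ref{thm:cpt+wsc-all-equiv} (established in Lemma~\ref{lem:definable-orbits-implies-canonization} via the choice-based implementation of Gurevich's algorithm) to conclude that CPT+WSC defines a canonization of $\CFI{\GraphClass}$-structures. In particular, there is a CPT+WSC-interpretation $\interpret$ producing, for every CFI graph $\Struct \in \CFI{\GraphClass}$, a totally ordered structure $\interpret(\Struct)$ that is isomorphic to~$\Struct$ via the $\sig$-reduct.

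Finally, I would exploit that on ordered CFI graphs the CFI-query is decidable in polynomial time (one can read off the parity $\sum g$ from any ordered presentation of $\CFI{G,g}$). By the Immerman--Vardi Theorem~\cite{Immerman87} this decision procedure is IFP-definable on the ordered structure~$\interpret(\Struct)$, hence CPT+WSC-definable. Composing this IFP-sentence with the canonization interpretation yields a CPT+WSC-sentence that, on input $\Struct \in \CFI{\GraphClass}$, is satisfied precisely when $\Struct$ is an even CFI graph.

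There is no serious remaining obstacle; the heavy lifting has already been done in Lemma~\ref{lem:cfi-cpt-ready-for-individualization} (producing the canonical $1$-orbit for individualization out of the base-graph orbits together with the CFI cycle/gadget structure) and in Theorem~\ref{thm:cpt+wsc-all-equiv} (executing Gurevich's algorithm with witnessed symmetric choice). The only point that deserves a brief justification is that the hypothesis ``CPT-distinguishable orbits'' suffices to invoke Lemma~\ref{lem:cfi-cpt-ready-for-individualization}, which formally asks for CPT-distinguishable $2$-orbits; this is immediate if one reads the hypothesis as distinguishing orbits of all arities (as is standard and as would in any case be provided by a CPT-definable complete invariant for~$\GraphClass$ via Lemma~\ref{lem:complete-invariant-implies-definable-orbits}).
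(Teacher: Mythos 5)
Your proposal is correct and follows essentially the same route as the paper: Lemma~\ref{lem:cfi-cpt-ready-for-individualization} gives readiness for individualization, Theorem~\ref{thm:cpt+wsc-all-equiv} upgrades this to a CPT+WSC-definable canonization, and the Immerman--Vardi Theorem then yields \PTime{}-capturing (hence definability of the CFI-query) on $\CFI{\GraphClass}$. Your side remark reconciling the corollary's phrase ``CPT-distinguishable orbits'' with the lemma's hypothesis of $2$-orbits is a reasonable reading and does not change the argument.
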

\begin{proof}
	Because $\CFI{\GraphClass}$ is ready for individualization
	in CPT (and so in particular in CPT+WSC), CPT+WSC defines a canonization for $\CFI{\GraphClass}$ by Theorem~\ref{thm:cpt+wsc-all-equiv}.
	Then, by the Immerman-Vardi Theorem, IFP (and so CPT and so CPT+WSC)
	captures \PTime{} on $\CFI{\GraphClass}$
	and so in particular defines the CFI-query.
\end{proof}
Note here that we do not have to construct automorphisms of uncolored CFI graphs explicitly because we use Theorem~\ref{thm:cpt+wsc-all-equiv}.

We finally show that Corollary~\ref{cor:cpt-to-cpt-wsc-distinguishable-orbits-cfi-query}
covers graph classes
for which it is not known that the CFI-query is CPT-definable.
The best known results due to~\cite{PakusaSchalthoeferSelman2016}
are classes of base graph with logarithmic color class size or
with linear maximal degree.

\begin{corollary}
	There is a class of colored base graphs $\GraphClass = \setcond{G_n}{n \in \nat}$,
	such that CPT+WSC defines the CFI-query for $\GraphClass$
	and for every $n \in \nat$
	the graph~$G_n$ is $O(|G_n|^{0.5})$-regular
	and every color class of~$G_n$ has size $\Omega(|G_n|^{0.5})$.
\end{corollary}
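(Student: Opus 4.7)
The plan is to reduce the statement to Corollary~\ref{cor:cpt-to-cpt-wsc-distinguishable-orbits-cfi-query} by exhibiting a single concrete family of colored base graphs. Once I have $\GraphClass = \setcond{G_n}{n \in \nat}$ meeting the stated regularity and color-class-size bounds and having CPT-distinguishable orbits, the corollary delivers CPT+WSC-definability of the CFI-query on $\CFI{\GraphClass}$ with no further work.

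For each $n$ I would take $G_n$ with vertex set $V := [n] \times [n]$ and coloring by the first coordinate, so that each color class $V_i := \set{i} \times [n]$ has exactly $n$ elements. The edge set is the union of two highly symmetric layers: a \emph{clique layer}, adding an edge between $(i_1,k)$ and $(i_2,k)$ whenever $i_1 \neq i_2$, and a \emph{cycle layer}, adding an edge between $(i,k)$ and $(i,k')$ whenever $k' \equiv k \pm 1 \pmod n$. The result is a simple, connected, $(n{+}1)$-regular graph on $|G_n| = n^2$ vertices, so the degree is $O(|G_n|^{1/2})$ and every color class has size $n = |G_n|^{1/2}$, as required.

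The core step is verifying CPT-distinguishability of $k$-orbits. The clique layer forces any color-preserving automorphism to act as $(i,k) \mapsto (i,\sigma(k))$ for a single permutation $\sigma$ of $[n]$ common to all color classes, and the cycle layer restricts $\sigma$ to a graph automorphism of the cycle $C_n$, i.e.\ to the dihedral group $D_{2n}$ of rotations and reflections. Two $k$-tuples therefore lie in the same $k$-orbit of $G_n$ iff they share the same color sequence and their second-coordinate sequences are $D_{2n}$-equivalent. Both ingredients are extractable in CPT directly from the graph — second-coordinate equality of two vertices in distinct color classes is precisely clique-layer adjacency — and a canonical $D_{2n}$-representative of a sequence in $[n]^k$, say the lexicographically minimal rotation-and-reflection, is easily computable in CPT; lexicographically ordering the pair (color sequence, canonical $D_{2n}$-representative) then yields the required total preorder on $k$-tuples. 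Under the individualization parameter $\indVar$ of Definition~\ref{def:definable-isomorphism}, the only change is that $D_{2n}$ gets replaced by its pointwise stabilizer of the second coordinates of the individualized vertices, which is again CPT-definable.

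The main subtlety will be pinning $\autGroup{G_n}$ down to exactly $D_{2n}$ — a looser combination of layers might leave room for extra symmetries mixing color classes and thereby complicate the orbit invariant — but the two-layer choice above makes this a direct calculation from the two preservation constraints. With the orbit-distinguishing formula in place, Lemma~\ref{lem:cfi-cpt-ready-for-individualization} yields that $\CFI{\GraphClass}$ is ready for individualization in CPT, and Corollary~\ref{cor:cpt-to-cpt-wsc-distinguishable-orbits-cfi-query} concludes CPT+WSC-definability of the CFI-query on $\CFI{\GraphClass}$. Since the color classes have polynomial (not logarithmic) size and the maximum degree is polynomially (not linearly) bounded, this class falls outside the two regimes covered by~\cite{PakusaSchalthoeferSelman2016}.
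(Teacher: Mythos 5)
Your proof is correct and follows the same overall skeleton as the paper's: exhibit an explicit family with the right degree and color-class-size parameters, verify CPT-distinguishable orbits, and invoke Corollary~\ref{cor:cpt-to-cpt-wsc-distinguishable-orbits-cfi-query}. The difference is the concrete witness. The paper takes each color class to be a clique $K_i$ of size $n$ and joins consecutive cliques by complete bipartite graphs; its automorphism group is then the direct product of the symmetric groups on the individual cliques, so the $1$- and $2$-orbits are literally ``the non-individualized part of each color class'' and the orbit-distinguishing formula is a one-liner. Your graph instead has automorphism group the single dihedral group $D_{2n}$ acting diagonally on all color classes, which makes the orbit analysis genuinely different: you must canonize $k$-tuples of cycle positions under rotation and reflection (and under the pointwise stabilizer once $\indVar$ is nonempty). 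This is still clearly CPT-definable --- take the lexicographic minimum over the $2n$ choices of start vertex and direction, each of which yields an $\HF{\emptyset}$-object --- but it is more work than the paper's version buys you nothing extra for this corollary, whose statement only asks for existence. The one cosmetic point worth adding is that your cycle layer degenerates for $n\le 2$, so the family should start at $n\ge 3$ (irrelevant to the asymptotic claim). Both constructions lie outside the logarithmic-color-class and linear-degree regimes of~\cite{PakusaSchalthoeferSelman2016}, as intended.
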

\begin{proof}
	It suffices to construct a class $\GraphClass = \setcond{G_n}{n \in \nat}$ of colored base graphs~$\GraphClass$
	with the desired regularity and color-class size properties
	and which has CPT-definable $2$-orbits
	by Corollary~\ref{cor:cpt-to-cpt-wsc-distinguishable-orbits-cfi-query}.
	
	Let $n \in \nat$ be arbitrary but fixed.
	We define $G_n = (V,E, \spleq)$ as follows:
	start with~$n$ many
	disjoint cliques $K_1, \dots , K_n$ of size $n$.
	Then connect every~$K_i$ to~$K_{i+1}$ (and~$K_n$ to~$K_1$)
	with a complete bipartite graph.
	Finally, color the graph so that each~$K_i$ is a color class.
	We now consider the $2$-orbits of $(G_n, \tup{\vertA})$ for some arbitrary $\tup{\vertA} \in V^*$.
	For every vertex $\vertB \in V$, it is easy to see that
	the $1$-orbit $O(\vertB)$ containing~$\vertB$
	is $O(\vertB) = \set{\vertB}$ if $\vertB \in \tup{\vertA}$
	and otherwise $O(\vertB) = \setcond{\vertC}{\vertB,\vertC \in K_i, \vertC \notin \tup{\vertA}}$.
	Then the $2$-orbit of a tuple $(\vertA,\vertB)$
	is $O(\vertA,\vertB) = \setcond{(\vertC,\vertC')}{\vertC \in O(\vertA), \vertC' \in O(\vertB), \text{ and } \vertC \neq \vertC' \text{ if and only if }\vertA \neq \vertB}$.
	
	Clearly, the graph~$G_n$ has order~$n^2$, is $(3n-1)$-regular, and has color class size~$n$,
	satisfying the assertion of the lemma.
	It is also easy to see that CPT defines the partition of $2$-orbits
	shown above and can order it using the total order on the cliques.
\end{proof}

\section{Discussion}
\label{sec:conclusion}
We extended CPT with a witnessed symmetric choice operator
and obtained the logic CPT+WSC. We
proved that defining isomorphism in CPT+WSC is equivalent to defining canonization.
A crucial point was to extend the DeepWL computation model
to show that a CPT+WSC-definable isomorphism test yields a CPT+WSC-definable complete-invariant.

Thereby, CPT+WSC can be viewed as a simplification step in the quest for a logic capturing \PTime{} as now only isomorphism needs to be defined to be able to apply the Immerman-Vardi Theorem.

To turn a complete invariant into a canonization within CPT+WSC,
we used the canonization algorithm of Gurevich.
To implement it in CPT+WSC, we have to extend it to provide witnessing automorphisms.
For this to work, we needed to give to the witnessing terms
the defined fixed-points as input. This is different in other extensions of first order logic with symmetric choice~\cite{DawarRicherby03, GireHoang98}.
We actually have to require that choice sets are orbits respecting all previous intermediate steps in the fixed-point computation. It appears that this is only relevant if a formula actively forgets previous choices. But how could forgetting these be beneficial? In any case we are not sure whether the modification changes the expressiveness of the logic.

Another question is the relation of CPT+WSC to other logics.
Is CPT+WSC more expressive than CPT?
Do nested WSC-fixed-point operators increase the expressiveness of CPT+WSC?
In~\cite{DawarRicherby03}, it is proven that for  fixed-point logic extended with (unwitnessed) symmetric choice, the nesting fixed-point operators with choice increases expressiveness.

We should remark that any positive answer to our questions
separates CPT from \PTime{}
and hence all questions might be difficult to answer.

Finally, extending DeepWL with witnessed symmetric choice
turned out to be extremely tedious.
While proofs for DeepWL without choice are already complicated~\cite{GroheSchweitzerWiebking2021},
for our extensions the proofs got even more involved.
We would like to see more elegant techniques to prove Theorem~\ref{thm:cpt+wsc-iso-implies-inv}
for CPT+WSC (or even for CPT).

\bibliographystyle{plain}
\bibliography{cpt_symchoice}

\end{document}